\newcommand{\be}{\begin{equation}}
\newcommand{\ee}{\end{equation}}
\newcommand{\beaa}{\begin{eqnarray*}}
\newcommand{\eeaa}{\end{eqnarray*}}
\newcommand{\bea}{\begin{eqnarray}}
\newcommand{\eea}{\end{eqnarray}}
\newtheorem{theorem}{ \noindent T{\footnotesize HEOREM}}
\newtheorem{prop}{ \noindent P{\footnotesize ROPOSITION}}[section]
\newtheorem{assumption}{ \noindent A{\footnotesize SSUMPTION}}
\newtheorem{lemma}{ \noindent L{\footnotesize EMMA}}[section]
\newtheorem{remark}{ \noindent R{\footnotesize EMARK}}[section]
\def\var{\mathrm {var}}
\def\cov{\mathrm {cov}}
\def\log{\mathrm {log}}
\newcommand{\bm}{\boldsymbol}
\def\tr{\mathrm {tr}}
\def\min{\mathrm {min}}
\def\B{{\bf B}}
\def\z{{\bm z}}
\def\R{{\bf R}}
\def\bmu{{\bm \mu}}
\def\X{{\bm X}}
\def\w{{\bm w}}
\def\bU{\boldsymbol{U}}\def\bX{\boldsymbol{X}}
\def\bZ{\boldsymbol{Z}}
\newcommand{\bGam}{\boldsymbol{\Gamma}}
\newcommand{\bOme}{\boldsymbol{\Omega}}
\def\O{{\bf \Omega}}
\def\bms{{\bm\Sigma}}
\def\eeta{{\bm\eta}}
\def\bdelta{{\bm\delta}}
\def\bvarepsilon{{\bm\varepsilon}}
\def\bxi{{\bm\xi}}
\def\bzeta{{\bm\zeta}}
\def\tr{\mathrm {tr}}
\def\bepsilon{\bm \epsilon}
\def\T{{\!\top\!}}
\def\tr{\mathrm{tr}}
\def\sup{\mathrm{sup}}
\def\inf{\mathrm{inf}}
\newcommand{\ind}[1]{\mathbb{I}(#1)}
\def\E{\mathbb{E}}
\def\pr{\mathrm{P}}
\title{\bf Adaptive Change Point Inference for High Dimensional Time Series with Temporal Dependence }
\author{Xiaoyi Wang$^1$, Jixuan Liu$^2$ and Long Feng$^2$\\
Bejing Normal University$^1$ and Nankai University$^2$}
\date{}
\begin{document}

\def\spacingset#1{\renewcommand{\baselinestretch}%
{#1}\small\normalsize} \spacingset{1}

\maketitle

\begin{abstract}
This paper investigates change point inference in high-dimensional time series. We begin by introducing a max-$L_2$-norm based test procedure, which demonstrates strong performance under dense alternatives. We then establish the asymptotic independence between our proposed statistic and the two max-$L_\infty$-based statistics introduced by \cite{wang2023JRSSB}. Building on this result, we develop an adaptive inference approach by applying the Cauchy combination method to integrate these tests. This combined procedure exhibits robust performance across varying levels of sparsity. Extensive simulation studies and real data analysis further confirm the superior effectiveness of our proposed methods in the high-dimensional setting.

{\it Keywords}: Cauchy combination test; Change point inference; High dimensional data; Time series.
\end{abstract}

\newpage
\spacingset{1.8} 
\section{Introduction}

Change point inference plays a critical role in the analysis of high-dimensional data, where the structure or behavior of a system may shift over time due to external shocks, policy changes, or underlying regime switches. In modern applications—such as genomics, finance, neuroimaging, and environmental monitoring—data are often collected across a large number of variables simultaneously, making traditional low-dimensional change point detection methods inadequate. High-dimensional change point inference enables the detection and localization of structural breaks in complex systems, allowing for timely responses, better understanding of dynamic processes, and more accurate predictive modeling \citep{10.1214/18-ejs1442}. Given the growing prevalence of high-dimensional data in scientific and industrial contexts, developing robust and adaptive inference procedures that can handle varying degrees of sparsity and dependence is both practically important and theoretically challenging.

Consider a sequence of $p$-dimensional random vectors of sample size $n$, $\{\bX_i=(X_{i1},\ldots,$\\$X_{ip})^\T\}_{i=1}^n$, from the mean-change model,
\begin{align*}
	\bX_i = \bmu_0 + \bdelta \ind{i>\tau} + \bepsilon_i,~~~~\ i=1,\ldots,n,
\end{align*}
where $\bmu_0\in\mathbb{R}^p$ is the baseline mean level, $\bdelta\in\mathbb{R}^p$ is the mean-change signal parameter,
$\tau\in\{1,\ldots,n\}$ is the possible changepoint, and $\{\bepsilon_i=(\epsilon_{i1},\ldots,\epsilon_{ip})^\T\}_{i=1}^n$ are random noises with zero mean.
Of interest is to test whether there exists a change point, that is,
\begin{align*}
\begin{gathered}
	H_0: \tau=n\ \text{and}\ \bdelta=0\ \text{versus}\\
	H_1:\ \text{there exists}\ \tau\in\{1,\ldots,n-1\}\ \text{and}\ \bdelta\neq 0,
\end{gathered}
\end{align*}
under the paradigm that both the sample size $n$ and dimension $p$ grow to infinity.

Many existing studies focus on constructing test statistics by taking the maximum or the sum over the sample size or across dimensions of the corresponding individual CUSUM-type statistics. Specially, the individual cumulative sum (CUSUM) statistics $\{C_{\gamma, j}(k),\ j=1,\ldots,p\}_{k=1}^{n-1}$, with $\gamma=0$ or $0.5$ frequently used, is defined as
\begin{align}\label{CUSUM1}
    C_{\gamma, j}(k) = \left\{\frac{k}{n}\left(1-\frac{k}{n}\right)\right\}^{-\gamma}\frac{1}{\sqrt{n}}\left(S_{kj}-\frac{k}{n}S_{nj}\right)/\widehat{\sigma}_{j},
\end{align}
$S_{kj}=\sum_{i=1}^{k}X_{ij}$ and $\widehat{\sigma}_{j}$'s are estimators of component-wise (long-run) variances. Among them, \cite{10.1016/j.jeconom.2009.10.020}, \cite{10.1111/j.1467-9892.2012.00796.x}, and \cite{10.1007/s11425-016-0058-5} proposed test statistics of the form $\max_{k=1,\ldots,n-1}\sum_{j=1}^p C_{0, j}^2(k)$, where the maximum is taken over time points and the summation is taken across dimensions. We refer to such statistics as max-$L_2$-type test statistics. \cite{10.1214/17-aos1610} and \cite{10.1142/s201032631950014x} introduced sum-$L_2$-type statistics of the form $\sum_{k=1}^{n-1}\sum_{j=1}^p C_{0.5, j}^2(k)$, where the summation is taken over both time points and dimensions. These statistics are generally powerful under dense alternatives, in which many components of the mean-change vector $\bm \delta$ are nonzero but of small magnitude. However, their power diminishes under sparse alternatives, where only a few entries of $\bm \delta$ are nonzero but have large magnitudes. To address this issue, several works have proposed statistics based on maxima over the dimension. For example, \cite{10.1214/15-aos1347} proposed a max-$L_\infty$-type test statistic $\max_{k=1,\ldots,n-1}\max_{j=1,\ldots,p}|C_{0, j}(k)|$ with $\gamma = 0$ in \eqref{CUSUM1}, and demonstrated that, under $H_0$ and suitable normalization, this statistic converges to a Gumbel-type extreme value distribution. Furthermore, \cite{10.1111/rssb.12406} considered a similar max-$L_\infty$-type statistic, defined as $\max_{\lambda \leq k \leq n - \lambda}\max_{j=1,\ldots,p}|C_{0.5, j}(k)|$, where $\lambda \in [1, n/2]$ is a user-specified trimming parameter and $\gamma = 0.5$ in \eqref{CUSUM1}. Recently, \cite{wang2023JRSSB} established the asymptotic null distributions of both max-$L_\infty$-type test statistics under weaker and more general conditions.

In practical scenarios, the sparsity structure of the alternative hypothesis---whether dense or sparse---is typically unknown. To address this challenge, a number of studies have proposed adaptive testing procedures that are capable of maintaining good power across varying sparsity levels. For example, \cite{10.1111/rssb.12375} introduced an adaptive strategy based on adjusted max-$L_q$ aggregation. This method computes the test statistics for a set of values $q \in \{1, 2, 3, 4, 5, \infty\}$ and then selects the minimum $p$-value among them. This approach leverages the fact that larger $q$ values tend to be more powerful for sparse alternatives, whereas smaller $q$ values are better suited for dense alternatives. Similarly, \cite{doi:10.1080/01621459.2021.1884562} proposed a family of self-normalized U-statistics corresponding to each $L_q$ norm, providing a unified framework for adaptivity. More recently, \cite{wang2023JRSSB} established the asymptotic independence between the max-$L_\infty$-type statistics proposed in \cite{10.1214/15-aos1347,10.1111/rssb.12406} and the sum-$L_2$-type statistic in \cite{10.1142/s201032631950014x}. Building on this result, they developed a computationally efficient Fisher-type combination test that effectively adapts to different sparsity regimes. This combined procedure demonstrates strong performance across a wide range of alternatives. For a comprehensive review of recent advances in high-dimensional change-point inference, including these adaptive strategies, see \cite{10.1016/j.jmva.2021.104833}, \cite{wang2023JRSSB}, \cite{meng2024covariate}.

Most of the aforementioned methods assume that the observed time series \(\bX_i\) are independent. However, in practice, high-dimensional time series often exhibit some weak temporal dependence. In such cases, many of these methods fail to work effectively and may suffer from large size distortions. To address this issue, some studies have focused on high-dimensional change point inference in the presence of temporal dependence. For example, \cite{li2019change} proposed a sum-\(L_2\)-type test procedure that aggregates a bias-corrected version of the classic CUSUM test statistic. Meanwhile, \cite{wang2022inference} developed a trimmed version of the self-normalized U-statistic, which excludes pairs of points that are close in time. Both of these methods are sum-type test statistics over dimensions and perform well for dense alternatives. However, for sparse alternatives, it is worth noting that the two max-\(L_\infty\)-type test statistics \citep{10.1214/15-aos1347,10.1111/rssb.12406} can also be applied to high-dimensional temporal dependence data, provided that the variance estimator is replaced by a long-run variance estimator. To the best of our knowledge, there currently does not exist an adaptive test procedure that can handle a wide range of sparsity levels for alternatives. Therefore, in this paper, we introduce an adaptive change point inference method for high-dimensional time series with temporal dependence.

First, we propose a max-$L_2$-type test statistic along with a bias correction procedure to mitigate the effects of temporal dependence. We establish the limiting null distribution of the proposed test statistic. Furthermore, we demonstrate that the max-$L_2$-type test statistic is asymptotically independent of two existing max-$L_\infty$-type test statistics \citep{10.1214/15-aos1347,10.1111/rssb.12406}. Based on this result, we develop two Cauchy combination test procedures that combine the max-$L_2$-type statistic with the max-$L_\infty$-type statistics. Unlike the asymptotic independence between the Gumbel and normal distributions shown in \cite{wang2023JRSSB}, our work establishes the asymptotic independence between a Gumbel distribution and the maximum of a Gaussian process—representing a non-trivial theoretical contribution. Finally, we propose an estimator for the change-point location based on the adaptive Cauchy combination $p$-values and show its consistency under mild regularity conditions. Simulation studies and real data applications demonstrate that the proposed methods perform well in high-dimensional time series settings with temporal dependence. The Cauchy combination tests are particularly effective across a wide range of sparsity levels in the alternatives. This paper makes three key contributions:
\begin{itemize}
\item[1.] Development of a max-$L_2$-type test statistic: We propose a novel max-$L_2$-type test statistic for change-point inference in high-dimensional time series data with temporal dependence. Compared to the traditional sum-$L_2$-type test statistic, our proposed method exhibits significantly improved power, particularly under dense alternative scenarios.

\item[2.] New theoretical framework for asymptotic independence: We introduce a new analytical approach to establish the asymptotic independence between a Gumbel distribution and the maximum of a Gaussian process. This result is distinct from existing techniques, which primarily focus on proving the asymptotic independence between normal and Gumbel distributions. Our contribution broadens the theoretical understanding of extreme value theory in dependent high-dimensional settings.

\item[3.] Robust Cauchy combination testing procedure: Based on the established asymptotic independence, we construct two Cauchy combination test procedures that integrate the max-$L_2$-type and max-$L_\infty$-type test statistics. These procedures demonstrate strong and robust performance across a wide range of sparsity levels in the alternatives, making them highly suitable for practical applications involving high-dimensional, temporally dependent data.
\end{itemize}

The remainder of the paper is organized as follows. Section~\ref{sec:2-L2} introduces the proposed max-$L_2$-type test procedure. Section~\ref{sec:3-adaptive} establishes the asymptotic independence between the max-$L_2$-type and max-$L_\infty$-type test statistics and proposes two Cauchy combination test procedures. Section~\ref{sec:4-location} focuses on the estimation of the change-point location. Section~\ref{sec:5-simulation} presents simulation studies, and Section~\ref{sec:6-realdata} provides a real data application. All technical proofs are given in the Appendix.

Throughout this paper, we use $\gtrsim, \lesssim, (\sim)$ to denote (two sided) inequalities involving a multiplicative constant. For $a\in\mathbb{R}$, $\lfloor a \rfloor$ and $\lceil a \rceil$ denote the lower and upper integer part of $a$. For $a, b\in\mathbb{R}$, we write $a\wedge b=\min(a,b)$. For a vector $\mathbf{a} = (a_1,\dots,a_p)^{\T}\in\mathbb{R}^p$, $\|\mathbf{a}\|^2 = \sum\nolimits_{j=1}^p a_j^2$. For a matrix $\mathbf{A}$, $\lambda_{\min}(\mathbf{A})$ and $\lambda_{\max}(\mathbf{A})$ denote the minimum and maximum eigenvalue of matrix $\mathbf{A}, $and $\tr(\mathbf{A})$ denotes the trace of $\mathbf{A}$. For a set $\mathcal{A}$, we denote $|\mathcal{A}|$ by its cardinality, and by $\mathcal{A}^c$ its complementary. $\ind{h=0}$ denotes the indicator function.

\section{max-$L_2$-type test}\label{sec:2-L2}
Before giving the max-$L_2$-type test statistic, for any $k\in\{1,\dots,n\}$, we propose the CUSUM statistic
\begin{align*}
W(k)=&\frac{1}{n\sqrt{p}}\left(\sum_{i=1}^k \bX_i-\frac{k}{n}\sum_{i=1}^n \bX_i\right)^\top\left(\sum_{i=1}^k \bX_i-\frac{k}{n}\sum_{i=1}^n \bX_i\right).
\end{align*}
It can be showed that the expectation of $W(k)$ under $H_0$ as follows,
\begin{align*}
\mu_{k}
=\frac{k^2(n-k)^2}{n^3\sqrt{p}}\sum\limits_{h=0}^{n-1}\sum\limits_{i=1}^{n-h}\{2-\ind{h=0}\}a_{i,k}a_{i+h,k}\tr\{\bGam(h)\},
\end{align*}
where
\begin{eqnarray*}
a_{i,k}=\left\{
\begin{array}{lr}
k^{-1}, &  i\leq k, \\
-(n-k)^{-1}, &   i> k.
\end{array}
\right.
\end{eqnarray*}
We use
\begin{align*}
\mu_{M,k}=\frac{k^2(n-k)^2}{n^3\sqrt{p}}\sum\limits_{h=0}^{M}\sum\limits_{i=1}^{n-h}{\{2-\ind{h=0}\}}a_{i,k}a_{i+h,k}\tr\{\bGam(h)\},
\end{align*}
replace $\mu_{k}$, where $M=\lceil (n\wedge p)^{1/8}\rceil$. We have proved that the difference between $\mu_{k}$ and $\mu_{M,k}$ is negligible under the following Assumptions~\ref{ass:C1}--\ref{ass:C3} in the Appendix.
In order to show the asymptotic distribution of $W(k)$, we impose the following assumptions.
\begin{assumption}\label{ass:C1}
Suppose the noises $\{\bepsilon_i\}_{i=1}^n$ follow the linear structure
\begin{eqnarray*}
\bepsilon_i=\bms^{1/2}\sum\limits_{\ell=0}^{\infty}b_{\ell}\bZ_{i-\ell}.
\end{eqnarray*}
 \begin{itemize}
\item [(i)] $\bZ_i=(Z_{i1},\ldots,Z_{ip})^\T$, and $Z_{ij}$ are independent and identically distributed random variables with $\E(Z_{ij})=0,  \E(Z_{ij}^2)=1, \E(Z_{ij}^4)<+\infty$.
\item [(ii)] $\sum\nolimits_{\ell=0}^{\infty}|b_{\ell}|<\infty$, $\sum\nolimits_{\ell=0}^{\infty}b_{\ell}=s\neq 0$, and $\lim\limits_{\ell\rightarrow\infty}\ell^5b_{\ell} = 0$.
\end{itemize}
\end{assumption}
\begin{assumption}\label{ass:C3}
    There exist constants $C_0$ and $C_1$ such that $0\leq C_0<\lambda_{\min}(\bms)\leq \lambda_{\max}(\bms) \leq C_1<\infty$.
\end{assumption}

 Assumption~\ref{ass:C1} assumes the linear process model of $\{\bX_i\}_{i=1}^n$, which is widely used in time series analysis \citep{zhang2018clt}. This linear process model cover the well known MA($q$) model and AR($1$) model. Let $\bGam(h)=\cov(\bX_i, \bX_{i+h})$ for $h=0,1,2,\ldots,n-1$, and $\bOme_n=n\cov(\bar{\bX}_n) = \bGam(0)+2\sum\nolimits_{h=1}^{n-1}(1-n^{-1}h)\bGam(h)$. With Assumption~\ref{ass:C1}, $\E(\bX_i)=\bmu_0 + \bdelta \ind{i>\tau}$ and $\bGam(h)=c_h\bms$ for $h=0,1,2,\dots$ and $\bGam(h)=\bGam(-h)^\top$ with $c_h=\sum\nolimits_{\ell=0}^{\infty}b_{\ell}b_{\ell+h}$, and the long-run covariance matrix $\bOme = \bGam(0)+2\sum\nolimits_{h=1}^{\infty}\bGam(h)$.
Assumption~\ref{ass:C1}-(ii) guarantees that the dependence of the current error on past shocks decays. It rules out processes with unit roots or long memory where correlations decay too slowly. Assumption~\ref{ass:C3} ensures that $\tr(\bms)=O(p)$ and $\tr(\bms^2)=O(p)$, which  primarily guarantees the asymptotic independence of the test statistics. A similar assumption also appears in \cite{wang2023JRSSB}.

 \begin{theorem}
 \label{Th1}
Under Assumptions~\ref{ass:C1}--\ref{ass:C3} and $H_0$, if $M=\lceil(n\wedge p)^{1/8}\rceil$, we have
$$
W(\lfloor n t\rfloor)-\mu_{M,\lfloor n t\rfloor} \stackrel{d}{\rightarrow} \omega V(t), \quad t \in[0,1],
$$
as $(n,p)\rightarrow\infty$, where $\omega = \lim_{p\rightarrow\infty}\sqrt{2\tr(\bOme^2)/p}$.
$V(t)$ is a continuous Gaussian process with $\E\{ V(t)\}=0$ and
$$
\E\{V(t) V(s)\}=(1-t)^{2} s^{2}, \quad 0 \leqslant s \leqslant t \leqslant 1.
$$
\end{theorem}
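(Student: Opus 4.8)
I would begin by reducing to a centered quadratic form. Under $H_0$ the baseline mean cancels in the CUSUM, so, with $a_{i,k}$ as in the statement,
\[
W(k)=\frac{k^2(n-k)^2}{n^3\sqrt p}\Bigl\|\sum_{i=1}^n a_{i,k}\bepsilon_i\Bigr\|^2,\qquad \E W(k)=\mu_k .
\]
Since $\mu_k-\mu_{M,k}=o(1)$ uniformly in $k$ (established in the Appendix under Assumptions~\ref{ass:C1}--\ref{ass:C3}), it suffices to show that the centered process $X_n(t):=W(\lfloor nt\rfloor)-\E W(\lfloor nt\rfloor)$ converges weakly in $D[0,1]$ to $\omega V$. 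Inserting $\bepsilon_i=\bms^{1/2}\sum_{\ell\ge0}b_\ell\bZ_{i-\ell}$ and reindexing gives $\sum_i a_{i,k}\bepsilon_i=\bms^{1/2}\sum_m d_{m,k}\bZ_m$ with $d_{m,k}=\sum_{\ell\ge0}a_{m+\ell,k}b_\ell$ (setting $a_{i,k}=0$ for $i\notin\{1,\dots,n\}$), so that, for $k=\lfloor nt\rfloor$,
\[
X_n(t)=\frac{k^2(n-k)^2}{n^3\sqrt p}\Bigl[\sum_m d_{m,k}^2\bigl(\bZ_m^\T\bms\bZ_m-\tr(\bms)\bigr)+\sum_{m\ne m'}d_{m,k}d_{m',k}\,\bZ_m^\T\bms\bZ_{m'}\Bigr]
\]
is a centered quadratic form in the i.i.d.\ entries $Z_{mj}$. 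Using $\sum_m d_{m,k}^2\sim s^2 n/(k(n-k))$, $\max_m d_{m,k}^2\lesssim(k\wedge(n-k))^{-2}$, and $\var(\bZ_m^\T\bms\bZ_m)=O(p)$ (Assumption~\ref{ass:C3} and $\E Z_{ij}^4<\infty$), the ``diagonal'' bracket has variance $O(1/n)$ uniformly in $t$, with equally small increments, so it is negligible in $D[0,1]$; in particular the limit does not involve $\E Z_{ij}^4$. It thus remains to treat the bilinear part $B_n(t):=\frac{k^2(n-k)^2}{n^3\sqrt p}\sum_{m\ne m'}d_{m,k}d_{m',k}\bZ_m^\T\bms\bZ_{m'}$.

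For the finite-dimensional distributions I would fix $0\le t_1<\dots<t_r\le1$, set $k_q=\lfloor nt_q\rfloor$, and for arbitrary $c_1,\dots,c_r$ write $\sum_q c_q B_n(t_q)=\sum_{m<m'}\alpha_{mm'}\bZ_m^\T\bms\bZ_{m'}$ with $\alpha_{mm'}=2\sum_q c_q\frac{k_q^2(n-k_q)^2}{n^3\sqrt p}d_{m,k_q}d_{m',k_q}$. Ordering the innovations by their larger index gives a martingale-difference decomposition $\sum_j D_j$, $D_j=\bigl(\sum_{m<j}\alpha_{mj}\bZ_m^\T\bigr)\bms\bZ_j$, to which I would apply the martingale CLT: one verifies that $\sum_j\E[D_j^2\mid\mathcal F_{j-1}]$ converges in probability to the target variance and that the conditional Lindeberg condition holds. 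Both reductions rest on the weight asymptotics $\sum_m d_{m,k}d_{m,k'}\sim s^2 n/\bigl((k\vee k')(n-k\wedge k')\bigr)$ --- obtained by replacing $d_{m,k}$ with its ``instantaneous'' value $s\,a_{m,k}$ away from $k$ and the boundary --- together with $\tr(\bms^2)=O(p)$ and the identity $\bOme=s^2\bms$, whence $\tr(\bOme^2)=s^4\tr(\bms^2)$. This gives $\bigl(X_n(t_1),\dots,X_n(t_r)\bigr)\cd\mathcal N(0,\Sigma)$ with $\Sigma_{qq'}=\omega^2(t_q\wedge t_{q'})^2(1-t_q\vee t_{q'})^2$; the squared factor is exactly Wick's formula applied to the Brownian-bridge-type covariance of the underlying CUSUM, and $\Sigma_{qq'}=\omega^2\E\{V(t_q)V(t_{q'})\}$, as required.

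For tightness in $D[0,1]$ I would first establish the second-moment increment bound $\E[(X_n(t)-X_n(s))^2]\lesssim|t-s|$: the increment is again a centered quadratic form, and $\sum_m(d_{m,\lfloor nt\rfloor}-d_{m,\lfloor ns\rfloor})^2$ combined with the $k^2(n-k)^2/n^3$ prefactor supplies the factor $|t-s|$, while the vanishing of that prefactor at $t=0,1$ controls the degenerate endpoints. To upgrade to the fourth-moment bound $\E[(X_n(t)-X_n(s))^4]\lesssim|t-s|^2$ demanded by Billingsley's moment criterion, I would truncate each $Z_{ij}$ at a slowly growing level (the truncated variables then have all moments, and the deleted part contributes negligible variance, using only $\E Z_{ij}^4<\infty$) and invoke standard moment inequalities for quadratic forms in bounded independent variables. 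Finite-dimensional convergence, tightness, and the continuity of $V$ then yield $X_n\Rightarrow\omega V$ in $D[0,1]$, which is Theorem~\ref{Th1}.

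I expect the main obstacle to be the martingale-CLT step for the bilinear form: pinning down the limiting conditional variance as exactly $\omega^2(t\wedge s)^2(1-t\vee s)^2$ hinges on the precise interplay between the CUSUM weights $a_{i,k}$ and the moving-average filter $\{b_\ell\}$, and it is here that the filter truncation at lag $M=\lceil(n\wedge p)^{1/8}\rceil$ and the decay $\ell^5 b_\ell\to0$ of Assumption~\ref{ass:C1}(ii) enter, ensuring that $d_{m,k}$ may be replaced by $s\,a_{m,k}$ with error negligible uniformly in $k$. A secondary difficulty is obtaining moment control uniform in $t\in[0,1]$ --- in particular near the degenerate endpoints --- under only a fourth-moment assumption on the innovations, which is what necessitates the truncation device rather than a direct hypercontractivity estimate.
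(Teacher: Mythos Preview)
Your proposal is correct and takes a genuinely different route from the paper. The paper proceeds through a three-layer approximation: it first replaces $\{\bepsilon_i\}$ by its $M$-dependent truncation $\{\bepsilon_i^{(M)}\}$, then uses a big-block/small-block scheme (block length $w_n\sim n^{\alpha_w}$) to produce independent block sums $\tilde\eeta_{i,t}$, applies a Lindeberg swapping argument at the block level to pass to a Gaussian surrogate $\{\bxi_i\}$, and only then runs a block-indexed martingale CLT on the Gaussian version; tightness is obtained by decomposing $\mathcal H_{1,t}$ into three explicit pieces and applying Doob's inequality to each. You instead expand the linear process once, obtaining a quadratic form in the i.i.d.\ innovations $\{\bZ_m\}$ with explicit coefficients $d_{m,k}$, and run the martingale CLT directly at the innovation level---no Gaussian intermediary, no blocking. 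Your covariance identification via $\sum_m d_{m,k}d_{m,k'}\sim s^2 n/\{(k\vee k')(n-k\wedge k')\}$ is precisely the computation the paper carries out implicitly through the block variances $\mathcal V_{i,t}$ and $\mathcal V_{i,ts}$, so the limits agree. Your route is shorter and more transparent for this theorem in isolation; the paper's detour through blocking and the Gaussian surrogate, however, is not wasted effort---the same machinery (the block martingale structure and the Lindeberg comparison) is reused essentially verbatim in the proofs of Theorems~\ref{indnull} and~\ref{indalter}, where a Gaussian comparison is indispensable. One small correction to your closing paragraph: in your own argument the truncation at lag $M$ does \emph{not} enter the martingale step---it appears only through the deterministic replacement $\mu_k\to\mu_{M,k}$, which you dispatch at the outset. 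The decay $\ell^5 b_\ell\to0$ is what makes the approximation $d_{m,k}\approx s\,a_{m,k}$ uniform away from $m\in\{0,k,n\}$ and controls the boundary contributions, but once you work with the exact $d_{m,k}$ no further filter truncation is needed.
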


In practice, it is essential to estimate $\mu_{M,\lfloor n t\rfloor}$ and $\omega$ in Theorem 1. Define $\breve{\bX}_{f,h,g,k} = (\bX_{f}-\bX_{f+M+h+1})^{\T}(\bX_{g}-\bX_{g+M+k+1})$.
Inspired by the procedure of moving ranges of neighboring samples proposed in \cite{10.1142/s201032631950014x}, we propose a consistent estimator of $\mu_{M,\lfloor n t\rfloor}$ as follows,
\begin{eqnarray}
\label{mu-hat}
\hat{\mu}_{M,\lfloor n t\rfloor}=\frac{\lfloor n t\rfloor^2(n-\lfloor n t\rfloor)^2}{n^3\sqrt{p}}\sum\limits_{h=0}^{M}\sum\limits_{i=1}^{n-h}{\{2-\mathbb{I}(h=0)\}}a_{i,\lfloor n t\rfloor}a_{i+h,\lfloor n t\rfloor}\widehat{\tr\{\bGam(h)\}},
\end{eqnarray}
where
\begin{align*}
\widehat{\tr\{\bGam(h)\}}=\frac{1}{2n}\sum\limits_{t=1}^{n-M-2h-1}\breve{\bX}_{t+h,h,t,h}.
\end{align*}
Motivated by \cite{zhang2018clt}, we consider the following estimator for $\tr\{\bGam(h)\bGam(k)\}$,
\begin{align*}
\widehat{\tr\{\bGam(h)\bGam(k)\}}=
\frac{\sum\limits_{t=1}^{[n/2]-M-2k-1}\sum\limits_{s=t+[n/2]}^{n-M-2k-1}
\breve{\bX}_{t,h,s,k}\breve{\bX}_{t+h,h,s+k,k}}{4(n-k-3/2[n/2]-M/2)([n/2]-M-2k-1)}.
\end{align*}
Then, we propose the ratio-consistent estimator of $\omega$,
\begin{align}
\label{omega-hat}
\hat{\omega}=\bigg\{\frac{2}{p}\bigg(\widehat{\tr\{\bGam(0)\bGam(0)\}}+2\sum\limits_{h=1}^M \widehat{\tr\{\bGam(h)\bGam(0)\}}+2\sum\limits_{k=1}^M \widehat{\tr\{\bGam(0)\bGam(h)\}}
+4\sum\limits_{h,k=1}^M \widehat{\tr\{\bGam(h)\bGam(k)\}}\bigg)\bigg\}^{1/2}.
\end{align}
The following Theorem~\ref{Th2} shows the consistency of the estimators.

\begin{theorem} \label{Th2}
Under Assumptions~\ref{ass:C1}--\ref{ass:C3}, if $M=\lceil(n\wedge p)^{1/8}\rceil$, $p=o(n^{7/4})$ and $\|\bdelta\|^2=o(n p^{1/2}/M^2)$, we have
$\hat{\mu}_{M,\lfloor n t\rfloor}-\mu_{M,\lfloor n t\rfloor} = o_p(\omega)$,
for any $t \in[0,1]$, and
$\hat{\omega}/\omega\stackrel{p}{\rightarrow} 1$, as $(n,p)\rightarrow\infty$.
\end{theorem}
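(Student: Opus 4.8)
The plan is to reduce both assertions to the consistency of the elementary building blocks $\widehat{\tr\{\bGam(h)\}}$ for $0\le h\le M$ and $\widehat{\tr\{\bGam(h)\bGam(k)\}}$ for $0\le h,k\le M$, since $\hat{\mu}_{M,\lfloor nt\rfloor}$ in \eqref{mu-hat} and $\hat{\omega}^2$ in \eqref{omega-hat} are fixed linear combinations of these with \emph{deterministic} coefficients. For the first claim I would use that $\widehat{\tr\{\bGam(h)\}}$ carries no index $i$, so the weights factor out of the inner sum; since $\sum_{i=1}^{n-h}a_{i,k}a_{i+h,k}=(n-h)/\{k(n-k)\}$ with $k=\lfloor nt\rfloor$, the prefactor $\lfloor nt\rfloor^2(n-\lfloor nt\rfloor)^2/(n^3\sqrt p)$ cancels down to a uniformly bounded multiple of $p^{-1/2}$, giving
\begin{align*}
\hat{\mu}_{M,\lfloor nt\rfloor}-\mu_{M,\lfloor nt\rfloor}=\frac{1}{\sqrt p}\sum_{h=0}^{M}\kappa_{h,n,t}\bigl(\widehat{\tr\{\bGam(h)\}}-\tr\{\bGam(h)\}\bigr),\qquad \sup_{h,t}|\kappa_{h,n,t}|\lesssim 1 .
\end{align*}
For the second I would invoke the identity $\bOme=\bGam(0)+2\sum_{h\ge1}\bGam(h)$ together with $\bGam(h)=c_h\bms$, which yields $\tr(\bOme^2)=\tr\{\bGam(0)^2\}+4\sum_{h\ge1}\tr\{\bGam(0)\bGam(h)\}+4\sum_{h,k\ge1}\tr\{\bGam(h)\bGam(k)\}$; this is precisely the combination inside \eqref{omega-hat} once the infinite sums are truncated at $M$, and the truncation error is $o(1)$ because Assumption~\ref{ass:C1}(ii) forces $|c_m|=o(m^{-5})$. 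Thus it suffices to establish $\widehat{\tr\{\bGam(h)\}}-\tr\{\bGam(h)\}=o_p(\sqrt p/M)$ and $\widehat{\tr\{\bGam(h)\bGam(k)\}}-\tr\{\bGam(h)\bGam(k)\}=o_p(p/M^2)$, uniformly in the relevant lags.

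For each building block I would split the error into its mean (bias) and the centred fluctuation. \emph{Bias.} Expanding $\breve{\bX}_{t+h,h,t,h}$ through $\bX_i=\bmu_0+\bdelta\ind{i>\tau}+\bepsilon_i$ and $\cov(\bepsilon_i,\bepsilon_{i'})=c_{|i-i'|}\bms$, the paired differencing is arranged so that $\E[\breve{\bX}_{t+h,h,t,h}]=2\tr\{\bGam(h)\}$ plus three remainders: (i) a spill-over term $-(c_{M+1}+c_{M+2h+1})\tr(\bms)$ from lags exceeding the window width $M$; (ii) an $O(M/n)$ normalization gap between $1/(2n)$ and the true number of summands; and (iii) a mean-change contamination supported on the $O(M)$ indices $t$ within distance $M$ of $\tau$, each of size $O(\|\bdelta\|^2)$. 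Summed against the weights and divided by $\sqrt p$, (i) contributes $o(\sqrt p/M^4)$, (ii) contributes $O(M\sqrt p/n)$, and (iii) contributes $O(M^2\|\bdelta\|^2/(n\sqrt p))$; under $M=\lceil(n\wedge p)^{1/8}\rceil$, $p=o(n^{7/4})$ and $\|\bdelta\|^2=o(np^{1/2}/M^2)$ these are all $o(1)=o_p(\omega)$. The analysis of $\E[\breve{\bX}_{t,h,s,k}\breve{\bX}_{t+h,h,s+k,k}]$ is parallel, the extra point being that the gap $s\ge t+[n/2]$ renders the errors of the two index blocks asymptotically uncorrelated, so the expectation is $4c_hc_k\tr(\bms^2)$ up to spill-over, normalization and mean-change remainders handled exactly as before. \emph{Fluctuation.} I would write each centred building block as an average over $t$ (respectively over pairs $(t,s)$) of products of inner products of error-difference vectors, expand the required second moments into Isserlis-type contractions via Assumption~\ref{ass:C1}(i) ($Z_{ij}$ i.i.d.\ with $\E Z_{ij}^4<\infty$), note that the surviving contractions reduce to quantities of the form $\tr(\bms^2)=O(p)$ by Assumption~\ref{ass:C3}, and use summability of $\{|b_\ell|\}$ to bound the number of index separations that contribute (plus the mild extra variance from the mean change, again absorbed by $\|\bdelta\|^2=o(np^{1/2}/M^2)$). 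This should give $\var(\widehat{\tr\{\bGam(h)\}})=O(Mp/n)$ and $\var(\widehat{\tr\{\bGam(h)\bGam(k)\}})=O(p^2/n^2+Mp/n)$, whence, by the Cauchy--Schwarz inequality over the lags, the fluctuation part of $\hat{\mu}_{M,\lfloor nt\rfloor}-\mu_{M,\lfloor nt\rfloor}$ is $O_p(M^{3/2}/\sqrt n)$ and that of $\hat{\omega}^2-\omega^2$ is $O_p(M^2/n+M^{5/2}/\sqrt{pn})$, both $o_p(1)$ under the stated rates.

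Putting the bias and fluctuation bounds together gives $\hat{\mu}_{M,\lfloor nt\rfloor}-\mu_{M,\lfloor nt\rfloor}=o_p(\omega)$ for each fixed $t\in[0,1]$ and $\hat{\omega}^2-\omega^2=o_p(1)$; since Assumption~\ref{ass:C3} keeps $\omega$ bounded away from $0$ and $\infty$, $\hat{\omega}/\omega\cp 1$ then follows from Slutsky's theorem and the continuous mapping theorem applied to $x\mapsto\sqrt x$.

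I expect the main obstacle to be the fluctuation analysis of $\widehat{\tr\{\bGam(h)\bGam(k)\}}$: it is a degenerate-$U$-statistic-type double sum whose summands are quartic in the errors and whose index blocks overlap in intricate ways as $(t,s)$ range, so bounding its variance demands a careful bookkeeping of the fourth-order moment structure of the linear process and of the overlap patterns between pairs of blocks, checking in every case that no term exceeds $O(p^2/n^2+Mp/n)$ --- this is exactly where the restriction $p=o(n^{7/4})$ (through the choice $M=\lceil(n\wedge p)^{1/8}\rceil$) becomes binding. A close second is verifying that the spill-over bias terms of type (i), which decay only at the polynomial rate afforded by $\ell^5 b_\ell\to 0$, remain negligible against the fluctuation scale $\omega\asymp 1$ uniformly over the admissible range of $p$.
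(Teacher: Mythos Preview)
Your proposal is essentially the paper's own argument: the proof of Theorem~\ref{Th2} is carried out via Lemmas~\ref{Th2-L1} and~\ref{Th2-L2}, which decompose $\widehat{\tr\{\bGam(h)\}}$ and $\widehat{\tr\{\bGam(h)\bGam(k)\}}$ into a pure-error piece, cross terms linear in $\bdelta$, and a pure $\bdelta^\top\bdelta$ piece, and then bound bias (spill-over lags beyond $M$ via $\sum_{\ell\ge M}|b_\ell|=o(M^{-4})$, normalization gap, and the $M^2\|\bdelta\|^2/(n\sqrt p)$ mean-change term) and variance (fourth-moment expansion of the linear process, separating the ``high-order'' non-Gaussian cumulant contributions from the Wick-type contractions) exactly as you outline, with the $p=o(n^{7/4})$ restriction entering at the same place you anticipate. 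One minor slip: $\sum_{i=1}^{n-h}a_{i,k}a_{i+h,k}=n/\{k(n-k)\}-h(n^2-nk+k^2)/\{k^2(n-k)^2\}$ rather than $(n-h)/\{k(n-k)\}$, but the correction term is $O(M/n^2)$ so your conclusion $\sup_{h,t}|\kappa_{h,n,t}|\lesssim 1$ remains valid and the rest of the argument is unaffected.
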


Based on Theorem \ref{Th1}--\ref{Th2}, we finally propose the max-$L_2$-type test statistic,
\begin{align*}
S_{n,p}=\max_{t\in [0,1]}\left\{W(\lfloor n t\rfloor)-\hat{\mu}_{M,\lfloor n t\rfloor}\right\},
\end{align*}
and the following Theorem \ref{Tms-1} shows that the max-$L_2$-type test statistic $S_{n,p}$ converges to a Gaussian process.
\begin{theorem}
\label{Tms-1}
Under Assumptions~\ref{ass:C1}--\ref{ass:C3} and $H_0$, if $M=\lceil(n\wedge p)^{1/8}\rceil$ and $p=o(n^{7/4})$, we have
  \begin{align*}
  S_{n,p}/\hat{\omega} \stackrel{d}{\rightarrow}  \max_{t\in [0,1]} V(t),
  \end{align*}
   as $(n,p)\rightarrow\infty$, where $V(t)$ is a continuous Gaussian process with $\E \{V(t)\}=0$ and
$$
\E\{V(t) V(s)\}=(1-t)^{2} s^{2}, \quad 0 \leqslant s \leqslant t \leqslant 1.
$$
\end{theorem}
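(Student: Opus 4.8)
The plan is to combine the process-level weak convergence supplied by Theorem~\ref{Th1} with the estimation-error controls of Theorem~\ref{Th2}, and to conclude via the continuous mapping theorem and Slutsky's lemma. Set $G_n(t):=\{W(\lfloor nt\rfloor)-\mu_{M,\lfloor nt\rfloor}\}/\omega$, the theoretically centred and scaled CUSUM process, so that
\begin{align*}
\frac{S_{n,p}}{\hat{\omega}}=\frac{\omega}{\hat{\omega}}\,\max_{t\in[0,1]}\Big\{G_n(t)+\frac{\mu_{M,\lfloor nt\rfloor}-\hat{\mu}_{M,\lfloor nt\rfloor}}{\omega}\Big\}.
\end{align*}
Using $\big|\max_t f(t)-\max_t g(t)\big|\le\sup_t|f(t)-g(t)|$, it suffices to show: (i) $R_n:=\sup_{t\in[0,1]}\big|\mu_{M,\lfloor nt\rfloor}-\hat{\mu}_{M,\lfloor nt\rfloor}\big|/\omega=o_p(1)$; (ii) $G_n\stackrel{d}{\rightarrow}V$ as processes on $[0,1]$; and (iii) $\omega/\hat{\omega}\stackrel{p}{\rightarrow}1$. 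Granting (i)--(iii), the continuous mapping theorem for the supremum functional gives $\max_t G_n(t)\stackrel{d}{\rightarrow}\max_t V(t)$, whence $S_{n,p}/\hat{\omega}=(\omega/\hat{\omega})\{\max_t G_n(t)+o_p(1)\}\stackrel{d}{\rightarrow}\max_t V(t)$ by Slutsky's lemma.

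Items (ii) and (iii) follow from the preceding results. Item (iii) is the second conclusion of Theorem~\ref{Th2}: under $H_0$ the side requirement $\|\bdelta\|^2=o(np^{1/2}/M^2)$ is vacuous, and since $\omega$ is bounded away from $0$ and $\infty$ by Assumption~\ref{ass:C3} the passage to $\omega/\hat{\omega}$ is harmless. For (ii), Theorem~\ref{Th1} is itself a functional statement (its proof establishes convergence of the finite-dimensional distributions of $G_n$ together with tightness in $D[0,1]$), so $G_n\stackrel{d}{\rightarrow}V$ in $D[0,1]$, and since the map $x\mapsto\sup_{t\in[0,1]}x(t)$ is continuous on $D[0,1]$, the continuous mapping theorem applies. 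Because $W(0)=W(n)=0$, $\mu_{M,0}=\mu_{M,n}=0$, and $V(0)=V(1)=0$ a.s., the maximum over the closed or the half-open interval coincides, so there is no boundary issue.

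The substantive step is (i): promoting the pointwise-in-$t$ consistency of $\hat{\mu}_{M,\lfloor nt\rfloor}$ from Theorem~\ref{Th2} to uniformity over $t$. The key is that the only random quantities entering $\hat{\mu}_{M,\lfloor nt\rfloor}-\mu_{M,\lfloor nt\rfloor}$ are the $M+1$ differences $\widehat{\tr\{\bGam(h)\}}-\tr\{\bGam(h)\}$, $h=0,\dots,M$, which do not depend on $t$; all the $t$-dependence is carried by deterministic weights. Indeed,
\begin{align*}
\hat{\mu}_{M,\lfloor nt\rfloor}-\mu_{M,\lfloor nt\rfloor}=\frac{1}{\sqrt{p}}\sum_{h=0}^{M}\Big(\widehat{\tr\{\bGam(h)\}}-\tr\{\bGam(h)\}\Big)\,w_h(t),\qquad w_h(t):=\frac{\lfloor nt\rfloor^2(n-\lfloor nt\rfloor)^2}{n^3}\{2-\mathbb{I}(h=0)\}\sum_{i=1}^{n-h}a_{i,\lfloor nt\rfloor}a_{i+h,\lfloor nt\rfloor},
\end{align*}
and an elementary evaluation of $\sum_{i}a_{i,k}a_{i+h,k}$ (together with $\lfloor nt\rfloor^2(n-\lfloor nt\rfloor)^2n^{-3}\le n/4$) yields $|w_h(t)|\le C$ uniformly in $t\in[0,1]$ and $0\le h\le M$ for an absolute constant $C$; for interior $t$ one in fact has $w_h(t)=\{2-\mathbb{I}(h=0)\}\{t(1-t)+O(h/n)\}$. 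Therefore $R_n\le (C/\omega)\,p^{-1/2}\sum_{h=0}^{M}\big|\widehat{\tr\{\bGam(h)\}}-\tr\{\bGam(h)\}\big|$, and the estimate $p^{-1/2}\sum_{h=0}^{M}\big|\widehat{\tr\{\bGam(h)\}}-\tr\{\bGam(h)\}\big|=o_p(1)$ is precisely the bound produced in the proof of Theorem~\ref{Th2} (which drives the pointwise conclusion there, e.g.\ at $t=1/2$). I expect this to be the main obstacle, because a merely pointwise-in-$t$ conclusion is not enough: one needs a rate for each trace estimator whose sum over the $M+1$ lags remains negligible, and it is exactly here that the scalings $M=\lceil(n\wedge p)^{1/8}\rceil$ and $p=o(n^{7/4})$ are used to keep the accumulated estimation error asymptotically negligible relative to $\omega$.
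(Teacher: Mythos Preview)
Your proposal is correct and follows essentially the same approach as the paper, which simply invokes Theorems~\ref{Th1}--\ref{Th2} together with the continuous mapping theorem for the supremum functional on $D[0,1]$. Your treatment is in fact more careful than the paper's one-line proof, since you explicitly address the uniformity-in-$t$ of the centering error $\hat{\mu}_{M,\lfloor nt\rfloor}-\mu_{M,\lfloor nt\rfloor}$ via the observation that its randomness enters only through the $t$-independent trace estimators with uniformly bounded deterministic weights---a point the paper leaves implicit.
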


The $p$-value of the test based on $S_{n,p}$ is given by
\begin{align}
    \notag {p}_{S_{n,p}}=1-F_{V}\left(S_{n,p}/\hat{\omega}\right),
\end{align}
where $F_{V}(\cdot)$ is the cumulative distribution function (CDF) of $\max_{0\leq t\leq 1} V(t)$. If the $p$-value is smaller than the pre-specified significance level $\alpha\in (0,1)$,  we reject the null hypothesis, that is, there are no changepoints in the data sequence.
\begin{remark}
\label{rem-1}
Following Equation (15) in \cite{konakov2020extremes}, there exist a constant $c$, such that
\begin{align*}
\pr\left(\max_{t\in [0,1]} V(t)\ge u\right)=\frac{c}{u}e^{-8u^2}\{1+o(1)\}.
\end{align*}
The constant $c$ could be estimated via Monte Carlo simulation. Specifically, let $T=\{t_i = i/T_d, i=1,\dots,T_d\}$ and the number of replications $B$. For $b=1,\dots,B$, let $v_{b} = \max_{t\in T} V_b(t)$, where $\left(V_b(1),\dots,V_b(T_d)\right)^{\T}$ is sample from the $T_d$-dimensional multivariate normal distribution with mean zero and covariance matrix with the $(k,\ell)$th element given by $(1-t_{k})^2t_{\ell}^2$ for $1\leq k \leq \ell \leq T_d$. Naturally, the estimated CDF of $\max_{0\leq t\leq 1} V(t)$ can be constructed with $\{v_b\}_{b=1}^B$, and the estimation $\hat{c}$ will be obtained with a given nominal significant level $\alpha$.
Thus, the $p$-value can be alternatively calculated as
\begin{align*}
p_{S_{n,p}}=\frac{\hat{\omega}\hat{c}}{S_{n,p}}e^{-8 S_{n,p}^2/\hat{\omega}^2}.
\end{align*}
\end{remark}

Similar to the approach of \cite{10.1007/s11425-016-0058-5}, the test based on $S_{n,p}$ can be expected that $\max$-$L_{2}$-type test performs well in detecting small but dense signals. The following proposition established the test consistency under $H_1$.

 \begin{prop}\label{prop:Tms-alter}
     Under the assumptions in Theorem~\ref{Th2} and $H_1$, if $\tau\sim n$ and $\|\bdelta\|^2\gtrsim p^{1/2}n^{-1}$, then we have, the test based on $S_{n,p}$ is consistent, as $(n,p)\rightarrow\infty$.
 \end{prop}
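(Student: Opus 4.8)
The plan is to show that under $H_1$ the statistic $S_{n,p}/\hat\omega$ diverges to $+\infty$ in probability, so that the test rejects with probability tending to one. The natural route is to evaluate the test statistic at the single time point $t_0 = \tau/n$, giving the lower bound $S_{n,p} \ge W(\lfloor n t_0\rfloor) - \hat\mu_{M,\lfloor n t_0\rfloor}$, and then to show that this particular term already blows up relative to $\hat\omega$. First I would decompose $\bX_i = \bmu_0 + \bdelta\,\ind{i>\tau} + \bepsilon_i$ inside $W(k)$ with $k = \lfloor n t_0\rfloor = \tau$. Writing $\sum_{i=1}^k \bX_i - \frac{k}{n}\sum_{i=1}^n\bX_i = \mathbf{m}_k + \mathbf{e}_k$, where $\mathbf{m}_k$ is the deterministic mean part and $\mathbf{e}_k = \sum_{i=1}^k\bepsilon_i - \frac{k}{n}\sum_{i=1}^n\bepsilon_i$ is the noise part (note the $\bmu_0$ contribution cancels in the centered sum), one gets
\begin{align*}
W(k) = \frac{1}{n\sqrt p}\Big(\|\mathbf{m}_k\|^2 + 2\,\mathbf{m}_k^\T\mathbf{e}_k + \|\mathbf{e}_k\|^2\Big).
\end{align*}
At $k=\tau$ with $\tau\sim n$, the mean part is $\mathbf{m}_\tau = -\frac{\tau(n-\tau)}{n}\bdelta$, so $\|\mathbf{m}_\tau\|^2 = \frac{\tau^2(n-\tau)^2}{n^2}\|\bdelta\|^2 \asymp (n-\tau)^2\|\bdelta\|^2$, and hence the signal term in $W(\tau)$ is of order $\frac{(n-\tau)^2}{n\sqrt p}\|\bdelta\|^2$.

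The key comparisons are then: (i) the noise term $\frac{1}{n\sqrt p}\|\mathbf{e}_\tau\|^2$ has mean $\mu_{M,\tau}\{1+o(1)\}$ and, after subtracting the consistent estimate $\hat\mu_{M,\tau} = \mu_{M,\tau} + o_p(\omega)$ (Theorem \ref{Th2}, whose conditions are inherited here), what remains is $O_p(\omega)$ in magnitude, i.e.\ of the same order as the null fluctuations; (ii) the cross term $\frac{2}{n\sqrt p}\mathbf{m}_\tau^\T\mathbf{e}_\tau$ is mean zero with variance of order $\frac{1}{n^2 p}\|\mathbf{m}_\tau\|^2\,\mathbf{1}^\T\bOme\,\mathbf{1}$-type quantity $\lesssim \frac{(n-\tau)^2}{n^2 p}\|\bdelta\|_{\bOme}^2$, which under Assumption \ref{ass:C3} is $O\big(\frac{(n-\tau)^2}{n^2 p}\|\bdelta\|^2\big)$ and is therefore negligible compared to the square of the signal term provided $\|\bdelta\|^2$ is not too small — a Cauchy–Schwarz / dominated-by-the-mean-term argument. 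Combining (i) and (ii), on the event of interest
\begin{align*}
\frac{S_{n,p}}{\hat\omega} \;\ge\; \frac{W(\tau) - \hat\mu_{M,\tau}}{\hat\omega} \;\gtrsim\; \frac{(n-\tau)^2\,\|\bdelta\|^2}{n\sqrt p\,\omega} \;-\; O_p(1),
\end{align*}
using $\hat\omega/\omega\to_p 1$ and $\omega = \lim\sqrt{2\tr(\bOme^2)/p} \asymp 1$ by Assumption \ref{ass:C3} (since $\tr(\bOme^2)=O(p)$ and bounded below). With $\tau\sim n$, so that $(n-\tau)\asymp n$, and $\|\bdelta\|^2 \gtrsim p^{1/2} n^{-1}$, the leading term is $\gtrsim \frac{n^2\cdot p^{1/2}n^{-1}}{n\sqrt p} = 1$; to actually get divergence one uses that the signal dominates by a growing factor whenever $\|\bdelta\|^2 \gg p^{1/2}n^{-1}$, and at the boundary rate $\|\bdelta\|^2 \asymp p^{1/2}n^{-1}$ the statistic stays bounded away from the null quantiles by a fixed constant that can be pushed past any critical value — this is the standard "$\gtrsim$" reading of the detection boundary, matching the statement. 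Thus $\pr(\text{reject}) \to 1$.

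The main obstacle I anticipate is controlling the cross term $\mathbf{m}_\tau^\T\mathbf{e}_\tau$ and, more delicately, making sure the subtraction of the \emph{estimated} bias $\hat\mu_{M,\tau}$ does not accidentally cancel the signal: one must check that $\hat\mu_{M,\tau}$, which is built from moving-range-type products of the $\bX_i$'s, does not itself absorb an $O((n-\tau)^2\|\bdelta\|^2)$ contribution from $\bdelta$. This is exactly why the extra restriction $\|\bdelta\|^2 = o(np^{1/2}/M^2)$ from Theorem \ref{Th2} is imposed — it guarantees the bias estimator is still $\mu_{M,\tau}+o_p(\omega)$ under the alternative, so the signal survives intact in $W(\tau)-\hat\mu_{M,\tau}$. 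Reconciling the lower requirement $\|\bdelta\|^2\gtrsim p^{1/2}n^{-1}$ for power with the upper requirement $\|\bdelta\|^2 = o(np^{1/2}/M^2)$ for bias-estimation validity is routine since $M=\lceil(n\wedge p)^{1/8}\rceil$ makes the window $(p^{1/2}n^{-1},\ np^{1/2}/M^2)$ nonempty and wide, but it needs to be stated carefully. Once that is in place, the remaining steps are moment bounds of the type already developed for Theorems \ref{Th1}--\ref{Th2}.
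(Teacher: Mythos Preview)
Your proposal is correct and follows essentially the same route as the paper: both evaluate the statistic at $k=\tau$, invoke Theorem~\ref{Th2} to control $\hat\mu_{M,\tau}$ and $\hat\omega$ under the alternative, and show the signal term $\asymp n p^{-1/2}\|\bdelta\|^2$ dominates the $O_p(1)$ null fluctuations. The only cosmetic difference is that the paper bounds $W(\tau)$ below via the reverse triangle inequality $\|{-\bdelta}+\mathbf{e}\|^2\ge(\|\bdelta\|-\|\mathbf{e}\|)^2$, whereas you expand the square and bound the cross term $\mathbf{m}_\tau^\T\mathbf{e}_\tau$ by its variance; your variance bound on the cross term is actually a bit sharper than the paper's Cauchy--Schwarz step, but the overall argument and conclusion are the same.
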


\section{Adaptive tests}\label{sec:3-adaptive}

We first restate two versions of $\max$-$L_\infty$-type statistics that were adopted in the literature, to wit,
\begin{align*}\label{M}
	M_{n,p}:=\max_{k=1,\ldots,n-1}\max_{j=1,\ldots,p}\vert C_{0, j}(k)\vert\ \text{and}\
	M^{\dagger}_{n,p}:=\max_{\lambda_n\leq k\leq n-\lambda_n}\max_{j=1,\ldots,p}\vert C_{0.5, j}(k)\vert,
\end{align*}
respectively, where we recall that the CUSUM statistics $C_{\gamma, j}(k)$'s with $\gamma=0$ or 0.5 are defined in \eqref{CUSUM1}, and $\lambda_n\in[1,n/2]$ is a pre-specified boundary removal parameter.
We also restate Theorem 1 in \cite{wang2023JRSSB} with some mild modifications.

\begin{theorem}\label{null:Max}
Suppose $H_0$ and Assumptions 1--2 in \cite{wang2023JRSSB} hold and the component-wise (long-run) estimators $\widehat{\sigma}_j$ in \eqref{CUSUM1} satisfy the uniform convergence $\max_{1\leq j\leq p}\vert\widehat{\sigma}_j-\sigma_j\vert=o(1)$. If $p\lesssim n^{\nu}$ for some $0<\nu<q/2-2$, we have
\begin{itemize}
\item[(i)] As $(n,p)\to\infty$,
\[
    \pr\left(M_{n,p}\leq u_p\{\exp(-x)\}\right) \to \exp\{-\exp(-x)\},
\]
where $u_p\{\exp(-x)\}=\sqrt{\{x+\log(2p)\}/2}$.

\item[(ii)] If $\lambda_n\sim n^{\lambda}$ for some $\lambda\in(0,1)$, then, as $(n,p)\to\infty$,
\[
    \pr\left(M^{\dagger}_{n,p}\leq \frac{x+D(p\log h_n)}{A(p\log h_n)}\right) \to \exp\{-\exp(-x)\},
\]
where $A(x)=\sqrt{2\log x}$, $D(x)=2\log x+2^{-1}\log\log x-2^{-1}\log\pi$ and $h_n=\left\{(\lambda_n/n)^{-1}-1\right\}^2$.
\end{itemize}
\end{theorem}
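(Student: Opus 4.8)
Since Theorem~\ref{null:Max} is a re-statement of Theorem~1 of \cite{wang2023JRSSB} up to allowing estimated normalizers, the plan is to reduce it to that result by a coupling argument that removes the estimation error carried by $\widehat\sigma_j$. Introduce the \emph{oracle} statistics $\widetilde M_{n,p}=\max_{1\le k\le n-1}\max_{1\le j\le p}\lvert\widetilde C_{0,j}(k)\rvert$ and $\widetilde M^{\dagger}_{n,p}=\max_{\lambda_n\le k\le n-\lambda_n}\max_{1\le j\le p}\lvert\widetilde C_{0.5,j}(k)\rvert$, obtained from \eqref{CUSUM1} by replacing $\widehat\sigma_j$ with the true (long-run) standard deviation $\sigma_j$. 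Under $H_0$ and Assumptions~1--2 of \cite{wang2023JRSSB}, Theorem~1 there delivers exactly the double-exponential limits in (i) and (ii) for $\widetilde M_{n,p}$ and $\widetilde M^{\dagger}_{n,p}$, with normalizers $u_p$, $A$, $D$; this is where the genuine probabilistic content lies, namely a Gaussian (strong) approximation of the coordinatewise partial-sum processes under the linear-process/weak-dependence structure, followed by an extreme-value analysis (Bonferroni and Chen--Stein type bounds exploiting near-independence across $j$ together with Brownian-bridge behaviour in $k$) producing the Gumbel law; I would cite this verbatim and focus on the modification.

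\textbf{Passing from the oracle to the feasible statistics.} Writing $C_{\gamma,j}(k)=\widetilde C_{\gamma,j}(k)\,\sigma_j/\widehat\sigma_j$ coordinatewise, taking squares and maximizing over $(k,j)$ gives $\widetilde M_{n,p}^{2}\min_{j}(\sigma_j/\widehat\sigma_j)^{2}\le M_{n,p}^{2}\le \widetilde M_{n,p}^{2}\max_{j}(\sigma_j/\widehat\sigma_j)^{2}$, hence
$\bigl\lvert M_{n,p}^{2}-\widetilde M_{n,p}^{2}\bigr\rvert\le \widetilde M_{n,p}^{2}\,\max_{1\le j\le p}\bigl\lvert\sigma_j^{2}/\widehat\sigma_j^{\,2}-1\bigr\rvert$, and identically for the trimmed version. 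Because the $\sigma_j$ are (by Assumptions~1--2 of \cite{wang2023JRSSB}) bounded away from $0$ and $\infty$ uniformly in $j$, the stated uniform consistency $\max_j\lvert\widehat\sigma_j-\sigma_j\rvert=o_p(1)$ yields $\max_j\lvert\sigma_j^{2}/\widehat\sigma_j^{\,2}-1\rvert=o_p(1)$. From step one, $\widetilde M_{n,p}^{2}=\tfrac{1}{2}\log(2p)+O_p(1)$ and $A(p\log h_n)\widetilde M^{\dagger}_{n,p}-D(p\log h_n)=O_p(1)$; under $\lambda_n\sim n^{\lambda}$ one has $h_n\asymp n^{2(1-\lambda)}$, so $\log h_n\asymp\log n$ and $A,D\asymp\log(p\log n)$. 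Substituting the sandwich bound shows that the error introduced in $2M_{n,p}^{2}-\log(2p)$ (respectively in $A\,M^{\dagger}_{n,p}-D$) is controlled by $\{\log(2p)\}\cdot\max_j\lvert\widehat\sigma_j/\sigma_j-1\rvert$ (respectively $\{\log(p\log n)\}\cdot\max_j\lvert\widehat\sigma_j/\sigma_j-1\rvert$). Once these products are shown to be $o_p(1)$, Slutsky's theorem transfers the Gumbel limits from $\widetilde M_{n,p},\widetilde M^{\dagger}_{n,p}$ to $M_{n,p},M^{\dagger}_{n,p}$, giving (i) and (ii).

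\textbf{Main obstacle.} The crude sandwich above only closes the argument if the long-run variance estimators are uniformly consistent at a rate that beats $\log p$ (and $\log\log n$), which is sharper than the bare $o_p(1)$ one naively reads off; this is the one point requiring care. I would settle it by invoking the explicit concentration rate of the neighboring-sample (moving-range) long-run variance estimator used in \cite{wang2023JRSSB}: under the moment condition $\E Z_{ij}^{q}<\infty$ and the dimension regime $p\lesssim n^{\nu}$ with $0<\nu<q/2-2$ — precisely the hypotheses of the theorem — a union bound over $j$ gives $\max_j\lvert\widehat\sigma_j-\sigma_j\rvert=o_p\{(\log p)^{-1}\}$, so that the $o(1)$ condition in the statement is met with room to spare in that setting; alternatively, the variance-estimation error can be absorbed inside the Gaussian-approximation step by a finer sandwiching, avoiding a uniform rate altogether. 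Everything else — the strong approximation, the extreme-value computation yielding $u_p$, $A$, $D$, and the range $0<\nu<q/2-2$ for $p\lesssim n^{\nu}$ — is imported unchanged from \cite{wang2023JRSSB}.
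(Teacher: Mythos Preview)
The paper provides no proof of this theorem: it is explicitly introduced as a restatement of Theorem~1 in \cite{wang2023JRSSB} ``with some mild modifications,'' and no argument appears in the Appendix. So there is nothing to compare your proof against on the paper's side---the result is simply imported.

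Your strategy is the natural one and matches what \cite{wang2023JRSSB} themselves do: establish the Gumbel limit for the oracle statistics $\widetilde M_{n,p},\widetilde M^{\dagger}_{n,p}$ via Gaussian approximation and extreme-value analysis, then pass to the feasible versions through the multiplicative identity $C_{\gamma,j}(k)=\widetilde C_{\gamma,j}(k)\,\sigma_j/\widehat\sigma_j$ and a sandwich bound. Your identification of the obstacle is also correct and, in fact, sharper than the paper's statement: the bare condition $\max_j\lvert\widehat\sigma_j-\sigma_j\rvert=o_p(1)$ is not by itself enough to make the error term $\{\log p\}\cdot\max_j\lvert\widehat\sigma_j/\sigma_j-1\rvert$ vanish; one needs a rate that beats $\log p$. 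The paper tacitly acknowledges this: its Lemma~\ref{lemma:LRV} in the Appendix establishes a polynomial rate $\max_j\lvert\widehat\sigma_j-\sigma_j\rvert=O_p(n^{-\tilde\delta})$ for the difference-based estimator of \cite{chan2022AoS}, which amply suffices. So the theorem as literally stated is slightly loose, and your proposal both reproduces the intended argument and correctly flags where the extra rate is required.
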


{For the variance estimator $\hat{\sigma}_j$, $j \in \{1, \ldots, p\}$, we follow the difference-based approach proposed by \cite{chan2022AoS}, which  may provides a more efficient variance estimation method under change-point scenarios\citep{chan2022AoS}. Specifically, we employ the third-order difference-based long-run variance estimator applied componentwise, with the bandwidth optimally selected to minimize the mean squared error; see $\hat{v}_{(3)}$ and Equation (6.3) in \cite{chan2022AoS}. If we further assume that $\E (Z_{ij}^{4+r})<\infty$ for some $r>0$, where $Z_{ij}$ is defined in Assumption~\ref{ass:C1} for $i\in\{1,\ldots,n\}$ and $j\in\{1,\ldots,p\}$, or alternatively impose Assumption~\ref{ass:C4} directly, under additional assumptions on the parameters and kernel function used in the statistic, the desired consistency hold. For details, see the Lemma~\ref{lemma:LRV} in Appendix.}

By Theorem \ref{null:Max}, the $p$-values associated with $M_{n,p}$ and $M^{\dagger}_{n,p}$ are
\begin{align*}
	p_{M_{n,p}} &= 1- G\big(2M_{n,p}^2-\log(2p)\big),\\
        p_{M^{\dagger}_{n,p}} &= 1- G\big(A(p \, \log \, \lambda^{\dagger}_n) M^{\dagger}_{n,p}-D(p \, \log \, \lambda^{\dagger}_n)\big),
\end{align*}
where $\lambda^{\dagger}_n = \{(\lambda_n/n)^{-1}-1\}^2$, $A(x) = \sqrt{2\, \log \, x}$, $D(x) = 2\, \log \, x + 2^{-1} \log\,\log \, x -2 ^{-1} \log\, \pi$ and $G(x) = \exp\{-\exp(-x)\}$.

In practice, the sparsity level of the potential signal is always unknown. To accommodate both sparse and dense alternatives, we adopt the $\max\text{-}L_\infty$-type statistic proposed by \citet{wang2023JRSSB} for detecting sparse signals, and the $\max\text{-}L_2$-type statistic for identifying dense signals. We then integrate the $\max\text{-}L_\infty$-type and $\max\text{-}L_2$-type methods to construct a more powerful testing procedure. Following \cite{wang2023JRSSB}, we introduce the following additional assumptions to highlight the key steps and streamline the proof.

\begin{assumption}\label{ass:C4}
    The components of $\bZ_i$ are independent sub-Gaussian variables, that is, there exists a constant $\kappa_0$, such that $\sup_{\ell\geq 1}\ell^{-1/2}\{\E (Z_{ij}^\ell)\}^{1/\ell}\leq \kappa_0$, $i=1,\ldots,n,j=1,\ldots,p$.
\end{assumption}
The following theorem establishes the asymptotic independence between the $\max\text{-}L_\infty$-type and $\max\text{-}L_2$-type statistics under the null hypothesis $H_0$.
\begin{theorem}\label{indnull}
  Suppose $H_0$ and Assumptions~\ref{ass:C1}--\ref{ass:C4} and Assumptions 1--2 in \cite{wang2023JRSSB} hold. If $\log n = o(p^{1/4})$, $p=o(n^{7/4})$, we have
  \begin{itemize}
  \item [(i)] As $(n,p)\rightarrow \infty$, $S_{n,p}$ is asymptotically independent of $M_{n,p}$, that is
  \begin{align*}
  \pr\left(S_{n,p}\le \hat{\omega} x, ~M_{n,p}\le u_p\{\exp(-y)\}\right)\rightarrow F_V(x)\cdot\exp{\{-\exp(-x)\}};
  \end{align*}
 \item [(ii)] If $\lambda_n\sim n^{\lambda}$ for some $\lambda\in(0,1)$, then, as $(n,p)\to\infty$, $S_{n,p}$ is asymptotically independent of $M^{\dagger}_{n,p}$, that is
 \begin{align*}
  \pr\left(S_{n,p}\le \hat{\omega} x, ~M^{\dagger}_{n,p}\le \nu_{p,n}(y)\right)\rightarrow F_V(x)\cdot\exp{\{-\exp(-x)\}},
  \end{align*}
where $\nu_{p,n}(y) = \{y+D(p\log h_n)\}/A(p\log h_n)$.
 \end{itemize}
\end{theorem}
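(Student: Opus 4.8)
The marginal limits are already available: Theorem~\ref{Tms-1} gives $S_{n,p}/\hat\omega\stackrel d\to\max_t V(t)$ and Theorem~\ref{null:Max} gives the Gumbel limits for $M_{n,p}$ and $M^\dagger_{n,p}$; by Theorem~\ref{Th2} we may replace $\hat\omega$ by $\omega$ throughout. So the whole content is the \emph{factorization} of the joint distribution. The plan is a three-step reduction: (a) replace the innovations $\bZ_i$ by i.i.d.\ $N(0,\I_p)$ vectors; (b) show that, at the relevant scales, $M_{n,p}$ is determined by a vanishingly small random set $E$ of ``extreme'' coordinates, while $S_{n,p}$ is unchanged (up to $o_p(1)$) when the coordinates in $E$ are deleted; (c) use the weak cross-coordinate coupling (Assumption~\ref{ass:C3}) to decorrelate the resulting ``bulk'' $L_2$ statistic from the ``extreme'' $L_\infty$ statistic. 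I would prove part (i) in full; part (ii) is identical after substituting $C_{0.5,j}$ for $C_{0,j}$, restricting the time index to $[\lambda_n,n-\lambda_n]$, and replacing $u_p\{\exp(-y)\}$ by $\nu_{p,n}(y)$ — the trimming only discards a fixed fraction of time points, so every estimate below carries over verbatim.

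\textbf{Step 1 (Gaussianization).} Under $H_0$ write the $j$-th CUSUM numerator $\beta_j(k)=n^{-1/2}(S_{kj}-(k/n)S_{nj})$, a linear functional of $\{\epsilon_{ij}\}_{i}$; then $W(\lfloor nt\rfloor)=p^{-1/2}\sum_{j=1}^p\beta_j(\lfloor nt\rfloor)^2$ and $C_{0,j}(k)=\beta_j(k)/\widehat\sigma_j$. Replacing $\bZ_i$ by i.i.d.\ Gaussian vectors changes the joint law of $(S_{n,p}/\omega,M_{n,p})$ negligibly: for the $L_\infty$ coordinate this is the high-dimensional Gaussian approximation for maxima of sums under Assumption~\ref{ass:C4} with $p\lesssim n^{\nu}$; for the $L_2$ coordinate it follows from the CLT underlying Theorem~\ref{Th1} together with concentration of the quadratic form, which is where $\log n=o(p^{1/4})$ and $p=o(n^{7/4})$ are used. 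Combining the two approximations on a common space (a Lindeberg/Slepian interpolation applied to a smooth function of both statistics) lets us assume henceforth that the innovations are Gaussian, and it suffices to prove the factorization in that case.

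\textbf{Step 2 (two-sided truncation).} Fix $b_p=\{(\log(2p)-\log\log p)/2\}^{1/2}$, just below the $L_\infty$ null scale $\{\log(2p)/2\}^{1/2}$, and set $E=\{j:\max_{1\le k\le n-1}|C_{0,j}(k)|>b_p\}$. Since $M_{n,p}-\{\log(2p)/2\}^{1/2}=O_p(1/\sqrt{\log p})$, w.h.p.\ $M_{n,p}>b_p$, hence w.h.p.\ $M_{n,p}=\max_{j\in E}\max_k|C_{0,j}(k)|$ and, more to the point, the event $\{M_{n,p}\le u_p\{\exp(-y)\}\}$ coincides a.s.\ with the same event for $\max_{j\in E}$. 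A union bound and the Brownian-bridge maximal tail give $\E|E|=O(p\cdot\Pr(\max_k|C_{0,1}(k)|>b_p))=\mathrm{polylog}(p)$, so $|E|=o_p(p^{\epsilon})$ for any $\epsilon>0$. On the other hand a crude uniform bound gives $\max_{j,k}|C_{0,j}(k)|=O_p(\sqrt{\log n})$, and (using $\max_j\widehat\sigma_j=O_p(1)$ from the uniform consistency and Assumption~\ref{ass:C3}) the deleted mass is
\[
\sup_{t}\ \frac{1}{\sqrt p}\sum_{j\in E}\beta_j(\lfloor nt\rfloor)^2\ \le\ \frac{|E|}{\sqrt p}\cdot\max_j\widehat\sigma_j^2\cdot O_p(\log n)\ =\ O_p\!\Big(\frac{\mathrm{polylog}(p)\cdot\log n}{\sqrt p}\Big)\ =\ o_p(1),
\]
and the change in $\widehat\mu_{M,\lfloor nt\rfloor}$ is controlled the same way. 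Hence $S_{n,p}/\omega=\widetilde S_{n,p}/\omega+o_p(1)$, where $\widetilde S_{n,p}=\max_t\{p^{-1/2}\sum_{j\in B}\beta_j(\lfloor nt\rfloor)^2-\widehat\mu_{M,\lfloor nt\rfloor}\}$ with $B=E^{c}$; moreover, since omitting $E$ changes $\tr(\bms)$, $\tr(\bms^2)$ and $\tr(\bOme^2)$ only by $o(p)$, the proof of Theorem~\ref{Tms-1} applied to the $B$-coordinates shows $\widetilde S_{n,p}/\omega\stackrel d\to\max_t V(t)$, with the same limiting covariance $(1-t)^2s^2$.

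\textbf{Step 3 (decorrelation) and the main obstacle.} It remains to show $\Pr(\widetilde S_{n,p}\le\omega x,\ \max_{j\in E}\max_k|C_{0,j}(k)|\le u_p\{\exp(-y)\})\to F_V(x)\exp\{-\exp(-y)\}$. The issue — and this is the crux, absent from the Gumbel-versus-$N(0,1)$ argument of \cite{wang2023JRSSB} — is that $E$ and $B$ are data-dependent and that the $L_2$ side is now an \emph{entire Gaussian process}, not a single Gaussian variable, so one cannot simply condition. I would handle this by a conditional normal-comparison argument: conditionally on the $B$-coordinate field $\{\epsilon_{ij}:j\in B\}$, the joint law of the $O(n|E|)$ statistics $\{C_{0,j}(k):j\in E\}$ is Gaussian with conditional covariance differing from the unconditional one by a perturbation of operator norm $O(\sqrt{|E|/p})=o(1)$, because $\bGam(h)=c_h\bms$ with $\lambda_{\max}(\bms)\le C_1$ makes the coordinates only weakly coupled ($\sum_{j'}\bms_{jj'}^2\le C_1^2$); by a Berman/Slepian-type normal comparison such a perturbation does not move the Gumbel limit, so $\E[\,g(\max_{j\in E}\max_k|C_{0,j}(k)|)\mid\{\epsilon_{ij}:j\in B\}\,]\to\E[g(G)]$ uniformly, $G$ Gumbel. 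Then for bounded Lipschitz $f,g$,
\[
\E\big[f(\widetilde S_{n,p}/\omega)\,g(\textstyle\max_{j\in E}\max_k|C_{0,j}(k)|)\big]=\E\big[f(\widetilde S_{n,p}/\omega)\,\E[g(\cdot)\mid\{\epsilon_{ij}:j\in B\}]\big]\ \to\ \E[f(\textstyle\max_t V(t))]\cdot\E[g(G)],
\]
the last step by the convergence of $\widetilde S_{n,p}/\omega$ from Step 2; approximating indicators by Lipschitz functions gives the claim. The main obstacle is thus proving the uniform-in-$t$ conditional normal comparison together with the stability of the $\widetilde S_{n,p}$-process: one must simultaneously bound the conditional-covariance perturbation for the extreme block and verify that the truncated quadratic-form process retains the covariance $(1-t)^2s^2$ and is tight — the process (rather than single-variable) nature of the $L_2$ limit is exactly what the existing one-dimensional arguments do not cover and what must be redone here.
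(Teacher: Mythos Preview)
Your overall architecture --- Gaussianize first, then prove factorization in the Gaussian case --- matches the paper's, and your Step~1 (a Lindeberg swap applied to a smooth function of \emph{both} statistics simultaneously) is exactly what the paper does in its non-Gaussian step, using the smooth-max $F_\beta$ with $\beta=n^{1/8}\log(np)$.

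The decorrelation mechanism in your Steps~2--3, however, has a genuine gap and also departs from the paper's route. The set $E=\{j:\max_k|C_{0,j}(k)|>b_p\}$ is data-dependent, so ``conditioning on $\{\epsilon_{ij}:j\in B\}$'' with $B=E^c$ is not conditioning on a fixed $\sigma$-field: $B$ is determined by the entire sample, and in particular $\widetilde S_{n,p}$ --- which is built from the $B$-coordinates --- is not measurable with respect to any sub-$\sigma$-field that is independent of the $E$-coordinates. The circularity cannot be repaired by the bound $|E|=o_p(p^\epsilon)$ alone: to make the random-partition argument rigorous you would need, for each potential realization of $E$, a factorization whose error is \emph{summable over all such realizations}, and your $o_p(1)$ estimates (``deleted mass $=o_p(1)$'') give no such control. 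The conditional normal-comparison in Step~3 is therefore not well-posed as written.

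The paper never introduces a random truncation. For each \emph{deterministic} $d$-tuple $(j_1,\dots,j_d)$ it decomposes the remaining coordinates as $\bxi_{i,(2)}=\bU_i+\bV_i$ with $\bU_i$ \emph{exactly} independent of $\bxi_{i,(1)}$, and then shows that the cross terms $\Theta_1,\Theta_2,\Theta_3$ --- the analogues of your deleted mass --- satisfy not merely $\Theta_\ell=o_p(\omega)$ but the \emph{exponential} bound $\Pr(|\Theta_\ell|\ge\epsilon\omega)\le p^{-c}$ with $c\to\infty$, obtained via the law of the iterated logarithm for martingales; this is precisely where $\log n=o(p^{1/4})$ enters (not in the Gaussianization, as you suggest). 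That exponential rate survives multiplication by $\binom{p}{d}$, so one can run Bonferroni/inclusion--exclusion to any fixed depth and control
\[
\zeta_{AB}(p,d)=\sum_{j_1<\cdots<j_d}\bigl|\Pr(A_pB_{j_1}\cdots B_{j_d})-\Pr(A_p)\Pr(B_{j_1}\cdots B_{j_d})\bigr|\to 0.
\]
Replacing your random-$E$ conditioning by this deterministic-subset-plus-exponential-tail argument is the missing ingredient.
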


Following the independence between $S_{n,p}$ and $M_{n,p}$ or $M^{\dagger}_{n,p}$ induced by Theorem \ref{indnull}, the Minimum combination, Fisher combination, and Cauchy combination can be used to combine the max-$L_2$-type test  and max-$L_\infty$-type test, see \cite{wishart1932methods, littell1971asymptotic, liu2020cauchy}. \cite{long2023cauchy} have showed that the power of the test based on Cauchy combination would be more powerful than the Minimum combination test. Attracted by the stability of the Cauchy combination method, in this paper, we combine $S_{n,p}$ with $M_{n,p}$ and $M^{\dagger}_{n,p}$ using the Cauchy combination test procedure in \cite{liu2020cauchy}, that is,
\begin{align}
T_{CC}=0.5\tan\{(0.5-p_{S_{n,p}})\pi\}+ 0.5\tan\{(0.5-p_{M_{n,p}})\pi\}, \label{cc-1}\\
T_{CC}^\dagger=0.5\tan\{(0.5-p_{S_{n,p}})\pi\}+ 0.5\tan\{(0.5-p_{M^\dagger_{n,p}})\pi\}. \label{cc-2}
\end{align}
Since $T_{CC}$ and $T_{CC}^\dagger$ converge to a standard Cauchy distribution under $H_0$,
\begin{align}
p_{CC}=1-F(T_{CC}), \quad p_{CC}^\dagger=1-F(T_{CC}^\dagger),
\end{align}
where $F(x)$ is the cumulative distribution function of standard Cauchy distribution $C(0,1)$. Both $p_{CC}$ and $p_{CC}^\dagger$ can be used as the final $p$-values for testing the null hypothesis $H_0$. If the combined $p$-value is smaller than the pre-specified significance level $\alpha\in (0,1)$,  we reject the null hypothesis.

We also give the asymptotic independence under the local alternative hypothesis, to wit,
\begin{align}
\label{H:indalter}
\begin{gathered}
	H_{1;n,p}:\ \ \vert\mathcal{A}\vert = o\{p/(\log n)^2\} \ \text{and}\ \|\bdelta\|^2 = o\{n^{-1}\tr(\O_n^2)\},
\end{gathered}
\end{align}
where $\mathcal{A} = \{1\leq j \leq p: \delta_j\neq 0\}$ is the support of $\bdelta$.
\begin{theorem}\label{indalter}
 Suppose Assumptions~\ref{ass:C1}--\ref{ass:C4}, Assumptions 1--2 in \cite{wang2023JRSSB} and $\max_{1\leq j\leq p}\vert\widehat{\sigma}_j-\sigma_j\vert=o(1)$ hold. Under $H_{1;n,p}$, if $p=o(n^{7/4})$, we have
\begin{itemize}
  \item [(i)] As $(n,p)\rightarrow \infty$, $S_{n,p}$ is asymptotically independent of $M_{n,p}$, that is
  \begin{align*}
    \pr\left(S_{n,p}\le \hat{\omega} x, ~M_{n,p}\le u_p\{\exp(-y)\}\right)\rightarrow \pr\left(S_{n,p}\le \hat{\omega} x\right)\pr\left(M_{n,p}\le u_p\{\exp(-y)\}\right);
  \end{align*}
 \item [(ii)] If $\lambda_n\sim n^{\lambda}$ for some $\lambda\in(0,1)$, then, as $(n,p)\to\infty$, $S_{n,p}$ is asymptotically independent of $M^{\dagger}_{n,p}$, that is
  \begin{align*}
  \pr\left(S_{n,p}\le \hat{\omega} x, ~M^{\dagger}_{n,p}\le \nu_{p,n}(y)\right)\rightarrow  \pr\left(S_{n,p}\le \hat{\omega} x\right) \pr\left( M^{\dagger}_{n,p}\le \nu_{p,n}(y)\right),
  \end{align*}
where $\nu_{p,n}(y) = \{y+D(p\log h_n)\}/A(p\log h_n)$.
 \end{itemize}
\end{theorem}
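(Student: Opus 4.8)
The plan is to prove, directly under $H_{1;n,p}$, that the joint law of $(S_{n,p},M_{n,p})$ (resp.\ $(S_{n,p},M^{\dagger}_{n,p})$) factorises asymptotically; since the right-hand sides are the genuine finite-sample marginals, nothing about the limiting marginals is required. I would use repeatedly that asymptotic factorisation of two statistics is preserved (a) when either is replaced by one differing from it by $o_p(1)$, provided the relevant limit laws are continuous (anti-concentrating), and (b) when either is shifted by a deterministic sequence. First I remove the estimated quantities: under $H_{1;n,p}$ we have $\|\bdelta\|^2=o\{n^{-1}\tr(\O_n^2)\}$, and since $\tr(\O_n^2)=O(p)$ by Assumption~\ref{ass:C3} while $p=o(n^{7/4})$, a direct computation gives $\|\bdelta\|^2=o(np^{1/2}/M^{2})$, so Theorem~\ref{Th2} (together with the uniform-in-$t$ control already underlying Theorem~\ref{Tms-1}) yields $\max_{t}|\hat{\mu}_{M,\lfloor nt\rfloor}-\mu_{M,\lfloor nt\rfloor}|=o_p(\omega)$, $\hat{\omega}/\omega\to_p 1$, and by hypothesis $\max_j|\widehat{\sigma}_j-\sigma_j|=o(1)$. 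Through fact (a), I may therefore work with population-normalised statistics throughout.

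Next is the signal decomposition. Writing $\tilde\bepsilon_k=\sum_{i\le k}\bepsilon_i-(k/n)\sum_{i\le n}\bepsilon_i$ and $w_k=(k-\tau)_+-(k/n)(n-\tau)$, under $H_{1;n,p}$ one has
\[
W(k)-\mu_{M,k}=\Big\{\tfrac{1}{n\sqrt p}\|\tilde\bepsilon_k\|^2-\mu_{M,k}\Big\}+\tfrac{2w_k}{n\sqrt p}\,\bdelta^{\T}\tilde\bepsilon_k+\tfrac{w_k^{2}}{n\sqrt p}\|\bdelta\|^2,
\]
and coordinatewise $C_{0,j}(k)=\tilde C_{0,j}(k)+w_k\delta_j/(\sqrt n\,\sigma_j)$, so the signal enters $M_{n,p}$ only through the $|\mathcal{A}|$ contaminated coordinates. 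I would then check two negligibility bounds. (i) Since $|w_k|\lesssim n$ and $\{\bdelta^{\T}\tilde\bepsilon_k\}_k$ is a rescaled-bridge process with $\sup_k|\bdelta^{\T}\tilde\bepsilon_k|=O_p(\sqrt{n\,\bdelta^{\T}\bOme\bdelta})=O_p(\sqrt{n\|\bdelta\|^2})$ (using Assumption~\ref{ass:C3}), the cross term satisfies $\max_k\big|\tfrac{2w_k}{n\sqrt p}\bdelta^{\T}\tilde\bepsilon_k\big|=O_p(\sqrt{n\|\bdelta\|^2/p})=o_p(1)$, because $n\|\bdelta\|^2/p=o(\tr(\O_n^2)/p)=o(1)$. (ii) Splitting $\|\tilde\bepsilon_k\|^2-\E\|\tilde\bepsilon_k\|^2$ along $\mathcal{A}$ and $\mathcal{A}^c$, the $\mathcal{A}$-block is a tight process in $k$ with $\sup_k$-norm $O_p(\sqrt{|\mathcal{A}|}\,n)$, so $(n\sqrt p)^{-1}$ times it is $O_p(\sqrt{|\mathcal{A}|/p})=o_p(1)$ by $|\mathcal{A}|=o\{p/(\log n)^2\}$ (this also absorbs the already-negligible $\mu_k-\mu_{M,k}$). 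Hence $S_{n,p}=\max_t\{\tilde W_0^{(\mathcal{A}^c)}(\lfloor nt\rfloor)+D_n(\lfloor nt\rfloor)\}+o_p(1)$ with $\tilde W_0^{(\mathcal{A}^c)}$ a functional of the noise in the $\mathcal{A}^c$-coordinates only and $D_n(k)=w_k^{2}\|\bdelta\|^2/(n\sqrt p)$ deterministic, and $M_{n,p}=\max\{M^{(\mathcal{A}^c)},M^{(\mathcal{A})}\}$ with $M^{(\mathcal{A}^c)}=\max_{k,\,j\notin\mathcal{A}}|\tilde C_{0,j}(k)|$ a functional of the $\mathcal{A}^c$-noise and $M^{(\mathcal{A})}$ carried by the $o\{p/(\log n)^2\}$ contaminated coordinates (possibly enlarged by the deterministic boosts $w_k\delta_j/(\sqrt n\sigma_j)$).

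For the decoupling, condition on the $\sigma$-field $\mathcal F^{c}$ generated by the $\mathcal{A}^c$-noise, which makes $\tilde W_0^{(\mathcal{A}^c)}$ and $M^{(\mathcal{A}^c)}$ measurable while $M^{(\mathcal{A})}$ is driven by the $\mathcal{A}$-noise. The weak cross-sectional dependence granted by Assumption~\ref{ass:C3} and Assumptions~1--2 of \cite{wang2023JRSSB} gives $\pr(M^{(\mathcal{A})}\le u\mid\mathcal F^{c})=\pr(M^{(\mathcal{A})}\le u)+o_p(1)$ and $\pr(M^{(\mathcal{A}^c)}\le u,M^{(\mathcal{A})}\le u)=\pr(M^{(\mathcal{A}^c)}\le u)\pr(M^{(\mathcal{A})}\le u)+o(1)$, where $u=u_p\{\exp(-y)\}$. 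Using $\{M_{n,p}\le u\}=\{M^{(\mathcal{A}^c)}\le u\}\cap\{M^{(\mathcal{A})}\le u\}$, taking conditional expectations, and then invoking the core step in the proof of Theorem~\ref{indnull}---asymptotic independence of the $L_2$-functional and the $L_\infty$-maximum built over the same $p\{1-o(1)\}$ coordinates of $\mathcal{A}^c$, the deterministic translation $D_n$ being immaterial because it merely shifts one statistic---gives
\[
\pr(S_{n,p}\le\hat{\omega} x,\,M_{n,p}\le u)=\pr(S_{n,p}\le\hat{\omega} x)\,\pr(M^{(\mathcal{A}^c)}\le u)\,\pr(M^{(\mathcal{A})}\le u)+o(1)=\pr(S_{n,p}\le\hat{\omega} x)\,\pr(M_{n,p}\le u)+o(1),
\]
which is (i); (ii) is identical after replacing $\gamma=0$ by $\gamma=0.5$, restricting $k$ to $[\lambda_n,n-\lambda_n]$, and using Theorem~\ref{null:Max}(ii) and the corresponding block of \cite{wang2023JRSSB}. (When the deterministic boosts make $M^{(\mathcal{A})}>u$ with probability tending to one, both sides vanish and the case is trivial.)

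The main obstacle is the decoupling across the support: establishing rigorously, without a diagonality assumption on $\bms$, that the $\mathcal{A}$- and $\mathcal{A}^c$-blocks of the CUSUM process are asymptotically independent and that isolating the $o\{p/(\log n)^2\}$ contaminated coordinates perturbs neither the extreme-value behaviour of the $L_\infty$-statistic nor its asymptotic independence from the $L_2$-statistic---uniformly over the (possibly diverging) drift pattern $\{w_k\}$. Once this is in place, the remainder is routine bookkeeping with the two maximal bounds of the signal decomposition, the null-case engine of Theorem~\ref{indnull}, and the consistency of $\hat{\mu}$, $\hat{\omega}$ and $\widehat{\sigma}_j$ from Theorem~\ref{Th2}.
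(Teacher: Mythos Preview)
Your high-level architecture---replacing estimated quantities by their population counterparts via Theorem~\ref{Th2}, showing the cross term $\bdelta^{\T}\tilde\bepsilon_k$ is negligible under $\|\bdelta\|^2=o\{n^{-1}\tr(\O_n^2)\}$, removing the $\mathcal{A}$-block from the $L_2$ statistic, and splitting $M_{n,p}$ over $\mathcal{A}$ and $\mathcal{A}^c$---mirrors the paper's proof. The difference, and the genuine gap, is exactly the step you flag as ``the main obstacle'': the decoupling of the $\mathcal{A}$- and $\mathcal{A}^c$-blocks of the noise.

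Your proposed resolution, conditioning on $\mathcal{F}^c$ and arguing that $\pr(M^{(\mathcal{A})}\le u\mid\mathcal{F}^c)\approx\pr(M^{(\mathcal{A})}\le u)$ from ``weak cross-sectional dependence'', goes in the wrong direction. Conditionally on $\bxi_{\mathcal{A}^c}$, the law of $\bxi_{\mathcal{A}}$ has a shifted mean $\bms_{\mathcal{A},\mathcal{A}^c}\bms_{\mathcal{A}^c,\mathcal{A}^c}^{-1}\bxi_{\mathcal{A}^c}$ of order $O(1)$ per coordinate, and since $M^{(\mathcal{A})}$ is a maximum over only $|\mathcal{A}|$ coordinates there is no averaging mechanism to wash this out; Assumption~\ref{ass:C3} alone does not make this shift negligible on the $\sqrt{\log p}$ scale that $u_p$ lives on. The paper instead projects the \emph{large} block: it first reduces to the Gaussian sequence $\{\bxi_i\}$ (reusing the Lindeberg machinery from Theorem~\ref{indnull}), writes $\bxi_{i,\mathcal{A}^c}=\boldsymbol U_{i,\mathcal{A}^c}+\boldsymbol V_{i,\mathcal{A}^c}$ with $\boldsymbol V_{i,\mathcal{A}^c}=\bms_{\mathcal{A}^c,\mathcal{A}}\bms_{\mathcal{A},\mathcal{A}}^{-1}\bxi_{i,\mathcal{A}}$ and $\boldsymbol U_{i,\mathcal{A}^c}\perp\bxi_{i,\mathcal{A}}$, and then shows that the $\boldsymbol U^{\T}\boldsymbol V$ and $\boldsymbol V^{\T}\boldsymbol V$ contributions to $\mathcal{W}_{\mathcal{A}^c}$ are negligible via the same exponential martingale bound used for \eqref{Th5-eq1}, namely
\[
\pr(|\Theta|\ge\epsilon\omega)\le n\exp\Big\{-C_\epsilon\sqrt{p/|\mathcal{A}|}\Big\}\cdot\log n\to 0,
\]
which is precisely where the hypothesis $|\mathcal{A}|=o\{p/(\log n)^2\}$ enters. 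The point is that $\boldsymbol V$ has rank at most $|\mathcal{A}|$, so its contribution to the $L_2$-type quadratic form is of order $\sqrt{|\mathcal{A}|/p}$ on the $\omega$ scale---the averaging happens on the \emph{sum} side, not the \emph{max} side. Once $\mathcal{W}_{\mathcal{A}^c}$ is (up to $o_p(1)$) a function of $\boldsymbol U$ alone, asymptotic independence from anything built out of $\bxi_{\mathcal{A}}$ follows immediately, and Lemma~\ref{Appendix-L4} then transfers the conclusion through the various $o_p(1)$ replacements. Your $O_p(\sqrt{|\mathcal{A}|/p})$ bound for the $\mathcal{A}$-block of the $L_2$ statistic is morally the same estimate, but you need it with the exponential tail (and after Gaussian reduction) to push it through the conditioning argument cleanly.
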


Theorem \ref{indalter} enables a finer analysis of the power profiles of the proposed adaptive tests, moving beyond the guarantee of consistency. A key question that arises is the power performance of our adaptive procedure relative to non-adaptive approaches that rely on a single test statistic (either the max-$L_2$-type test or max-$L_\infty$-type test).
We compare the power of these two adaptive tests to their non-adaptive counterparts.
For a given significant level $\alpha\in(0,1)$, let $\beta_{S,\alpha}$, $\beta_{M,\alpha}$ and $\beta_{CC,\alpha}$ denote the power functions of the test $S_{n,p}$, $M_{n,p}$ and the corresponding adaptive test. That is
\begin{align*}
\beta_{CC,\alpha} \geq \beta_{M\wedge S,\alpha} =& \pr(\min\{p_{S_{n,p}}, p_{M_{n,p}}\}\leq 1-\sqrt{1-\alpha})\\
\geq& \pr(\min\{p_{S_{n,p}}, p_{M_{n,p}}\}\leq \alpha/2)\\
=& \beta_{M,\alpha/2} + \beta_{S,\alpha/2} - \pr(p_{S_{n,p}}\leq \alpha/2, p_{M_{n,p}}\leq \alpha/2)\\
\geq&\max\{\beta_{M,\alpha/2}, \beta_{S,\alpha/2}\}.
\end{align*}
On the other hand, under $H_{1;n,p}$ in \eqref{H:indalter}, we have
\begin{align*}
\beta_{CC,\alpha} \geq \beta_{M,\alpha/2} + \beta_{S,\alpha/2} - \beta_{M,\alpha/2}\beta_{S,\alpha/2} + o_p(1).
\end{align*}
For a small $\alpha$, the difference between $\beta_{S,\alpha/2}$ and $\beta_{S,\alpha}$ should be small. We conclude that the power of the adaptive test would be no smaller than that of either the max-$L_2$-type test or max-$L_\infty$-type test.

\section{Location estimation}\label{sec:4-location}
If the null hypothesis is rejected, it is of interest to detect the locations of the changepoints. We propose the following adaptive estimation procedure:
 \begin{equation}
    \notag \hat{\tau} :=\begin{cases}\hat{\tau}_S:=\underset{1 \leq k \leq n-1}{\arg \max } \{W(k)-\hat{\mu}_{M,k}\}, & \text { if } p_{S_{n, p}}<p_{M_{n, p}}, \\ \hat{\tau}_M:=\underset{1 \leq k \leq n-1}{\arg \max } \max_{j=1,\ldots,p}\vert C_{0,j}(k)\vert^2, & \text { otherwise, }\end{cases}
\end{equation}
or
 \begin{equation}
    \notag \hat{\tau}^\dagger :=\begin{cases}\hat{\tau}_S:=\underset{1 \leq k \leq n-1}{\arg \max } \{W(k)-\hat{\mu}_{M,k}\}, & \text { if } p_{S_{n, p}}<p_{M_{n, p}^\dagger}, \\ \hat{\tau}_{M^\dagger}:=\underset{\lambda_n \leq k \leq n-\lambda_n}{\arg \max } \max_{j=1,\ldots,p}\vert C_{0.5,j}(k)\vert^2, & \text { otherwise.}\end{cases}
\end{equation}

Similar to the proof of Theorem 2.1 in \cite{10.1007/s11425-016-0058-5} and Theorem 6 in \cite{wang2023JRSSB}, we establish the following theorem on the consistency of changepoint estimation under dependence.

\begin{theorem}
\label{consistency}
Under the Assumptions~\ref{ass:C1}--\ref{ass:C4}, assumptions in Theorem~\ref{Th2}, Assumptions 1--2 in \cite{wang2023JRSSB} and $\max_{1\leq j\leq p}\vert\widehat{\sigma}_j-\sigma_j\vert=o(1)$, if $M=\lceil(n\wedge p)^{1/8}\rceil$, we have $(\hat{\tau}-\tau)/n=o_p(1)$ or $(\hat{\tau}^\dagger-\tau)/n=o_p(1)$, provided that either $\|\boldsymbol\delta\|_\infty/\sqrt{\log(np)/n}\rightarrow\infty$ or $n\|\bdelta\|^2\rightarrow\infty$, as $(n,p)\rightarrow\infty$.
\end{theorem}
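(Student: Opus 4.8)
The plan is to establish the consistency of $\hat\tau$ (the argument for $\hat\tau^\dagger$ is analogous) by treating the two branches of the piecewise definition separately and showing that in each regime the selected estimator localizes the true change point at rate $o_p(n)$. The starting observation is that, by the power analysis in Section~\ref{sec:3-adaptive} and Proposition~\ref{prop:Tms-alter}, under $n\|\bdelta\|^2\to\infty$ the max-$L_2$-type test is consistent, so with probability tending to one $p_{S_{n,p}}$ will be the smallest $p$-value and the branch $\hat\tau_S$ is selected; conversely, under $\|\bdelta\|_\infty/\sqrt{\log(np)/n}\to\infty$ the max-$L_\infty$-type test of \cite{wang2023JRSSB} is consistent and the branch $\hat\tau_M$ (resp.\ $\hat\tau_{M^\dagger}$) is selected with probability tending to one. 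Thus it suffices to prove (a) $(\hat\tau_S-\tau)/n=o_p(1)$ whenever $n\|\bdelta\|^2\to\infty$, and (b) $(\hat\tau_M-\tau)/n=o_p(1)$ whenever $\|\bdelta\|_\infty/\sqrt{\log(np)/n}\to\infty$; the stated conclusion then follows by conditioning on which branch is taken.

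For part (b), I would simply invoke the argument of Theorem~6 in \cite{wang2023JRSSB}: the CUSUM process $\max_j|C_{0,j}(k)|^2$ has a deterministic "signal" part that, when $\delta_j\ne0$ for the maximizing coordinate $j$, is uniquely maximized near $k=\tau$ with a curvature gap of order $\|\bdelta\|_\infty^2$ in the component-wise normalized scale, while the stochastic fluctuation is uniformly $O_p(\log(np))$ across $(k,j)$ under Assumptions 1--2 of \cite{wang2023JRSSB} and the uniform variance-estimator consistency $\max_j|\hat\sigma_j-\sigma_j|=o(1)$. Since $\|\bdelta\|_\infty^2/(\log(np)/n)\to\infty$ dominates this fluctuation, a standard comparison of the objective at $\hat\tau_M$ versus at $\tau$ forces $|\hat\tau_M-\tau|/n\to0$ in probability; the only adjustment relative to the i.i.d.\ case is that $\hat\sigma_j$ is now a long-run variance estimator, which is already covered by Lemma~\ref{lemma:LRV}.

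For part (a), the strategy mirrors the proof of Theorem~2.1 in \cite{10.1007/s11425-016-0058-5}. Decompose $\bX_i=\bmu_0+\bdelta\ind{i>\tau}+\bepsilon_i$ and expand $W(k)=\frac{1}{n\sqrt p}\|\sum_{i\le k}\bX_i-\tfrac kn\sum_{i\le n}\bX_i\|^2$. The mean contribution gives a deterministic term proportional to $\|\bdelta\|^2$ times a piecewise-quadratic "tent" function of $k/n$ and $\tau/n$ that attains its strict maximum at $k=\tau$ with curvature of order $\|\bdelta\|^2/\sqrt p$; the pure-noise term is, after subtracting $\hat\mu_{M,k}$, of order $\hat\omega=O(1)$ uniformly in $k$ by Theorem~\ref{Tms-1} (tightness of the limiting Gaussian process); and the cross term between $\bdelta$ and $\bepsilon$ is $O_p(\|\bdelta\|\sqrt{\tr(\bOme)}/(\sqrt n\sqrt p))=O_p(\|\bdelta\|/\sqrt n)=o_p(\|\bdelta\|^2\sqrt p/p)$ when $n\|\bdelta\|^2\to\infty$ and $\|\bdelta\|^2=O(\sqrt p/n)\cdot(\text{something growing})$—more precisely, one checks the cross term is negligible relative to the signal curvature under $n\|\bdelta\|^2\to\infty$. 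Comparing $\{W(\hat\tau_S)-\hat\mu_{M,\hat\tau_S}\}\ge\{W(\tau)-\hat\mu_{M,\tau}\}$ and isolating the signal term yields $\|\bdelta\|^2 p^{-1/2}\,g(\hat\tau_S/n,\tau/n)\le o_p(1)$ for a nonnegative function $g$ vanishing only at its first argument equal to $\tau/n$; since $n\|\bdelta\|^2\to\infty$ forces $\|\bdelta\|^2 p^{-1/2}$ to dominate (recall $p=o(n^{7/4})$, so $p^{-1/2}\gtrsim n^{-7/8}$ and $n\|\bdelta\|^2 p^{-1/2}\gtrsim n^{1/8}\|\bdelta\|^2\cdot n^{7/8}\cdot p^{-1/2}\to\infty$ along the relevant scaling), we conclude $g(\hat\tau_S/n,\tau/n)=o_p(1)$, hence $(\hat\tau_S-\tau)/n=o_p(1)$.

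The main obstacle is the bias-correction subtraction in part (a): under $H_1$ the term $\hat\mu_{M,k}$ is constructed to estimate the null mean $\mu_{M,k}$, but the difference-based estimators $\widehat{\tr\{\bGam(h)\}}$ entering $\hat\mu_{M,k}$ pick up an extra contamination of order $\|\bdelta\|^2/(n\sqrt p)$ from the change, and one must verify (as partially done in Theorem~\ref{Th2} under $\|\bdelta\|^2=o(np^{1/2}/M^2)$) that this contamination, and its variability across $k$, is $o_p$ of the signal curvature $\|\bdelta\|^2 p^{-1/2}$ throughout the regime $n\|\bdelta\|^2\to\infty$. Handling this uniformly in $k$, rather than pointwise, is the delicate part, and I would address it by bounding $\sup_k|\hat\mu_{M,k}-\mu_{M,k}|$ via a maximal inequality for the difference-based quadratic forms using Assumption~\ref{ass:C4} (sub-Gaussian $\bZ_i$), exactly as in the proof of Theorem~\ref{Tms-1}.
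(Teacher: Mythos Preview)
Your branch-selection argument is close to the paper's but slightly too coarse: from $n\|\bdelta\|^2\to\infty$ alone you cannot conclude that $p_{S_{n,p}}<p_{M_{n,p}}$ with probability tending to one, because if \emph{both} signal conditions hold then both $p$-values tend to zero and the ordering is undetermined. The paper handles this by splitting into three exhaustive cases --- only the $L_2$ condition, only the $L_\infty$ condition, or both --- and in the last case uses that \emph{either} estimator is consistent, so it does not matter which branch fires.

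The substantive gap is in part~(a). Bounding the centred noise process $W(k)-\hat\mu_{M,k}$ by $O_p(\hat\omega)=O_p(1)$ uniformly in $k$ (via Theorem~\ref{Tms-1}) is far too crude for the regime $n\|\bdelta\|^2\to\infty$. The signal difference between $k=\tau$ and $k=t$ with $|t-\tau|>n\epsilon$ is of order $n\|\bdelta\|^2/\sqrt p$ (you dropped the factor $n$), so comparing to an $O_p(1)$ noise level forces $n\|\bdelta\|^2/\sqrt p\to\infty$, i.e.\ the much stronger condition $n\|\bdelta\|^2\gg\sqrt p$ of Proposition~\ref{prop:Tms-alter}. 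Your scaling argument ``$p=o(n^{7/4})$ so $n\|\bdelta\|^2 p^{-1/2}\gtrsim n^{1/8}\|\bdelta\|^2\to\infty$'' does not follow from $n\|\bdelta\|^2\to\infty$: take $\|\bdelta\|^2=n^{-1/2}$, $p=n$, so that $n\|\bdelta\|^2=\sqrt n\to\infty$ but $n^{1/8}\|\bdelta\|^2=n^{-3/8}\to0$.

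The paper avoids this by working in the $\sqrt p$-scale (so the signal gap $G_{11}$ is of order $n\|\bdelta\|^2\epsilon$, with no $p^{-1/2}$), and then controlling the \emph{difference} of the noise at $t$ and at $\tau$ pointwise via a fourth-moment Markov inequality, yielding bounds like $n\pr(G_{12}>Cn\|\bdelta\|^2\epsilon)\lesssim (n^2\|\bdelta\|^4\epsilon^4)^{-1}\to0$. The extra factor $n$ in front accommodates a union bound over all $t$. This pointwise-moment-plus-union-bound device is exactly what buys you consistency under the weak condition $n\|\bdelta\|^2\to\infty$; your uniform tightness bound cannot. To repair your argument you must replace the $O_p(1)$ step by an explicit moment calculation for $\sqrt p\{W(t)-W(\tau)\}$ decomposed into the cross term ($\bdelta$ vs.\ $\bepsilon$) and the pure-noise difference, each bounded sharply enough that the tail probability decays faster than $1/n$.
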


\section{Simulation Results}\label{sec:5-simulation}
In this section, we conduct a comparative analysis of our proposed methods against the test procedures presented by \cite{li2019change} (hereinafter referred to as LXZL) and \cite{wang2022inference} (hereinafter referred to as WZVS). Specifically, for the WZVS test, we set the parameter \(\eta\) to 0.02 as per their method. In addition to the aforementioned comparisons, we further evaluated our methods against two other approaches that rely on the assumption of independence and identical distribution—specifically, the methods proposed by \cite{10.1007/s11425-016-0058-5} (hereinafter referred to as JPYZ) and \cite{10.1142/s201032631950014x} (hereinafter referred to as WZWY).

Additionally, to ensure consistency and comparability, we set the parameter \(M=\lceil n^{1/8}\wedge p^{1/8}\rceil\) in our method \(S_{n,p}\). Given the convergence rate characteristics of the max-$L_2$-type and max-$L_\infty$-type test procedures, we normalize our test statistics \(S_{n,p}\), \(M_{n,p}\), and \(M_{n,p}^\dagger\) by dividing them by \(1 + n^{-2/3}\log p\) in practical applications.

We generate samples from the model with
\begin{eqnarray*}
\bX_i = \bdelta \ind{i>\tau} +\sum\limits_{h=0}^{M_0} \mathbf{A}_h\boldsymbol\varepsilon_{i-h},\ i=1,\ldots,n,
\end{eqnarray*}
where $\mathbf{A}_0,\dots,\mathbf{A}_{M_0}$ are $p\times p$ matrices which determine the autocovariance structure, $\boldsymbol\varepsilon_{i}=\bms_{\bepsilon}^{1/2}\bm u_i$, $\bm u_i=(u_{i1},\cdots,u_{ip})^\top$ and $u_{ij}$ are assumed to be generated independently and identically from $N(0,1)$ and $t(4)$. This generating model implied $\cov(\bX_i,\bX_{i+h})=\bGam(h)$ for all $h\in\{-M_0,\dots,0,\dots,M_0\}$, where $\bGam(h)=\sum\nolimits_{k=0}^{M_0-h}\mathbf{A}_k\bms_{\bepsilon} \mathbf{A}_{k+h}^{\T}$. The matrices $\bms_{\bepsilon}=\mathbf{I}_p$ or $\bms_{\bepsilon}=(\rho^{i-j})_{p\times p}$, and $\mathbf{A}_h=(a_{h,ij})_{p\times p}$ is given by
\begin{equation*}
\mathbf{A}_h(i,j)=\left\{
\begin{array}{cl}
\dfrac{\phi}{h}, & \mbox{if}~ |i-j|=0,\\
\dfrac{\phi}{h|i-j|^2}, & \mbox{if}~1 \leq |i-j|\leq \lfloor pv \rfloor,\\
0,  & \mbox{if}~ |i-j|> \lfloor pv \rfloor,
\end{array}
\right.
\end{equation*}
for $h=1,\dots,M_0$. Here, $\phi$ control the dependence among variables, and $v$ stands for the sparsity level of matrices $\mathbf{A}_h$. Specially, we set $\mathbf{A}_0=\mathbf{I}_p$. The dependency level $M_0=0,2$.
We consider the following two scenarios for the covariance matrix:
\begin{itemize}
\item [(S1)] $\bms_{\bepsilon}=\mathbf{I}_p$, $\phi=0.5, v=0.5$;
\item [(S2)] $\bms_{\bepsilon}=(0.5^{|i-j|})_{1\le i,j\le p}$, $\phi=0.2,v=0.2$.
\end{itemize}
Each scenario’s empirical size and power are evaluated over 500 Monte Carlo replications, with a nominal significance level of $\alpha = 5\%$. For evaluating the performance of $S_{n,p}$ test, we taking $T_d = B = 10000$ in Remark 2.3 and get $\hat{c} = 0.9345$.

Tables \ref{tab1}--\ref{tab2} present the empirical sizes of each test under normal errors and \( t(4) \) errors, respectively. The results indicate that all the tests based on temporal dependence--$S_{n,p}, M_{n,p}, M_{n,p}^\dagger$, LXZL and WZVS generally exhibit excellent control over the empirical sizes across most scenarios. This finding underscores the robustness of our proposed methods, which are capable of maintaining accurate size control even in the presence of high-dimensional time series data with temporal dependence. This is particularly noteworthy given the complexities associated with high-dimensional settings and the potential challenges posed by temporal dependence, which can often lead to inflated Type I error rates in those testing procedures based on the independence assumption---JPYZ and WZWY. Our methods, therefore, demonstrate a high degree of reliability and applicability in such intricate and realistic data contexts.

\begin{table}[htbp]
	\centering
\renewcommand\arraystretch{0.65}
	\begin{tabular}{ccccccccccccc} \hline \hline
 $n$& $p$ &$M_0$ &$S_{n,p}$ &$M_{n,p}$ &$M_{n,p}^\dagger$ & $T_{CC}$&$T_{CC}^\dagger$ &LXZL&WZVS&JYPZ&WZWY\\ \hline
 \multicolumn{12}{c}{Scenario (S1)}\\ \hline
 400  &250  &0  &4.2  &1.4  &1.6  &4.7  &4.3  &5.6  &4.3 &5.4 &4.7\\
 400  &500  &0  &5.0  &1.6  &1.0  &5.2  &4.6  &4.3  &5.7 &5.7 &6.1\\
 800  &250  &0  &4.2  &2.5  &2.1  &5.6  &5.4  &4.2  &4.5 &4.6 &5.3\\
 800  &500  &0  &4.5  &1.3  &1.4  &4.6  &3.8  &4.4  &4.9 &4.9 &5.1\\ \hline
 400  &250  &2  &3.5  &2.1  &1.6  &4.3  &3.9  &4.5  &5.4 &100 &100\\
 400  &500  &2  &3.9  &2.3  &2.2  &5.2  &4.7  &4.6  &5.0 &100 &100 \\
 800  &250  &2  &5.6  &3.4  &3.4  &6.1  &6.5  &5.9  &5.6 &100 &100 \\
 800  &500  &2  &4.1  &3.8  &2.4  &5.7  &4.7  &5.4  &4.7 &100 &100 \\ \hline
  \multicolumn{12}{c}{Scenario (S2)}\\ \hline
 400  &250  &0  &5.2  &2.0  &2.1  &4.5  &4.6  &5.0  &5.3 &4.1  &6.3\\
 400  &500  &0  &4.8  &0.8  &1.1  &4.7  &3.9  &5.7  &5.1 &5.2  &5.8\\
 800  &250  &0  &5.3  &3.4  &2.5  &6.3  &5.8  &5.3  &5.2 &5.6  &4.3\\
 800  &500  &0  &5.6  &1.8  &1.5  &5.7  &5.5  &6.0  &4.8 &5.9  &6.3\\ \hline
 400  &250  &2  &5.4  &2.8  &2.0  &5.7  &5.6  &5.1  &4.1 &100 &100 \\
 400  &500  &2  &3.8  &1.6  &1.8  &4.7  &4.3  &4.7  &5.9 &100 &100 \\
 800  &250  &2  &4.6  &3.1  &3.1  &5.9  &5.8  &5.8  &6.0 &100 &100 \\
 800  &500  &2  &5.3  &3.2  &3.0  &6.5  &6.6  &4.1  &5.8 &100 &100 \\ \hline \hline
	\end{tabular}
	\caption{Empirical sizes of tests with normal errors.}
	\label{tab1}
\end{table}

\begin{table}[htbp]
	\centering
\renewcommand\arraystretch{0.65}
	\begin{tabular}{ccccccccccccc} \hline \hline
 $n$& $p$ &$M_0$ &$S_{n,p}$ &$M_{n,p}$ &$M_{n,p}^\dagger$ & $T_{CC}$&$T_{CC}^\dagger$&LXZL&WZVS&JYPZ&WZWY\\ \hline
 \multicolumn{12}{c}{Scenario (S1)}\\ \hline
 400  &250  &0  &2.6  &3.7  &2.8  &6.2  &6.2  &6.0  &5.4 &5.3  &5.3\\
 400  &500  &0  &5.1  &3.6  &3.1  &6.5  &6.8  &4.9  &5.7 &5.2  &4.9\\
 800  &250  &0  &4.9  &3.9  &3.2  &6.3  &7.2  &5.8  &4.8 &5.1  &5.7 \\
 800  &500  &0  &4.7  &3.8  &3.4  &6.3  &5.6  &5.3  &4.1 &5.8  &5.1\\ \hline
 400  &250  &2  &5.1  &3.3  &4.4  &6.5  &6.8  &4.3  &4.5 &100 &100 \\
 400  &500  &2  &5.3  &4.4  &4.4  &5.5  &5.4  &4.5  &4.0 &100 &100 \\
 800  &250  &2  &5.2  &3.4  &3.2  &6.7  &6.0  &5.7  &4.9 &100 &100 \\
 800  &500  &2  &4.1  &3.9  &4.0  &7.1  &6.8  &5.2  &5.2 &100 &100 \\ \hline
  \multicolumn{12}{c}{Scenario (S2)}\\ \hline
 400  &250  &0  &3.6  &3.8  &4.1  &5.5  &5.7  &5.6  &6.0 &4.7  &5.2\\
 400  &500  &0  &4.4  &4.7  &5.2  &6.7  &6.2  &4.7  &4.7 &4.1  &5.6\\
 800  &250  &0  &4.3  &3.8  &4.3  &5.5  &5.7  &4.1  &4.6 &5.8  &5.9\\
 800  &500  &0  &5.1  &4.4  &4.3  &6.3  &5.8  &4.0  &5.5 &5.7  &4.3\\ \hline
 400  &250  &2  &4.9  &3.9  &4.2  &6.1  &6.3  &4.4  &5.8 &100 &100 \\
 400  &500  &2  &4.7  &3.5  &4.5  &6.9  &7.3  &5.9  &5.9 &100 &100 \\
 800  &250  &2  &5.8  &4.5  &3.9  &7.1  &7.0  &5.4  &4.4 &100 &100 \\
 800  &500  &2  &5.2  &3.8  &3.2  &5.9  &6.1  &4.8  &4.3 &100 &100 \\ \hline \hline
	\end{tabular}
	\caption{Empirical sizes of tests with $t(4)$ errors.}
	\label{tab2}
\end{table}

For the power comparison, we specifically set the alternative parameters as follows:
\[
\delta_i = c_\tau \sqrt{\frac{\log p}{ns}}, \quad \text{for } i = 1, \cdots, s, \quad \text{and} \quad \delta_i = 0 \quad \text{for } i > s.
\]
Here, the parameter \( s \) denotes the sparsity level of the alternative hypotheses. We set $s=1, 2, 3, 5, 7, 9, 20, 30, 40, 50$. For $\tau=0.5n$, we set $c_\tau=15$. While for $\tau=0.3n$, we set $c_\tau=20$. In this analysis, we focus on the scenario with a sample size \( n = 400 \) and dimension \( p = 500 \). We note that the results for other combinations of sample sizes and dimensions exhibit similar trends.
Furthermore, we restrict our analysis to time series data with temporal dependence, specifically the case where \( M_0 = 2 \). The methods JYPZ and WZWY are excluded from the power comparison due to their significant size distortions, which compromise the validity of their power assessments.

\begin{figure}[htbp]
    \centering
    \includegraphics[width=0.9\linewidth]{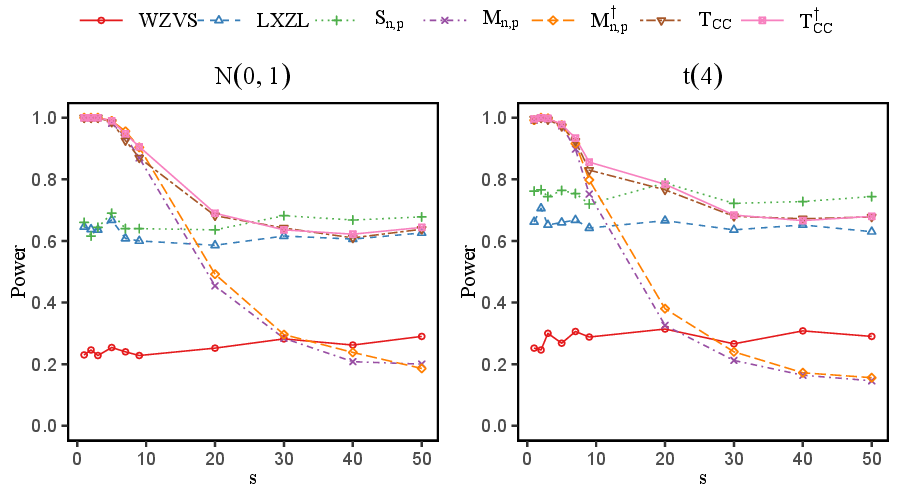}
    \caption{Power curves of each test under Scenario (S2) and $\tau=0.3n$.}
    \label{fig:power_s2_03}
\end{figure}
\begin{figure}[htbp]
    \centering
    \includegraphics[width=0.9\linewidth]{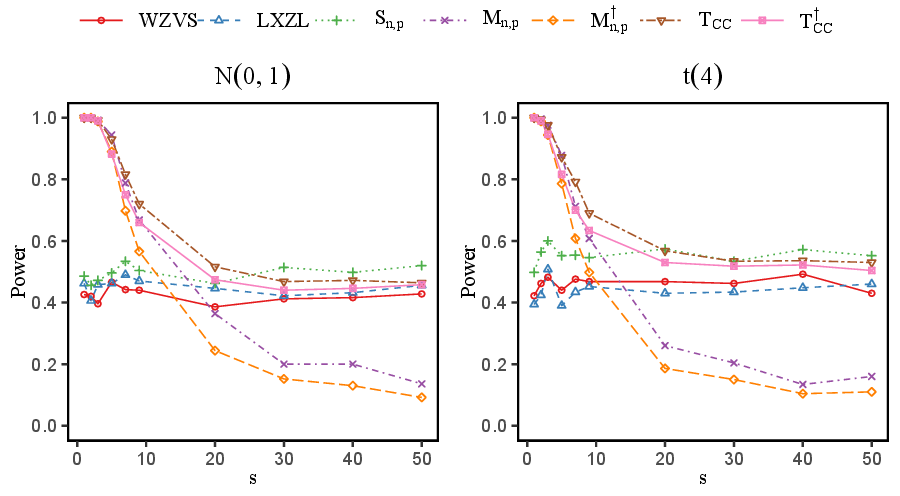}
    \caption{Power curves of each test under Scenario (S2) and $\tau=0.5n$.}
    \label{fig:power_s2_05}
\end{figure}
Figures \ref{fig:power_s2_03}--\ref{fig:power_s2_05} present the empirical power curves of each test under two scenarios for the change-point location: $\tau = 0.3n$ and $\tau = 0.5n$, respectively. Several important findings can be drawn from these results.

First, among the sum-type test procedures, our proposed test statistic $S_{n,p}$ consistently outperforms the WZVS and LXZL tests across all settings, aligning with prior results established under independent observations, see \cite{wang2023JRSSB}. This confirms the robustness and efficiency of $S_{n,p}$, even in the presence of temporal dependence. Moreover, we observe that the max-$L_2$ strategy underlying $S_{n,p}$ yields better power than the conventional sum-$L_2$-based strategies, demonstrating its advantage in aggregating signals across dimensions.

Second, in scenarios characterized by sparse alternatives—where only a small subset of components exhibits structural change—the max-type test statistics $M_{n,p}$ and $M_{n,p}^\dagger$ demonstrate significantly higher power compared to sum-type procedures. This is expected, as max-type tests are specifically tailored to detect large deviations in a few components. In contrast, under dense alternatives—where many components change simultaneously with relatively small magnitudes—the sum-type tests, particularly $S_{n,p}$, show superior power performance.

Third, the proposed Cauchy combination test procedures, which integrate both max- and sum-type information, achieve strong adaptivity across varying levels of sparsity. These combination tests consistently deliver near-optimal or optimal power in all settings, with particularly strong performance in intermediate sparsity regimes where neither extreme approach is optimal on its own. This highlights the practical advantage of the adaptive strategy.

Lastly, it is worth noting a subtle but important distinction between the two max-type statistics: $M_{n,p}$ tends to outperform $M_{n,p}^\dagger$ when the change-point is located near the center of the time series (e.g., $\tau = 0.5n$), while the reverse is true when the change-point is closer to the boundary (e.g., $\tau = 0.3n$). This behavior is consistent with findings in the change-point literature, reflecting the sensitivity of max-type statistics to boundary effects and their dependency on variance estimation near the endpoints.

\begin{figure}[htbp]
    \centering
    \includegraphics[width=0.9\linewidth]{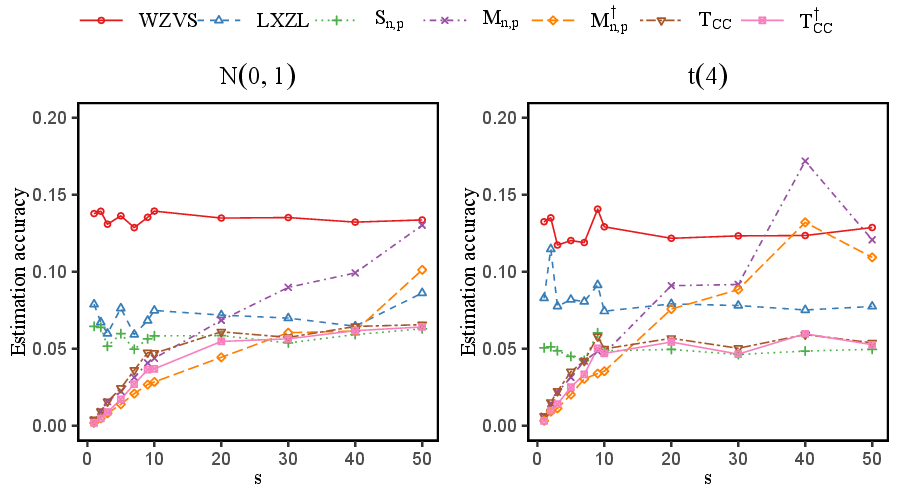}
    \caption{Estimation accuracy curves of each test under Scenario (S2) and $\tau=0.3n$.}
    \label{fig:acc_s2_03}
\end{figure}

\begin{figure}[htbp]
    \centering
    \includegraphics[width=0.9\linewidth]{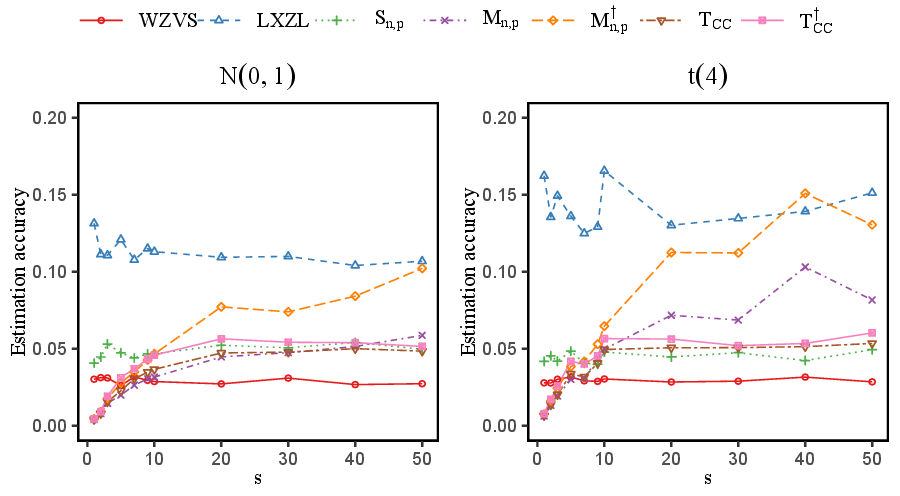}
    \caption{Estimation accuracy curves of each test under Scenario (S2) and $\tau=0.5n$.}
    \label{fig:acc_s2_05}
\end{figure}
We next turn to the single change point estimation problem.  Figures \ref{fig:acc_s2_03}--\ref{fig:acc_s2_05} present the estimation accuracy, measured by the absolute distance between the estimated and true changepoints, scaled by the sample size $n$. Each scenario’s accuracy are evaluated over 500 Monte Carlo replications.

The results show that our proposed estimator outperforms the LXZL method across all settings. The WZVS method exhibits a slight advantage, if the change-point is located near the center of the time series. However, when the change-point is near the boundaries, its estimation error increases significantly and exceeds that of our method. In contrast, our estimator demonstrates robust performance with respect to both the tail heaviness of the data distribution and the location of the change-point. Notably, the max-$L_2$ aggregation strategy adopted in the $S_{n,p}$ test statistic consistently achieves higher estimation accuracy than the traditional sum-$L_2$ strategy, highlighting its advantage in aggregating signals under high dimensionality.

Under sparse alternatives, the max-type statistics $M_{n,p}$ and $M_{n,p}^\dagger$ yield significantly better estimation accuracy than sum-based methods. This finding aligns with the testing results, as max-type procedures are designed to detect large deviations in a few components. In contrast, under dense alternatives, sum-based estimators, particularly $S_{n,p}$, demonstrate superior accuracy. Similarly, our proposed Cauchy combination estimators exhibit strong adaptability across varying levels of sparsity. This highlights the practical importance of adaptive strategies in change-point estimation.

 Overall, all the results confirm that our proposed methods—especially the Cauchy-type combination tests—offer a powerful and flexible framework for high-dimensional change-point inference under temporal dependence.
\section{Real Data Application}\label{sec:6-realdata}

In this section, we analyze financial data from the NASDAQ stock market, one of the most active equity markets globally. The NASDAQ, with its high concentration of technology and growth-oriented companies, provides a valuable platform for studying market dynamics. Its electronic trading structure and diverse listings offer a rich empirical basis for examining volatility, structural changes, and cross-sectional dependencies in modern financial markets.

In this paper, we consider the weekly closing prices of NASDAQ-listed constituent stocks over the period from January 2016 to January 2025. For each stock, we compute the corresponding weekly logarithmic return, defined as the difference between the natural logarithms of consecutive weekly closing prices, yielding 470 observations per stock. To ensure the comparability and completeness of the dataset, we retain only those firms that were continuously traded throughout the entire sample period, resulting in 1555 stocks. The log-return transformation is employed to stabilize the variance and capture proportional changes in prices, which are commonly used in empirical studies of financial returns.

We begin by examining whether serial dependence exists in the log-return series. Specifically, the Ljung–Box test \citep{ljung1978measure} for zero autocorrelation is applied to each stock series. At the $\alpha = 5\%$ significance level, a subset of the stocks exhibits statistically significant temporal dependence, while more rejections are observed when the level is relaxed to $\alpha = 5\%$. The histogram of the resulting p-values is displayed in Figure~\ref{fig:hist_of_pval}, suggesting that temporal dependence cannot be ignored. Furthermore, we employ the adaptive high-dimensional white noise test, as described in \cite{feng2022testing}, to verify if the residuals are white noise. The $p$-value of this test is 2e-26. This motivates us to employ subsequent inference procedures that explicitly accommodate such dependence structures.
\begin{figure}[htbp]
    \centering
\includegraphics[width=0.618\linewidth]{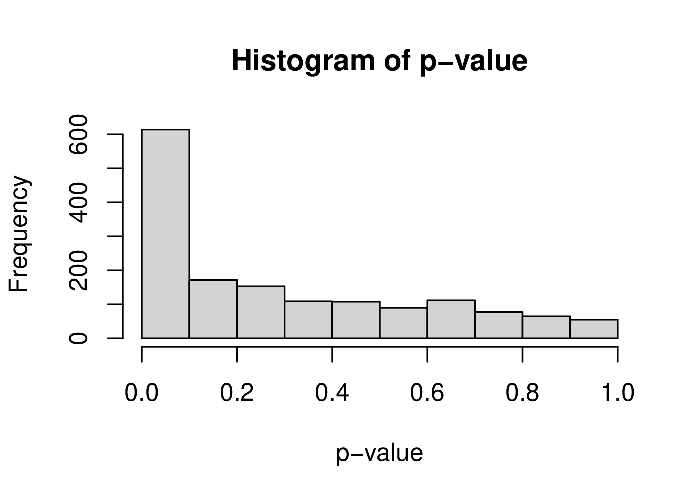}
    \caption{Histogram of $p$-value of NASDAQ constituent stocks.}
    \label{fig:hist_of_pval}
\end{figure}

\begin{table}[!htp]
\centering
\footnotesize
\begin{tabular}{ccccccc}
\hline\hline
$S_{n,p}$& $M_{n,p}$& $M_{n,p}^\dagger$& $T_{CC}$& $T_{CC}^\dagger$ &WZVS &LXZL \\
\hline
 0.0057&  0.0045& 0.0003 &0.0050   &0.0005  &0.2390 & 0.8389 \\
\hline\hline
\end{tabular}
\caption{The $p$-values for testing changepoints in weekly log-return rates.}\label{tab:NASDAQ}
\end{table}

Table~\ref{tab:NASDAQ} summarizes the $p$-values for testing changepoints in the weekly return rates. 
At the 5\% significance level, the WZVS and LXZL tests fail to reject the null hypothesis. In contrast, the sum-type $S_{n,p}$, the max-type $M_{n,p}$, $M_{n,p}^\dagger$, and their Cauchy-combination type tests yield substantially smaller $p$-values, leading to a rejection of the null and indicating the presence of at least one changepoint in the log-return rates. Notably, all the proposed test statistics reject the null hypothesis at the 5\% significance level, consistently indicating the presence of at least one changepoint in the NASDAQ log-return data. This collective evidence suggests that the market experienced structural shifts over the examined period. These results suggest that significant transitions occurred in market volatility or dependence structure, motivating further investigation into their timing and underlying economic drivers.


Building on this overall rejection, we proceed to single change-point detection. All methods  identified change point, occurring around June 2022, March 2023, February 2021, June 2022 and March 2023 for $M_{n,p},M_{n,p}^\dagger,S_{n,p},T_{CC}$ and $T_{CC}^\dagger$. The change-point in February 2021 corresponds to a dense structural transition, reflecting the post-pandemic period of abundant liquidity and the subsequent correction triggered by the sharp rise in U.S. Treasury yields, broadly affecting technology and growth stocks. In contrast, the change-points in June 2022 and March 2023 represent sparse shifts, with the former linked to the Federal Reserve’s accelerated rate hikes and quantitative tightening, and the latter associated with the collapse of Silicon Valley Bank, primarily impacting banking and technology-related sectors. These results align with economic events and demonstrate that our method can effectively detect both global and local structural changes in high-dimensional financial data.

\section{Conclusion}\label{sec:7-conclusion}

In this paper, we first propose a max-$L_2$-type test designed for detecting dense alternatives in high-dimensional change-point inference under temporal dependence. Building on this, we construct two Cauchy combination tests based on the established asymptotic independence between the proposed max-$L_2$-type test and existing max-$L_\infty$-type tests. Simulation studies and a real data application demonstrate the advantages of our adaptive procedures across a wide range of sparsity levels in the alternatives.

There remain several important directions for future research. First, our current framework assumes the observed time series follows a linear process. It would be valuable to extend the methodology to more general dependence structures, such as those satisfying $\alpha$-mixing or $\beta$-mixing conditions. Second, accurate estimation of the temporal dependence parameters---specifically the bandwidth $M$ and the long-run variance---remains a challenging task in change-point detection and warrants further investigation.

\section{Appendix}
\subsection{Notation}
For the convenience of proof, we reformulate
\begin{eqnarray*}
W(k)&=&\frac{k^2(n-k)^2}{n^3\sqrt{p}}\left(\frac{1}{k}\sum_{i=1}^k \bX_i-\frac{1}{n-k}\sum_{i=k+1}^n \bX_i\right)^\top\left(\frac{1}{k}\sum_{i=1}^k \bX_i-\frac{1}{n-k}\sum_{i=k+1}^n \bX_i\right).
\end{eqnarray*}
Let $M=\lceil(n\wedge p)^{1/8}\rceil$, we define a $M$-dependent approximation sequence for $\{\bepsilon_i\}_{i=1}^{n}$ as
\begin{eqnarray*}
\boldsymbol\varepsilon_i^{(M)}:=\E(\bepsilon_i \mid \bZ_{i-M},\dots,\bZ_i)=\bms^{1/2}\sum\limits_{\ell=0}^{M}b_{\ell}\bZ_{i-\ell}.
\end{eqnarray*}
In order to simplify notation, we omit the superscript in the absence of ambiguity. Furthermore, we denote the auto-covariance matrix of $\boldsymbol\varepsilon_i$ at lag $h$,
\begin{align*}
\bGam_M(h):=\E(\boldsymbol\varepsilon_i\boldsymbol\varepsilon_{i+h}^{\top})=c_{h,M}\bms,
\end{align*}
where $c_{h,M}=\sum\nolimits_{\ell=0}^{M-h}b_{\ell}b_{\ell+h}$. Let $\bOme_{n,M}:=\sum_{h\in \mathcal{M}}(1-\frac{|h|}{n})\bGam_M(h)$, where $\mathcal{M}=\{0,\pm1,\dots,\pm M\}$.
Let $\{\bxi_i\}_{i=1}^{n}$ be a Gaussian sequence which is independent of $\{\boldsymbol\varepsilon_i\}_{i=1}^{n}$ and preserves the auto-covariance structure, that is, $\E(\bxi_i\bxi_{i+h}^{\top})=\bGam_M(h)$. By replacing $\bX_i$ in $W(k)$ with $\boldsymbol\varepsilon_i$, we define
\begin{eqnarray*}
W(k)^{(NG)}&:=&\frac{k^2(n-k)^2}{n^3\sqrt{p}}\left(\frac{1}{k}\sum_{i=1}^k \bvarepsilon_i-\frac{1}{n-k}\sum_{i=k+1}^n \bvarepsilon_i\right)^\top\left(\frac{1}{k}\sum_{i=1}^k \bvarepsilon_i-\frac{1}{n-k}\sum_{i=k+1}^n \bvarepsilon_i\right).
\end{eqnarray*}
Similarly, we define
\begin{eqnarray*}
W(k)^{(G)}&:=&\frac{k^2(n-k)^2}{n^3\sqrt{p}}\left(\frac{1}{k}\sum_{i=1}^k \bxi_i-\frac{1}{n-k}\sum_{i=k+1}^n \bxi_i\right)^\top\left(\frac{1}{k}\sum_{i=1}^k \bxi_i-\frac{1}{n-k}\sum_{i=k+1}^n \bxi_i\right).
\end{eqnarray*}

We restate some of the notations and assumptions from \cite{wang2023JRSSB} that are referred to in this paper, which primarily concern the temporal and cross-sectional dependence structures of the noise terms $\epsilon_{ij}$. Suppose there exist measurable functions $g_j$'s such that $\epsilon_{ij} = g_j(e_{i},e_{i-1},\ldots)$, where $\{e_i\}_{i\in\mathbb{Z}}$ is a sequence of independent and identically distributed (i.i.d.) random variables. To measure temporal dependence, define for $q\geq 1$,
\begin{align*}
    a_i(q)=\limsup_{p\to\infty}\max_{j=1,\ldots,p}\|g_j(e_{i},e_{i-1},\ldots,e_{0},e_{-1},\ldots) - g_j(e_{i},e_{i-1},\ldots,e'_{0},e_{-1},\ldots)\|_{q},
\end{align*}
where $\{e'_{i}\}_{i\in\mathbb{Z}}$ is an independent copy of $\{e_i\}_{i\in\mathbb{Z}}$.
Let $\sigma_{jj'}=\lim_{n\to\infty}n^{-1}\{\E\left(\sum_{i=1}^n\sum_{i'=1}^n\epsilon_{ij}\epsilon_{i'j'}\right)\}$ be the long-run covariances, and denote $\sigma_{j}=\sigma_{jj}^{1/2}$. The componentwise correlations among $p$ variables can thus be defined as $\rho_{jj'}=\sigma_{jj'}/(\sigma_{j}\sigma_{j'})$. Denote $\R=(\rho_{jj'})_{p\times p}$.
Let for some sequences $\delta_p>0$ and $\kappa_p>0$, $B_{p,j}=\{1\leq j'\leq p: |\rho_{jj'}|\geq\delta_p\}$ and $C_p=\{1\leq j\leq p: |B_{p,j}|\geq p^{\kappa_p}\}$.

\begin{assumption}[Assumption 1 in \cite{wang2023JRSSB}]\label{asmp:corr_temporal}
There exist some constants $q>4$ and $\mathfrak{a}>5/2$ such that $a_{i}(q)\lesssim i^{-\mathfrak{a}}$. In addition, $\liminf_{p\to\infty}\min_{j=1,\ldots,p}\sigma_{j}\geq\underline{\sigma}$ for some constant $\underline{\sigma}>0$.
\end{assumption}

\begin{assumption}[Assumption 2 in \cite{wang2023JRSSB}]\label{asmp:corr_spatial}
(i) $|\rho_{jj'}|\leq\varrho$ for $1\leq j\neq j'\leq p$ and some constant $\varrho\in (0,1)$; (ii) $|C_p|/p\to 0$ for some $\delta_p=o\{(\log p)^{-1}\}$ and $\kappa_p\to 0$, as $p\to\infty$.
\end{assumption}

\subsection{Proof of Theorem \ref{Th1}}
Without loss of generality, we assume $\bmu_i:=\bmu_0 + \bdelta \ind{i>\tau} = \mathbf{0}$ under $H_0$ for $i=1,\dots,n$. Then, $\{\bX_i\}_{i=1}^n$ is a process with zero mean and auto-covariance structure given by $\bGam(h)$, that is, $\E(\bX_i\bX_{i+h})=\bGam(h)=c_h\bms$ for $h=0,1,2,\dots$ and $\bGam(h)=\bGam(-h)^\top$ with $c_h=\sum\nolimits_{\ell=0}^{\infty}b_{\ell}b_{\ell+h}$.

\begin{proof}
The skeleton of the proof can be divided into three sub-steps. Firstly, we prove the asymptotic distribution of $W(\lfloor nt\rfloor)^{(G)}$; secondly, we establish the asymptotic distribution of $W(\lfloor nt\rfloor)^{(NG)}$ with Gaussian approximation approach; finally, we use $W(\lfloor nt\rfloor)^{(NG)}$ to approximate $W(\lfloor n t\rfloor)$. Note that, the term $W(\lfloor n t\rfloor)-\mu_{M,\lfloor n t\rfloor}$ can be divided into the following four parts,
\begin{eqnarray*}
W(\lfloor n t\rfloor)-\mu_{M,\lfloor n t\rfloor}&=&\left\{W(\lfloor n t\rfloor)-W(\lfloor n t\rfloor)^{(NG)}\right\}+\left\{W(\lfloor n t\rfloor)^{(NG)}-W(\lfloor n t\rfloor)^{(G)}\right\}\\
&&+\left\{W(\lfloor n t\rfloor)^{(G)}-\mu_{t}^{(G)}\right\}+\left\{\mu_{t}^{(G)}-\mu_{M,\lfloor n t\rfloor}\right\},
\end{eqnarray*}
where $\mu_{\lfloor n t\rfloor}^{(G)}:=\E\{W(\lfloor n t\rfloor)^{(G)}\}$.
According to Lemmas \ref{Th1-L1}--\ref{Th1-L6} and Slutsky's Theorem, we will complete the proof of Theorem 1.
\end{proof}

\begin{lemma}
\label{Th1-L1}
Under Assumptions \ref{ass:C1}--\ref{ass:C3} and $H_0$, we have
$$
W(\lfloor n t\rfloor)^{(G)}-\mu_{\lfloor n t\rfloor}^{(G)} \stackrel{d}{\rightarrow} \omega V(t), \quad t \in[0,1],
$$
where $V(t)$ is a continuous Gaussian process with $\E\{V(t)\}=0$ and
$$
\E\{V(t) V(s)\}=(1-t)^{2} s^{2}, \quad 0 \leqslant s \leqslant t \leqslant 1.
$$
\end{lemma}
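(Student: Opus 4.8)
\textbf{Proof proposal for Lemma \ref{Th1-L1}.}

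The plan is to work entirely in the Gaussian world, where $W(\lfloor nt\rfloor)^{(G)}$ is a quadratic form in the Gaussian vectors $\{\bxi_i\}$, and to establish weak convergence in the space of c\`adl\`ag (or continuous) functions on $[0,1]$ via the standard two-pronged argument: finite-dimensional convergence plus tightness. First I would rewrite the centered statistic: setting $\bzeta_{t}=\frac{1}{k}\sum_{i=1}^{k}\bxi_i-\frac{1}{n-k}\sum_{i=k+1}^{n}\bxi_i$ with $k=\lfloor nt\rfloor$, we have $W(\lfloor nt\rfloor)^{(G)}=\frac{k^2(n-k)^2}{n^3\sqrt{p}}\,\bzeta_t^{\top}\bzeta_t$, a sum over $p$ coordinates of (dependent across coordinates, but identically structured) scalar quadratic forms. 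Because $\bGam_M(h)=c_{h,M}\bms$ and $\bms$ has bounded eigenvalues (Assumption~\ref{ass:C3}), I would diagonalize $\bms$ and reduce the variance computation to traces of powers of $\bOme_{n,M}$; the normalization $\sqrt{2\tr(\bOme^2)/p}\to\omega$ is exactly the standard deviation that emerges.

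For the finite-dimensional distributions, fix $0\le t_1<\dots<t_m\le 1$ and consider an arbitrary linear combination $\sum_{r}\alpha_r\{W(\lfloor nt_r\rfloor)^{(G)}-\mu^{(G)}_{\lfloor nt_r\rfloor}\}$. This is a centered quadratic form in a Gaussian vector, so I would invoke a CLT for quadratic forms (e.g., the martingale-difference or Lyapunov-type CLT used in \cite{zhang2018clt} and \cite{10.1142/s201032631950014x}): the key check is that the fourth-moment/trace condition $\tr(\bOme_{n,M}^4)=o\{(\tr \bOme_{n,M}^2)^2\}$ holds, which follows because $\tr(\bms^k)=O(p)$ for all fixed $k$ under Assumption~\ref{ass:C3} while the temporal weights $c_{h,M}$ are summable. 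The covariance of the limit is computed from $\cov\{\bzeta_{t}^\top\bzeta_t,\bzeta_{s}^\top\bzeta_s\}$; using Gaussianity this reduces to $2\tr\{(\E\bzeta_t\bzeta_s^\top)(\E\bzeta_s\bzeta_t^\top)\}$, and after the $\frac{k^2(n-k)^2}{n^3\sqrt p}$ scaling and letting $M\to\infty$ slowly, the cross-terms collapse to $\tr(\bOme^2)$ times a deterministic function of $(s,t)$; a direct computation of $\E(a_{i,k}a_{i,k'})$-type sums should yield $\{(1-t)^2 s^2\}/\{\text{matching normalization}\}$, reproducing $\E\{V(t)V(s)\}=(1-t)^2s^2$ for $s\le t$. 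I would also need to confirm $\mu^{(G)}_{\lfloor nt\rfloor}$ is the correct centering so that the limit is mean zero; this is bookkeeping with the $a_{i,k}$ weights already displayed before Theorem~\ref{Th1}.

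For tightness in $C[0,1]$ (or $D[0,1]$), I would establish a moment bound of the form $\E\{|W(\lfloor nt\rfloor)^{(G)}-W(\lfloor ns\rfloor)^{(G)}|^{2}\}\lesssim (t-s)^{1+\epsilon}$ for $|t-s|\ge 1/n$, plus control of the jumps of size $O(1/n)$, which gives the Billingsley/Kolmogorov criterion; since we are dealing with Gaussian quadratic forms, the second moment of an increment is again a trace expression, and the difference $\bzeta_t-\bzeta_s$ has covariance of order $|t-s|$ times bounded-eigenvalue matrices, so the bound goes through. The \emph{main obstacle} I anticipate is handling the $M=M(n,p)\to\infty$ truncation uniformly in $t$: one must show the $M$-dependent auto-covariance sums $\bOme_{n,M}$ converge to the full long-run covariance $\bOme$ fast enough that the limiting variance is genuinely $\omega^2$ and not an $M$-dependent quantity, and that this convergence is uniform over $t\in[0,1]$ so it does not interfere with tightness. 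This is where Assumption~\ref{ass:C1}-(ii), in particular the decay $\ell^5 b_\ell\to 0$, is used: it controls $\sum_{h>M}|c_h|$ and the tail contributions to $\tr(\bOme^2)-\tr(\bOme_{n,M}^2)$, ensuring the truncation error is $o(1)$ after normalization. Once finite-dimensional convergence and tightness are in hand, the lemma follows from the standard weak-convergence theorem.
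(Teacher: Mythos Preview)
Your finite-dimensional convergence plan is essentially sound and your covariance calculation (via the Gaussian identity $\cov(\bzeta_t^\top\bzeta_t,\bzeta_s^\top\bzeta_s)=2\tr\{(\E\bzeta_t\bzeta_s^\top)^2\}$) does reproduce the kernel $(1-t)^2s^2$ after the $g(k)$ normalization. The paper takes a more explicit route here---a big-block/small-block partition of $\{1,\dots,n\}$ into $k_n$ blocks of width $w_n\gg M$, which manufactures \emph{independent} block averages $\tilde\eeta_{i,t}$ and then runs a martingale CLT (Propositions~\ref{L6.1-P1}--\ref{L6.1-P3})---but your higher-level invocation of a quadratic-form CLT with the trace condition $\tr(\bOme_{n,M}^4)=o\{(\tr\bOme_{n,M}^2)^2\}$ should give the same finite-dimensional limits.

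The gap is in your tightness argument. The bound you propose, $\E\{|W(\lfloor nt\rfloor)^{(G)}-W(\lfloor ns\rfloor)^{(G)}|^{2}\}\lesssim (t-s)^{1+\epsilon}$, is \emph{false}: the limit process already has $\E\{(V(t)-V(s))^2\}=2s(1-s)(t-s)+O((t-s)^2)$, i.e.\ increments whose variance is only linear in $t-s$, so the pre-limit cannot do better. Your heuristic (``$\bzeta_t-\bzeta_s$ has covariance of order $|t-s|$'') breaks because $W(\lfloor nt\rfloor)^{(G)}-W(\lfloor ns\rfloor)^{(G)}$ is not a quadratic form in $\bzeta_t-\bzeta_s$; it is (up to the smooth factor $g$) the \emph{bilinear} form $(\bzeta_t-\bzeta_s)^\top(\bzeta_t+\bzeta_s)$, whose variance is $\tr(\Sigma_{--}\Sigma_{++})+\tr(\Sigma_{-+}^2)\asymp (t-s)$, not $(t-s)^2$. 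One repair is to go to fourth moments: for centered Gaussian quadratic/bilinear forms the fourth moment is $3(\text{variance})^2+\text{(smaller cumulant term)}\lesssim (t-s)^2$, which \emph{does} feed Kolmogorov's criterion. The paper avoids moment criteria altogether: it rewrites the leading piece as $(1-t)A_{2,\ell_n}-t(1-t)A_{2,k_n}+tA_{\ell_n+1,k_n}$, where each $A$ is a martingale in the block index, and then gets tightness from Doob's maximal inequality plus the deterministic Lipschitz factors $t,(1-t)$. Either route works, but the second-moment Kolmogorov bound you wrote does not.
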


\begin{proof}
For $t\in[0,1]$ and $i=1,\ldots,n$, we define
\begin{eqnarray*}
a_{i,t}:=\left\{
\begin{array}{lr}
\dfrac{1}{\lfloor nt\rfloor}, &  i\leq \lfloor nt\rfloor,\\
\dfrac{-1}{n-\lfloor nt\rfloor}, &   i> \lfloor nt\rfloor.
\end{array}
\right.
\end{eqnarray*}
Define $g(k) := k^2(n-k)^2/(n^3\sqrt{p})$ and let $k = \lfloor n t\rfloor$. We reformulate $W(\lfloor nt\rfloor)^{(G)}$ as follows,
\begin{align*}
W(\lfloor nt\rfloor)^{(G)}&=g(\lfloor n t\rfloor)\sum\limits_{i,j=1}^n a_{i,t}a_{j,t}\bxi_{i}^\top \bxi_{j},
\end{align*}
and
\begin{eqnarray*}
\mu_{\lfloor n t\rfloor}^{(G)}= g(\lfloor n t\rfloor)\sum\limits_{i,j=1}^n a_{i,t}a_{j,t}\tr\{\bGam_{M}(|j-i|)\}.
\end{eqnarray*}
$F_{t}^{(G)}$ denotes an $n\times n$ matrix whose $(i,j)$-th element is defined as
\begin{eqnarray*}
F_{ij,t}^{(G)}&=&g(\lfloor n t\rfloor) a_{i,t}a_{j,t}  \left[\bxi_{i}^\top \bxi_{j}-\tr\{\bGam_{M}(|j-i|)\}\right],
\end{eqnarray*}
and hence
$$W(\lfloor nt\rfloor)^{(G)}-\mu_{\lfloor n t\rfloor}^{(G)}=\sum\limits_{i,j=1}^nF_{ij,t}^{(G)}.$$

\noindent{\textbf{Step 1.}} We first consider a single time point $t$.

For establishing the asymptotic normality of $W(\lfloor nt\rfloor)^{(G)}-\mu_{t}^{(G)}$, we use the two-dimensional triangular arrays to divide it into three parts. For any $n$, choose $\alpha_w\in(0,1)$ and $C>0$ and $w_n=Cn^{\alpha_w}>M$, then $n=w_nk_n+r_n$. Here, $w_n$ represents the size of each block, $k_n$ represents the number of blocks constructed direction, and $r_n$ stands for the points of the remainder terms which satisfy $0\leq r_n< w_n$. For any $i,j\in \{1,\dots,k_n\}$, define
\begin{align*}
B_{ij,t}^{(G)}&=\sum\limits_{\ell_1=(i-1)w_n+1}^{iw_n-M}\sum\limits_{\ell_2=(j-1)w_n+1}^{jw_n-M}F^{(G)}_{\ell_1\ell_2,t},\\
D_{ij,t}^{(G)}&=\sum\limits_{\ell_1=(i-1)w_n+1}^{iw_n}\sum\limits_{\ell_2=(j-1)w_n+1}^{jw_n}F^{(G)}_{\ell_1\ell_2,t}-B_{ij,t}^G,\\
F_{t}^{(G)} &= \sum\limits_{(i,j)\in\{1,\dots,n\}^2/}\sum\limits_{\{1,\dots,w_nk_n\}^2}F^{(G)}_{ij,t}.
\end{align*}
For non-Gaussian process $\{\boldsymbol\varepsilon_i\}_{i=1}^{n}$, we define $B_{ij,t}^{(NG)}, D_{ij,t}^{(NG)}, F_{t}^{(NG)}, F^{(NG)}_{ij,t}$ similarly. In this paper, we will use the unmarked symbol $F_{ij,t}$ to represent $F_{ij,t}^{(G)}$ or $F_{ij,t}^{(NG)}$ when we do not emphasize the difference between Gaussian and non-Gaussian situation.
Here, we reformulate $W(\lfloor nt\rfloor)^{(G)}-\mu_{t}^{(G)}:=\mathcal{H}_{1,t}+\mathcal{H}_{2,t}+\mathcal{H}_{3,t}+\mathcal{H}_{4,t}$, where
\begin{align*}
\mathcal{H}_{1,t}&=2\sum\limits_{1\leq i < j \leq k_n}B_{ij,t}, \quad
\mathcal{H}_{2,t}=\sum\limits_{i=1}^{k_n}B_{ii,t},\\
\mathcal{H}_{3,t}&=\sum\limits_{1\leq i, j \leq k_n}D_{ij,t},\quad
\mathcal{H}_{4,t}=F_t.
\end{align*}
We will show
\begin{align*}
\omega^{-1}\mathcal{H}_{1,t}\stackrel{d}{\rightarrow} V(t), \quad\quad
\omega^{-1}(\mathcal{H}_{2,t}+\mathcal{H}_{3,t}+\mathcal{H}_{4,t})\stackrel{p}{\rightarrow} 0.
\end{align*}

\noindent{\textbf{Step 1.1.}} We now focus on $\mathcal{H}_{1,t}$. Before deriving the asymptotic distribution of $\mathcal{H}_{1,t}$, we define
$\eeta_{i,t}:=a_{i,t}\bxi_{i}$, and
\begin{align*}
\tilde{\eeta}_{i,t}=\frac{1}{w_n-M}\sum\limits_{\ell=(i-1)w_n+1}^{iw_n-M}\eeta_{\ell,t},
\end{align*}
for $i=1,\dots, k_n$.
Then, for $i < j$,
\begin{align*}
B_{ij,t}=g(\lfloor n t\rfloor)(w_n-M)^2\tilde{\eeta}_{i,t}^\top\tilde{\eeta}_{j,t}.
\end{align*}
Furthermore, we define
\begin{align*}
V_{n j,t}&:=\sum\limits_{i=1}^{j-1}B_{i j,t}, ~\mbox{for}~ j=2,\dots, k_n\\
S_{n m,t}&:=\sum\limits_{j=2}^m V_{n j,t}, ~\mbox{for}~ m=2,\dots, k_n.
\end{align*}
For $m=2,\dots, k_n$, $\mathcal{F}_{n m}=\sigma(\tilde{\eeta}_{1,t},\dots,\tilde{\eeta}_{m,t})$ denotes the $\sigma$-algebra generated by $\tilde{\eeta}_{1,t},\dots,\tilde{\eeta}_{m,t}$. Then, we have
$$\mathcal{H}_{1,t}=2S_{nk_n,t}=2\sum\limits_{j=2}^{k_n}V_{nj,t}=2g(\lfloor n t\rfloor)(w_n-M)^2\sum\limits_{j=2}^{k_n}\sum\limits_{i=1}^{j-1}\tilde{\eeta}_{i,t}^\top\tilde{\eeta}_{j,t}.$$ Before deriving the asymptotic normality with Central limit theorems (CLT) of martingale differences, we need the following Propositions \ref{L6.1-P1}--\ref{L6.1-P3}.
\begin{prop}
\label{L6.1-P1}
For any $t\in [0,1]$, $\{V_{n m,t}, m=2,\dots,k_n\}$ is a martingale difference sequence with respect to the  $\sigma$-algebra $\{\mathcal{F}_{n m}, m=2,\dots,k_n\}$.
\end{prop}

\begin{proof}
We only need to prove $\{S_{n m,t}, m=2,\dots,k_n\}$ is a martingale sequence with respect to the  $\sigma$-algebra $\{\mathcal{F}_{nm}, m=2,\dots,k_n\}$ for any $t\in [0,1]$.

\begin{align*}
\E\{S_{n(m+1),t}|\mathcal{F}_{nm}\}&=g(\lfloor n t\rfloor)(w_n-M)^2\sum\limits_{j=2}^{m+1}\sum\limits_{i=1}^{j-1}\E\bigg(\tilde{\eeta}_{i,t}^\top\tilde{\eeta}_{j,t}|\mathcal{F}_{nm}\bigg)\\
&=g(\lfloor n t\rfloor)(w_n-M)^2\bigg\{\sum\limits_{j=2}^{m}\sum\limits_{i=1}^{j-1}\tilde{\eeta}_{i,t}^\top\tilde{\eeta}_{j,t}+\sum\limits_{i=1}^{m}\tilde{\eeta}_{i,t}^\top\E(\tilde{\eeta}_{m+1,t})\bigg\}\\
&=g(\lfloor n t\rfloor)(w_n-M)^2\bigg(\sum\limits_{j=2}^{m}\sum\limits_{i=1}^{j-1}\tilde{\eeta}_{i,t}^\top\tilde{\eeta}_{j,t}\bigg)=S_{n m,t}.
\end{align*}
\end{proof}

\begin{prop}
\label{L6.1-P2}
For any $t\in [0,1]$, we have
\begin{align*}
\sum\limits_{j=2}^{k_n}\E\bigg\{\frac{V_{nj,t}^2}{\omega^{2}}\bigg|\mathcal{F}_{n(j-1)}\bigg\} \stackrel{p}{\rightarrow}  \frac{t^2(1-t)^2}{4}.
\end{align*}

\end{prop}

\begin{proof}
We have
\begin{align*}
\E\left\{V_{nj,t}^2|\mathcal{F}_{n(j-1)}\right\}&=\{g(\lfloor n t\rfloor)\}^2(w_n-M)^4\E\bigg\{\bigg(\sum\limits_{i=1}^{j-1}\tilde{\eeta}_{i,t}^\top\tilde{\eeta}_{j,t}\bigg)^2\bigg|\mathcal{F}_{n(j-1)}\bigg\}\\
&=\{g(\lfloor n t\rfloor)\}^2(w_n-M)^4\sum\limits_{i_1=1}^{j-1}\sum\limits_{i_2=1}^{j-1}\tilde{\eeta}_{i_1,t}^\top\mathcal{V}_{j,t}\tilde{\eeta}_{i_2,t}.
\end{align*}
Let $\varsigma_{n,t}:=\omega^{-2}\sum\nolimits_{j=2}^{k_n}\E\{V_{nj,t}^2|\mathcal{F}_{n(j-1),t}\}$, then
\begin{align}
\label{P2-Eq1}
\E(\varsigma_{n,t})&=\{g(\lfloor n t\rfloor)\}^2(w_n-M)^4\omega^{-2}\sum\limits_{j=2}^{k_n}\sum\limits_{i=1}^{j-1}\tr(\mathcal{V}_{i,t}\mathcal{V}_{j,t}),
\end{align}
where $\mathcal{V}_{i,t}:=\var(\tilde{\eeta}_{i,t})$ for $i=1,\dots,k_n$.
Let $b_{i}^{l}:=(i-1)w_n+1$ and $b_{i}^{r}:=i w_n-M$ be the left and right points of the $i$th large block $[b_{i}^{l},b_{i}^{r}]$, define
\begin{align*}
\O_{w_n-M,M}&:=\sum_{h\in \mathcal{M}}\Big(1-\frac{|h|}{w_n-M}\Big)\bGam_M(h).
\end{align*}
When $\lfloor nt\rfloor<b_{i}^{l}$,
\begin{align*}
\mathcal{V}_{i,t}=\frac{1}{w_n-M}\cdot\frac{1}{(n-\lfloor nt\rfloor)^2}\O_{w_n-M,M}.
\end{align*}
When $\lfloor nt\rfloor\geq b_{i}^{r}$,
\begin{align*}
\mathcal{V}_{i,t}=\frac{1}{w_n-M}\cdot\frac{1}{\lfloor nt\rfloor^2}\O_{w_n-M,M}.
\end{align*}
When $b_{i}^{l}\leq \lfloor nt\rfloor< b_{i}^{r}$, we partition the $l$th large block $[b_{i}^{l},b_{i}^{r}]$ into $[b_{i}^{l},\lfloor nt\rfloor]$ and $[\lfloor nt\rfloor +1,b_{i}^{r}]$, and discuss$\mathcal{V}_{i,t}$ with the following three sub-cases (i)--(iii).
\begin{itemize}
\item [(i)] When both $\lfloor nt\rfloor-b_{i}^{l}+1> M$ and $b_{i}^{r}-\lfloor nt\rfloor > M$ hold, we have
\begin{align}
\label{V3}
\mathcal{V}_{i,t}&=\frac{1}{w_n-M}\bigg\{\frac{1}{\lfloor nt\rfloor^2}\cdot\frac{\lfloor nt\rfloor-b_{i}^{l}+1}{w_n-M}\sum_{h\in \mathcal{M}}\bigg(1-\frac{|h|}{\lfloor nt\rfloor-b_{i}^{l}+1}\bigg)\bGam_M(h)\notag\\
&+\frac{1}{(n-\lfloor nt\rfloor)^2}\cdot\frac{b_{i}^{r}-\lfloor nt\rfloor}{w_n-M}\sum_{h\in \mathcal{M}}\bigg(1-\frac{|h|}{b_{i}^{r}-\lfloor nt\rfloor}\bigg)\bGam_M(h)\bigg\}.
\end{align}
\item [(ii)] When $\lfloor nt\rfloor-b_{i}^{l}+1\leq M$ and $b_{i}^{r}-\lfloor nt\rfloor > M$, we have
\begin{small}
\begin{align}
\label{V4}
\mathcal{V}_{i,t}
&=\frac{1}{w_n-M}\bigg\{\frac{1}{\lfloor nt\rfloor^2}\cdot\frac{\lfloor nt\rfloor-b_{i}^{l}+1}{w_n-M}\sum_{h=0,\pm 1,\cdots, \pm (\lfloor nt\rfloor-b_{i}^{l})}\bigg(1-\frac{|h|}{\lfloor nt\rfloor-b_{i}^{l}+1}\bigg)\bGam_M(h)\notag\\
&+\frac{1}{(n-\lfloor nt\rfloor)^2}\cdot\frac{b_{i}^{r}-\lfloor nt\rfloor}{w_n-M}\sum_{h\in \mathcal{M}}\bigg(1-\frac{|h|}{b_{i}^{r}-\lfloor nt\rfloor}\bigg)\bGam_M(h)\notag\\
&-\frac{1}{\lfloor nt\rfloor(n-\lfloor nt\rfloor)}\bigg(\sum_{h=\pm 1,\dots, \pm(\lfloor nt\rfloor-b_i^l)}\frac{|h|}{w_n-M}\bGam_M(h)+\!\!\!\!\!\!\sum_{h=\pm (\lfloor nt\rfloor-b_i^l+1),\dots,\pm M}\!\!\!\!\!\!\frac{\lfloor nt\rfloor-b_i^l+1}{w_n-M}\bGam_M(h)\bigg)\bigg\}.
\end{align}
\end{small}
\item [(iii)] When $\lfloor nt\rfloor-b_{i}^{l}+1> M$ and $b_{i}^{r}-\lfloor nt\rfloor \leq M$, we have
\begin{small}
\begin{align}
\label{V5}
\mathcal{V}_{i,t}
&=\frac{1}{w_n-M}\bigg\{\frac{1}{\lfloor nt\rfloor^2}\cdot\frac{\lfloor nt\rfloor-b_{i}^{l}+1}{w_n-M}\sum_{h\in \mathcal{M}}\bigg(1-\frac{|h|}{\lfloor nt\rfloor-b_{i}^{l}+1}\bigg)\bGam_M(h)\notag\\
&+\frac{1}{(n-\lfloor nt\rfloor)^2}\cdot\frac{b_{i}^{r}-\lfloor nt\rfloor}{w_n-M}\sum_{h=0,\pm 1,\cdots, \pm (b_{i}^{r}-\lfloor nt\rfloor-1)}\bigg(\frac{b_{i}^{r}-\lfloor nt\rfloor}{w_n-M}-\frac{|h|}{w_n-M}\bigg)\bGam_M(h)\notag\\
&-\frac{1}{\lfloor nt\rfloor(n-\lfloor nt\rfloor)}\bigg(\sum_{h=\pm 1,\cdots,\pm (b_i^r-\lfloor nt\rfloor)}\frac{|h|}{w_n-M}\bGam_M(h)+\!\!\!\!\!\!\sum_{h=\pm (b_i^r-\lfloor nt\rfloor+1),\cdots,\pm M}\!\!\frac{b_i^r-\lfloor nt\rfloor}{w_n-M}\bGam_M(h)\bigg)\bigg\}.
\end{align}
\end{small}
\end{itemize}
Now, we back to Equation \eqref{P2-Eq1} by considering two cases:
\begin{itemize}
\item [(I)]
there exists $\ell_n$ such that $\ell_n$th large block fall in the left side of the point $\lfloor n t\rfloor$ and $(\ell_n+1)$th large block fall in the right side of the point $\lfloor n t\rfloor$, that is $b_{\ell_n}^{r}\leq \lfloor nt\rfloor< b_{\ell_n+1}^{l}$;
\item [(II)]
there exists $\ell_n$ such that the point $\lfloor n t\rfloor$ fall in the $\ell_n$th block, that is $b_{\ell_n}^{l}\leq \lfloor nt\rfloor< b_{\ell_n}^{r}$.
\end{itemize}
In case (I), $(\lfloor nt\rfloor-1-w_n)/w_n<\ell_n<(\lfloor nt\rfloor+M)/w_n$, it means that $\lim_{n\to \infty}\ell_nw_n/n= t$ and $\lim_{n\to \infty}(k_n-\ell_n)w_n/n=1-t$. We have
\begin{eqnarray*}
\sum\limits_{j=2}^{k_n}\E(V_{nj,t}^2)
&=&\sum\limits_{j=2}^{\ell_n}\E(V_{nj,t}^2)+\sum\limits_{j=\ell_n+1}^{k_n}\E(V_{nj,t}^2)\\
&=&\{g(\lfloor n t\rfloor)\}^2(w_n-M)^4\bigg(\sum\limits_{j=2}^{\ell_n}\sum\limits_{i=1}^{j-1}+\sum\limits_{j=\ell_n+1}^{k_n}\sum\limits_{i=1}^{\ell_n}+\sum\limits_{j=\ell_n+1}^{k_n}\sum\limits_{i=\ell_n+1}^{j-1}\bigg)\tr(\mathcal{V}_{i,t}\mathcal{V}_{j,t})\\
&=& \{g(\lfloor n t\rfloor)\}^2(w_n-M)^2\bigg\{\frac{1}{\lfloor n t\rfloor^4}\cdot\frac{\ell_n(\ell_n-1)}{2}+
\frac{1}{\lfloor n t\rfloor^2(n-\lfloor nt\rfloor)^2}\cdot(k_n-\ell_n)\ell_n\\
&&+\frac{1}{(n-\lfloor nt\rfloor)^4}\cdot\frac{(k_n-\ell_n)(k_n-\ell_n-1)}{2}\bigg\}\tr(\O_{w_n-M,M}^2)\\
&=& \bigg\{\frac{(n-\lfloor nt\rfloor)^4(w_n-M)^2}{n^6p}\cdot\frac{\ell_n(\ell_n-1)}{2}+
\frac{\lfloor n t\rfloor^2(n-\lfloor nt\rfloor)^2(w_n-M)^2}{n^6p}\cdot(k_n-\ell_n)\ell_n\\
&&+\frac{\lfloor n t\rfloor^4(w_n-M)^2}{n^6p}\cdot\frac{(k_n-\ell_n)(k_n-\ell_n-1)}{2}\bigg\}\tr(\O_{w_n-M,M}^2)\\
&\rightarrow& \bigg\{\frac{(1-t)^4t^2}{2}+\frac{2(1-t)^3t^3}{2}+\frac{(1-t)^2t^4}{2}\bigg\}\lim_{p\to \infty}\frac{\tr(\O_{w_n-M,M}^2)}{p}\\
&=& \frac{t^2(1-t)^2}{2}\cdot\lim_{p\to \infty}\frac{\tr(\O_{w_n-M,M}^2)}{p}.
\end{eqnarray*}
Define $d_{n,M} := \sum\nolimits_{h\in\mathcal{M}}(1-\frac{|h|}{n}) c_h$ and $d_{w_n-M,M} := \sum\nolimits_{h\in\mathcal{M}}(1-\frac{|h|}{w_n-M}) c_h$, by Assumption~\ref{ass:C1}-(ii) and $M=\lceil(n\wedge p)^{1/8}\rceil$, we have
\begin{align*}
d_{n,M} =  \sum\limits_{h\in\mathcal{M}}\sum\limits_{\ell=0}^{\infty}b_{\ell}b_{\ell+h} - \frac{1}{n}\sum\limits_{h\in\mathcal{M}}|h| c_h \rightarrow s^2, \quad \mbox{as} \quad n,p\rightarrow \infty
\end{align*}
with the fact that $\sum\nolimits_{h=-\infty}^{\infty}|h| |c_h| < \infty$. Similarly, $d_{w_n-M,M} \rightarrow s^2$ as $n,p\rightarrow \infty$. Thus, as $n,p\rightarrow \infty$,
\begin{align*}
\frac{|\tr(\O_{w_n-M,M}^2)-\tr(\O_{n,M}^2)|}{\tr(\O_{n,M}^2)}
&= \bigg|\frac{d_{w_n-M,M}^2 - d_{n,M}^2}{d_{n,M}^2}\bigg| = o(1).
\end{align*}
Furthermore, by Lemma \ref{Th1-L5}--\ref{Th1-L6}, we have
$$\E(\varsigma_{n,t}) = \omega^{-2}\sum\limits_{j=2}^{k_n}\E(V_{nj,t}^2)\rightarrow \frac{t^2(1-t)^2}{4}.$$
Furthermore,
\begin{eqnarray*}
\var(\varsigma_{n,t})&=&\E(\varsigma_{n,t}^2)-\{\E(\varsigma_{n,t})\}^2\\
&=&\{g(\lfloor n t\rfloor)\}^4(w_n-M)^8\omega^{-4}\bigg\{\sum\limits_{j_1,j_2=2}^{k_n}\sum\limits_{i_1,i_3=1}^{j_1-1}\sum\limits_{i_2,i_4=1}^{j_2-1}\E\bigg(\tilde{\eeta}_{i_1,t}^\top\mathcal{V}_{j_1,t}\tilde{\eeta}_{i_3,t}\tilde{\eeta}_{i_2,t}^\top\mathcal{V}_{j_2,t}\tilde{\eeta}_{i_4,t}\bigg)\\
&&-\sum\limits_{j_1,j_2=2}^{k_n}\sum\limits_{i_1=1}^{j_1-1}\sum\limits_{i_2=1}^{j_2-1}\tr\bigg(\mathcal{V}_{i_1,t}\mathcal{V}_{j_1,t}\bigg)\tr\bigg(\mathcal{V}_{i_2,t}\mathcal{V}_{j_2,t}\bigg)\bigg\}\\
&=&\{g(\lfloor n t\rfloor)\}^4(w_n-M)^8\omega^{-4}\bigg\{\sum\limits_{j_1,j_2=2}^{k_n}\sum\limits_{i_1\neq i_3=1}^{j_1-1}\sum\limits_{i_2\neq i_4=1}^{j_2-1}\E\bigg(\tilde{\eeta}_{i_1,t}^\top\mathcal{V}_{j_1,t}\tilde{\eeta}_{i_3,t}\tilde{\eeta}_{i_2,t}^\top\mathcal{V}_{j_2,t}\tilde{\eeta}_{i_4,t}\bigg)\bigg\}\\
&=&\{g(\lfloor n t\rfloor)\}^4(w_n-M)^8\omega^{-4}\bigg\{\sum\limits_{j_1,j_2=2}^{k_n}\sum\limits_{i_1\neq i_2=1}^{j_1\wedge j_2-1}\tr(\mathcal{V}_{i_1,t}\mathcal{V}_{j_1,t}\mathcal{V}_{i_2,t}\mathcal{V}_{j_2,t})\bigg\}\\
&=&O\bigg\{\frac{n^4(w_n-M)^8}{p^2}\cdot\frac{p^2}{\tr^2(\O_{n,M}^2)}\cdot\frac{k_n^4\tr(\O_{w_n-M,M}^4)}{n^8(w_n-M)^4}\cdot \frac{\tr^2(\O_{n,M}^2)}{\tr^2(\O_{n}^2)}\bigg\}\\
&=& O\bigg\{\frac{\tr(\O_{w_n-M,M}^4)}{\tr^2(\O_{n,M}^2)}\bigg\}=O(p^{-1})\\
&=&o(1),
\end{eqnarray*}
the last equation holds because both $\O_{w_n-M,M}$ and $\O_{n,M}$ are bounded in spectral norm.
Above all,
\begin{align*}
\sum\limits_{j=2}^{k_n}\E\bigg\{\frac{V_{nj,t}^2}{\omega^{2}}\bigg|\mathcal{F}_{n(j-1),t}\bigg\} \stackrel{p}{\rightarrow}  \frac{t^2(1-t)^2}{4}.
\end{align*}

\noindent In case (II), we follow the similar discussion of case (I). By Assumption~\ref{ass:C3}, considering Equations~\eqref{V3}--\eqref{V5}, for $i=1,\dots,k_n$, we have $\lambda_{\max}(\mathcal{V}_{i,t})=O\{n^{-2}(w_n-M)^{-1}\}$, and $p^{-1}\tr(\O_{w_n-M,M}^2)>d_{w_n-n,M}^2C_0^2$,
then
\begin{align*}
\bigg|\frac{\E(V_{n\ell_n,t}^2)}{p^{-1}\tr(\O_{w_n-M,M}^2)}\bigg|&\leq\{g(\lfloor n t\rfloor)\}^2(w_n-M)^4\sum\limits_{i=1}^{\ell_n-1}\bigg|\frac{\tr(\mathcal{V}_{i,t}\mathcal{V}_{\ell_n,t})}{d_{w_n-n,M}^2C_0^2}\bigg|\\
&=O\bigg\{\frac{n^2(w_n-M)^4(\ell_n-1)p}{pn^4(w_n-M)^2}\bigg\}=O\bigg(\frac{w_n}{n}\bigg)\\
&=o(1).
\end{align*}
Then
\begin{align*}
\E(V_{n\ell_n,t}^2)=o\bigg\{\frac{\tr(\O_{w_n-M,M}^2)}{p}\bigg\}.
\end{align*}
Thus,
\begin{eqnarray*}
\sum\limits_{j=2}^{k_n}\E(V_{nj,t}^2)
&=&\sum\limits_{j=2}^{\ell_n-1}\E(V_{nj,t}^2)+\E(V_{n\ell_n,t}^2)+\sum\limits_{j=\ell_n+1}^{k_n}\E(V_{nj,t}^2)\\
&=&\frac{\{g(\lfloor n t\rfloor)\}^2(w_n-M)^2}{\lfloor n t\rfloor^4}\sum\limits_{j=2}^{\ell_n-1}(j-1)\tr(\O_{w_n-M,M}^2)\\
&&+\frac{\{g(\lfloor n t\rfloor)\}^2(w_n-M)^2}{\lfloor n t\rfloor^2(n-\lfloor nt\rfloor)^2}(k_n-\ell_n)\ell_n\tr(\O_{w_n-M,M}^2)\\
&&+\frac{\{g(\lfloor n t\rfloor)\}^2(w_n-M)^2}{(n-\lfloor nt\rfloor)^4}\sum\limits_{j=\ell_n+1}^{k_n}(j-\ell_n-1)\tr(\O_{w_n-M,M}^2)
+o\bigg\{\frac{\tr(\O_{w_n-M,M}^2)}{p}\bigg\}\\
&= &\frac{(n-\lfloor nt\rfloor)^4(w_n-M)^2}{n^6p}\cdot\frac{(\ell_n-1)(\ell_n-2)}{2}\tr(\O_{w_n-M,M}^2)\\
&&+\frac{\lfloor n t\rfloor^2(n-\lfloor nt\rfloor)^2(w_n-M)^2}{n^6p}(k_n-\ell_n)\ell_n\tr(\O_{w_n-M,M}^2)\\
&&+\frac{\lfloor n t\rfloor^4(w_n-M)^2}{n^6p}\cdot\frac{(k_n-\ell_n)(k_n-\ell_n-1)}{2}\tr(\O_{w_n-M,M}^2)
+o\bigg\{\frac{\tr(\O_{w_n-M,M}^2)}{p}\bigg\}\\
&\rightarrow& \bigg\{\frac{(1-t)^4t^2}{2}+\frac{2(1-t)^3t^3}{2}+\frac{(1-t)^2t^4}{2}+o(1)\bigg\}\lim_{p\to \infty}\frac{\tr(\O_{w_n-M,M}^2)}{p}\\
&=& \frac{t^2(1-t)^2}{2}\cdot\lim_{p\to \infty}\frac{\tr(\O_{w_n-M,M}^2)}{p}\{1+o(1)\}.
\end{eqnarray*}
Then, $$\E(\varsigma_{n,t}) = \omega^{-2}\sum\limits_{j=2}^{k_n}\E(V_{nj,t}^2)\rightarrow \frac{t^2(1-t)^2}{4}.$$
By taking similar arguments, we have $\var(\varsigma_{n,t}) = o(1)$.
Now, Proposition \ref{L6.1-P2} is finished.
\end{proof}

\begin{prop}
\label{L6.1-P3}
For any $t\in [0,1]$ and $\epsilon>0$, we have
\begin{align*}
\frac{1}{\omega^2}\sum\limits_{j=2}^{k_n}\E\bigg\{V_{nj,t}^2\mathbb{I}(\vert V_{nj,t}\vert>\omega\epsilon)\bigg |\mathcal{F}_{n(j-1)}\bigg\} \stackrel{p}{\rightarrow}  0.
\end{align*}
\end{prop}

\begin{proof}
It suffices to show
\begin{align*}
\E\bigg[\sum\limits_{j=2}^{k_n}\E\left\{V_{nj,t}^4|\mathcal{F}_{n(j-1)}\right\}\bigg]=o(\omega^4).
\end{align*}

Due to the fact
\begin{align*}
\frac{1}{\omega^2}\sum\limits_{j=2}^{k_n}\E\{V_{nj,t}^2\mathbb{I}(|V_{nj,t}|>\omega\epsilon)|\mathcal{F}_{n(j-1)}\}\leq \frac{1}{\omega^4\epsilon^2}\sum\limits_{j=2}^{k_n}\E\{V_{nj,t}^4|\mathcal{F}_{n(j-1)}\},
\end{align*}
we have
\begin{eqnarray*}
&&\E\bigg[\sum\limits_{j=2}^{k_n}\E\{V_{nj,t}^4|\mathcal{F}_{n(j-1)}\}\bigg]\\
&=&\{g(\lfloor n t\rfloor)\}^4(w_n-M)^8\sum\limits_{j=2}^{k_n}\E\Big(\sum\limits_{i=1}^{j-1}\tilde{\eeta}_{i,t}^\top\tilde{\eeta}_{j,t}\Big)^4\\
&=&\{g(\lfloor n t\rfloor)\}^4(w_n-M)^8\sum\limits_{j=2}^{k_n}\E\Big(\sum\limits_{i_1,i_2,i_3,i_4=1}^{j-1}\tilde{\eeta}_{i_1,t}^\top\tilde{\eeta}_{j,t}\tilde{\eeta}_{i_2,t}^\top\tilde{\eeta}_{j,t}\tilde{\eeta}_{i_3,t}^\top\tilde{\eeta}_{j,t}\tilde{\eeta}_{i_4,t}^\top\tilde{\eeta}_{j,t}\Big)\\
&=&\{g(\lfloor n t\rfloor)\}^4(w_n-M)^8\sum\limits_{j=2}^{k_n}\E\Big\{3\sum\limits_{i_1\neq i_2}^{j-1}\tr(\mathcal{V}_{i_1,t}\mathcal{V}_{j,t})\tr(\mathcal{V}_{i_2,t}\mathcal{V}_{j,t})+\sum\limits_{i=1}^{j-1}\tr(\mathcal{V}_{i,t}\mathcal{V}_{j,t})^2\Big\}\\
&\leq& \frac{Cn^4(w_n-M)^8}{p^2}\cdot\frac{1}{n^8(w_n-M)^4}\bigg\{\frac{k_n(k_n+1)(2k_n+1)}{6}\tr^2(\O_{w_n-M,M}^2)\\
&&+\frac{k_n(k_n-1)}{2}\tr(\O_{w_n-M,M}^4)\bigg\}\\
&=&o(\omega^4),
\end{eqnarray*}
with $\tr(\O_{w_n-M,M}^4) = o\{\tr^2(\O_{w_n-M,M}^2)\}$.
Thus, the Lindberg condition of CLT is satisfied. Proposition \ref{L6.1-P3} has been proved.
\end{proof}

\noindent{\textbf{Step 1.2.}}
In this step, we will prove $\omega^{-1}(\mathcal{H}_{2,t}+\mathcal{H}_{3,t}+\mathcal{H}_{4,t})\stackrel{p}{\rightarrow} 0.$
For the sake of proof, we reformulate
\begin{align*}
&\mathcal{H}_{2,t}+\mathcal{H}_{3,t}+\mathcal{H}_{4,t}\\
=&\sum\limits_{i=1}^{k_n}B_{ii,t}+\sum\limits_{i=1}^{k_n}\sum\limits_{j=1}^{k_n}D_{ij,t}+F_t\\
=&\sum\limits_{i=1}^{k_n}\sum\limits_{\ell_1=(i-1)w_n+1}^{iw_n-M}\sum\limits_{\ell_2=(i-1)w_n+1}^{iw_n-M}F_{\ell_1\ell_2,t}+2\sum\limits_{i=1}^{k_n}\Big(\sum\limits_{\ell_1=1}^n\sum\limits_{\ell_2=iw_n-M+1}^{iw_n}F_{\ell_1\ell_2,t}\Big)\\
&+2\sum\limits_{i=1}^{k_n}\Big(\sum\limits_{\ell_1=k_nw_n+1}^n\sum\limits_{\ell_2=(i-1)w_n+1}^{iw_n-M}F_{\ell_1\ell_2,t}\Big)
-\sum\limits_{i=1}^{k_n}\sum\limits_{j=1}^{k_n}\sum\limits_{\ell_1=iw_n-M+1}^{iw_n}\sum\limits_{\ell_2=jw_n-M+1}^{jw_n}F_{\ell_1\ell_2,t}\\
&+\sum\limits_{\ell_1=k_nw_n+1}^n\sum\limits_{\ell_2=k_nw_n+1}^n F_{\ell_1\ell_2,t}.
\end{align*}
Thus, we have $\E(F_{ij,t})=0$, and
\begin{align*}
\cov(F_{i_1j_1,t}, F_{i_2j_2,t}) &= \{g(\lfloor n t\rfloor)\}^2 a_{i_1,t}a_{j_1,t}a_{i_2,t}a_{j_2,t}\big[\tr\{\bGam_{M}(|i_2-i_1|)\bGam_{M}(|j_2-j_1|)\}\\
&+\tr\{\bGam_{M}(|i_1-j_2|)\bGam_{M}(|i_2-j_1|)\big].
\end{align*}
Thus, $\E(B_{ii,t})=0$, $\cov(B_{ii,t},B_{jj,t})=0$ for any $i\neq j$, and
\begin{align*}
\var\Big(\sum\limits_{i=1}^{k_n}B_{ii,t}\Big)
&=\sum\limits_{i=1}^{k_n}\E\bigg(\sum\limits_{\ell_1=(i-1)w_n+1}^{iw_n-M}\sum\limits_{\ell_2=(i-1)w_n+1}^{iw_n-M}F_{\ell_1\ell_2,t}\bigg)^2\\
&=\sum\limits_{i=1}^{k_n}\sum\limits_{\ell_1,\ell_1^{'},\ell_2,\ell_2^{'}=(i-1)w_n+1}^{iw_n-M}\E\bigg(F_{\ell_1\ell_2,t}F_{\ell_1{'}\ell_2{'},t}\bigg)\\
&=O\bigg\{\frac{k_n(w_n-M)^2\tr(\O_{w_n-M, M}^2)}{n^2p}\bigg\}.
\end{align*}
Recall $\omega=\lim_{p\to \infty}\sqrt{2\tr(\O^2)/p}$ and $M=\lceil(n\wedge p)^{1/8}\rceil$, we have
\begin{align*}
\var\Big(\omega^{-1}\sum\limits_{i=1}^{k_n}B_{ii,t}\Big)
&= O\bigg\{\frac{(w_n-M)\tr(\O_{w_n-M, M}^2)}{n\tr(\O_{n}^2)}\bigg\}=o(1).
\end{align*}
Then, we have $\omega^{-1}\sum\nolimits_{i=1}^{k_n}B_{ii,t}=o_p(1)$.
Let $\mathcal{R}_{i, t}:=\sum\limits_{\ell_1=1}^n\sum\limits_{\ell_2=iw_n-M+1}^{iw_n}F_{\ell_1\ell_2,t}$, for sufficiently large $n$, we have $\mathcal{R}_{i, t}$ and $\mathcal{R}_{j, t}$ are independent because $w_n$ is greater than $M$.
Then,
\begin{align*}
\var\Big(\sum\limits_{i=1}^{k_n}\mathcal{R}_{i, t}\Big)
&=\sum\limits_{i=1}^{k_n}\E\bigg(\sum\limits_{\ell_1=1}^n\sum\limits_{\ell_2=iw_n-M+1}^{iw_n}F_{\ell_1\ell_2,t}\bigg)^2\\
&=\sum\limits_{i=1}^{k_n}\sum\limits_{\ell_1=1}^n\sum\limits_{\ell_2=iw_n-M+1}^{iw_n}\sum\limits_{\ell_1^{'}=1}^n\sum\limits_{\ell_2^{'}=iw_n-M+1}^{iw_n}\E\bigg(F_{\ell_1\ell_2,t}F_{\ell_1{'}\ell_2{'},t}\bigg)\\
&= O\bigg\{\frac{k_nnM\tr(\O_{n,M}\O_{M,M})}{n^2p}\bigg\}.
\end{align*}
Thus $\omega^{-1}\sum\nolimits_{i=1}^{k_n}\mathcal{R}_{i, t}=o_p(1)$.
Similarly, we have
\begin{align*}
\var\bigg\{\omega^{-1}\sum\limits_{i=1}^{k_n}\Big(\sum\limits_{\ell_1=k_nw_n+1}^n\sum\limits_{\ell_2=(i-1)w_n+1}^{iw_n-M}F_{\ell_1\ell_2,t}\Big)\bigg\}=o(1),
\end{align*}
and
\begin{align*}
\var\bigg(\omega^{-1}\sum\limits_{\ell_1=k_n w_n+1}^n\sum\limits_{\ell_2=k_nw_n+1}^n F_{\ell_1\ell_2,t}\bigg)=o(1).
\end{align*}
Thus, $\mathcal{H}_{2,t}+\mathcal{H}_{3,t}+\mathcal{H}_{4,t}\stackrel{p}{\rightarrow} 0.$ Following Propositions \ref{L6.1-P1}--\ref{L6.1-P3}, for fixed $t\in[0,1]$, we have
$$
W(\lfloor n t\rfloor)^{(G)}-\mu_{\lfloor n t\rfloor}^{(G)}\stackrel{d}{\rightarrow} \omega N\left(0, t^2(1-t)^{2}\right).
$$

\noindent{\textbf{Step 2.}}  We consider two time points $t$ and $s$ with $s<t$.
In this step, we consider the convergence in distribution of
\begin{eqnarray*}
&&\omega^{-1}\left[a\{W(\lfloor n t\rfloor)^{(G)}-\mu_{t}^{(G)}\}+b\{W(\lfloor n s\rfloor)^{(G)}-\mu_{s}^{(G)}\}\right]\\
&=&\omega^{-1}(a\mathcal{H}_{1,t}+b\mathcal{H}_{1,s})+o_p(1)\\
&=&2\omega^{-1}\bigg(a\sum\limits_{j=2}^{k_n}V_{nj,t}+b\sum\limits_{j=2}^{k_n}V_{nj,s}\bigg)+o_p(1):=2\omega^{-1}\sum\limits_{j=2}^{k_n}\tilde{V}_{nj,ts}+o_p(1).
\end{eqnarray*}
It is obvious that $\{\tilde{V}_{nm,ts}, m=2,\dots,k_n\}$ is a martingale difference sequence with respect to the  $\sigma$-algebra $\{\mathcal{F}_{nm}, m=2,\dots,k_n\}$. Now, we need to consider
\begin{align*}
\sum\limits_{j=2}^{k_n}\E(\tilde{V}_{nj,ts}^2)&=a^2\sum\limits_{j=2}^{k_n}\E(V_{nj,t}^2)+b^2\sum\limits_{j=2}^{k_n}\E(V_{nj,s}^2)+2ab\sum\limits_{j=2}^{k_n}\E(V_{nj,t}V_{nj,s}),
\end{align*}
and
\begin{align*}
\E\{V_{nj,t}V_{nj,s}|\mathcal{F}_{n(j-1)}\}&=g(\lfloor n t\rfloor)g(\lfloor n s\rfloor)(w_n-M)^4\E\bigg\{\bigg(\sum\limits_{i=1}^{j-1}\tilde{\eeta}_{i,t}^\top\tilde{\eeta}_{j,t}\bigg)\bigg(\sum\limits_{i=1}^{j-1}\tilde{\eeta}_{i,s}^\top\tilde{\eeta}_{j,s}\bigg)\bigg|\mathcal{F}_{n(j-1)}\bigg\}\\
&=g(\lfloor n t\rfloor)g(\lfloor n s\rfloor)(w_n-M)^4\sum\limits_{i_1=1}^{j-1}\sum\limits_{i_2=1}^{j-1}\tilde{\eeta}_{i_1,t}^\top\mathcal{V}_{j,ts}\tilde{\eeta}_{i_2,s},\\
\E(V_{nj,t}V_{nj,s})&=g(\lfloor n t\rfloor)g(\lfloor n s\rfloor)(w_n-M)^4\sum\limits_{i=1}^{j-1}\tr(\mathcal{V}_{i,ts}\mathcal{V}_{j,ts}),
\end{align*}
where $\mathcal{V}_{i,ts}:=\E(\tilde{\eeta}_{i,t}\tilde{\eeta}_{i,s}^\top)$, for $i=1,\dots,k_n$.
If $b_{i}^{r}\leq  \lfloor ns\rfloor$,
\begin{align*}
\mathcal{V}_{i,ts}=
\frac{1}{(w_n-M)}\cdot\frac{1}{\lfloor nt\rfloor\lfloor ns\rfloor}\O_{w_n-M,M}.
\end{align*}
If $b_{i}^{l}\leq \lfloor ns\rfloor< b_{i}^{r}\leq \lfloor nt\rfloor$,
\begin{eqnarray*}
\mathcal{V}_{i,ts}&=&
\frac{1}{(w_n-M)}\cdot\bigg\{\frac{1}{\lfloor nt\rfloor\lfloor ns\rfloor}\O_{w_n-M,M}-\frac{1}{\lfloor nt\rfloor(n-\lfloor ns\rfloor)}\O_{w_n-M,M}\bigg\}.
\end{eqnarray*}
If $\lfloor ns\rfloor<[b_{i}^{l}, b_{i}^{r}]<\lfloor nt\rfloor$,
\begin{align*}
\mathcal{V}_{i,ts}=
-\frac{1}{(w_n-M)}\cdot\frac{1}{\lfloor nt\rfloor(n-\lfloor ns\rfloor)}\O_{w_n-M,M}.
\end{align*}
If $b_{i}^{l}\leq \lfloor nt\rfloor< b_{i}^{r}$,
\begin{eqnarray*}
\mathcal{V}_{i,ts}&=&
\frac{1}{(w_n-M)}\cdot\bigg\{\frac{1}{(n-\lfloor nt\rfloor)(n-\lfloor ns\rfloor)}\O_{w_n-M,M}-\frac{1}{(n-\lfloor nt\rfloor)\lfloor ns\rfloor}\O_{w_n-M,M}\bigg\}.
\end{eqnarray*}
If $b_{i}^{l}>  \lfloor nt\rfloor$,
\begin{align*}
\mathcal{V}_{i,ts}=
\frac{1}{(w_n-M)}\cdot\frac{1}{(n-\lfloor nt\rfloor)(n-\lfloor ns\rfloor)}\O_{w_n-M,M}.
\end{align*}
We consider the case: there exists $\ell_n^{'} < \ell_n$, such that $b_{\ell_n^{'}}^{r}\leq \lfloor ns\rfloor< b_{\ell_n^{'}+1}^{l}$ and $b_{\ell_n}^{r}\leq \lfloor nt\rfloor< b_{\ell_n+1}^{l}$. It follows that, $\lim_{n\to \infty}\ell_n^{'}w_n/n= s$, $\lim_{n\to \infty}(k_n-\ell_n^{'})w_n/n=1-s$, $\lim_{n\to \infty}\ell_nw_n/n= t$,  and $\lim_{n\to \infty}(k_n-\ell_n)w_n/n=1-t$. In this case, we have
\begin{eqnarray*}
\sum\limits_{j=2}^{k_n}\E(V_{nj,t}V_{nj,s})
&=&\sum\limits_{j=2}^{\ell_n^{'}}\E(V_{nj,t}V_{nj,s})+\sum\limits_{j=\ell_n^{'}+1}^{\ell_n}\E(V_{nj,t}V_{nj,s})+\sum\limits_{j=\ell_n+1}^{k_n}\E(V_{nj,t}V_{nj,s})\\
&=&\sum\limits_{j=2}^{\ell_n^{'}}\sum\limits_{i=1}^{j-1}\frac{g(\lfloor n t\rfloor)g(\lfloor n s\rfloor)(w_n-M)^2}{\lfloor nt\rfloor^2\lfloor ns\rfloor^2}\tr(\O^2_{w_n-M,M})\\
&&-\sum\limits_{j=\ell_n^{'}+1}^{\ell_n}\ell_n^{'}\frac{g(\lfloor n t\rfloor)g(\lfloor n s\rfloor)(w_n-M)^2}{\lfloor nt\rfloor^2\lfloor ns\rfloor(n-\lfloor ns\rfloor)}\tr(\O^2_{w_n-M,M})\\
&&+\sum\limits_{j=\ell_n^{'}+1}^{\ell_n}(j-1-\ell_n^{'})\frac{g(\lfloor n t\rfloor)g(\lfloor n s\rfloor)(w_n-M)^2}{\lfloor nt\rfloor^2(n-\lfloor ns\rfloor)^2}\tr(\O^2_{w_n-M,M})\\
&&+\sum\limits_{j=\ell_n+1}^{k_n}\ell_n^{'}\frac{g(\lfloor n t\rfloor)g(\lfloor n s\rfloor)(w_n-M)^2}{\lfloor nt\rfloor\lfloor ns\rfloor(n-\lfloor nt\rfloor)(n-\lfloor ns\rfloor)}\tr(\O^2_{w_n-M,M})\\
&&-\sum\limits_{j=\ell_n+1}^{k_n}(\ell_n-\ell_n^{'})\frac{g(\lfloor n t\rfloor)g(\lfloor n s\rfloor)(w_n-M)^2}{\lfloor nt\rfloor(n-\lfloor nt\rfloor)(n-\lfloor ns\rfloor)^2}\tr(\O^2_{w_n-M,M})\\
&&+\sum\limits_{j=\ell_n+1}^{k_n}(j-1-\ell_n)\frac{g(\lfloor n t\rfloor)g(\lfloor n s\rfloor)(w_n-M)^2}{(n-\lfloor nt\rfloor)^2(n-\lfloor ns\rfloor)^2}\tr(\O^2_{w_n-M,M})\\
&=&\bigg\{\frac{\ell_n^{'}(\ell_n^{'}-1)}{2}\cdot\frac{(n-\lfloor nt\rfloor)^2(n-\lfloor ns\rfloor)^2(w_n-M)^2}{n^6}\\
&&-(\ell_n-\ell_n^{'})\ell_n^{'}\cdot\frac{(n-\lfloor nt\rfloor)^2\lfloor ns\rfloor(n-\lfloor ns\rfloor)(w_n-M)^2}{n^6}\\
&&+\frac{(\ell_n-\ell_n^{'})(\ell_n-\ell_n^{'}-1)}{2}\cdot\frac{(n-\lfloor nt\rfloor)^2\lfloor ns\rfloor^2(w_n-M)^2}{n^6}\\
&&+(k_n-\ell_n)\ell_n^{'}\cdot\frac{\lfloor nt\rfloor\lfloor ns\rfloor(n-\lfloor nt\rfloor)(n-\lfloor ns\rfloor)(w_n-M)^2}{n^6}\\
&&-(k_n-\ell_n)(\ell_n-\ell_n^{'})\cdot\frac{\lfloor nt\rfloor(n-\lfloor nt\rfloor)\lfloor ns\rfloor^2(w_n-M)^2}{n^6}\\
&&+\frac{(k_n-\ell_n)(k_n-\ell_n-1)}{2}\cdot\frac{\lfloor nt\rfloor^2\lfloor ns\rfloor^2(w_n-M)^2}{n^6}\bigg\}\frac{\tr(\O^2_{w_n-M,M})}{p}\\
&\rightarrow& \frac{(1-t)^2s^2}{2}\lim_{p\rightarrow \infty}\frac{\tr(\O^2_{w_n-M,M})}{p}.
\end{eqnarray*}

If at least one point of $\lfloor ns \rfloor$ and $\lfloor nt \rfloor$ falls within a large block, the discussion is similar to the above.
Summarizing the above results, for $s<t$, we have
\begin{align*}
\lim_{n\rightarrow \infty}\omega^{-2}\sum\limits_{j=2}^{k_n}\E(\tilde{V}_{nj,ts}^2)= a^2\frac{t^2(1-t)^2}{4}+b^2\frac{s^2(1-s)^2}{4}+2ab\frac{(1-t)^2s^2}{4}.
\end{align*}
Repeating the discussion in Proposition \ref{L6.1-P3}, the Lindeberg condition can be easily obtained. Now, we have
$$\left(W(\lfloor n t\rfloor)^{(G)}-\mu_{t}^{(G)},W(\lfloor n s\rfloor)^{(G)}-\mu_{s}^{(G)}\right)^\top\stackrel{d}{\rightarrow} N_2(\mathbf{0},\Xi),$$
where
\begin{align*}
\Xi= \omega^{2}
\begin{pmatrix}  (1-t)^2t^2& (1-t)^2s^2 \\(1-t)^2s^2 & (1-s)^2s^2 \end{pmatrix}.
\end{align*}
More than three points can be treated in the same way and therefore the finite-dimensional distributions of $W(\lfloor n t\rfloor)^{(G)}-\mu_{t}^{(G)}$ has been established.

\noindent{\textbf{Step 3.}}  Prove the tightness.
We also consider the following two cases, which has been discussed in Step 1.
\begin{itemize}
\item [(I)]
there exists $\ell_n$ such that $\ell_n$th block fall in the left side of the point $\lfloor n t\rfloor$ and $(\ell_n+1)$th block fall in the right side of the point $\lfloor n t\rfloor$, that is $b_{\ell_n}^{r}\leq \lfloor nt\rfloor< b_{\ell_n+1}^{l}$;
\item [(II)]
there exists $\ell_n$ such that the point $\lfloor n t\rfloor$ fall in the $\ell_n$th block, that is $b_{\ell_n}^{l}\leq \lfloor nt\rfloor< b_{\ell_n}^{r}$.
\end{itemize}
Here we discuss only case (I), the discussion of case (II) proceeds analogously.
In case (II), we define
\begin{align*}
\tilde{\bxi}_i=\frac{1}{w_n-M}\sum\limits_{\ell=(i-1)w_n+1}^{iw_n-M}\bxi_{\ell},
\end{align*}
and rewrite
\begin{eqnarray*}
\mathcal{H}_{1,t}&=&2g(\lfloor n t\rfloor)(w_n-M)^2\sum\limits_{j=2}^{k_n}\sum\limits_{i=1}^{j-1}\tilde{\eeta}_{i,t}^\top\tilde{\eeta}_{j,t},\\
&=&(1-t)^2A_{2,\ell_n}^{1,j-1}-t(1-t)A_{\ell_n+1,k_n}^{1,\ell_n}+t^2A_{\ell_n+1,k_n}^{\ell_n+1,j-1}\\
&=&(1-t)A_{2,\ell_n}^{1,j-1}-t(1-t)A_{2,k_n}^{1,j-1}+tA_{\ell_n+1,k_n}^{\ell_n+1,j-1},
\end{eqnarray*}
where
\begin{align*}
A_{2,\ell_n}^{1,j-1}:=&\frac{2(w_n-M)^2}{n\sqrt{p}}\sum\limits_{j=2}^{\ell_n}\sum\limits_{i=1}^{j-1}\tilde{\bxi}_{i}^\top\tilde{\bxi}_{j},\\
A_{\ell_n+1,k_n}^{1,\ell_n}:=&\frac{2(w_n-M)^2}{n\sqrt{p}}\sum\limits_{j=\ell_n+1}^{k_n}\sum\limits_{i=1}^{\ell_n}\tilde{\bxi}_{i}^\top\tilde{\bxi}_{j},\\
A_{\ell_n+1,k_n}^{\ell_n+1,j-1}:=&\frac{2(w_n-M)^2}{n\sqrt{p}}\sum\limits_{j=\ell_n+1}^{k_n}\sum\limits_{i=\ell_n+1}^{j-1}\tilde{\bxi}_{i}^\top\tilde{\bxi}_{j},
\end{align*}

We now prove the tightness of $(1-t)\omega^{-1}A_{2,\ell_n}^{1,j-1}, t(1-t)\omega^{-1}A_{2,k_n}^{1,j-1}, t\omega^{-1}A_{\ell_n+1,k_n}^{\ell_n+1,j-1}$.
In step 1, we have proved $\E\{(\omega^{-1}A_{2,k_n}^{1,j-1})^2\}\leq C$, where $C$ is a constant. Then, $t(1-t)A_{2,k_n}^{1,j-1}$ is tight.
Now, we resort to the tightness of $(1-t)A_{2,\ell_n}^{1,j-1}$. It suffices to prove that for any $\varsigma,\varphi>0$, there exists an integer $n_0\geq 1$ and $\iota\in(0,1)$, such that
\begin{align}
\label{Eq-Lem8.2-3}
\pr\Big(\mathop{\sup}\limits_{s\leq t\leq s+\iota}\Big|(1-t)A_{2,\ell_n}^{1,j-1}-(1-s)A_{2,\ell_n^{'}}^{1,j-1}\Big|\geq \omega\varsigma\Big)\leq \iota\varphi, \quad n\geq n_0,
\end{align}
for all $s\in[0,1]$, where $\ell_n^{'}$ satisfy $\ell_n^{'}w_n +1 \leq \lfloor ns\rfloor \leq \ell_n^{'}w_n +w_n - M$.
Since
\begin{eqnarray*}
&&\Big|(1-t)A_{2,\ell_n}^{1,j-1}-(1-s)A_{2,\ell_n^{'}}^{1,j-1}\Big|\\
&=&\Big|-(t-s)A_{2,\ell_n^{'}}^{1,j-1}+(1-t)(A_{2,\ell_n}^{1,j-1}-A_{2,\ell_n^{'}}^{1,j-1})\Big|\\
&\leq&(t-s)\Big|A_{2,\ell_n^{'}}^{1,j-1}\Big|+\Big|A_{2,\ell_n}^{1,j-1}-A_{2,\ell_n^{'}}^{1,j-1}\Big|,
\end{eqnarray*}
then inequality \eqref{Eq-Lem8.2-3} follows if the following inequalities hold for all $s \in [0,1]$:
\begin{align}
\pr\Big(\mathop{\sup}\limits_{s\leq t\leq s+\iota}(t-s)|A_{2,\ell_n^{'}}^{1,j-1}|\geq \omega\frac{\varsigma}{2}\Big)\leq \frac{\iota\varphi}{2}, \quad n\geq n_0,
\label{Eq6.2.1}
\end{align}
and
\begin{align}
\pr\Big(\mathop{\sup}\limits_{s\leq t\leq s+\iota}|A_{2,\ell_n}^{1,j-1}-A_{2,\ell_n^{'}}^{1,j-1}|\geq \omega\frac{\varsigma}{2}\Big)\leq \frac{\iota\varphi}{2}, \quad n\geq n_0.  \label{Eq6.2.2}
\end{align}
To be noticed that, inequality (\ref{Eq6.2.1}) can be obtained from (\ref{Eq6.2.2}), and by Theorem 8.4 in \cite{billingsley2013convergence}, inequality (\ref{Eq6.2.2}) reduces to the following statement: there exists a $\lambda_0>1$ and an integer $n_0$ such that for all $h_n$ satisfy
\begin{align*}
\pr\Big(\mathop{\max}\limits_{m_n\leq k_n}|A_{2,h_n+m_n}^{1,j-1}-A_{2,h_n}^{1,j-1}|\geq \omega\lambda_0\Big)\leq \frac{\varsigma}{\lambda_0^2},
\end{align*}
for all $k_n\geq n_0$.
Since $\{\tilde{\bxi}_i\}_{i=1}^{k_n}$ are i.i.d., the above inequality can reduced to
\begin{align*}
\pr\Big(\mathop{\max}\limits_{m_n\leq k_n-1}|A_{2,m_n}^{1,j-1}|\geq \omega\lambda_0\Big)\leq \frac{\varsigma}{\lambda_0^2}, \quad n\geq n_0.
\end{align*}
Since $\{A_{2,m_n}^{1,j-1}; 2 \leq m_n \leq k_n\}$ is a martingale with respect to $\sigma$-algebra $\{\mathcal{F}_{nm_n}, m=2,\dots,k_n\}$. According to the Doob's martingale inequality, we have
\begin{align*}
\pr\Big(\mathop{\max}\limits_{m_n\leq k_n}|A_{2,m_n}^{1,j-1}|\geq \omega\lambda_0\Big)\leq \frac{\E(|A_{1,m_n}^{1,j-1}|^4)}{\omega^4\lambda_0^4}\leq \frac{\varsigma}{\lambda_0^2}
\end{align*}
for sufficiently large $\lambda_0$ because $\omega^{-4}\E(|A_{2,m_n}^{1,j-1}|^4)\leq C$. Thus, $(1-t)\omega^{-1}A_{2,\ell_n}^{1,j-1}$ is tight. The prove technique of tightness of $t\omega^{-1}A_{\ell_n+1,k_n}^{\ell_n+1,j-1}$ is similar, the details are omitted here.
Therefore, the prove of Lemma \ref{Th1-L1} is complete by the above three main steps.
\end{proof}

\begin{lemma}
\label{Th1-L2}
Under Assumptions~\ref{ass:C1}--\ref{ass:C3} and $H_0$, for any $t \in[0,1]$, we have
$$
\omega^{-1}\{W(\lfloor n t\rfloor)^{(NG)}-W(\lfloor n t\rfloor)^{(G)}\}=o_p(1).
$$
\end{lemma}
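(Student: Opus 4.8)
The plan is to exploit that $\boldsymbol\varepsilon_i$ and $\bxi_i$ share the \emph{same} first two moments, so that $\E\{W(\lfloor nt\rfloor)^{(NG)}\}=\E\{W(\lfloor nt\rfloor)^{(G)}\}=\mu^{(G)}_{\lfloor nt\rfloor}$ and the centred difference is the degenerate bilinear form
\[
W(\lfloor nt\rfloor)^{(NG)}-W(\lfloor nt\rfloor)^{(G)}=g(\lfloor nt\rfloor)\sum_{i,j=1}^n a_{i,t}a_{j,t}\bigl(\boldsymbol\varepsilon_i^\top\boldsymbol\varepsilon_j-\bxi_i^\top\bxi_j\bigr).
\]
A pure moment/Lindeberg comparison between the two sequences only yields equality of the two limiting laws, which is \emph{not} enough: indeed, if $\{\bxi_i\}$ is drawn independently of $\{\boldsymbol\varepsilon_i\}$ the cross terms factorize and the centred difference has variance $\var(W^{(NG)})+\var(W^{(G)})\asymp\omega^2$, so it cannot be $o_p(1)$. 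To obtain the stated \emph{in-probability} conclusion I would realize the Gaussian sequence on the same probability space as $\{\boldsymbol\varepsilon_i\}$ through a coupling aligned with the block decomposition already used in Lemma~\ref{Th1-L1}.

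First I would reduce to the dominant cross-block piece. Writing $W(\lfloor nt\rfloor)^{(\cdot)}-\mu^{(\cdot)}_{\lfloor nt\rfloor}=\mathcal{H}_{1,t}^{(\cdot)}+\mathcal{H}_{2,t}^{(\cdot)}+\mathcal{H}_{3,t}^{(\cdot)}+\mathcal{H}_{4,t}^{(\cdot)}$ for each sequence, the variance bounds proved in Lemma~\ref{Th1-L1} for $\mathcal{H}_{2,t},\mathcal{H}_{3,t},\mathcal{H}_{4,t}$ depend only on traces of powers of $\O_{w_n-M,M}$ and $\O_{n,M}$, which are \emph{identical} for the non-Gaussian and Gaussian sequences. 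Hence $\omega^{-1}\mathcal{H}_{\ell,t}^{(NG)}=o_p(1)$ and $\omega^{-1}\mathcal{H}_{\ell,t}^{(G)}=o_p(1)$ for $\ell=2,3,4$, and these pieces cancel in the difference. It then suffices to show $\omega^{-1}\{\mathcal{H}_{1,t}^{(NG)}-\mathcal{H}_{1,t}^{(G)}\}=o_p(1)$, where $\mathcal{H}_{1,t}^{(\cdot)}=2g(\lfloor nt\rfloor)(w_n-M)^2\sum_{i<j}\tilde{\eeta}_{i,t}^{(\cdot)\top}\tilde{\eeta}_{j,t}^{(\cdot)}$ is built from the averaged core-block sums.

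For the coupling step, note that by $M$-dependence the core blocks $[b_i^l,b_i^r]$ are separated by gaps of width $M$, so the core block sums $\bm S_i:=\sum_{\ell=b_i^l}^{b_i^r}\boldsymbol\varepsilon_\ell$ are \emph{independent} across $i=1,\dots,k_n$, each a centred $p$-vector with covariance $(w_n-M)\O_{w_n-M,M}$ and finite fourth moments by Assumption~\ref{ass:C1}(i). Applying a multivariate Gaussian (Yurinskii/Zaitsev-type) coupling blockwise, I would construct independent Gaussians $\bm S_i^{(G)}=\sum_\ell\bxi_\ell$ with the same covariance and with $\sum_{i=1}^{k_n}\E\|\bm S_i-\bm S_i^{(G)}\|^2$ controlled; the induced discrepancy in the averaged, $a_{\cdot,t}$-weighted block variables is $\bm\Delta_{i,t}:=\tilde{\eeta}_{i,t}^{(NG)}-\tilde{\eeta}_{i,t}^{(G)}$. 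Expanding the bilinear form,
\[
\mathcal{H}_{1,t}^{(NG)}-\mathcal{H}_{1,t}^{(G)}
=2g(\lfloor nt\rfloor)(w_n-M)^2\sum_{i<j}\bigl(\bm\Delta_{i,t}^\top\tilde{\eeta}_{j,t}^{(G)}+\tilde{\eeta}_{i,t}^{(G)\top}\bm\Delta_{j,t}+\bm\Delta_{i,t}^\top\bm\Delta_{j,t}\bigr),
\]
and I would bound each piece in $L^2$: the two mixed terms by $\bigl(\sum_i\E\|\bm\Delta_{i,t}\|^2\bigr)\max_j\E\|\tilde{\eeta}_{j,t}^{(G)}\|^2$ times a combinatorial factor (using cross-block independence), and the quadratic term by $\sum_i\E\|\bm\Delta_{i,t}\|^2$. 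After multiplying by $g^2(\lfloor nt\rfloor)(w_n-M)^4$ and dividing by $\omega^2\asymp\tr(\O^2)/p\asymp1$, each contribution is shown to be $o(1)$ under $p=o(n^{7/4})$, giving the claim.

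The main obstacle is the high-dimensional Gaussian coupling with a \emph{usable} $L^2$ error bound under the paper's weak tail assumption: Assumption~\ref{ass:C1}(i) supplies only finite fourth moments, whereas sharp couplings (Zaitsev) require sub-exponential tails and Yurinskii's coupling produces an error that grows with $p$ and the third absolute moment. The delicate point is therefore to verify that $\sum_i\E\|\bm\Delta_{i,t}\|^2$, multiplied by the remaining deterministic factors, is $o(\omega^2)$ throughout the admissible range $p=o(n^{7/4})$, with the block width $w_n=Cn^{\alpha_w}$ and $M=\lceil(n\wedge p)^{1/8}\rceil$ chosen to balance the coupling error against the variance normalization. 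A secondary subtlety is that the coupling must be performed independently across the $k_n$ core blocks while the $M$-width gaps (feeding $\mathcal{H}_{3,t}$) are left uncoupled but are already negligible by the reduction above.
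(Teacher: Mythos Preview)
Your diagnosis is exactly right, and it exposes a genuine mismatch between the lemma's \emph{statement} and what the paper actually proves. With $\{\bxi_i\}$ declared independent of $\{\boldsymbol\varepsilon_i\}$ (as in the paper's setup), the centred difference has variance $\var(W^{(NG)})+\var(W^{(G)})\asymp\omega^2$ and the $o_p(1)$ conclusion is false as written. The paper's own Step~1 does \emph{not} establish $\omega^{-1}\Delta\mathcal{H}_{1,t}=o_p(1)$; it runs a Lindeberg telescoping argument to show
\[
\bigl|\E\{f(\mathcal{W}(\tilde\eeta_{1,t},\dots,\tilde\eeta_{k_n,t}))\}-\E\{f(\mathcal{W}(\tilde\bzeta_{1,t},\dots,\tilde\bzeta_{k_n,t}))\}\bigr|\to 0
\]
for all $f\in\mathcal{C}_b^3(\mathbb{R})$, i.e.\ that $\mathcal{H}_{1,t}^{(NG)}$ and $\mathcal{H}_{1,t}^{(G)}$ share the same weak limit. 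Steps~2--4 then show each $\mathcal{H}_{\ell,t}^{(\cdot)}$, $\ell=2,3,4$, is $o_p(\omega)$ separately (your reduction), so the net content of the lemma is ``$W(\lfloor nt\rfloor)^{(NG)}-\mu^{(G)}_{\lfloor nt\rfloor}$ and $W(\lfloor nt\rfloor)^{(G)}-\mu^{(G)}_{\lfloor nt\rfloor}$ have the same limiting distribution.'' That weaker conclusion is all Theorem~\ref{Th1} needs: combine it with Lemmas~\ref{Th1-L3}--\ref{Th1-L4} and Slutsky applied to $W-\mu_{M}=(W-W^{(NG)})+(W^{(NG)}-\mu^{(G)})+(\mu^{(G)}-\mu_M)$.

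Your coupling route targets the lemma as literally stated, but it is both unnecessary and likely infeasible here. Yurinskii-type couplings in dimension $p$ under only finite fourth moments (Assumption~\ref{ass:C1}(i)) carry error terms that grow polynomially in $p$ through the third absolute moment of the block sums, and there is no evident reason these vanish across the full range $p=o(n^{7/4})$ once multiplied by the $k_n^2$ cross-block pairs and divided by $\omega^2\asymp1$. The cleanest resolution is to downgrade the lemma to ``same limiting distribution'' and follow the paper's Lindeberg swap; if you insist on an $o_p$ statement, you must \emph{construct} $\{\bxi_i\}$ via a coupling (not take it independent), and then the burden is a quantitative high-dimensional strong approximation that the paper's assumptions do not obviously support.
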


\begin{proof}
For ease of presentation, we remark $B_{ij,t}, D_{ij,t}, F_{t}$ in the proof of Lemma \ref{Th1-L1} as $B^{(G)}_{ij,t}, D^{(G)}_{ij,t}, F_{t}^{(G)}$. For non-Gaussian sequence $\{\boldsymbol\varepsilon_i\}_{i=1}^{n}$, we can define $B^{(NG)}_{ij,t}, D^{(NG)}_{ij,t}, F_{t}^{(NG)}$, similarly.
Let $\Delta \mathcal{H}_{i,t}= \mathcal{H}_{i,t}^{(NG)}- \mathcal{H}_{i,t}^{(G)}$, then $W(\lfloor n t\rfloor)^{(NG)}-W(\lfloor n t\rfloor)^{(G)}= \sum\nolimits_{i=1}^4 \Delta \mathcal{H}_{i,t}$.

\noindent{\textbf{Step 1.}} In this step, we will show that $\omega^{-1}\Delta \mathcal{H}_{1,t}=o(1)$.
Define $\bzeta_{i,t}:=a_{i,\lfloor nt\rfloor}\boldsymbol\varepsilon_{i}$, and
\begin{align*}
\tilde{\bzeta}_{i,t}:=\frac{1}{w_n-M}\sum\limits_{\ell=(i-1)w_n+1}^{iw_n-M}\bzeta_{\ell,t},
\end{align*}
we have
\begin{align*}
B_{ij,t}^{(G)}&=g(\lfloor n t\rfloor)(w_n-M)^2\tilde{\eeta}_{i,t}^\top\tilde{\eeta}_{j,t},\\
B_{ij,t}^{(NG)}&=g(\lfloor n t\rfloor)(w_n-M)^2\tilde{\bzeta}_{i,t}^\top\tilde{\bzeta}_{j,t}.
\end{align*}
Both $\{\tilde{\eeta}_{i,t}\}_{i=1}^{k_n}$ and $\{\tilde{\bzeta}_{i,t}\}_{i=1}^{k_n}$ are independent sequence with zero mean and $\var(\tilde{\eeta}_{i,t})=\var(\tilde{\bzeta}_{i,t})=(w_n-M)^{-1}\O_{w_n-M,M}$, where $\O_{w_n-M,M}=\sum\limits_{h\in \mathcal{M}}\{1-(w_n-M)^{-1}|h|\}\Gamma_M(h)$.
Define
\begin{align*}
\mathcal{W}(\tilde{\eeta}_{1,t},\dots,\tilde{\eeta}_{k_n,t}):=\frac{\mathcal{H}_{1,t}^{(G)}}{\sqrt{2p^{-1}\tr(\O_{n}^2)}} = 2\sum\limits_{1\leq i < j \leq k_n}\frac{B_{ij,t}^{(G)}}{\sqrt{2p^{-1}\tr(\O_{n}^2)}}&=2\sum\limits_{1\leq i < j\leq k_n}u(\tilde{\eeta}_{i,t},\tilde{\eeta}_{j,t}),
\end{align*}
where $u(\tilde{\eeta}_{i,t},\tilde{\eeta}_{j,t}):=g(\lfloor n t\rfloor)(w_n-M)^2\tilde{\eeta}_{i,t}^\top\tilde{\eeta}_{j,t}/\sqrt{2p^{-1}\tr(\O_{n}^2)}$. Since $\tilde{\eeta}_{i,t}$ and $\tilde{\bzeta}_{i,t}$ are linear combinations of $\bZ_i=(Z_{i1},\dots,Z_{ip})^\top$ and $\E(Z_{ij}^4)<\infty$. We have the following properties:
\begin{itemize}
\item [(i)] $\forall \boldsymbol{a} \in \mathcal{R}^p$,  $\E\{u(\tilde{\eeta}_{i,t},\boldsymbol{a})\}=\E\{u(\boldsymbol{a},\tilde{\eeta}_{i,t})\}=0$, $\E\{u(\tilde{\bzeta}_{i,t},\boldsymbol{a})\}=\E\{u(\boldsymbol{a},\tilde{\bzeta}_{i,t})\}=0$.
\item [(ii)] $\forall \boldsymbol{a}, \boldsymbol{b} \in \mathcal{R}^p$,
\begin{align*}
&\E\{u(\boldsymbol{a},\tilde{\eeta}_{i,t})u(\boldsymbol{b},\tilde{\eeta}_{i,t})\}=\E\{u(\boldsymbol{a},\tilde{\bzeta}_{i,t})u(\boldsymbol{b},\tilde{\bzeta}_{i,t})\},\\
&\E\{u(\boldsymbol{a},\tilde{\eeta}_{i,t})u(\tilde{\eeta}_{i,t},\boldsymbol{b})\}=\E\{u(\boldsymbol{a},\tilde{\bzeta}_{i,t})u(\tilde{\bzeta}_{i,t},\boldsymbol{b})\},\\
&\E\{u(\tilde{\eeta}_{i,t},\boldsymbol{a})u(\boldsymbol{b},\tilde{\eeta}_{i,t})\}=\E\{u(\tilde{\bzeta}_{i,t},\boldsymbol{a})u(\boldsymbol{b},\tilde{\bzeta}_{i,t})\}.
\end{align*}
\item [(iii)] Let $g_{ij,t}^2:=\E\{u(\tilde{\bzeta}_{i,t},\tilde{\bzeta}_{j,t})^2\}$, then
\begin{align*}
\max\left[\E\{u(\tilde{\bzeta}_{i,t},\tilde{\bzeta}_{j,t})^4\}, \E\{u(\tilde{\bzeta}_{i,t},\tilde{\eeta}_{j,t})^4\},\E\{u(\tilde{\eeta}_{i,t},\tilde{\eeta}_{j,t})^4\}\right]\leq \rho_0 g_{ij,t}^4 < \infty,
\end{align*}
where $\rho_0=\max(\tau_1,3)$.
\end{itemize}
It is known that a sequence of random variables $\{Z_n; n=1,\dots.\infty\}$ converges weakly to a random variable $Z$ if and only if for every $f\in\mathcal{C}_b^3(\mathbb{R})$, $\E\{f(Z_n)\}\rightarrow \E\{f(Z)\}$, see, \cite{pollard1984convergence}. It reduces to prove
\begin{align*}
\left\vert\E\{ f(\mathcal{W}(\tilde{\eeta}_{1,t},\dots,\tilde{\eeta}_{k_n,t}))\}-\E \{f(\mathcal{W}(\tilde{\bzeta}_{1,t},\dots,\tilde{\bzeta}_{k_n,t}))\}\right\vert\rightarrow 0.
\end{align*}
for every $f\in\mathcal{C}_b^3(\mathbb{R})$ as $n,p \rightarrow \infty$. For $i=1,\dots, k_n$, we define
\begin{align*}
\mathcal{W}_i&:=\mathcal{W}(\tilde{\eeta}_{1,t},\dots,\tilde{\eeta}_{i-1,t},\tilde{\bzeta}_{i,t},\dots,\tilde{\bzeta}_{k_n,t}),\\
\mathcal{W}_{i,0}&:=\sum\limits_{1\leq \ell_1<\ell_2 \leq i-1}u(\tilde{\eeta}_{\ell_1,t},\tilde{\eeta}_{\ell_2,t})
+\sum\limits_{i+1\leq \ell_1<\ell_2 \leq k_n}u(\tilde{\bzeta}_{\ell_1,t},\tilde{\bzeta}_{\ell_2,t})
+\sum\limits_{1\leq \ell_1<\leq i-1,i+1\leq \ell_2 \leq k_n}u(\tilde{\eeta}_{\ell_1,t},\tilde{\eeta}_{\bzeta_2,t}).
\end{align*}
Thus,
\begin{align*}
\left|\E\{ f(\mathcal{W}(\tilde{\eeta}_{1,t},\dots,\tilde{\eeta}_{k_n,t}))\}-\E \{f(\mathcal{W}(\tilde{\bzeta}_{1,t},\dots,\tilde{\bzeta}_{k_n,t}))\}\right| \leq \sum\limits_{i=1}^{k_n}\left|\E \{f(\mathcal{W}_i)\}-\E \{f(\mathcal{W}_{i+1})\}\right|.
\end{align*}
According to Taylor's expansion, we have
\begin{align*}
f(\mathcal{W}_i)-f(\mathcal{W}_{i,0})&=\sum\limits_{m=1}^{2}\frac{1}{m!}f^{(m)}(\mathcal{W}_{i,0})(\mathcal{W}_{i}-\mathcal{W}_{i,0})^m+O(|\mathcal{W}_{i}-\mathcal{W}_{i,0}|^3),\\
f(\mathcal{W}_{i+1})-f(\mathcal{W}_{i,0})&=\sum\limits_{m=1}^{2}\frac{1}{m!}f^{(m)}(\mathcal{W}_{i,0})(\mathcal{W}_{i+1}-\mathcal{W}_{i,0})^m+O(|\mathcal{W}_{i+1}-\mathcal{W}_{i,0}|^3).
\end{align*}
Based on the properties of $u(\tilde{\bzeta}_{i,t},\tilde{\eeta}_{j,t})$, we have
\begin{align*}
&\E\left(\mathcal{W}_i-\mathcal{W}_{i,0}|\tilde{\eeta}_{1,t},\dots,\tilde{\eeta}_{i-1,t},\tilde{\bzeta}_{i+1,t},\dots,\tilde{\bzeta}_{k_n,t}\right)\\
=&\sum\limits_{m_1=1}^{i-1}\E\left\{u(\tilde{\eeta}_{m_1,t},\tilde{\bzeta}_{i,t})|\tilde{\eeta}_{m_1,t}\right\}+\sum\limits_{m_2=i+1}^{k_n}\E\left\{u(\tilde{\bzeta}_{i,t},\tilde{\bzeta}_{{m_2},t})|\tilde{\bzeta}_{m_2,t}\right\}=0,\\
&\E\left(\mathcal{W}_{i+1}-\mathcal{W}_{i,0}|\tilde{\eeta}_{1,t},\dots,\tilde{\eeta}_{i-1,t},\tilde{\bzeta}_{i+1,t},\dots,\tilde{\bzeta}_{k_n,t}\right)\\
=&\sum\limits_{m_1=1}^{i-1}\E\left\{u(\tilde{\eeta}_{m_1,t},\tilde{\eeta}_{i,t})|\tilde{\eeta}_{m_1,t}\right\}+\sum\limits_{m_2=i+1}^{k_n}\E\left\{u(\tilde{\eeta}_{i,t},\tilde{\bzeta}_{{m_2},t})|\tilde{\bzeta}_{m_2,t}\right\}=0,\\
&\E\left\{(\mathcal{W}_{i}-\mathcal{W}_{i,0})^2|\tilde{\eeta}_{1,t},\dots,\tilde{\eeta}_{i-1,t},\tilde{\bzeta}_{i+1,t},\dots,\tilde{\bzeta}_{k_n,t}\right\}\\
=&\E\left\{(\mathcal{W}_{i+1}-\mathcal{W}_{i,0})^2|\tilde{\eeta}_1,\dots,\tilde{\eeta}_{i-1,t},\tilde{\bzeta}_{i+1,t},\dots,\tilde{\bzeta}_{k_n,t}\right\}.
\end{align*}
Thus,
\begin{align*}
\E\bigg\{\sum\limits_{m=1}^{2}\frac{1}{m!}f^{(m)}(\mathcal{W}_{i,0})(\mathcal{W}_{i}-\mathcal{W}_{i,0})^m\bigg\}
=\E\bigg\{\sum\limits_{m=1}^{2}\frac{1}{m!}f^{(m)}(\mathcal{W}_{i,0})(\mathcal{W}_{i+1}-\mathcal{W}_{i,0})^m\bigg\}.
\end{align*}
It must exists a positive constant $C$ such that
\begin{align*}
&~~~~\left\vert\E\{ f(\mathcal{W}(\tilde{\eeta}_{1,t},\dots,\tilde{\eeta}_{k_n,t}))\}-\E \{f(\mathcal{W}(\tilde{\bzeta}_{1,t},\dots,\tilde{\bzeta}_{k_n,t}))\}\right\vert\\
&\leq C\sum\limits_{i=1}^{k_n}\left(\E|\mathcal{W}_{i}-\mathcal{W}_{i,0}|^3+\E|\mathcal{W}_{i+1}-\mathcal{W}_{i,0}|^3\right)\\
&\leq C\sum\limits_{i=1}^{k_n}\left\{(\E|\mathcal{W}_{i}-\mathcal{W}_{i,0}|^4)^{3/4}+(\E|\mathcal{W}_{i+1}-\mathcal{W}_{i,0}|^4)^{3/4}\right\}.
\end{align*}
Now, we consider $\E|\mathcal{W}_{i}-\mathcal{W}_{i,0}|^4$ and $\E|\mathcal{W}_{i+1}-\mathcal{W}_{i,0}|^4$. The discussions of these two terms are similar.
We have
\begin{eqnarray*}
&&\E|\mathcal{W}_{i}-\mathcal{W}_{i,0}|^4\\
&=& \sum\limits_{\ell=1}^{i-1}\E\left\{u(\tilde{\eeta}_{\ell,t},\tilde{\bzeta}_{i,t})^4\right\}+\sum\limits_{\ell=i+1}^{k_n}\E\left\{u(\tilde{\bzeta}_{i,t},\tilde{\bzeta}_{\ell,t})^4\right\}+
6\sum\limits_{\ell_1=1}^{i-1}\sum\limits_{\ell_1=2}^{i-1}\E\left\{u(\tilde{\eeta}_{\ell_1,t},\tilde{\bzeta}_{i,t})^2u(\tilde{\eeta}_{\ell_2,t},\tilde{\bzeta}_{i,t})^2\right\}\\
&&+6\sum\limits_{\ell_1=i+1}^{k_n}\sum\limits_{\ell_2=i+1}^{k_n}\E\left\{u(\tilde{\bzeta}_{i,t},\tilde{\bzeta}_{\ell_1,t})^2u(\tilde{\bzeta}_{i,t},\tilde{\bzeta}_{\ell_2,t})^2\right\}+
6\sum\limits_{\ell_1=1}^{i-1}\sum\limits_{\ell_2=i+1}^{k_n}\E\left\{u(\tilde{\eeta}_{\ell_1,t},\tilde{\bzeta}_{i,t})^2u(\tilde{\bzeta}_{i,t},\tilde{\bzeta}_{\ell_2,t})^2\right\}\\
&\leq& \rho_0\bigg(\sum\limits_{\ell=1}^{i-1}g_{\ell i,t}^4+\sum\limits_{\ell=i+1}^{k_n}g_{i\ell,t}^4+6\sum\limits_{\ell_1=1}^{i-1}\sum\limits_{\ell_1=2}^{i-1}g_{\ell_1 i,t}^2g_{\ell_2 i,t}^2+6\sum\limits_{\ell_1=i+1}^{k_n}\sum\limits_{\ell_2=i+1}^{k_n}g_{i\ell_1,t}^2g_{i\ell_2,t}^2\\
&&+6\sum\limits_{\ell_1=1}^{i-1}\sum\limits_{\ell_2=i+1}^{k_n}g_{\ell_1i,t}^2g_{i\ell_2,t}^2\bigg)\\
&\leq& 3\rho_0\bigg(\sum\limits_{\ell=1}^{i-1}g_{\ell i,t}^2+\sum\limits_{\ell=i+1}^{k_n}g_{i\ell,t}^2\bigg)^2,
\end{eqnarray*}
and similarly,
$$\E|\mathcal{W}_{i+1}-\mathcal{W}_{i,0}|^4\leq 3\rho_0\bigg(\sum\limits_{\ell=1}^{i-1}g_{\ell i,t}^2+\sum\limits_{\ell=i+1}^{k_n}g_{i\ell,t}^2\bigg)^2.$$
Thus,
\begin{align*}
\left\vert\E \left\{f(\mathcal{W}(\tilde{\eeta}_{1,t},\dots,\tilde{\eeta}_{k_n,t}))\right\}-\E \left\{f(\mathcal{W}(\tilde{\bzeta}_{1,t},\dots,\tilde{\bzeta}_{k_n,t}))\right\}\right\vert
&= O\bigg\{\sum\limits_{i=1}^{k_n}\bigg(\sum\limits_{\ell=1}^{i-1}g_{\ell i,t}^2+\sum\limits_{\ell=i+1}^{k_n}g_{i\ell,t}^2\bigg)^{3/2}\bigg\}.
\end{align*}
For any $i\neq j$, we have
\begin{align*}
g_{ij,t}^2&=\frac{\{g(\lfloor n t\rfloor)\}^2(w_n-M)^4\E\left\{(\tilde{\eeta}_{i,t}^\top\tilde{\eeta}_{j,t})^2\right\}}{2p^{-1}\tr(\O_{n}^2)}\\
&=\frac{\{g(\lfloor n t\rfloor)\}^2}{2p^{-1}\tr(\O_{n}^2)}\sum\limits_{\ell_1=(i-1)w_n+1}^{iw_n-M}\sum\limits_{\ell_2=(j-1)w_n+1}^{jw_n-M}\sum\limits_{\ell_3=(i-1)w_n+1}^{iw_n-M}\sum\limits_{\ell_4=(j-1)w_n+1}^{jw_n-M}\E(\eeta_{\ell_1,t}^\top\eeta_{\ell_2,t}\eeta_{\ell_3,t}^\top\eeta_{\ell_4,t})\\
&=\frac{\{g(\lfloor n t\rfloor)\}^2}{2p^{-1}\tr(\O_{n}^2)}\sum\limits_{\ell_1=(i-1)w_n+1}^{iw_n-M}\sum\limits_{\ell_2=(j-1)w_n+1}^{jw_n-M}\sum\limits_{\ell_3=(i-1)w_n+1}^{iw_n-M}\sum\limits_{\ell_4=(j-1)w_n+1}^{jw_n-M}a_{\ell_1,\lfloor nt \rfloor}a_{\ell_2,\lfloor nt \rfloor}a_{\ell_3,\lfloor nt \rfloor}a_{\ell_4,\lfloor nt \rfloor}\E(\bxi_{\ell_1}^\top\bxi_{\ell_2}\bxi_{\ell_3}^\top\bxi_{\ell_4})\\
&\leq \frac{Cn^2}{n^4\tr(\O_{n}^2)}\tr\bigg\{\bigg(\sum\limits_{\ell_1=(i-1)w_n+1}^{iw_n-M}\sum\limits_{\ell_2=(i-1)w_n+1}^{iw_n-M}\bGam_{M}(|\ell_1-\ell_2|)\bigg)^2\bigg\}\\
&= O\bigg\{\frac{(w_n-M)^2\tr(\O^2_{w_n-M,M})}{n^2\tr(\O_{n}^2)}\{1+o(1)\}\bigg\}.
\end{align*}
Thus,
\begin{align*}
&\left\vert\E \big\{f(\mathcal{W}(\tilde{\eeta}_{1,t},\dots,\tilde{\eeta}_{k_n,t}))\big\}-\E\big\{ f(\mathcal{W}(\tilde{\bzeta}_{1,t},\dots,\tilde{\bzeta}_{k_n,t}))\big\}\right\vert \\
=& O\bigg\{k_n\bigg(k_n\frac{(w_n-M)^2\tr(\O^2_{w_n-M,M})}{n^2\tr(\O_{n}^2)}\bigg)^{3/2}\bigg\}=o(1).
\end{align*}

\noindent{\textbf{Step 2.}} In this step, we will show that $\omega^{-1}\Delta \mathcal{H}_{2,t}=o_p(1)$, where
\begin{align*}
\Delta  \mathcal{H}_{2,t}=\sum\limits_{i=1}^{k_n}B^{(NG)}_{ii,t}-\sum\limits_{i=1}^{k_n}B^{(G)}_{ii,t}.
\end{align*}
In the step 1 of Lemma \ref{Th1-L1}, we have prove that $\omega^{-1}\sum\nolimits_{i=1}^{k_n}B_{ii,t}=o_p(1)$, thus, this statement is proved.

\noindent{\textbf{Step 3.}} In this step, we will show that $\omega^{-1} \Delta \mathcal{H}_{3,t}=o_p(1)$.
\begin{align*}
\Delta \mathcal{H}_{3,t}&=\sum\limits_{1\leq i, j \leq k_n}D^{(NG)}_{ij,t}-\sum\limits_{1\leq i, j \leq k_n}D^{(G)}_{ij,t}.
\end{align*}
Let
\begin{align*}
L_{ij,t}&:=\sum\limits_{\ell_1=iw_n-M+1}^{iw_n}\sum\limits_{\ell_2=(j-1)w_n+1}^{jw_n-M} F_{ij,t},\\
C_{ij,t}&:=\sum\limits_{\ell_1=iw_n-M+1}^{iw_n}\sum\limits_{\ell_2=jw_n-M+1}^{jw_n} F_{ij,t},
\end{align*}
Then, $D_{ij,t}=2L_{ij,t}+C_{ij,t}$. Furthermore, define
\begin{gather*}
\mathcal{L}_{1,t}:=\sum\limits_{i=1}^{k_n-2}\sum\limits_{j=i+2}^{k_n}L_{ij,t}, \quad
\mathcal{L}_{2,t}:=\sum\limits_{i=1}^{k_n}L_{ii,t}, \quad  \mathcal{L}_{3,t}=\sum\limits_{i=1}^{k_n}L_{i i+1,t}, \\
\mathcal{C}_{1,t}:=\sum\limits_{i=1}^{k_n-1}\sum\limits_{j=i+1}^{k_n}C_{ij,t}, \quad \mathcal{C}_{2,t}:=\sum\limits_{i=1}^{k_n}C_{ii,t}.
\end{gather*}
Then, $\mathcal{H}_{3,t}=\sum\nolimits_{1\leq i, j \leq k_n}D_{ij,t}=4\mathcal{L}_{1,t}+2\mathcal{L}_{2,t}+4\mathcal{L}_{3,t}+2\mathcal{C}_{1,t}+\mathcal{C}_{2,t}$. We will prove that $\var(\omega^{-1}\mathcal{L}_{i,t})=o(1)$ for $i=1,2,3$, and $\var(\omega^{-1}\mathcal{C}_{i,t})=o(1)$ for $i=1,2$. Recall
\begin{eqnarray*}
F_{ij,t}^{(NG)}:=g(\lfloor n t\rfloor) a_{i,\lfloor nt \rfloor}a_{j,\lfloor nt \rfloor}  \left[\boldsymbol\varepsilon_{i}^\top \boldsymbol\varepsilon_{j}-\tr\{\bGam_{M}(|j-i|)\}\right].
\end{eqnarray*}
We first consider $\var(\omega^{-1}\mathcal{L}^{(NG)}_{1,t})$.
Since $j_1\geq i_1+2$, which implies $\ell_2-\ell_1\geq M$,
$$\E\left\{F^{(NG)}_{\ell_1\ell_2,t}F^{(NG)}_{\ell_3\ell_4,t}\right\}=\{g(\lfloor n t\rfloor)\}^2 \prod\limits_{\nu=1}^{4}a_{\ell_\nu,\lfloor nt \rfloor}  \E(\boldsymbol\varepsilon_{\ell_1}^\top \boldsymbol\varepsilon_{\ell_2}\boldsymbol\varepsilon_{\ell_3}^\top \boldsymbol\varepsilon_{\ell_4})$$
is nonzero only if $i_2=i_1$ or $i_1=j_2$ or $i_1=j_2-1$.
If $i_2=i_1$, $\E(\boldsymbol\varepsilon_{\ell_1}^\top \boldsymbol\varepsilon_{\ell_2}\boldsymbol\varepsilon_{\ell_3}^\top \boldsymbol\varepsilon_{\ell_4})\neq 0$ only if $j_2=j_1$. If $i_1=j_2$ or $i_1=j_2-1$, then $i_2\leq j_2-2\leq i_1-1$, which implies $\ell_{(1)}-\ell_{(2)}=\min\{\ell_1-\ell_3,\ell_4-\ell_3\}>M$. Therefore, $\E(\boldsymbol\varepsilon_{\ell_1}^\top \boldsymbol\varepsilon_{\ell_2}\boldsymbol\varepsilon_{\ell_3}^\top \boldsymbol\varepsilon_{\ell_4})\neq 0$ only if $i_2=i_1$ and $j_2=j_1$. Then
\begin{align*}
\E\left\{\left(\mathcal{L}_{1,t}^{(NG)}\right)^2\right\}
&=\sum\limits_{i_1=1}^{k_n-2}\sum\limits_{j_1=i_1+2}^{k_n}\sum\limits_{i_2=1}^{k_n-2}\sum\limits_{j_2=i_2+2}^{k_n}\E\left(L^{NG}_{i_1j_1,t}L^{NG}_{i_2j_2,t}\right)\\
&=\{g(\lfloor n t\rfloor)\}^2 \sum\limits_{i=1}^{k_n-2}\sum\limits_{j=i+2}^{k_n}\E\bigg\{\bigg(\sum\limits_{\ell_1=iw_n-M+1}^{iw_n}\sum\limits_{\ell_2=(j-1)w_n+1}^{jw_n-M} a_{\ell_1,\lfloor nt \rfloor}a_{\ell_2,\lfloor nt \rfloor}\epsilon_{\ell_1}^\top \epsilon_{\ell_2}\bigg)^2\bigg\}\\
&=O\bigg\{\frac{(k_n-1)(k_n-2)}{n^2p}(w_n-M)M \tr(\O_{n,M}\O_{n,w_n-M})\bigg\}\\
&=O\bigg\{\frac{(k_n-1)(k_n-2)(w_n-M)M}{n^2p}\cdot\{\tr(\O^2_{n,M})\tr(\O^2_{n,w_n-M})\}^{1/2}\bigg\}\\
&=o\left\{p^{-1}\tr(\O^2_{n})\right\}.
\end{align*}
Then $\omega^{-1}\mathcal{L}_{1,t}^{(NG)}=o_p(1)$. Similarly, we have $\omega^{-1}\mathcal{C}_{1,1t}^{(NG)}=o_p(1)$.
Next, we consider
\begin{align*}
\E\{(\mathcal{L}_{2,t}^{(NG)})^2\}
&=\sum\limits_{i_1=1}^{k_n}\sum\limits_{i_2=1}^{k_n}\E\left(L^{(NG)}_{i_1i_1,t}L^{(NG)}_{i_2i_2,t}\right)\\
&=\sum\limits_{i_1=1}^{k_n}\sum\limits_{i_2=1}^{k_n}\E\bigg(\sum\limits_{\ell_1=i_1w_n-M+1}^{i_1w_n}\sum\limits_{\ell_2=(i_1-1)w_n+1}^{i_1w_n-M}\sum\limits_{\ell_3=i_2w_n-M+1}^{i_2w_n}\sum\limits_{\ell_4=(i_2-1)w_n+1}^{i_2w_n-M}F_{\ell_1\ell_2,t}F_{\ell_3\ell_4,t}\bigg)\\
&=\sum\limits_{i=1}^{k_n}\E\bigg\{\bigg(\sum\limits_{\ell_1=iw_n-M+1}^{iw_n}\sum\limits_{\ell_2=(i-1)w_n+1}^{iw_n-M} F_{\ell_1\ell_2,t}\bigg)^2\bigg\}\\
&=O\bigg\{\frac{k_nM^2(w_n-M)^2}{n^2p}\tau_2\tr(\O_{n,M}^2)\bigg\}=o\left\{p^{-1}\tr(\O^2_{n})\right\},
\end{align*}
where the forth equation holds based on Lemma \ref{Appendix-L2}. Then $\omega^{-1}\mathcal{L}_{2,t}^{(NG)}=o_p(1)$. Similarly, we have $\omega^{-1}\mathcal{L}_{3,t}^{(NG)}=o_p(1)$, $\omega^{-1}\mathcal{C}_{2,t}^{(NG)}=o_p(1)$.

\noindent{\textbf{Step 4.}} In this step, we will show that $\omega^{-1}\Delta \mathcal{H}_{4,t}=o_p(1)$.
Define
\begin{align*}
\mathcal{L}&:=\{(\ell_1,\ell_2,\ell_3,\ell_4)\in \{k_nw_n+1,\dots,n\}\times \{1,\dots,n\}\times \{k_nw_n+1,\dots,n\}\times \{1,\dots,n\}\},\\
\mathcal{L}_1&:=\{(\ell_1,\ell_2,\ell_3,\ell_4)\in\{1,\dots,n\}^4:|\ell_1-\ell_3|\geq 3M, \min(\ell_1,\ell_3)>k_nw_n\},\\
\mathcal{L}_2&:=\{(\ell_1,\ell_2,\ell_3,\ell_4)\in\{1,\dots,n\}^4:|\ell_1-\ell_3|\geq 3M, \min(\ell_1,\ell_3)>k_nw_n, \\
&\quad\quad\quad\quad\quad\quad\quad\quad\quad\quad\quad\quad\quad\quad \min(|\ell_1-\ell_2|,|\ell_1-\ell_4|,|\ell_3-\ell_2|,|\ell_3-\ell_4|)\leq M\}.
\end{align*}
For all $(\ell_1,\ell_2,\ell_3,\ell_4)\in \mathcal{L}_1\backslash \mathcal{L}_2$, we have $\E(F_{\ell_1\ell_2,t}F_{\ell_3\ell_4,t})=0$. Since $\mathcal{L}_2\subset \{k_nw_n+1,\dots, n\}^4$, we have $|\mathcal{L}_2|\leq (w_n+M)^4$ and $|\mathcal{L}\backslash\mathcal{L}_1|\leq 6Mnw_n^2$. Then
\begin{align*}
\E\left\{\left(F_t^{(NG)}\right)^2\right\}
&=\E\bigg\{\bigg(2\sum\limits_{\ell_1=1}^{k_nw_n}\sum\limits_{\ell_2=k_nw_n+1}^{n}F^{(NG)}_{\ell_1\ell_2,t}+\sum\limits_{\ell_1=k_nw_n+1}^{n}\sum\limits_{\ell_2=k_nw_n+1}^{n}F^{(NG)}_{\ell_1\ell_2,t}\bigg)^2\bigg\}\\
&\leq 4\sum\limits_{\ell_1=1}^{n}\sum\limits_{\ell_2=k_nw_n+1}^{n}\sum\limits_{\ell_3=1}^{n}\sum\limits_{\ell_4=k_nw_n+1}^{n}\left\vert\E\left(F^{(NG)}_{\ell_1\ell_2,t}F^{(NG)}_{\ell_3\ell_4,t}\right)\right\vert\\
&=4\bigg(\sum\limits_{(\ell_1,\ell_2,\ell_3,\ell_4)\in \mathcal{L}\backslash\mathcal{L}_1}+\sum\limits_{(\ell_1,\ell_2,\ell_3,\ell_4)\in \mathcal{L}_1\backslash\mathcal{L}_2}+\sum\limits_{(\ell_1,\ell_2,\ell_3,\ell_4)\in\mathcal{L}_2}\bigg)\left\vert\E\left(F^{(NG)}_{\ell_1\ell_2,t}F^{(NG)}_{\ell_3\ell_4,t}\right)\right\vert\\
&=O\bigg\{\frac{6M n w_n^2+(w_n+M)^4}{n^2p}\tau_2\tr(\O_{n,M}^2)\bigg\}=o\left\{p^{-1}\tr(\O^2_{n})\right\},
\end{align*}
with Lemma \ref{Appendix-L2}.
\end{proof}

\begin{lemma}
\label{Th1-L3}
Under Assumptions~\ref{ass:C1}--\ref{ass:C3} and $H_0$, for any $t \in[0,1]$, we have
$$
\omega^{-1}\{W(\lfloor n t\rfloor)-W(\lfloor n t\rfloor)^{(NG)}\}=o_p(1).
$$
\end{lemma}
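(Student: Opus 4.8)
The plan is to expand $W(\lfloor nt\rfloor)-W(\lfloor nt\rfloor)^{(NG)}$ into a cross term plus a small quadratic remainder, and then to control each piece by a moment computation that exploits the fast decay of the MA coefficients in Assumption~\ref{ass:C1}(ii). Fix $t\in(0,1)$; the cases $t\in\{0,1\}$ are trivial since $g(\lfloor nt\rfloor)=0$. Under $H_0$, write $Y_t:=\sum_{i=1}^n a_{i,t}\bX_i=\sum_{i=1}^n a_{i,t}\bepsilon_i$, $Y_t^{(M)}:=\sum_{i=1}^n a_{i,t}\boldsymbol\varepsilon_i^{(M)}$, and the tail part $R_t:=Y_t-Y_t^{(M)}=\sum_{i=1}^n a_{i,t}\bms^{1/2}\sum_{\ell>M}b_\ell\bZ_{i-\ell}$. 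Since $W(\lfloor nt\rfloor)=g(\lfloor nt\rfloor)\|Y_t\|^2$ and $W(\lfloor nt\rfloor)^{(NG)}=g(\lfloor nt\rfloor)\|Y_t^{(M)}\|^2$,
\begin{align*}
W(\lfloor nt\rfloor)-W(\lfloor nt\rfloor)^{(NG)}=g(\lfloor nt\rfloor)\Big(2\langle Y_t^{(M)},R_t\rangle+\|R_t\|^2\Big),
\end{align*}
so, as $\omega$ is bounded away from $0$ and $\infty$ by Assumption~\ref{ass:C3}, it suffices to show $g(\lfloor nt\rfloor)\|R_t\|^2=o_p(1)$ and $g(\lfloor nt\rfloor)\langle Y_t^{(M)},R_t\rangle=o_p(1)$.

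For the quadratic remainder I would compute $\E\|R_t\|^2=\big(\sum_{i,j}a_{i,t}a_{j,t}\tilde c_{|i-j|}\big)\tr(\bms)$, where the ``tail autocovariances'' are $\tilde c_h:=\sum_{\ell>M}b_\ell b_{\ell+h}$. The key estimate is $\sum_{h}|\tilde c_h|\le\big(\sum_{\ell>M}|b_\ell|\big)^2=o(M^{-8})$, which follows because $\lim_\ell\ell^5 b_\ell=0$ forces $\sum_{\ell>M}|b_\ell|=o(M^{-4})$. Since the Toeplitz form is bounded by $\big(\sum_h|\tilde c_h|\big)\|\mathbf{a}_t\|^2$ with $\|\mathbf{a}_t\|^2=\sum_i a_{i,t}^2=n/\{\lfloor nt\rfloor(n-\lfloor nt\rfloor)\}$, and $\tr(\bms)=O(p)$ with $g(\lfloor nt\rfloor)\|\mathbf{a}_t\|^2=\lfloor nt\rfloor(n-\lfloor nt\rfloor)/(n^2\sqrt p)$, I obtain $\E[g(\lfloor nt\rfloor)\|R_t\|^2]=o(M^{-8}\sqrt p)$, which vanishes under $M=\lceil(n\wedge p)^{1/8}\rceil$. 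Because $\|R_t\|^2\ge0$, Markov's inequality gives $g(\lfloor nt\rfloor)\|R_t\|^2=o_p(1)$.

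For the cross term I would write $g(\lfloor nt\rfloor)\langle Y_t^{(M)},R_t\rangle=g(\lfloor nt\rfloor)\sum_{u,v}\alpha_u\beta_v\,\bZ_u^\top\bms\bZ_v$ with $\alpha_u=\sum_{0\le\ell\le M}a_{u+\ell,t}b_\ell$ and $\beta_v=\sum_{\ell>M}a_{v+\ell,t}b_\ell$, a bilinear form in the i.i.d.\ innovations, and estimate its mean and variance. Here $\|\alpha\|^2\lesssim\|\mathbf{a}_t\|^2$ while $\|\beta\|^2\le\big(\sum_{\ell>M}|b_\ell|\big)^2\|\mathbf{a}_t\|^2=o(M^{-8})\|\mathbf{a}_t\|^2$, so the mean $g(\lfloor nt\rfloor)\tr(\bms)\langle\alpha,\beta\rangle$ is $o(M^{-4}\sqrt p)$, and, using $\E(Z_{ij}^4)<\infty$, the variance is of the order $g(\lfloor nt\rfloor)^2\tr(\bms^2)\|\alpha\|^2\|\beta\|^2$ up to lower-order fourth-moment terms, i.e.\ $o(M^{-8})$; both tend to $0$ under the stated scaling, so Chebyshev's inequality gives $g(\lfloor nt\rfloor)\langle Y_t^{(M)},R_t\rangle=o_p(1)$. (The mean of this term together with $\E[g(\lfloor nt\rfloor)\|R_t\|^2]$ equals $\mu_{\lfloor nt\rfloor}-\mu^{(G)}_{\lfloor nt\rfloor}$, consistent with the negligibility of $\mu_k-\mu_{M,k}$ established in the Appendix; alternatively, one may bound the cross term by Cauchy--Schwarz via $|g(\lfloor nt\rfloor)\langle Y_t^{(M)},R_t\rangle|\le\sqrt{W(\lfloor nt\rfloor)^{(NG)}}\sqrt{g(\lfloor nt\rfloor)\|R_t\|^2}$, using Lemmas~\ref{Th1-L1}--\ref{Th1-L2} and the elementary estimate $\mu^{(G)}_{\lfloor nt\rfloor}=O(\sqrt p)$ to get $W(\lfloor nt\rfloor)^{(NG)}=O_p(\sqrt p)$.) Combining the two parts via Slutsky's theorem completes the proof.

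The step I expect to be the main obstacle is the quantitative bookkeeping that links the MA decay rate ($\ell^5 b_\ell\to0$), the truncation level $M=\lceil(n\wedge p)^{1/8}\rceil$, and the normalization $g(\lfloor nt\rfloor)\asymp\sqrt p\cdot\lfloor nt\rfloor(n-\lfloor nt\rfloor)/n^2$, so that the products $M^{-8}\sqrt p$ and $M^{-4}\sqrt p\cdot o(1)$ genuinely vanish; in addition, one must track carefully the boundary correction in $\sum_i a_{i,t}a_{i+h,t}$ (the piece proportional to $h$) and the fourth-moment contributions in the variance of the cross term, both of which are routine but require care.
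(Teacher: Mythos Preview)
Your proposal takes essentially the same route as the paper: decompose $W-W^{(NG)}$ into the quadratic tail $\|R_t\|^2$ and the cross term $2\langle Y_t^{(M)},R_t\rangle$, then control each by moment bounds driven by the tail estimate $\sum_{\ell>M}|b_\ell|=o(M^{-4})$ from Assumption~\ref{ass:C1}(ii). The paper's version is more explicit---it writes out the innovation representation and bounds the variance via a case analysis on index coincidences (the sets $\mathcal{B}_1,\ldots,\mathcal{B}_4$)---whereas yours packages the same computation through the Toeplitz operator norm and the convolution bound $\|\alpha\|_2\le\|b\|_1\|\mathbf a_t\|_2$; but the underlying argument and the key rates are identical.
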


\begin{proof}
According to the definitions, we have
\begin{align*}
&W(\lfloor n t\rfloor)-W(\lfloor n t\rfloor)^{(NG)}\\
=&g(\lfloor n t\rfloor)\sum\limits_{i=1}^n\sum\limits_{j=1}^n a_{i,\lfloor nt \rfloor}a_{j,\lfloor nt \rfloor}(\bX_i^{\top}\bX_j-\boldsymbol\varepsilon_i^{\top}\boldsymbol\varepsilon_j)\\
=&g(\lfloor n t\rfloor)\sum\limits_{i=1}^n\sum\limits_{j=1}^n a_{i,\lfloor nt \rfloor}a_{j,\lfloor nt \rfloor} (\bX_i-\boldsymbol\varepsilon_i)^{\top}(\bX_j-\boldsymbol\varepsilon_j)\\
&+ 2g(\lfloor n t\rfloor)\sum\limits_{i=1}^n\sum\limits_{j=1}^n a_{i,\lfloor nt \rfloor}a_{j,\lfloor nt \rfloor} \boldsymbol\varepsilon_i^{\top}(\bX_j-\boldsymbol\varepsilon_j).
\end{align*}
Define
\begin{gather*}
\mathcal{B}_{i_1j_1i_2j_2}:=\{(\ell_1,\ell_2,\ell_3,\ell_4); \,  i_1-M\leq \ell_1 \leq i_1, \ell_2\leq j_1-M-1, \\
 i_2-M\leq \ell_3 \leq i_2, \ell_4\leq j_2-M-1\},\\
\mathcal{B}_{1}:=\{(\ell_1,\ell_2,\ell_3,\ell_4); \, \ell_1 = \ell_2 \neq \ell_3 = \ell_4\}, \quad
\mathcal{B}_{2}:=\{(\ell_1,\ell_2,\ell_3,\ell_4); \, \ell_1 = \ell_3 \neq \ell_2 = \ell_4\},\\
\mathcal{B}_{3}:=\{(\ell_1,\ell_2,\ell_3,\ell_4); \, \ell_1 = \ell_4 \neq \ell_2 = \ell_3\}, \quad
\mathcal{B}_{4}:=\{(\ell_1,\ell_2,\ell_3,\ell_4); \, \ell_1 = \ell_4 = \ell_2 = \ell_3\}.
\end{gather*}
Considering
\begin{align*}
&\var\bigg\{\sum\limits_{i=1}^n\sum\limits_{j=1}^n \boldsymbol\varepsilon_i^{\top}(\bX_j-\boldsymbol\varepsilon_j)\bigg\}\\
=& \sum\limits_{\substack{1\leq i_1, j_1, i_2, j_2\leq n \\ (\ell_1,\ell_2,\ell_3,\ell_4)\in \mathcal{B}_{i_1j_1i_2j_2}}}
b_{i_1-\ell_1}b_{j_1-\ell_2}b_{i_2-\ell_3}b_{j_2-\ell_4}\E\big[\{\bZ_{\ell_1}^\top\bms \bZ_{\ell_2}-\mathbb{I}(\ell_1=\ell_2)\tr(\bms)\}\{\bZ_{\ell_3}^\top\bms \bZ_{\ell_4}-\mathbb{I}(\ell_3=\ell_4)\tr(\bms)\}\big]\\
=& \sum\limits_{\substack{1\leq i_1, j_1, i_2, j_2\leq n \\ (\ell_1,\ell_2,\ell_3,\ell_4)\in \mathcal{B}_{i_1j_1i_2j_2}\cap\mathcal{B}_1}}
b_{i_1-\ell_1}b_{j_1-\ell_1}b_{i_2-\ell_3}b_{j_2-\ell_3}\E\big[\{\bZ_{\ell_1}^\top\bms \bZ_{\ell_1}-\tr(\bms)\}\{\bZ_{\ell_3}^\top\bms \bZ_{\ell_3}-\tr(\bms)\}\big]\\
&+ \sum\limits_{\substack{1\leq i_1, j_1, i_2, j_2\leq n \\ (\ell_1,\ell_2,\ell_3,\ell_4)\in \mathcal{B}_{i_1j_1i_2j_2}\cap\mathcal{B}_2}}
b_{i_1-\ell_1}b_{j_1-\ell_2}b_{i_2-\ell_1}b_{j_2-\ell_2}\E\big\{(\bZ_{\ell_1}^\top\bms \bZ_{\ell_2})^2\}\\
&+ \sum\limits_{\substack{1\leq i_1, j_1, i_2, j_2\leq n \\ (\ell_1,\ell_2,\ell_3,\ell_4)\in \mathcal{B}_{i_1j_1i_2j_2}\cap\mathcal{B}_3}}
b_{i_1-\ell_1}b_{j_1-\ell_2}b_{i_2-\ell_2}b_{j_2-\ell_1}\E\big\{(\bZ_{\ell_1}^\top\bms\bZ_{\ell_2})(\bZ_{\ell_2}^\top\bms\bZ_{\ell_1})\big\}\\
&+ \sum\limits_{\substack{1\leq i_1, j_1, i_2, j_2\leq n \\ (\ell_1,\ell_2,\ell_3,\ell_4)\in \mathcal{B}_{i_1j_1i_2j_2}\cap\mathcal{B}_4}}
b_{i_1-\ell_1}b_{j_1-\ell_1}b_{i_2-\ell_1}b_{j_2-\ell_1}\E\big[\{\bZ_{\ell_1}^\top\bms\bZ_{\ell_1}-\tr(\bms)\}^2\big]\\
\leq& \sum\limits_{\substack{1\leq i_1, j_1, i_2, j_2\leq n \\ (\ell_1,\ell_2,\ell_3,\ell_4)\in \mathcal{B}_{i_1j_1i_2j_2}\cap\mathcal{B}_2}}
\lvert b_{i_1-\ell_1}b_{j_1-\ell_2}b_{i_2-\ell_1}b_{j_2-\ell_2}\rvert \tr(\bms^2)\\
&+ \sum\limits_{\substack{1\leq i_1, j_1, i_2, j_2\leq n \\ (\ell_1,\ell_2,\ell_3,\ell_4)\in \mathcal{B}_{i_1j_1i_2j_2}\cap\mathcal{B}_3}}
\lvert b_{i_1-\ell_1}b_{j_1-\ell_2}b_{i_2-\ell_2}b_{j_2-\ell_1}\rvert \tr(\bms^2)\\
&+ \sum\limits_{\substack{1\leq i_1, j_1, i_2, j_2\leq n \\ (\ell_1,\ell_2,\ell_3,\ell_4)\in \mathcal{B}_{i_1j_1i_2j_2}\cap\mathcal{B}_4}}
\lvert b_{i_1-\ell_1}b_{j_1-\ell_1}b_{i_2-\ell_1}b_{j_2-\ell_1}\rvert \tau_1\tr^2(\bms)\\
\leq& n^2\bigg\{\bigg(\sum\limits_{i_1,i_2\geq 0}|b_{i_1}b_{i_2}|\bigg)\bigg(\sum\limits_{j_1,j_2\geq M}|b_{j_1}b_{j_2}|\bigg)
+\bigg(\sum\limits_{i_1\geq 0, j_2\geq M}|b_{i_1}b_{j_2}|\bigg)\bigg(\sum\limits_{i_2\geq 0, j_1\geq M}|b_{i_2}b_{j_1}|\bigg)
\bigg\}\tr(\bms^2)\\
&+\tau_1np\sum\limits_{i_1,i_2\geq 0, j_1,j_2\geq M}\lvert b_{i_1}b_{i_2}b_{j_1}b_{j_2} \rvert\tr(\bms^2).
\end{align*}
By Lemma \ref{Appendix-L2} and Lemma \ref{Appendix-L3}, we have
$$g(\lfloor n t\rfloor)\sum\limits_{i=1}^n\sum\limits_{j=1}^n a_{i,\lfloor nt \rfloor}a_{j,\lfloor nt \rfloor} \boldsymbol\varepsilon_i^{\top}(\bX_j-\boldsymbol\varepsilon_j)=o\{p^{-1}\tr(\O^2_{n})\}.$$
Similarly, we can prove
$$g(\lfloor n t\rfloor)\sum\limits_{i=1}^n\sum\limits_{j=1}^n a_{i,\lfloor nt \rfloor}a_{j,\lfloor nt \rfloor} (\bX_i-\boldsymbol\varepsilon_i)^{\top}(\bX_j-\boldsymbol\varepsilon_j)=o\{p^{-1}\tr(\O^2_{n})\}.$$
This completes Lemma \ref{Th1-L3}.
\end{proof}

\begin{lemma}
\label{Th1-L4}
Under Assumptions~\ref{ass:C1}--\ref{ass:C3}, if $M=\lceil(n\wedge p)^{1/8}\rceil$, for any $t \in[0,1]$, we have
$$
\mu_{\lfloor n t\rfloor}^{(G)}-\mu_{M,\lfloor n t\rfloor}=o(\omega).
$$
\end{lemma}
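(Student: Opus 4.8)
The plan is to reduce the claim to a tail estimate on the moving‑average coefficients $b_\ell$. First I would use the reformulation from the proof of Lemma~\ref{Th1-L1}, namely $\mu_{\lfloor nt\rfloor}^{(G)}=g(\lfloor nt\rfloor)\sum_{i,j=1}^{n}a_{i,t}a_{j,t}\tr\{\bGam_M(|j-i|)\}$ with $g(k)=k^2(n-k)^2/(n^3\sqrt p)$. Grouping the double sum by the lag $h=|j-i|$, and using $\bGam_M(h)=c_{h,M}\bms$ together with $\bGam_M(h)=\mathbf 0$ for $h>M$, this equals $g(\lfloor nt\rfloor)\,\tr(\bms)\sum_{h=0}^{M}\{2-\ind{h=0}\}c_{h,M}\sum_{i=1}^{n-h}a_{i,t}a_{i+h,t}$. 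Since $\mu_{M,\lfloor nt\rfloor}$ has exactly the same form with $c_{h,M}$ replaced by $c_h$ (recall $\bGam(h)=c_h\bms$ and $a_{i,\lfloor nt\rfloor}=a_{i,t}$), the difference collapses to
\[
\mu_{\lfloor nt\rfloor}^{(G)}-\mu_{M,\lfloor nt\rfloor}=g(\lfloor nt\rfloor)\,\tr(\bms)\sum_{h=0}^{M}\{2-\ind{h=0}\}\,(c_{h,M}-c_h)\sum_{i=1}^{n-h}a_{i,t}a_{i+h,t}.
\]
For $t\in\{0,1\}$ the prefactor $g(\lfloor nt\rfloor)$ vanishes, so it suffices to treat $t\in(0,1)$.

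Next I would bound the three factors separately. For the coefficient sums, Cauchy--Schwarz gives $\big|\sum_{i=1}^{n-h}a_{i,t}a_{i+h,t}\big|\le\sum_{i=1}^{n}a_{i,t}^2=\frac1{\lfloor nt\rfloor}+\frac1{n-\lfloor nt\rfloor}=\frac{n}{\lfloor nt\rfloor(n-\lfloor nt\rfloor)}$, so that $g(\lfloor nt\rfloor)\cdot\frac{n}{\lfloor nt\rfloor(n-\lfloor nt\rfloor)}=\frac{\lfloor nt\rfloor(n-\lfloor nt\rfloor)}{n^2\sqrt p}\le\frac1{4\sqrt p}$, while $\tr(\bms)=O(p)$ by Assumption~\ref{ass:C3}. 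For the truncation error, $c_{h,M}-c_h=-\sum_{\ell=M-h+1}^{\infty}b_\ell b_{\ell+h}$, and for every $h\le M$ each summand has $\ell+h\ge M+1$; hence Assumption~\ref{ass:C1}(ii) ($\ell^5 b_\ell\to0$) yields $|b_{\ell+h}|\le (M+1)^{-5}\varepsilon_M$ with $\varepsilon_M:=\sup_{m>M}m^5|b_m|\to0$, so that $|c_{h,M}-c_h|\le \|b\|_1\,\varepsilon_M (M+1)^{-5}$ uniformly in $h$ and therefore $\sum_{h=0}^{M}|c_{h,M}-c_h|=o(M^{-4})$.

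Combining, $|\mu_{\lfloor nt\rfloor}^{(G)}-\mu_{M,\lfloor nt\rfloor}|\lesssim p\cdot\tfrac1{\sqrt p}\cdot o(M^{-4})=\sqrt p\cdot o(M^{-4})$. Since $M=\lceil(n\wedge p)^{1/8}\rceil$ gives $M^{4}\gtrsim(n\wedge p)^{1/2}$, this is $o\big(\sqrt{p/(n\wedge p)}\big)$, hence $o(1)$ in the scaling under consideration; and because $\omega$ converges to a strictly positive constant (Assumption~\ref{ass:C3} together with $s\neq0$ forces $\bOme=s^2\bms$ and $\tr(\bOme^2)\asymp p$), we conclude $\mu_{\lfloor nt\rfloor}^{(G)}-\mu_{M,\lfloor nt\rfloor}=o(\omega)$, as desired.

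The main obstacle is calibrating the decay of the truncation error against the $\sqrt p$ prefactor carried by $\mu^{(G)}$: it is precisely the $\ell^5$‑decay in Assumption~\ref{ass:C1}(ii), paired with the bandwidth choice $M=(n\wedge p)^{1/8}$, that makes the $o(M^{-4})$ coefficient error dominate $\sqrt p$. A secondary, purely bookkeeping point is the dependence of the lag sums $\sum_{i=1}^{n-h}a_{i,t}a_{i+h,t}$ on how close $\lfloor nt\rfloor$ lies to the endpoints $0$ and $n$, which the crude Cauchy--Schwarz bound above absorbs without any case analysis.
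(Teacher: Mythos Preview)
Your proposal is correct and follows essentially the same route as the paper's proof: both rewrite $\mu_{\lfloor nt\rfloor}^{(G)}$ by grouping lags, take the difference $c_{h,M}-c_h=-\sum_{\ell>M-h}b_\ell b_{\ell+h}$, bound $g(\lfloor nt\rfloor)\sum_i|a_{i,t}a_{i+h,t}|\lesssim p^{-1/2}$, and use the $\ell^5b_\ell\to0$ decay together with $\tr(\bms)=O(p)$ and $\omega\asymp1$ to get $o(1)$. The only cosmetic difference is that the paper bounds the coefficient error via the tail sum $\sum_{\ell>M}|b_\ell|=o(M^{-4})$ (their Lemma~\ref{Appendix-L3}), whereas you use the pointwise bound $\sup_{m>M}m^5|b_m|\to0$ and then sum over $h$; the resulting estimate is the same.
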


\begin{proof}
Rewrite
$$
\mu_{\lfloor n t\rfloor}^{(G)} = g(\lfloor n t\rfloor)\sum\limits_{h=0}^{M}\sum\limits_{i=1}^{n-h}\{2-\ind{h=0}\}a_{i,\lfloor nt \rfloor}a_{i+h,\lfloor nt \rfloor}\tr\{\bGam_{M}(h)\}.
$$
We have
\begin{eqnarray*}
\left\vert\mu_{\lfloor n t\rfloor}^{(G)}-\mu_{M,\lfloor n t\rfloor}\right\vert&=&g(\lfloor n t\rfloor)\Bigg|\sum\limits_{h=0}^{M}\sum\limits_{i=1}^{n-h}\{2-\ind{h=0}\}a_{i,\lfloor nt \rfloor}a_{i+h,\lfloor nt \rfloor}\tr\{\bGam_{M}(h)\}-\tr\{\bGam(h)\}\Bigg|\\
&=&g(\lfloor n t\rfloor)\Bigg|\sum\limits_{h=0}^{M}\sum\limits_{i=1}^{n-h}\{2-\ind{h=0}\}a_{i,\lfloor nt \rfloor}a_{i+h,\lfloor nt \rfloor}\sum\limits_{\ell=M-h+1}^{\infty}b_{\ell}b_{\ell+h}\tr(\bms)\Bigg|\\
&\lesssim& \frac{n^2\tr(\bms)}{n^2\sqrt{p}}\bigg(\sum\limits_{\ell=0}^{\infty}|b_{\ell}|\bigg)\bigg(\sum\limits_{\ell=M+1}^{\infty}|b_{\ell}|\bigg)=o\left\{\sqrt{p^{-1}\tr(\O_n^2)}\right\}
\end{eqnarray*}
due to the fact $\sum\limits_{\ell=M}^{\infty}|b_{\ell}|=o(M^{-4})$ with Lemma \ref{Appendix-L3} by Assumption~\ref{ass:C1}--(ii).
\end{proof}

\begin{lemma}
\label{Th1-L5}
Under Assumptions~\ref{ass:C1}--\ref{ass:C3}, if $M=\lceil(n\wedge p)^{1/8}\rceil$, for any $t \in[0,1]$, we have
$$
p^{-1}\tr(\O_{n,M}^2)=p^{-1}\tr(\O_{n}^2)\{1+o(1)\}.
$$
\end{lemma}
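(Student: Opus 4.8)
The key reduction is that, under Assumption~\ref{ass:C1}, both $\O_n$ and $\O_{n,M}$ are scalar multiples of the common cross-sectional matrix $\bms$. Since $\bGam(h)=c_h\bms$ and $\bGam_M(h)=c_{h,M}\bms$, we may write $\O_n=d_n\,\bms$ and $\O_{n,M}=d_{n,M}\,\bms$, where
$$
d_n := \sum_{|h|<n}\Big(1-\frac{|h|}{n}\Big)c_h, \qquad d_{n,M} := \sum_{h\in\mathcal{M}}\Big(1-\frac{|h|}{n}\Big)c_{|h|,M}.
$$
Hence $\tr(\O_n^2)=d_n^2\,\tr(\bms^2)$ and $\tr(\O_{n,M}^2)=d_{n,M}^2\,\tr(\bms^2)$, so the claimed identity reduces to $(d_{n,M}/d_n)^2\to 1$, i.e. to showing that $d_n$ and $d_{n,M}$ converge to the same nonzero limit. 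Note that Assumption~\ref{ass:C3} gives $\lambda_{\min}(\bms)>C_0\ge 0$, so $\tr(\bms^2)\gtrsim p>0$; this makes the division legitimate and also shows the $p^{-1}$ normalisation in the statement is immaterial.

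Next I would verify $d_n\to s^2$. From Assumption~\ref{ass:C1}-(ii), $\sum_\ell|b_\ell|<\infty$ yields $\sum_{h}|c_h|\le(\sum_\ell|b_\ell|)^2<\infty$, while $\lim_\ell \ell^5 b_\ell=0$ yields $\sum_\ell \ell|b_\ell|<\infty$ and therefore, after the change of index $m=\ell+h$, $\sum_h|h|\,|c_h|\le 2\big(\sum_\ell\ell|b_\ell|\big)\big(\sum_\ell|b_\ell|\big)<\infty$. Consequently $\sum_{|h|<n}c_h\to\sum_{h\in\mathbb{Z}}c_h=(\sum_\ell b_\ell)^2=s^2$ by dominated convergence, and $n^{-1}\sum_{|h|<n}|h|\,|c_h|\le n^{-1}\sum_{h}|h|\,|c_h|\to 0$, so $d_n\to s^2$. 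The same argument, already spelled out in the proof of Proposition~\ref{L6.1-P2}, handles $d_{n,M}$: the main observation is the algebraic identity $\sum_{h\in\mathcal{M}}c_{|h|,M}=c_{0,M}+2\sum_{h=1}^{M}c_{h,M}=\big(\sum_{\ell=0}^{M}b_\ell\big)^2$, obtained by expanding the square and matching $h=j-i$; since $M=\lceil(n\wedge p)^{1/8}\rceil\to\infty$ as $(n,p)\to\infty$, this tends to $s^2$. The correction term satisfies $n^{-1}\sum_{h\in\mathcal{M}}|h|\,|c_{h,M}|\le n^{-1}\sum_{h}|h|\,|c_h|\to 0$ (using $|c_{h,M}|\le\sum_{\ell\ge0}|b_\ell||b_{\ell+|h|}|$, which has finite $|h|$-weighted sum by the bound above), so $d_{n,M}\to s^2$ as well.

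Combining the two limits with $s\neq 0$ gives $d_{n,M}/d_n\to 1$, hence
$$
p^{-1}\tr(\O_{n,M}^2)=\Big(\frac{d_{n,M}}{d_n}\Big)^2\,p^{-1}\tr(\O_n^2)=p^{-1}\tr(\O_n^2)\{1+o(1)\},
$$
which is the assertion. I do not anticipate any genuine obstacle: everything is routine absolute-summability bookkeeping plus the reminder that $M\to\infty$. The only point requiring a little care is confirming that the common limit $s^2$ is nonzero (so that forming the ratio $d_{n,M}/d_n$ and dividing by $\tr(\O_n^2)$ are both valid), which is exactly what Assumption~\ref{ass:C1}-(ii) provides.
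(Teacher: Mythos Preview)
Your proposal is correct and rests on the same structural observation as the paper: under Assumption~\ref{ass:C1} both $\O_n$ and $\O_{n,M}$ are scalar multiples of $\bms$, so the trace ratio reduces to a scalar ratio. The organization differs slightly: the paper writes $\O_n-\O_{n,M}=C_2\bms$ and bounds $|C_2|\lesssim\big(\sum_{\ell}|b_\ell|\big)\big(\sum_{\ell>M}|b_\ell|\big)=o(M^{-4})$ directly via Lemma~\ref{Appendix-L3}, then invokes the identity $\tr(\O_{n,M}^2)-\tr(\O_n^2)=2\tr\{\O_{n,M}(\O_{n,M}-\O_n)\}-\tr\{(\O_{n,M}-\O_n)^2\}$. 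You instead show $d_n\to s^2$ and $d_{n,M}\to s^2$ separately and take the ratio. Your route is marginally more transparent and makes explicit why $s\neq 0$ is needed; the paper's route yields a quantitative rate $o(M^{-4})$ on the scalar difference, which you do not extract but is not required for the lemma as stated.
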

\begin{proof}
We have
\begin{eqnarray*}
\O_{n}-\O_{n,M}=\bigg[c_0+2\sum\limits_{h=1}^{n}\bigg(1-\frac{h}{n}\bigg)c_h-\bigg\{c_{0,M}+2\sum\limits_{h=1}^{M}\bigg(1-\frac{h}{n}\bigg)c_{h,M}\bigg\}\bigg]\bms:=C_2\bms,
\end{eqnarray*}
and by Lemma~\ref{Appendix-L3}, we have
\begin{eqnarray*}
|C_2|
&\leq& 2\bigg|\sum\limits_{h=0}^{n}\bigg(1-\frac{h}{n}\bigg)c_h-\sum\limits_{h=0}^{M}\bigg(1-\frac{h}{n}\bigg)c_{h,M}\bigg|\\
&=& 2\bigg|\sum\limits_{h=M+1}^{n}\bigg(1-\frac{h}{n}\bigg)\sum\limits_{\ell=0}^{\infty}b_{\ell}b_{\ell+h}+\sum\limits_{h=0}^{M}\bigg(1-\frac{h}{n}\bigg)\sum\limits_{\ell=M-h+1}^{\infty}b_{\ell}b_{\ell+h}\bigg|\\
&\lesssim& 4\bigg(\sum\limits_{\ell=0}^{\infty}|b_{\ell}|\bigg)\bigg(\sum\limits_{\ell=M+1}^{\infty}|b_{\ell}|\bigg)=o(M^{-4}).
\end{eqnarray*}
By
\begin{eqnarray*}
p^{-1}\tr(\O_{n,M}^2)-p^{-1}\tr(\O_{n}^2)=2p^{-1}\tr\left\{\O_{n,M}(\O_{n,M}-\O_{n})\right\}-p^{-1}\tr\left\{(\O_{n,M}-\O_{n})^2\right\}.
\end{eqnarray*}
then,
$$
p^{-1}\tr(\O_{n,M}^2)=p^{-1}\tr(\O_{n}^2)\{1+o(1)\}.
$$
\end{proof}

\begin{lemma}
\label{Th1-L6}
Under Assumptions~\ref{ass:C1}--\ref{ass:C3}, if $M=\lceil(n\wedge p)^{1/8}\rceil$, for any $t \in[0,1]$, we have
$$
p^{-1}\tr(\O_n^2)=p^{-1}\tr(\O^2)\{1+o(1)\}.
$$
\end{lemma}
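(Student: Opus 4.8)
The plan is to exploit the fact that, under Assumption~\ref{ass:C1}, every autocovariance matrix is a scalar multiple of $\bms$, so both $\O_n$ and $\O$ are scalar multiples of $\bms$ and the claimed ratio identity collapses to a one-dimensional statement about the weighting sequences. Concretely, since $\bGam(h)=c_h\bms$ we may write $\O_n=d_n\bms$ and $\O=d\bms$, where $d_n:=c_0+2\sum_{h=1}^{n-1}(1-h/n)c_h$ and $d:=c_0+2\sum_{h=1}^{\infty}c_h$. The first thing I would record is that $d=s^2\neq 0$: expanding $\big(\sum_{\ell\geq 0}b_\ell\big)^2=\sum_{\ell,m\geq 0}b_\ell b_m$ and reindexing by $h=m-\ell$ gives $\big(\sum_\ell b_\ell\big)^2=\sum_{h\in\mathbb{Z}}c_h=c_0+2\sum_{h\geq 1}c_h=d$, which equals $s^2$ by Assumption~\ref{ass:C1}-(ii).

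With this reduction in hand, $p^{-1}\tr(\O_n^2)=d_n^2\,p^{-1}\tr(\bms^2)$ and $p^{-1}\tr(\O^2)=d^2\,p^{-1}\tr(\bms^2)$, and since $\bms$ is positive definite the common factor $p^{-1}\tr(\bms^2)$ is strictly positive and cancels in the ratio. Hence it suffices to prove $d_n=d\{1+o(1)\}$, i.e. $d_n\to d$. I would decompose
\begin{align*}
d-d_n=\frac{2}{n}\sum_{h=1}^{n-1}h\,c_h+2\sum_{h=n}^{\infty}c_h
\end{align*}
and bound the two terms separately. For the second term, $\sum_{h\geq 0}|c_h|\leq\sum_{h\geq 0}\sum_{\ell\geq 0}|b_\ell|\,|b_{\ell+h}|\leq\big(\sum_\ell|b_\ell|\big)^2<\infty$ by Assumption~\ref{ass:C1}-(ii), so the tail $\sum_{h\geq n}|c_h|\to 0$. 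For the first term, the bound $\sum_{\ell\geq M}|b_\ell|=o(M^{-4})$ from Lemma~\ref{Appendix-L3} gives $\sum_{\ell}\ell\,|b_\ell|=\sum_{M\geq 1}\sum_{\ell\geq M}|b_\ell|<\infty$, whence $\sum_h h\,|c_h|\leq\big(\sum_\ell|b_\ell|\big)\big(\sum_m m\,|b_m|\big)<\infty$ and $n^{-1}\sum_{h=1}^{n-1}h\,|c_h|\to 0$. Combining, $d_n\to d=s^2$, so $d_n^2/d^2\to 1$, which is precisely $p^{-1}\tr(\O_n^2)=p^{-1}\tr(\O^2)\{1+o(1)\}$.

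This argument is essentially routine and parallels the bound on $|C_2|$ in Lemma~\ref{Th1-L5}, so I do not anticipate a genuine obstacle. The only points requiring a little care are: (i) justifying $d\neq 0$ so that dividing by $d^2$ is legitimate, handled above via $d=s^2$; and (ii) noting that no convergence of $p^{-1}\tr(\bms^2)$ is needed---the statement is a pure ratio identity, so that factor simply cancels, and Assumption~\ref{ass:C3} enters only to guarantee $\tr(\bms^2)\in(0,\infty)$. Alternatively, one can organize the same computation through $p^{-1}\tr(\O_n^2)-p^{-1}\tr(\O^2)=\{2d(d_n-d)+(d_n-d)^2\}\,p^{-1}\tr(\bms^2)$ and divide by $p^{-1}\tr(\O^2)=d^2\,p^{-1}\tr(\bms^2)$, reaching the conclusion in the same two lines.
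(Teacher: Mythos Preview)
Your proposal is correct and follows essentially the same route as the paper: both exploit the scalar structure $\O_n=d_n\bms$, $\O=d\bms$ and reduce the claim to $d_n\to d$ via the identical decomposition $d-d_n=\frac{2}{n}\sum_{h=1}^{n-1}h\,c_h+2\sum_{h\geq n}c_h$ (the paper's $C_3$). The paper phrases the final step through Frobenius norms rather than cancelling the scalar factor $p^{-1}\tr(\bms^2)$ directly, and it is less explicit than you are about $d=s^2\neq 0$, but these are cosmetic differences, not a different argument.
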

\begin{proof}
    We have
    \begin{equation*}
    \begin{aligned}
                \O-\O_n=&n^{-1}\sum_{h=1}^{n-1}h\bGam(h)+\sum_{h=n}^\infty \bGam(h)\\
                =&2\left(n^{-1}\sum_{h=1}^{n-1}\sum_{\ell=0}^\infty h b_{\ell}b_{\ell+h}+\sum_{h=n}^\infty \sum_{\ell=0}^\infty b_{\ell}b_{\ell+h}\right)\bms:=C_3\bms,
    \end{aligned}
    \end{equation*}
    and by Lemma~\ref{Appendix-L3}, we have
    \begin{equation*}
        \begin{aligned}
            \vert C_3\vert\leq 2\left\vert n^{-1}\sum_{h=1}^{n-1}h\sum_{\ell=h}^{\infty}b_{\ell}+\sum_{h=n}^\infty\sum_{\ell=h}^\infty b_{\ell}\right\vert=O(n^{-1})+o(n^{-3})=o(1).
        \end{aligned}
    \end{equation*}

   It implies that,
   \begin{equation*}
       \|\O_n-\O\|_F\leq o(1)\|\O\|_F,
   \end{equation*}
where $\|\cdot\|_F$ denotes the Frobenius norm for a matrix.

   Then, we have
\begin{equation*}
    \begin{aligned}
        \left\vert \|\O\|_F^2-\|\O_n\|_F^2\right\vert= \left\vert (\|\O\|_F-\|\O_n\|_F)(\|\O\|_F+\|\O_n\|_F)\right\vert\leq o(1)\|\O\|^2_F,
    \end{aligned}
\end{equation*}
   that is,
   $$
p^{-1}\tr(\O_n^2)=p^{-1}\tr(\O^2)\{1+o(1)\}.
$$
\end{proof}

\subsection{Proof of Theorem \ref{Th2}}
To prove Theorem~\ref{Th2}, it is equivalent to prove Lemmas~\ref{Th2-L1} and \ref{Th2-L2} below.

\begin{lemma}
\label{Th2-L1}
Under Assumptions~\ref{ass:C1}--\ref{ass:C3}, if $M=\lceil(n\wedge p)^{1/8}\rceil$, $p=o(n^{7/4})$ and $\|\bdelta\|^2 = o(n p^{1/2}/M^2)$, for any $t \in[0,1]$ , we have
$
\hat{\mu}_{M,\lfloor n t\rfloor}-\mu_{M,\lfloor n t\rfloor} = o_p(\omega)
$ as $(n,p)\rightarrow\infty$.
\end{lemma}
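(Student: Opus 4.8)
The plan is to write $\hat\mu_{M,\lfloor nt\rfloor}-\mu_{M,\lfloor nt\rfloor}$ as a sum of a change‑point contamination term, a stochastic fluctuation term, and a deterministic bias term, and to bound each separately. Let $\widehat{\tr\{\bGam(h)\}}^{\varepsilon}$ denote the statistic obtained from $\widehat{\tr\{\bGam(h)\}}$ by replacing every $\bX_i$ with $\boldsymbol\varepsilon_i$ (equivalently $\breve{\bX}$ with its noise analogue $\breve{\boldsymbol\varepsilon}$), and let $\hat\mu^{\varepsilon}_{M,\lfloor nt\rfloor}$ be the corresponding version of $\hat\mu_{M,\lfloor nt\rfloor}$. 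Since $\mu_{M,\lfloor nt\rfloor}$ involves only $\tr\{\bGam(h)\}=c_h\tr(\bms)$ and not $\bdelta$, I decompose $\hat\mu_{M,\lfloor nt\rfloor}-\mu_{M,\lfloor nt\rfloor}=I+II+III$ with $I=\hat\mu_{M,\lfloor nt\rfloor}-\hat\mu^{\varepsilon}_{M,\lfloor nt\rfloor}$, $II=\hat\mu^{\varepsilon}_{M,\lfloor nt\rfloor}-\E\hat\mu^{\varepsilon}_{M,\lfloor nt\rfloor}$, $III=\E\hat\mu^{\varepsilon}_{M,\lfloor nt\rfloor}-\mu_{M,\lfloor nt\rfloor}$. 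A routine computation shows that for $0\le h\le M$ the weight $\frac{\lfloor nt\rfloor^2(n-\lfloor nt\rfloor)^2}{n^3\sqrt p}\{2-\ind{h=0}\}\sum_{i=1}^{n-h}a_{i,\lfloor nt\rfloor}a_{i+h,\lfloor nt\rfloor}$ equals $\frac{\lfloor nt\rfloor(n-\lfloor nt\rfloor)}{n^2\sqrt p}\{2-\ind{h=0}\}+O(M/(n\sqrt p))$, hence is $O(p^{-1/2})$ uniformly in $h$; therefore each of $I,II,III$ is $\lesssim p^{-1/2}$ times a sum over $h\le M$, and since $\omega$ is a fixed positive constant it suffices to show $I,II,III=o_p(1)$.

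For the bias term $III$, stationarity gives $\E\breve{\boldsymbol\varepsilon}_{t+h,h,t,h}=2\tr\{\bGam(h)\}-\tr\{\bGam(M+1)\}-\tr\{\bGam(2h+M+1)\}$, so $\E\widehat{\tr\{\bGam(h)\}}^{\varepsilon}=\tfrac{n-M-2h-1}{2n}\big(2\tr\{\bGam(h)\}-\tr\{\bGam(M+1)\}-\tr\{\bGam(2h+M+1)\}\big)$. Its deviation from $\tr\{\bGam(h)\}$ has two sources: the factor $\tfrac{n-M-2h-1}{n}=1-O(M/n)$, contributing $O(Mp/n)$ after $\tr(\bms)=O(p)$; and the lag‑$(M+1)$ and lag‑$(2h+M+1)$ traces, which are $o(M^{-5}p)$ because $|c_{h'}|\le(\sum_\ell|b_\ell|)\max_{\ell\ge M}|b_\ell|=o(M^{-5})$ for $h'>M$, by Assumption~\ref{ass:C1}(ii) and Lemma~\ref{Appendix-L3}. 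Multiplying by the $O(p^{-1/2})$ weights, using $\sum_h|c_h|<\infty$, and summing over $0\le h\le M$ yields $III=O(M\sqrt p/n)+o(p^{1/2}M^{-4})$, which tends to $0$ since $M\asymp(n\wedge p)^{1/8}\lesssim n^{1/8}$, $p=o(n^{7/4})$, and $M\to\infty$; this is precisely where the restriction $p=o(n^{7/4})$ is used.

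For the fluctuation term $II$, I apply Chebyshev's inequality. Expanding, $\cov(\widehat{\tr\{\bGam(h_1)\}}^{\varepsilon},\widehat{\tr\{\bGam(h_2)\}}^{\varepsilon})=\tfrac{1}{4n^2}\sum_{t_1,t_2}\cov(\breve{\boldsymbol\varepsilon}_{t_1+h_1,h_1,t_1,h_1},\breve{\boldsymbol\varepsilon}_{t_2+h_2,h_2,t_2,h_2})$, a sum of covariances of bilinear forms in the $\bZ$'s. Using the $M$‑dependence of $\boldsymbol\varepsilon^{(M)}$ together with the tail decay of $b_\ell$, and the fact that every relevant covariance block is a scalar multiple of $\bms$ with operator norm $O(1)$, each summand is non‑negligible only when $|t_1-t_2|\lesssim M$ and is then $O(\tr(\bms^2)+\sum_j\sigma_{jj}^2)=O(p)$ by Assumption~\ref{ass:C3} and the quadratic‑form moment bounds of Lemma~\ref{Appendix-L2} (which also dominate the fourth‑cumulant contributions of $Z_{ij}$). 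Hence $\cov(\widehat{\tr\{\bGam(h_1)\}}^{\varepsilon},\widehat{\tr\{\bGam(h_2)\}}^{\varepsilon})=O(Mp/n)$ uniformly in $h_1,h_2$, and $\var(\hat\mu^{\varepsilon}_{M,\lfloor nt\rfloor})\lesssim p^{-1}\sum_{h_1,h_2=0}^{M}O(Mp/n)=O(M^3/n)=o(1)$, so $II=o_p(1)$.

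For the contamination term $I$, write $\breve{\bX}_{t+h,h,t,h}-\breve{\boldsymbol\varepsilon}_{t+h,h,t,h}=(\bmu_{t+h}-\bmu_{t+2h+M+1})^{\T}(\bmu_{t}-\bmu_{t+M+h+1})$ plus two cross terms of the form $(\bmu_a-\bmu_b)^{\T}(\boldsymbol\varepsilon_c-\boldsymbol\varepsilon_d)$, where $\bmu_i=\bmu_0+\bdelta\ind{i>\tau}$. Since $\bmu_a-\bmu_b\in\{0,-\bdelta\}$ (the baseline $\bmu_0$ cancels) and vanishes unless $\tau$ lies strictly between $a$ and $b$, at most $2h+M+2=O(M)$ values of $t$ contribute for each $h$; on those the deterministic part is $\le\|\bdelta\|^2$ and each cross term has $L^1$‑norm $O(\|\bdelta\|)$ because $\bdelta^{\T}\cov(\boldsymbol\varepsilon_c-\boldsymbol\varepsilon_d)\bdelta\le 4\lambda_{\max}(\bGam(0))\|\bdelta\|^2=O(\|\bdelta\|^2)$ by Assumption~\ref{ass:C3}. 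Hence $\E|I|\lesssim p^{-1/2}\sum_{h=0}^{M}n^{-1}M(\|\bdelta\|^2+\|\bdelta\|)=M^2(\|\bdelta\|^2+\|\bdelta\|)/(n\sqrt p)$; the first term is $o(1)$ directly from $\|\bdelta\|^2=o(np^{1/2}/M^2)$, and the second is $o(1)$ after substituting $\|\bdelta\|=o(n^{1/2}p^{1/4}/M)$ and $M\lesssim(n\wedge p)^{1/8}$, so $I=o_p(1)$ by Markov's inequality. Combining, $\hat\mu_{M,\lfloor nt\rfloor}-\mu_{M,\lfloor nt\rfloor}=I+II+III=o_p(1)=o_p(\omega)$. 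I expect the main obstacle to be term $II$: one must carefully enumerate which covariances among the bilinear forms $\breve{\boldsymbol\varepsilon}$ survive, verify that the fourth‑cumulant contributions remain $O(p)$ rather than $O(p^2)$, and carry this out uniformly over $0\le h_1,h_2\le M$ while keeping track of the numerous resulting error terms.
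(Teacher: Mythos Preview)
Your decomposition $I+II+III$ is exactly the paper's: it expands $\widehat{\tr\{\bGam(h)\}}$ into the noise-only piece \eqref{eq:mu1}, the two cross terms \eqref{eq:mu2}--\eqref{eq:mu3}, and the pure $\bdelta^\top\bdelta$ term \eqref{eq:mu4}, then treats bias and variance of the noise-only piece separately in its Step~2. So the overall strategy is the same.

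A few execution differences are worth noting. For the contamination $I$, you control the cross terms via $\E|I|$ and Markov, whereas the paper computes their variance (getting $\lesssim M^4n^{-2}p^{-1}\bdelta^\top\bms\bdelta$); both routes work. For the bias $III$, you are actually more careful than the paper: you isolate the contribution $O(M\sqrt p/n)$ coming from the factor $(n-M-2h-1)/n$ and point out that this is precisely where $p=o(n^{7/4})$ enters, while the paper's write-up only displays the lag-$(M{+}1)$ correction explicitly and invokes $p=o(n^{7/4})$ instead at the variance stage. For the variance $II$, the paper obtains the bound $O\{M^2(n+p)n^{-2}\}$ by using the crude inequality $\rho_{r,r,r,r}\le\tau_1\tr^2(\bms)$; your claim that each surviving covariance is $O(\tr(\bms^2)+\sum_j\sigma_{jj}^2)=O(p)$ is sharper (and correct, since $\sum_j\sigma_{jj}^2\le\lambda_{\max}(\bms)\tr(\bms)=O(p)$), which is why your bound $O(M^3/n)$ does not require $p=o(n^{7/4})$ here.

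The only place your sketch needs real work is exactly where you flag it: in $II$ you cannot simply invoke Lemma~\ref{Appendix-L2}, because that lemma gives a \emph{uniform} $O(p)$ bound without the localization $|t_1-t_2|\lesssim M$ that your count relies on. You need the explicit linear-process expansion (as in the paper's long display around \eqref{L8-6-1}--\eqref{L8-6-2}) to see that the Wick terms carry factors $c_{t_1-t_2}^2$ and the fourth-cumulant term carries $\sum_r\prod b_{k_i-r}$, both of which decay in $|t_1-t_2|$; only then does the ``close pairs only'' heuristic become rigorous. Once you write this out, your argument goes through.
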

\begin{proof}
By definition, $\widehat{\tr\{\bGam(h)\}}$ can be decomposed as,
\begin{equation*}
    \begin{aligned}
        \widehat{\tr\{\bGam(h)\}}=&\frac{1}{2n}\sum_{t=1}^{n-M-2h-1}(\bepsilon_t-\bepsilon_{t+M+h+1})^\top (\bepsilon_{t+h}-\bepsilon_{t+M+2h+1})\\
        &-\frac{1}{2n}\sum_{t=\tau-2h-M-1}^{\tau-h}\bdelta^\top (\bepsilon_t-\bepsilon_{t+M+h+1})\\
        &-\frac{1}{2n}\sum_{t=\tau-M-1}^{\tau+h}\bdelta^\top (\bepsilon_t-\bepsilon_{t+M+h+1})\\
        &+\frac{M}{2n}\bdelta^\top\bdelta.
    \end{aligned}
\end{equation*}
Let $k = \lfloor nt \rfloor$, then
\begin{small}
    \begin{align}
        &\hat{\mu}_{M,k}\nonumber\\
        =&g(k)\sum\limits_{h=0}^{M}\sum\limits_{i=1}^{n-h}{\{2-\ind{h=0}\}}a_{i,k}a_{i+h,k}\left\{\frac{1}{2n}\!\!\sum_{t=1}^{n-M-2h-1}\!\!\!\!(\bepsilon_t-\bepsilon_{t+M+h+1})^\top (\bepsilon_{t+h}-\bepsilon_{t+M+2h+1})\right\}\label{eq:mu1}\\
        &-g(k)\sum\limits_{h=0}^{M}\sum\limits_{i=1}^{n-h}{\{2-\ind{h=0}\}}a_{i,k}a_{i+h,k}\left\{ \frac{1}{2n}\sum_{t=\tau-2h-M-1}^{\tau-h}\bdelta^\top (\bepsilon_t-\bepsilon_{t+M+h+1}) \right\}\label{eq:mu2}\\
        &-g(k)\sum\limits_{h=0}^{M}\sum\limits_{i=1}^{n-h}{\{2-\ind{h=0}\}}a_{i,k}a_{i+h,k}\left\{ \frac{1}{2n}\sum_{t=\tau-M-1}^{\tau+h}\bdelta^\top (\bepsilon_t-\bepsilon_{t+M+h+1}) \right\}\label{eq:mu3}\\
        &+g(k)\sum\limits_{h=0}^{M}\sum\limits_{i=1}^{n-h}{\{2-\ind{h=0}\}}a_{i,k}a_{i+h,k}\left( \frac{M}{2n}\bdelta^\top\bdelta \right).\label{eq:mu4}
    \end{align}
\end{small}

We first show the terms in Equations~\eqref{eq:mu2}--\eqref{eq:mu3} converges to zero with probability 1, and the Equation~\eqref{eq:mu4} converges to zero in Step 1. Then, we will show that the Equation~\eqref{eq:mu1} converges to $\mu_{M,k}$ with probability 1 in Step 2.

{\noindent\textbf{Step 1.}}
We first consider Equation~\eqref{eq:mu4}, by $\|\bdelta\|^2=o(n p^{1/2}/M^2)$, we have
\begin{equation*}
    \begin{aligned}
        &\left\vert g(k)\sum\limits_{h=0}^{M}\sum\limits_{i=1}^{n-h}{\{2-\ind{h=0}\}}a_{i,k}a_{i+h,k}\left( \frac{M}{2n}\bdelta^\top\bdelta \right)\right\vert\\
        =&\left\vert\frac{k^2(n-k)^2}{n^3\sqrt{p}}\left( \frac{M}{2n}\bdelta^\top\bdelta \right)\left\{\frac{n(2M+1)}{k(n-k)}-\frac{M(M+1)(n^2+k^2-n k)}{2k^2(n-k)^2}\right\}\right\vert\\
        \lesssim &\frac{M^2}{n p^{1/2}}\bdelta^\top\bdelta=o(1).
    \end{aligned}
\end{equation*}
We next consider Equation \eqref{eq:mu3}, and Equation \eqref{eq:mu2} can be handled in the same way. Notice that,
\begin{equation*}
    \begin{aligned}
        &\sum\limits_{h=0}^{M}\sum\limits_{i=1}^{n-h}{\{2-\ind{h=0}\}}a_{i,k}a_{i+h,k}\left\{ \frac{1}{2n}\sum_{t=\tau-M-1}^{\tau+h}\bdelta^\top (\bepsilon_t-\bepsilon_{t+M+h+1}) \right\}\\
        =&\frac{1}{2n}\sum_{h=0}^{M}\{2-\ind{h=0}\}\left\{  \frac{n}{k(n-k)}-h\frac{n^2+k^2-nk}{k^2(n-k)^2}\right\}\sum_{t=\tau-M-1}^{\tau+h}\bdelta^\top (\bepsilon_t-\bepsilon_{t+M+h+1}),\\
    \end{aligned}
\end{equation*}
and
\begin{equation*}
    \begin{aligned}
        &\E \left\{\bdelta^\top (\bepsilon_{t_1}-\bepsilon_{t_1+M+h_1+1})(\bepsilon_{t_2}-\bepsilon_{t_2+M+h_2+1})^\top\bdelta\right\}\\
        =&\bdelta^\top \left\{\bGam(\vert t_1-t_2\vert)+\bGam(\vert t_1-t_2+h_1-h_2\vert)-\bGam(\vert t_1-t_2-h_2\vert)-\bGam(\vert t_1-t_2+h_1\vert)\right\}^\top\bdelta\\
        \lesssim &\bdelta^\top \bms\bdelta
    \end{aligned}.
\end{equation*}
The expectation of Equation \eqref{eq:mu3} is 0, we then calculation the variance.
\begin{equation*}
    \begin{aligned}
&\var\left(g(k)\sum\limits_{h=0}^{M}\sum\limits_{i=1}^{n-h}{\{2-\ind{h=0}\}}a_{i,k}a_{i+h,k}\left\{ \frac{1}{2n}\sum_{t=\tau-M-1}^{\tau+h}\bdelta^\top (\bepsilon_t-\bepsilon_{t+M+h+1}) \right\}\right)\\
        =&\frac{k^4(n-k)^4}{4 n^8 p}\sum_{h_1=0}^M\sum_{h_2=0}^M\sum_{t_1=\tau-M-1}^{\tau+h_1}\sum_{t_2=\tau-M-1}^{\tau+h_2}\left\{\frac{1}{k}+\frac{1}{n-k}-h\frac{n^2+k^2-nk}{k^2(n-k)^2}\right\}\\
        &~~~~~~~~~~~~~~~~~~~~~~~~~~~~~~~~~~~~~\cdot\E \left\{\bdelta^\top (\bepsilon_{t_1}-\bepsilon_{t_1+M+h_1+1})(\bepsilon_{t_2}-\bepsilon_{t_2+M+h_2+1})^\top\bdelta\right\}\\
        \lesssim& \frac{k^4(n-k)^4}{4 n^8 p} M^2(2M)^2\left\{\frac{1}{k}+\frac{1}{n-k}+M\frac{n^2+k^2-nk}{k^2(n-k)^2}\right\}^2\bdelta^\top \bms\bdelta\\
        \lesssim &\frac{M^4}{n^2p}\bdelta^\top \bms\bdelta = o(1),
    \end{aligned}
\end{equation*}
that is, the Equation \eqref{eq:mu3} converges to 0 with probability 1.

{\noindent\textbf{Step 2.}}

Considering Equation~\eqref{eq:mu1}, we have
\begin{eqnarray*}
&&\E\left\{\frac{1}{2n}\sum\limits_{t=1}^{n-M-2h-1}(\bepsilon_{t+h}-\bepsilon_{t+M+2h+1})^{\T}(\bepsilon_t-\bepsilon_{t+M+h+1})\right\}\\
&=&\frac{1}{2n}\sum\limits_{t=1}^{n-M-2h-1}\E\bigg\{\sum\limits_{\ell_1,\ell_2=0}^{\infty}b_{\ell_1}b_{\ell_2}\bZ_{t+h_1-\ell_1}^{\T}\bms \bZ_{t-\ell_2} - \sum\limits_{\ell_1,\ell_2=0}^{\infty}b_{\ell_1}b_{\ell_2}\bZ_{t+h_1-\ell_1}^{\T}\bms \bZ_{t+M+h_1+1-\ell_2}\\
&& -\sum\limits_{\ell_1,\ell_2=0}^{\infty}b_{\ell_1}b_{\ell_2}\bZ_{t+M+2h_1+1-\ell_1}^{\T}\bms \bZ_{t-\ell_2}+\sum\limits_{\ell_1,\ell_2=0}^{\infty}b_{\ell_1}b_{\ell_2}\bZ_{t+M+2h_1+1-\ell_1}^{\T}\bms \bZ_{t+M+h_1+1-\ell_2}\bigg\}\\
&=&\frac{1}{2n}\sum\limits_{t=1}^{n-M-2h-1}\Big\{\Big(\sum\limits_{\ell_1=0}^{\infty}b_{\ell_1}b_{\ell_1+h}\Big)\tr(\bms)-\Big(\sum\limits_{\ell_1=0}^{\infty}b_{\ell_1}b_{\ell_1+M+2h+1}\Big)\tr(\bms)\\
&& -\Big(\sum\limits_{\ell_1=0}^{\infty}b_{\ell_1}b_{\ell_1+M+h+1}\Big)\tr(\bms) +
 \Big(\sum\limits_{\ell_1=0}^{\infty}b_{\ell_1}b_{\ell_1+h}\Big)\tr(\bms) \Big\}\\
&=&\frac{n-M-2h-1}{2n}\Big\{2\tr\{\bGam(h)\}-\Big(\sum\limits_{\ell_1=0}^{\infty}b_{\ell_1}b_{\ell_1+M+2h+1}\Big)\tr(\bms)
-\Big(\sum\limits_{\ell_1=0}^{\infty}b_{\ell_1}b_{\ell_1+M+h+1}\Big)\tr(\bms)\Big\}.
\end{eqnarray*}

By Assumption~\ref{ass:C1}-(ii), we have
\begin{eqnarray*}
&&\frac{g(k)}{\sqrt{2p^{-1}\tr(\O_n^2)}}\bigg|\sum\limits_{h=0}^{M}\sum\limits_{i=1}^{n-h}{\{2-\mathbb{I}(h=0)\}}a_{i,k}a_{i+h,k}\frac{n-M-2h-1}{2n}\Big(\sum\limits_{\ell_1=0}^{\infty}b_{\ell_1}b_{\ell_1+M+2h+1}\Big)\tr(\bms)\bigg|\\
&\lesssim&\frac{\tr(\bms)}{\sqrt{\tr(\O_n^2)}}\sum\limits_{h=0}^{M}\sum\limits_{\ell_1=0}^{\infty}|b_{\ell_1}||b_{\ell_1+M+2h+1}|
\lesssim \frac{\sqrt{p\tr(\bms^2)}}{\sqrt{\tr(\O_n^2)}}\sum\limits_{\ell_1=0}^{\infty}|b_{\ell_1}|\Big(\sum\limits_{h=\ell_1+M+1}^{\ell_1+3M+1}|b_{h}|\Big) \rightarrow 0,
\end{eqnarray*}
with $\sum\limits_{\ell=M}^{\infty}|b_{\ell}|=o(M^{-4})$ by Lemma \ref{Appendix-L3}, as $n,p \rightarrow \infty$.
Similarly, we have
\begin{eqnarray*}
&&\frac{g(k)}{\sqrt{2p^{-1}\tr(\O_n^2)}}\left|\sum\limits_{h=0}^{M}\sum\limits_{i=1}^{n-h}{\{2-\mathbb{I}(h=0)\}}a_{i,k}a_{i+h,k}\frac{n-M-2h-1}{2n}\Big(\sum\limits_{\ell_1=0}^{\infty}b_{\ell_1}b_{\ell_1+M+h+1}\Big)\tr(\bms)\right|  \rightarrow 0.
\end{eqnarray*}
Thus, as $n, p\rightarrow \infty$,
\begin{eqnarray*}
\E\left\{\frac{1}{2n}\sum_{t=1}^{n-M-2h-1}(\bepsilon_t-\bepsilon_{t+M+h+1})^\top (\bepsilon_{t+h}-\bepsilon_{t+M+2h+1})\right\} = \tr\{\bGam(h)\} + o(\omega).
\end{eqnarray*}

It remains to investigate
\begin{align}
\label{L8-6-0}
&\frac{1}{4n^2}\sum_{i=1}^{n-M-2h-1}\sum_{j=1}^{n-M-2h-1}\cov\bigg[(\bepsilon_{i+h_1}-\bepsilon_{i+M+2h_1+1})^{\T}(\bepsilon_i-\bepsilon_{i+M+h_1+1}), \nonumber\\
&(\bepsilon_{j+h_2}-\bepsilon_{j+M+2h_2+1})^{\T}(\bepsilon_j-\bepsilon_{j+M+h_2+1})\bigg],
\end{align}
for any $i, j, h_1, h_2$. Since
\begin{small}
\begin{align*}
&(\bepsilon_{i+h_1}-\bepsilon_{i+M+2h_1+1})^{\T}(\bepsilon_i-\bepsilon_{i+M+h_1+1})(\bepsilon_{j+h_2}-\bepsilon_{j+M+2h_2+1})^{\T}(\bepsilon_j-\bepsilon_{j+M+h_2+1})\nonumber\\
=&\Big(\bms^{1/2}\sum\limits_{\ell=0}^{\infty}b_{\ell}\bZ_{i+h_1-\ell} - \bms^{1/2}\sum\limits_{\ell=0}^{\infty}b_{\ell}\bZ_{i+M+2h_1+1-\ell}\Big)^{\T}\nonumber\Big(\bms^{1/2}\sum\limits_{\ell=0}^{\infty}b_{\ell}\bZ_{i-\ell} - \bms^{1/2}\sum\limits_{\ell=0}^{\infty}b_{\ell}\bZ_{i+M+h_1+1-\ell}\Big)\nonumber\\
&\times\Big(\bms^{1/2}\sum\limits_{\ell=0}^{\infty}b_{\ell}\bZ_{j+h_2-\ell} - \bms^{1/2}\sum\limits_{\ell=0}^{\infty}b_{\ell}\bZ_{j+M+2h_2+1-\ell}\Big)^{\T}\Big(\bms^{1/2}\sum\limits_{\ell=0}^{\infty}b_{\ell}\bZ_{j-\ell} - \bms^{1/2}\sum\limits_{\ell=0}^{\infty}b_{\ell}\bZ_{j+M+h_2+1-\ell}\Big),\nonumber
\end{align*}
\end{small}
we have
\begin{small}
\begin{eqnarray}
\label{L8-6-1}
&&\cov\left[(\bepsilon_{i+h_1}-\bepsilon_{i+M+2h_1+1})^{\T}(\bepsilon_i-\bepsilon_{i+M+h_1+1}), (\bepsilon_{j+h_2}-\bepsilon_{j+M+2h_2+1})^{\T}(\bepsilon_j-\bepsilon_{j+M+h_2+1})\right]\nonumber\\
&=&\Big(\sum\limits_{k_1\leq i+h_1,k_2\leq i,  k_3\leq j+h_2, k_4\leq j}b_{i+h_1-k_1}b_{i-k_2}b_{j+h_2-k_3}b_{j-k_4} \nonumber\\
&&-\sum\limits_{k_1\leq i+h_1,k_2\leq i, k_3\leq j+h_2, k_4\leq j+M+h_2+1}b_{i+h_1-k_1}b_{i-k_2}b_{j+h_2-k_3}b_{j+M+h_2+1-k_4} \nonumber\\
&&-\sum\limits_{k_1\leq i+h_1,k_2\leq i, k_3\leq j+M+2h_2+1, k_4\leq j}b_{i+h_1-k_1}b_{i-k_2}b_{j+M+2h_2+1-k_3}b_{j-k_4} \nonumber\\
&&+\sum\limits_{k_1\leq i+h_1,k_2\leq i, k_3\leq j+M+2h_2+1, k_4\leq j+M+h_2+1}b_{i+h_1-k_1}b_{i-k_2}b_{j+M+2h_2+1-k_3}b_{j+M+h_2+1-k_4} \nonumber\\
&&+\sum\limits_{k_1\leq i+h_1,k_2\leq i+M+h_1+1, k_3\leq j+h_2, k_4\leq j}b_{i+h_1-k_1}b_{i+M+h_1+1-k_2}b_{j+h_2-k_3}b_{j-k_4} \nonumber\\
&&-\sum\limits_{k_1\leq i+h_1,k_2\leq i+M+h_1+1, k_3\leq j+h_2, k_4\leq j+M+h_2+1}b_{i+h_1-k_1}b_{i+M+h_1+1-k_2}b_{j+h_2-k_3}b_{j+M+h_2+1-k_4} \nonumber\\
&&-\sum\limits_{k_1\leq i+h_1,k_2\leq i+M+h_1+1, k_3\leq j+M+2h_2+1, k_4\leq j}b_{i+h_1-k_1}b_{i+M+h_1+1-k_2}b_{j+M+2h_2+1-k_3}b_{j-k_4} \nonumber\\
&&+\sum\limits_{k_1\leq i+h_1,k_2\leq i+M+h_1+1, k_3\leq j+M+2h_2+1, k_4\leq j+M+h_2+1}b_{i+h_1-k_1}b_{i+M+h_1+1-k_2}b_{j+M+2h_2+1-k_3}b_{j+M+h_2+1-k_4} \nonumber\\
&&+\sum\limits_{k_1\leq i+M+2h_1+1,k_2\leq i, k_3\leq j+h_2, k_4\leq j}b_{i+M+2h_1+1-k_1}b_{i-k_2}b_{j+h_2-k_3}b_{j-k_4} \nonumber\\
&&-\sum\limits_{k_1\leq i+M+2h_1+1,k_2\leq i, k_3\leq j+h_2, k_4\leq j+M+h_2+1}b_{i+M+2h_1+1-k_1}b_{i-k_2}b_{j+h_2-k_3}b_{j+M+h_2+1-k_4} \nonumber\\
&&-\sum\limits_{k_1\leq i+M+2h_1+1,k_2\leq i, k_3\leq j+M+2h_2+1-k_3, k_4\leq j}b_{i+M+2h_1+1-k_1}b_{i-k_2}b_{j+M+2h_2+1-k_3}b_{j-k_4} \nonumber\\
&&+\sum\limits_{k_1\leq i+M+2h_1+1,k_2\leq i, k_3\leq j+M+2h_2+1, k_4\leq j+M+h_2+1}b_{i+M+2h_1+1-k_1}b_{i-k_2}b_{j+M+2h_2+1-k_3}b_{j+M+h_2+1-k_4} \nonumber\\
&&+\sum\limits_{k_1\leq i+M+2h_1+1,k_2\leq i+M+h_1+1, k_3\leq j+h_2, k_4\leq j}b_{i+M+2h_1+1-k_1}b_{i+M+h_1+1-k_2}b_{j+h_2-k_3}b_{j-k_4} \nonumber\\
&&-\sum\limits_{k_1\leq i+M+2h_1+1,k_2\leq i+M+h_1+1, k_3\leq j+h_2, k_4\leq j+M+h_2+1}b_{i+M+2h_1+1-k_1}b_{i+M+h_1+1-k_2}b_{j+h_2-k_3}b_{j+M+h_2+1-k_4} \nonumber\\
&&-\sum\limits_{k_1\leq i+M+2h_1+1,k_2\leq i+M+h_1+1, k_3\leq j+M+2h_2+1, k_4\leq j}b_{i+M+2h_1+1-k_1}b_{i+M+h_1+1-k_2}b_{j+M+2h_2+1-k_3}b_{j-k_4} \nonumber\\
&&+\sum\limits_{k_1\leq i+M+2h_1+1,k_2\leq i+M+h_1+1, k_3\leq j+M+2h_2+1, k_4\leq j+M+h_2+1}\!\!\!\!\!\!\!\!\!\!\!\!\!\!\!\!\!\!\!\!\!\!\!\!\!\!\!\!\!\!b_{i+M+2h_1+1-k_1}b_{i+M+h_1+1-k_2}b_{j+M+2h_2+1-k_3}b_{j+M+h_2+1-k_4}\Big) \nonumber\\
&&\times\E\left[\left\{\bZ_{k_1}^{\T}\bms \bZ_{k_2} - \mathbb{I}(k_1=k_2)\tr(\bms)\right\}\left\{\bZ_{k_3}^{\T}\bms \bZ_{k_4} - \mathbb{I}(k_3=k_4)\tr(\bms)\right\}\right].
\label{L8-6-2}
\end{eqnarray}
\end{small}

Denote $\rho_{k_1,k_2,k_3,k_4} := \E\left[\{\bZ_{k_1}^{\T}\bms \bZ_{k_2} - \mathbb{I}(k_1=k_2)\tr(\bms)\}\{\bZ_{k_3}^{\T}\bms \bZ_{k_4} - \mathbb{I}(k_3=k_4)\tr(\bms)\}\right]$, we have
\begin{eqnarray*}
\rho_{k_1,k_2,k_3,k_4}
\begin{cases}
= \tr(\bms^2), & k_1 = k_3 = r, \, k_2 = k_4 = s, \, r \neq s, \\
= \tr(\bms^2),, & k_1 = k_4 = r, \, k_2 = k_3 = s, \, r \neq s, \\
\leq \tau_1 \tr^2(\bms), & k_1 = k_2 = k_3 = k_4 = r, \\
= 0, & \text{otherwise}.
\end{cases}
\end{eqnarray*}
We only need to calculate the coefficients of $\rho_{k_1,k_2,k_3,k_4}$ in \eqref{L8-6-0}, denoted as $C_{k_1,k_2,k_3,k_4}$. We will consider the first term in \eqref{L8-6-2}, the analysis of the others are similar.
$$
\sum\limits_{i=1}^{n-M-2h_1-1}\sum\limits_{j=1}^{n-M-2h_2-1}|C_{r,s,r,s}| \leq \!\!\! \sum\limits_{i=1}^{n-M-2h_1-1}\sum\limits_{j=1}^{n-M-2h_2-1}\!\!\!\!\sum\limits_
{r\leq (i+h_1)\wedge (j+h_2), s \leq i\wedge  j}\!\!\!\!|b_{i+h_1-r}b_{i-s}b_{j+h_2-r}b_{j-s}|
$$
We rewrite $k_1 = i+h_1-r, k_2 = i-s, k_3 = j+h_2-r, k_4 = j-s$. Fix $(i, k_1, k_2, k_3, k_4)$, $(r, s, j)$ will be uniquely determined or not exist.
Thus,
$$
\sum\limits_{i=1}^{n-M-2h_1-1}\sum\limits_{j=1}^{n-M-2h_2-1}|C_{r,s,r,s}| \leq n\Big(\sum\limits_{\ell=0}^{\infty}|b_{\ell}|\Big)^4.
$$
Similarly,
$$
\sum\limits_{i=1}^{n-M-2h_1-1}\sum\limits_{j=1}^{n-M-2h_2-1}|C_{r,s,s,r}| \leq n\Big(\sum\limits_{\ell=0}^{\infty}|b_{\ell}|\Big)^4,
\quad
\sum\limits_{i=1}^{n-M-2h_1-1}\sum\limits_{j=1}^{n-M-2h_2-1}|C_{r,r,r,r}| \leq \Big(\sum\limits_{\ell=0}^{\infty}|b_{\ell}|\Big)^4.
$$
By $\tr^2(\bms) \lesssim p \tr(\bms^2)$ and $p=o(n^{7/4})$, we have
\begin{align*}
&\var\left(g(k)\sum\limits_{h=0}^{M}\sum\limits_{i=1}^{n-h}{\{2-\ind{h=0}\}}a_{i,k}a_{i+h,k}\left\{\frac{1}{2n}\!\!\sum_{t=1}^{n-M-2h-1}\!\!\!\!(\bepsilon_t-\bepsilon_{t+M+h+1})^\top (\bepsilon_{t+h}-\bepsilon_{t+M+2h+1})\right\}\right)\\
=& g^2(k)\sum\limits_{h_1=0}^{M}\sum\limits_{h_2=0}^{M}\sum\limits_{i_1=1}^{n-h_1}\sum\limits_{i_2=1}^{n-h_2}{\{2-\mathbb{I}(h_1=0)\}}{\{2-\mathbb{I}(h_2=0)\}}a_{i_1,k}a_{i_1+h_1,k}a_{i_2,k}a_{i_2+h_2,k}\times \eqref{L8-6-0}\\
=&O\bigg\{ \frac{M^2(n+p)}{n^{2}p}\tr(\bms^2)\bigg\} = o(\omega^2).
\end{align*}
\end{proof}

\begin{lemma}
\label{Th2-L2}
Under Assumptions~\ref{ass:C1}--\ref{ass:C3} and $H_0$, if $M=\lceil(n\wedge p)^{1/8}\rceil$ and $p=o(n^{7/4})$, for any $t \in[0,1]$, we have
$
\hat{\omega}/\omega\stackrel{p}{\rightarrow} 1
$ as $(n,p)\rightarrow\infty$.
\end{lemma}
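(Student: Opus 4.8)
The plan is to show that the weighted combination of the component estimators defining $\hat\omega^2$ in \eqref{omega-hat} concentrates around the correct deterministic target. Writing $w_{00}=1$, $w_{h0}=w_{0k}=2$ and $w_{hk}=4$ for $h,k\ge 1$, we have $\hat\omega^2=\tfrac{2}{p}\sum_{0\le h,k\le M}w_{hk}\,\widehat{\tr\{\bGam(h)\bGam(k)\}}$, and since $\bGam(h)=c_h\bms$ the natural target is $\tfrac{2}{p}S_M^2\tr(\bms^2)$ with $S_M:=c_0+2\sum_{h=1}^M c_h$. Because $\sum_h|c_h|<\infty$ we have $S_M\to s^2$ as $M\to\infty$, and $\bOme=s^2\bms$ by Assumption~\ref{ass:C1}-(ii), so $\tfrac{2}{p}S_M^2\tr(\bms^2)\to\lim_{p\to\infty}2\tr(\bOme^2)/p=\omega^2$; by Assumption~\ref{ass:C3}, $\tr(\bms^2)\asymp p$, so $\omega$ is bounded away from $0$. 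It therefore suffices to prove
\[
\sum_{0\le h,k\le M}w_{hk}\Big(\widehat{\tr\{\bGam(h)\bGam(k)\}}-c_hc_k\tr(\bms^2)\Big)=o_p\big(\tr(\bms^2)\big),
\]
which I would split into a bias part and a fluctuation part.

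For the bias, under $H_0$ the mean is constant, so $\bX_f-\bX_{f+M+h+1}=\bepsilon_f-\bepsilon_{f+M+h+1}$ and hence $\breve{\bX}_{f,h,g,k}=(\bepsilon_f-\bepsilon_{f+M+h+1})^{\T}(\bepsilon_g-\bepsilon_{g+M+k+1})$. In $\widehat{\tr\{\bGam(h)\bGam(k)\}}$ the indices attached to $t$ lie in $[t,t+2M+1]$ and those attached to $s$ in $[s,s+2M+1]$; since $s\ge t+\lfloor n/2\rfloor$ and $M=\lceil(n\wedge p)^{1/8}\rceil$, these two blocks are separated by more than $M$, and their cross-covariances are of order $c_{\lfloor n/2\rfloor}\tr(\bms)$, which is negligible by Assumption~\ref{ass:C1}-(ii) and Lemma~\ref{Appendix-L3} (exactly as in Lemmas~\ref{Th1-L3}--\ref{Th1-L4}). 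Consequently the relevant expectation factorizes: with $u_1=\bepsilon_t-\bepsilon_{t+M+h+1}$, $u_2=\bepsilon_{t+h}-\bepsilon_{t+M+2h+1}$, $v_1=\bepsilon_s-\bepsilon_{s+M+k+1}$, $v_2=\bepsilon_{s+k}-\bepsilon_{s+M+2k+1}$, one gets $\E(\breve{\bX}_{t,h,s,k}\breve{\bX}_{t+h,h,s+k,k})=\tr\{\cov(u_1,u_2)\cov(v_1,v_2)\}$ up to a negligible remainder, and a direct computation of the four lagged autocovariances gives $\cov(u_1,u_2)=(2c_h-c_{M+1}-c_{M+2h+1})\bms$ and $\cov(v_1,v_2)=(2c_k-c_{M+1}-c_{M+2k+1})\bms$. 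Summing over the admissible pairs $(t,s)$ and dividing by the normalizing constant $4(n-k-\tfrac32\lfloor n/2\rfloor-\tfrac M2)(\lfloor n/2\rfloor-M-2k-1)$ then yields $\E\,\widehat{\tr\{\bGam(h)\bGam(k)\}}=c_hc_k\tr(\bms^2)+R_{hk}$, where $R_{hk}$ gathers tail contributions of order $(|c_h|+|c_k|)\max\{|c_{M+1}|,|c_{M+2h+1}|,|c_{M+2k+1}|\}\tr(\bms^2)$ together with an $O(n^{-1})$ relative error from the pair count and the parity of $n$. Using $\max_{m\ge M+1}|c_m|\le\big(\sum_\ell|b_\ell|\big)\sum_{m\ge M+1}|b_m|=o(M^{-4})$ and $\sum_h|c_h|<\infty$, I would conclude $\sum_{h,k}w_{hk}|R_{hk}|\lesssim\big\{M\big(\sum_h|c_h|\big)o(M^{-4})+n^{-1}S_M^2\big\}\tr(\bms^2)=o(\tr(\bms^2))$.

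For the fluctuation part I would bound $\var\big(\sum_{h,k}w_{hk}\widehat{\tr\{\bGam(h)\bGam(k)\}}\big)$. Each $\widehat{\tr\{\bGam(h)\bGam(k)\}}$ is, up to its $\asymp n^2$ normalization, a sum over $O(n^2)$ admissible pairs $(t,s)$ of the quartic forms $Y_{t,s}:=\breve{\bX}_{t,h,s,k}\breve{\bX}_{t+h,h,s+k,k}$; because $s\ge t+\lfloor n/2\rfloor$, a $t$-block of one term can never meet an $s$-block of another, so $\cov(Y_{t_1,s_1},Y_{t_2,s_2})\neq0$ forces $|t_1-t_2|\le 3M$ or $|s_1-s_2|\le 3M$. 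When both hold (at most $O(n^2M^2)$ pairs) one bounds $|\cov|\le\{\var(Y_{t_1,s_1})\var(Y_{t_2,s_2})\}^{1/2}=O(\tr^2(\bms^2))$, when only one holds the covariance is $O(\tr(\bms^2))$, and in either case the fourth-moment bound $\E Z_{ij}^4<\infty$ (Lemma~\ref{Appendix-L2}) is what controls these; altogether $\var(\widehat{\tr\{\bGam(h)\bGam(k)\}})=O\{M^2\tr^2(\bms^2)/n^2\}$. By the Cauchy--Schwarz bound $\var\big(\sum_{h,k}w_{hk}\widehat{\tr\{\bGam(h)\bGam(k)\}}\big)\le\big(\sum_{h,k}w_{hk}\var(\widehat{\tr\{\bGam(h)\bGam(k)\}})^{1/2}\big)^2=O\{M^6\tr^2(\bms^2)/n^2\}$, which is $o(\tr^2(\bms^2))$ since $M^6\le(n\wedge p)^{3/4}\le n^{3/4}$. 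Chebyshev's inequality then gives $\sum_{h,k}w_{hk}\widehat{\tr\{\bGam(h)\bGam(k)\}}=S_M^2\tr(\bms^2)\{1+o_p(1)\}$, whence $\hat\omega^2=\tfrac{2}{p}S_M^2\tr(\bms^2)\{1+o_p(1)\}=\omega^2\{1+o_p(1)\}$, i.e.\ $\hat\omega/\omega\stackrel{p}{\rightarrow}1$. The main obstacle is the bias step: one must make the block-separation and $M$-dependent approximation quantitative (reusing the machinery of Lemmas~\ref{Th1-L3}--\ref{Th1-L6}) and keep the tail autocovariances $c_{M+1},c_{M+2h+1},c_{M+2k+1}$ summable against the $M^2$ weights $w_{hk}$.
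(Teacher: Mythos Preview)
Your proposal is correct but takes a genuinely different route from the paper. The paper does \emph{not} pass to an $M$-dependent truncation; instead it expands each $\breve{\bepsilon}$ directly through the infinite MA representation $\bepsilon_i=\bms^{1/2}\sum_{\ell\ge 0}b_\ell\bZ_{i-\ell}$ and works combinatorially with the resulting quartic forms $\bZ_{k_1}^\top\bms\bZ_{k_2}\bZ_{k_3}^\top\bms\bZ_{k_4}$. These are split according to how many of the four time indices $k_1,\ldots,k_4$ coincide: the ``high-order'' piece (three or four coincide) is shown to satisfy $\E|\tilde S_{h_1,h_2,H}|=O(p^2n^{-2})$, hence negligible relative to $\tr(\bms^2)\asymp p$; the ``low-order'' piece is shown to have mean $c_{h_1}c_{h_2}\tr(\bms^2)+o(pn^{-3})$ and variance $o(pn^{-2}+n^{-1})$, both computed by direct index bookkeeping. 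In contrast, you exploit the built-in $\lfloor n/2\rfloor$ separation between the $t$-block and the $s$-block in the design of $\widehat{\tr\{\bGam(h)\bGam(k)\}}$ to factorize the expectation into a product of two-point covariances, and you bound the variance by counting pairs $((t_1,s_1),(t_2,s_2))$ according to whether their $t$-blocks and/or $s$-blocks overlap. Your route is more structural and explains \emph{why} the estimator is built with that $\lfloor n/2\rfloor$ gap; the paper's route is heavier but avoids the preliminary $M$-dependent approximation (and the error control you flag as the ``main obstacle'').

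One small correction to your variance accounting: in the ``only one of $|t_1-t_2|\le 3M$, $|s_1-s_2|\le 3M$ holds'' case there are $O(n^3M)$ pairs, not $O(n^2M^2)$, so this contributes $O(M\tr(\bms^2)/n)$ to the variance of a single $\widehat{\tr\{\bGam(h)\bGam(k)\}}$ rather than being dominated by your stated $O(M^2\tr^2(\bms^2)/n^2)$; after the Cauchy--Schwarz aggregation over $(h,k)$ this gives an extra $O(M^5\tr(\bms^2)/n)$ term in the total variance, but since $M^5\le (n\wedge p)^{5/8}$ and $\tr(\bms^2)\asymp p$ this is still $o(\tr^2(\bms^2))$ and your conclusion is unaffected.
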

\begin{proof}
Recall
\begin{align*}
\hat{\omega}=\bigg\{\frac{2}{p}\bigg(\widehat{\tr\{\bGam(0)\bGam(0)\}}+2\sum\limits_{h=1}^M \widehat{\tr\{\bGam(h)\bGam(0)\}}+2\sum\limits_{k=1}^M \widehat{\tr\{\bGam(0)\bGam(k)\}}
+4\sum\limits_{h,k=1}^M \widehat{\tr\{\bGam(h)\bGam(k)\}}\bigg)\bigg\}^{1/2},
\end{align*}
where
\begin{align*}
\widehat{\tr\{\bGam(h)\bGam(k)\}}=
\frac{\sum\limits_{t=1}^{[n/2]-M-2k-1}\sum\limits_{s=t+[n/2]}^{n-M-2k-1}
\breve{\bX}_{t,h,s,k}\breve{\bX}_{t+h,h,s+k,k}}{4(n-k-3/2[n/2]-M/2)([n/2]-M-2k-1)},
\end{align*}
with $\breve{\bX}_{f,h,g,k} = (\bX_{f}-\bX_{f+M+h+1})^{\T}(\bX_{g}-\bX_{g+M+k+1})$. Define $\breve{\bepsilon}_{f,h,g,k}=(\bepsilon_{f}-\bepsilon_{f+M+h+1})^{\T}(\bepsilon_{g}-\bepsilon_{g+M+k+1})$.
We decompose $\widehat{\tr\{\bGam(h_1)\bGam(h_2)\}} = \tilde{S}_{h_1,h_2} + R_{h_1,h_2}$,
where
\begin{align*}
&\tilde{S}_{h_1,h_2}=\sum\limits_{i_1=1}^{[n/2]-M-2h_2-1}\sum\limits_{i_2=h_1+[n/2]}^{n-M-2h_2-1}\frac{
\breve{\bepsilon}_{i_1,h_1,i_2,h_2}\breve{\bepsilon}_{i_1+h_1,h_1,i_2+h_2,h_2}}{4(n-h_2-3/2[n/2]-M/2)([n/2]-M-2h_2-1)},
\end{align*}
and $R_{h_1,h_2}:=\widehat{\tr\{\bGam(h_1)\bGam(h_2)\}}-\tilde{S}_{h_1,h_2}$. Following the similar analysis in Lemma~\ref{Th2-L1}, it is naturally to prove that $R_{h_1,h_2}=o_p(1)$, and the details are omitted. Now, we turn to investigate $\tilde{S}_{h_1,h_2}:=\tilde{S}_{h_1,h_2,H}+\tilde{S}_{h_1,h_2,L}$, where $\tilde{S}_{h_1,h_2,H}$ and $\tilde{S}_{h_1,h_2,L}$ represent the terms constructed by the high order and low order of $Z_{ij}$.
$\tilde{S}_{h_1,h_2} $ is the sum of 16 terms. Here, we only consider the first term here, the others can be analyzed similarly.
The first term in $\breve{\bepsilon}_{i_1,h_1,i_2,h_2}\breve{\bepsilon}_{i_1+h_1,h_1,i_2+h_2,h_2}$ is
\begin{align}
\label{L7-8-1}
\sum\limits_{k_1\leq i_1,k_2\leq i_2,  k_3\leq i_1+h_1, k_4\leq i_2+ h_2}b_{i_1-k_1}b_{i_2-k_2}b_{i_1+h_1-k_3}b_{i_2+h_2-k_4}\bZ_{k_1}^{\T}\bms \bZ_{k_2}\bZ_{k_3}^{\T}\bms \bZ_{k_4}.
\end{align}
The high order term of Equation~\eqref{L7-8-1} is
\begin{align*}
&\sum\limits_{j_1,j_2=1}^{p}\sigma_{j_1j_1}\sigma_{j_2j_2}\bigg\{
\bigg(
\sum\limits_{\substack{r\leq i_1\wedge i_2 \wedge (i_1+h_1) \\ k \leq i_2+h_2}}b_{i_1-r}b_{i_2-r}b_{i_1+h_1-r}b_{i_2+h_2-k} \\
&+\sum\limits_{\substack{r\leq i_1\wedge i_2 \wedge (i_2+h_2) \\ k \leq i_1+h_1}}b_{i_1-r}b_{i_2-r}b_{i_1+h_1-k}b_{i_2+h_2-r}\\
&+ \sum\limits_{\substack{r\leq i_1\wedge (i_1+h_1) \wedge (i_2 +h_2) \\ k \leq i_2}}b_{i_1-r}b_{i_2-k}b_{i_1+h_1-r}b_{i_2+h_2-r}
\bigg)Z_{rj_1}^3Z_{kj_2}\\
&+\sum\limits_{\substack{r\leq i_1\wedge i_2\wedge (i_1+h_1) \wedge (i_2 +h_2)}}b_{i_1-r}b_{i_2-r}b_{i_1+h_1-r}b_{i_2+h_2-r}Z_{rj_1}^4 \bigg\},
\end{align*}
where $\sigma_{j_1j_2}$ is $(j_1,j_2)$th components of $\bms$.
By Assumption~\ref{ass:C1}-(i), we have $\E|\tilde{S}_{h_1,h_2,H}| = O(p^2n^{-2})$.
Furthermore,
we have
\begin{eqnarray*}
\frac{\E\bigg|\tilde{S}_{0,0,H}+2\sum\limits_{h_1=1}^M \tilde{S}_{h_1,0,H}+2\sum\limits_{h_2=1}^M \tilde{S}_{0,h_2,H}
+4\sum\limits_{h_1=1}^M\sum\limits_{h_2=1}^M \tilde{S}_{h_1,h_2,H}\bigg|}{\bigg(c_0^2+4\sum\limits_{r=1}^Mc_0c_r+4\sum\limits_{r=1}^M\sum\limits_{s=1}^Mc_rc_s\bigg)\tr(\bms^2)}
=O(pn^{-2})=o(1).
\end{eqnarray*}
Besides that, the expectation of the low order term of Equation~\eqref{L7-8-1} is
\begin{eqnarray*}
&&\sum\limits_{j_1,j_2=1}^{p}\sigma_{j_1j_2}^2\bigg(
\sum\limits_{\substack{r\leq i_1 \wedge (i_1+h_1) \\ k \leq  i_2\wedge (i_2+h_2)}}b_{i_1-r}b_{i_2-k}b_{i_1+h_1-r}b_{i_2+h_2-k} +
\sum\limits_{\substack{r\leq i_1 \wedge (i_2+h_2) \\ k \leq i_2 \wedge (i_1+h_1)}}b_{i_1-r}b_{i_2-k}b_{i_1+h_1-k}b_{i_2+h_2-r}\Big)\nonumber\\
&&+\sum\limits_{j_1,j_2=1}^{p}\sigma_{j_1j_1}\sigma_{j_2j_2}
\sum\limits_{\substack{r\leq i_1\wedge i_2  \\ k \leq (i_1+h_1) \wedge (i_2 +h_2)}}b_{i_1-r}b_{i_2-r}b_{i_1+h_1-k}b_{i_2+h_2-k}\nonumber\\
&&+\Big(\sum\limits_{j_1\neq j_2=1}^{p}\sigma_{j_1j_1}\sigma_{j_2j_2}+2\sum\limits_{j_1\neq j_2=1}^{p}\sigma_{j_1j_2}^2\Big)
\sum\limits_{\substack{r\leq i_1\wedge i_2\wedge (i_1+h_1) \wedge (i_2 +h_2) }}b_{i_1-r}b_{i_2-r}b_{i_1+h_1-r}b_{i_2+h_2-r} \Big\}.
\end{eqnarray*}
By Assumption~\ref{ass:C1}-(ii) again,
\begin{eqnarray*}
&&4(n-h_2-3/2[n/2]-M/2)([n/2]-M-2h_2-1)\E\left\{\tilde{S}_{h_1,h_2,L} -a_{h_1}a_{h_2}\tr(\bms^2)\right\}\\
&=&\sum\limits_{i_1=1}^{[n/2]-M-2h_2-1}\sum\limits_{i_2=t+[n/2]}^{n-M-2h_2-1}\bigg\{
c_{i_2+h_2-i_1}c_{i_1+h_1-i_2}\tr(\bms^2) + c_{i_1-i_2}c_{i_2-i_1+h_2-h_1}\tr^2(\bms)\\
&&+\bigg(\sum\limits_{j_1\neq j_2=1}^{p}\sigma_{j_1j_1}\sigma_{j_2j_2}+2\sum\limits_{j_1\neq j_2=1}^{p}\sigma_{j_1j_2}^2\bigg)
\sum\limits_{\substack{r\leq i_1\wedge i_2\wedge (i_1+h_1) \wedge (i_2 +h_2) }}b_{i_1-r}b_{i_2-r}b_{i_1+h_1-r}b_{i_2+h_2-r}
\bigg\} \\
&=&o(p^2n^{-1}).
\end{eqnarray*}
Hence,
$\E\{(\tilde{S}_{h_1,h_2,L} -c_{h_1}c_{h_2}\tr(\bms^2))/c_{h_1}c_{h_2}\tr(\bms^2)\} = o(pn^{-3}) = o(1)$. Similarly, we can prove that
$\var\{(\tilde{S}_{h_1,h_2,L} -c_{h_1}c_{h_2}\tr(\bms^2))/c_{h_1}c_{h_2}\tr(\bms^2)\} = o(pn^{-2} + n^{-1}) = o(1)$. Thus,
\begin{eqnarray*}
\frac{\tilde{S}_{0,0,L}+2\sum\limits_{h_1=1}^M \tilde{S}_{h_1,0,L}+2\sum\limits_{h_2=1}^M \tilde{S}_{0,h_2,L}
+4\sum\limits_{h_1=1}^M\sum\limits_{h_2=1}^M \tilde{S}_{h_1,h_2,L}}{\bigg(c_0^2+4\sum\limits_{r=1}^Mc_0c_r+4\sum\limits_{r=1}^M\sum\limits_{s=1}^Mc_rc_s\bigg)\tr(\bms^2)}
=1+o_p(1).
\end{eqnarray*}
Above all, we get $\hat{\omega}/\omega\stackrel{p}{\rightarrow} 1$.
\end{proof}

\subsection{Proof of Theorem \ref{Tms-1}}
\begin{proof}
By the continuous mapping theorem and Theorem \ref{Th1}--\ref{Th2}, since $h(x) = \sup_{t}x(t)$ is continuous in the space $D[0,1]$ when $x(t)\in D[0,1]$, we have
 \begin{align*}
  S_{n,p}/\hat{\omega}\stackrel{d}{\rightarrow}  \max_{t\in [0,1]} V(t).
  \end{align*}
\end{proof}

\subsection{Proof of Proposition \ref{prop:Tms-alter}}

By Theorem~\ref{Th2}--\ref{Tms-1}, we have
\begin{equation*}
    \begin{aligned}
        S_{n,p}/\hat{\omega}=&\max_{1\leq k\leq n}\left\{W(k)-\hat{\mu}_{M,k}\right\}/\hat{\omega}\\
        \geq &\left\{W(\tau)-\hat{\mu}_{M,\tau}\right\}/\hat{\omega}\\
        \geq&\frac{\tau^2(n-\tau)^2}{n^3 \sqrt{p}}(\|\bdelta\|-\|\frac{1}{\tau}\sum_{i=1}^\tau \bepsilon_i-\frac{1}{n-\tau}\sum_{i=\tau+1}^n \bepsilon_i\|)^2/\hat{\omega}-\hat{\mu}_{M,\tau}/\hat{\omega}\\
        \asymp&p^{-1/2}n \|\bdelta\|^2+O_p(1).
    \end{aligned}
\end{equation*}
Thus, the test based on $S_{n,p}$ is consistent provided that $\|\bdelta\|^2\gtrsim p^{1/2}n^{-1}$.

\subsection{Proof of Theorem \ref{indnull}}

\begin{proof}
Let $t = k/n$, recall
\begin{align*}
\tilde{\bzeta}_{i,k/n}=\frac{1}{w_n-M}\sum\limits_{\ell=(i-1)w_n+1}^{iw_n-M}\bzeta_{\ell,k/n},
\end{align*}
with $\bzeta_{i,k/n}=a_{i,k}\bepsilon_{i}$. According to the proof of Theorem 1, we have
\begin{eqnarray*}
\omega^{-1}\{W(k)-\mu_{M,k}\}  = \mathcal{W}(\tilde{\bzeta}_{1,k/n},\dots,\tilde{\bzeta}_{k_n,k/n}) +o_p(1),
\end{eqnarray*}
where $\{\tilde{\bzeta}_{i,k/n}\}_{i=1}^n$ are independent and
\begin{align*}
\mathcal{W}(\tilde{\bzeta}_{1,k/n},\dots,\tilde{\bzeta}_{k_n,k/n})= \frac{2g(k)}{\sqrt{2p^{-1}\tr(\O_n^2)}}\sum\limits_{1\leq i < j\leq k_n}\bigg(\sum\limits_{\ell=(i-1)w_n+1}^{iw_n-M}\bzeta_{\ell,k}\bigg)^\top\bigg(\sum\limits_{\ell=(i-1)w_n+1}^{iw_n-M}\bzeta_{\ell,k}\bigg).
\end{align*}
Define $Z^{(NG)}_{0,j} = \max\limits_{k=1,\dots, n-1}|C^{(NG)}_{0,j}(k)|$ by substituting $\bepsilon_i$ for $\bX_i$ in $|C_{0,j}(k)|$, it suffices to show that $\max\limits_{k=1,\dots, n-1}\mathcal{W}(\tilde{\bzeta}_{1,k/n},\dots,\tilde{\bzeta}_{k_n,k/n})$ and $\max\limits_{j=1,\ldots,p}Z^{(NG)}_{0,j}$ are asymptotically independent.

 \subsubsection{Gaussian sequence.}\label{subsubsec:Gau}
For any fixed $x, y\in \mathbb{R}$, we define
\begin{eqnarray*}
A_{p} &:=& A_{p}(x) = \Big\{\max_{k=1,\dots, n-1}\mathcal{W}(\tilde{\eeta}_{1,k/n},\dots,\tilde{\eeta}_{k_n,k/n}) \leq   x\Big\},\\
B_{j}  &:=& B_{j}(y) = \Big\{Z^{(G)}_{0, j}>\mu_p\{\exp(-y)\}\Big\}.
\end{eqnarray*}
As $\pr(\cup_{j=1}^pB_{j})\rightarrow 1-\exp\{\exp(-y)\}$ and $\pr(A_p)\rightarrow F_{V}(x) $, we only need to show that
\begin{align*}
&\pr\bigg(\max_{k=1,\dots, n-1}\mathcal{W}(\tilde{\eeta}_{1,k/n},\dots,\tilde{\eeta}_{k_n,k/n}) \leq  x, \max_{j=1,\ldots,p}\max_{k=1,\dots, n-1}|C^{(G)}_{0,j}(k)| \leq \mu_p\{\exp(-y)\} \bigg)\\
&\rightarrow  F_{V}(x)\cdot \exp\{\exp(-y)\},
\end{align*}
or, equivalently,
\begin{align*}
\pr\bigg(\mathop{\bigcup}\limits_{j=1}^p A_pB_{j}\bigg) \rightarrow  F_{V}(x) \cdot [1-\exp\{\exp(-y)\}].
\end{align*}
For each $d\geq 1$, we define
\begin{align*}
\zeta_{AB}(p, d) &= \sum\limits_{1\leq j_1<\dots<j_d\leq p}\left|\pr(A_p B_{j_1}\dots B_{j_d})-\pr(A_p)\cdot\pr(B_{j_1} \dots B_{j_d})\right|,\\
H(p, d) &=  \sum\limits_{1\leq j_1<\dots<j_d\leq p}\pr(B_{j_1} \dots B_{j_d}).
\end{align*}
By the {\textrm {inclusion-exclusion principle}}, we observe that, for any integer $k\geq 1$,
\begin{align*}
\pr\bigg(\mathop{\bigcup}\limits_{j=1}^p A_pB_{j}\bigg)\leq \sum\limits_{1\leq j_1\leq p}\pr(A_pB_{j_1}) - \sum\limits_{1\leq j_1<j_2 \leq p}\pr(A_pB_{j_1}B_{j_2})+\dots\\
+\sum\limits_{1\leq j_1<\dots<j_{2k+1} \leq p}\pr(A_pB_{j_1} \dots B_{j_{2k+1}}),
\end{align*}
and
\begin{align*}
\pr\bigg(\mathop{\bigcup}\limits_{j=1}^p B_{j}\bigg)\geq \sum\limits_{1\leq j_1\leq p}\pr(B_{j_1}) - \sum\limits_{1\leq j_1<j_2 \leq p}\pr(B_{j_1}B_{j_2})+\dots\\
-\sum\limits_{1\leq j_1<\dots<j_{2k} \leq p}\pr(B_{j_1} \dots B_{j_{2k}}).
\end{align*}
Then,
\begin{eqnarray*}
\pr\bigg(\mathop{\bigcup}\limits_{j=1}^p A_pB_{j}\bigg) &\leq& \pr(A_p)\bigg\{\sum\limits_{1\leq j_1\leq p}\pr(B_{j_1}) - \sum\limits_{1\leq j_1<j_2 \leq p}\pr(B_{j_1}B_{j_2})+\dots\\
&&-\sum\limits_{1\leq j_1<\dots<j_{2k} \leq p}\pr(B_{j_1} \dots B_{j_{2k}})\bigg\} + \sum\limits_{d=1}^{2k}\zeta_{AB}(p, d) + H(p, 2k+1)\\
&\leq& \pr(A_p)\pr\bigg(\mathop{\bigcup}\limits_{j=1}^p B_{j}\bigg) + \sum\limits_{d=1}^{2k}\zeta_{AB}(p, d) + H(p, 2k+1).
\end{eqnarray*}
By using similar arguments, we have
\begin{eqnarray*}
\pr\bigg(\mathop{\bigcup}\limits_{j=1}^p A_pB_{j}\bigg) &\geq& \pr(A_p)\bigg\{\sum\limits_{1\leq j_1\leq p}\pr(B_{j_1}) - \sum\limits_{1\leq j_1<j_2 \leq p}\pr(B_{j_1}B_{j_2})+\dots\\
&&-\sum\limits_{1\leq j_1<\dots<j_{2k} \leq p}\pr(B_{j_1} \dots B_{j_{2k}})\bigg\} + \sum\limits_{d=1}^{2k}\zeta_{AB}(p, d)\\
&\geq& \pr(A_p)\pr\bigg(\mathop{\bigcup}\limits_{j=1}^p B_{j}\bigg) + \sum\limits_{d=1}^{2k}\zeta_{AB}(p, d).
\end{eqnarray*}
According to Equations (S.5) and (S.6) in \cite{wang2023JRSSB} and the following Lemma \ref{Th5-L1}, by fixing $k$ and letting $p\rightarrow \infty$, we obtain
\begin{eqnarray*}
\mathop{\lim\sup}\limits_{p\rightarrow \infty}\pr\bigg(\mathop{\bigcup}\limits_{j=1}^p A_pB_{j}\bigg) &\leq& F_{W}(x) \cdot [1-\exp\{\exp(-y)\}],\\
\mathop{\lim\inf}\limits_{p\rightarrow \infty}\pr\bigg(\mathop{\bigcup}\limits_{j=1}^p A_pB_{j}\bigg) &\geq& F_{W}(x) \cdot [1-\exp\{\exp(-y)\}].
\end{eqnarray*}
\end{proof}

\begin{lemma}
\label{Th5-L1}
Under the conditions in Theorem 5, then for each $d\geq 1$, as $p\rightarrow \infty$, we have $\zeta_{AB}(p, d)\rightarrow 0$.
\end{lemma}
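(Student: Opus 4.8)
The lemma is the crux of the asymptotic-independence argument, and the plan is to reduce it to a single uniform conditional-probability estimate; throughout, $A_p$, $B_j$, $\zeta_{AB}$ and $H$ are as in the proof of Theorem~\ref{indnull} in Section~\ref{subsubsec:Gau}. Writing $B_J:=\bigcap_{j\in J}B_j$ for $J\subset\{1,\dots,p\}$ with $|J|=d$, and using $\pr(A_pB_J)=\pr(B_J)\,\pr(A_p\,|\,B_J)$, one has
$$\zeta_{AB}(p,d)=\sum_{|J|=d}\pr(B_J)\,\bigl|\pr(A_p\,|\,B_J)-\pr(A_p)\bigr|\ \le\ H(p,d)\cdot\max_{|J|=d}\bigl|\pr(A_p\,|\,B_J)-\pr(A_p)\bigr|.$$
Since $H(p,d)=\sum_{|J|=d}\pr(B_J)$ is bounded (indeed convergent, cf.\ Equations~(S.5)--(S.6) in \cite{wang2023JRSSB}) and $\pr(A_p)\to F_V(x)$, it is enough to prove $\max_{|J|=d}|\pr(A_p\,|\,B_J)-F_V(x)|\to0$ as $p\to\infty$.

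Fixing $J$ with $|J|=d$, I would condition on $\mathcal{F}_J:=\sigma(\xi_{ij}:1\le i\le n,\ j\in J)$, where $\xi_{ij}$ is the $j$-th component of $\bxi_i$. As $B_J\in\mathcal{F}_J$, $\pr(A_p\,|\,B_J)=\E\{\pr(A_p\,|\,\mathcal{F}_J)\,|\,B_J\}$, so it suffices to control $\pr(A_p\,|\,\mathcal{F}_J)$ uniformly over $\mathcal{F}_J$-realizations in $B_J$. For each $t=k/n$, the quadratic form defining $A_p$ splits coordinatewise, $\mathcal{W}(\tilde{\eeta}_{1,t},\dots,\tilde{\eeta}_{k_n,t})=\mathcal{W}^{J}_t+\mathcal{W}^{J^c}_t$, with no cross term, because $\tilde{\eeta}_{i,t}^{\T}\tilde{\eeta}_{j,t}=\sum_{m=1}^p(\tilde{\eeta}_{i,t})_m(\tilde{\eeta}_{j,t})_m$ decomposes over the single index $m\in J\cup J^c$; here $\mathcal{W}^{J}_t$ and $\mathcal{W}^{J^c}_t$ denote the same functional applied to the subvectors of the $\tilde{\eeta}$'s indexed by $J$ and $J^c$. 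The piece $\mathcal{W}^{J}_t$ is $\mathcal{F}_J$-measurable, and a direct computation gives $\max_t|\mathcal{W}^{J}_t|\lesssim p^{-1/2}\sum_{j\in J}\max_k|C^{(G)}_{0,j}(k)|^2$; after trimming $B_J$ to the subevent of conditional probability tending to one (using the Gumbel tail of each $\max_k|C^{(G)}_{0,j}(k)|$) on which every $\max_k|C^{(G)}_{0,j}(k)|\lesssim\sqrt{\log p}$, this is $O(d\,p^{-1/2}\log p)=o(1)$, uniformly.

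Next I would identify the conditional law of $\sup_t\mathcal{W}^{J^c}_t$ given $\mathcal{F}_J$. The joint covariance of $\{\xi_{ij}\}$ has the Kronecker form $C\otimes\bms$, with temporal factor $C=(c_{|i-i'|,M})_{1\le i,i'\le n}$, so conditioning on the $J$-block preserves this structure: $\{\bxi_i^{J^c}\}$ is conditionally Gaussian with time-local mean $\bms_{J^cJ}\bms_{JJ}^{-1}\bxi_i^{J}$ and covariance $C\otimes\bms_{J^c|J}$, where $\bms_{J^c|J}:=\bms_{J^cJ^c}-\bms_{J^cJ}\bms_{JJ}^{-1}\bms_{JJ^c}$. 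Because $d$ is fixed and Assumption~2 in \cite{wang2023JRSSB} together with Assumption~\ref{ass:C3} bound the cross-correlations and the spectrum of $\bms$, the Schur complement obeys $\tr\{(\bms_{J^c|J})^2\}=\tr(\bms^2)\{1+o(1)\}$ (and likewise for its long-run analogue), and the mean-shift part of $\mathcal{W}^{J^c}_t$, controlled exactly as $\mathcal{W}^{J}_t$ with a fixed $d\times d$ matrix inserted, is $o_p(1)$ on the trimmed $B_J$. Hence, conditionally on $\mathcal{F}_J$ and restricted to the trimmed $B_J$, $\mathcal{W}^{J^c}_{\cdot}$ equals, up to a uniform $o_p(1)$, a centered conditionally-Gaussian quadratic form in a sequence independent of $\mathcal{F}_J$ with autocovariances $c_{h,M}\bms_{J^c|J}$; the martingale central limit theorem and tightness arguments of Lemma~\ref{Th1-L1} then apply verbatim with $\bms_{J^c|J}$ in place of $\bms$, giving $\sup_t\mathcal{W}^{J^c}_t\stackrel{d}{\rightarrow}\sup_t V(t)$, the same limit as the unconditional $\sup_t\mathcal{W}(\tilde{\eeta}_{1,t},\dots,\tilde{\eeta}_{k_n,t})$. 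Consequently $\pr(A_p\,|\,\mathcal{F}_J)=\pr\{\sup_t\mathcal{W}^{J^c}_t\le x-o_p(1)\,|\,\mathcal{F}_J\}\to F_V(x)$ uniformly over realizations in the trimmed $B_J$, which yields $\max_{|J|=d}|\pr(A_p\,|\,B_J)-F_V(x)|\to0$ and hence the lemma.

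The hard part will be the uniformity over the rare conditioning event $B_J$: one has to show that forcing the $d$ selected coordinate processes to have atypically large CUSUM maxima inflates neither $\mathcal{W}^{J}_t$ nor the conditional-mean correction to $\mathcal{W}^{J^c}_t$ beyond a negligible amount after normalization --- this is where the Gumbel-tail trimming is used, while the rate conditions of Theorem~\ref{indnull} secure the two marginal limits --- and to carry the conditional functional central limit theorem through uniformly in $t$ together with tightness. The latter is the genuinely new ingredient compared with the normal-versus-Gumbel independence proof of \cite{wang2023JRSSB}, where the analogue of $A_p$ involves a single asymptotically normal statistic rather than the supremum of a Gaussian process.
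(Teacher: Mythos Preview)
Your decomposition is the same one the paper uses: write the $J^c$-coordinates as an independent piece $\boldsymbol{U}_i=\bxi_{i,(2)}-\bms_{J^cJ}\bms_{JJ}^{-1}\bxi_{i,(1)}$ plus the $\mathcal{F}_J$-measurable regression $\boldsymbol{V}_i$, so that $\mathcal{W}^{J^c}$ equals a quadratic in $\boldsymbol{U}$ plus correction terms $\Theta_1,\Theta_2,\Theta_3$ built from $\bxi_{(1)}$ and the cross $\boldsymbol{U}^\top\boldsymbol{V}$. Where you diverge is in how you dispose of these corrections and close the argument.

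The paper does not condition on $B_J$, does not trim, and does not re-run any functional CLT. Instead it proves the uniform unconditional tail bound $\pr(|\Theta_i|\ge\epsilon\omega)\le p^{-c}$ with $c=c_p\to\infty$ (their display~\eqref{Th5-eq1}), using a Chernoff inequality together with a LIL-type control on the martingale $W_{1,(1)}(k)$. With $S_{n,p,1}$ (the $\boldsymbol{U}^\top\boldsymbol{U}$ part) exactly independent of $\mathcal{F}_J$, this yields the sandwich
\[
\pr\bigl(A_p(x-2\epsilon)\bigr)\pr(B_J)-2p^{-c}\ \le\ \pr(A_p(x)B_J)\ \le\ \pr\bigl(A_p(x+2\epsilon)\bigr)\pr(B_J)+2p^{-c},
\]
so that $\zeta_{AB}(p,d)\le\{\pr(A_p(x+2\epsilon))-\pr(A_p(x-2\epsilon))\}H(p,d)+2\binom{p}{d}p^{-c}$, and both pieces vanish. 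No separate limit for $S_{n,p,1}$ is needed; it is squeezed between $\pr(A_p(x\pm2\epsilon))$, which already converge by Theorem~\ref{Tms-1}.

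Your route---factor out $H(p,d)$ and show $\max_{|J|=d}|\pr(A_p\,|\,B_J)-F_V(x)|\to0$ via a conditional FCLT---can be made to work, but it trades the single exponential estimate~\eqref{Th5-eq1} for two more delicate ingredients: (i) a trimming of $B_J$ whose complement has conditional probability $o(1)$ \emph{uniformly} over the $\binom{p}{d}$ choices of $J$, and (ii) a re-derivation of Lemma~\ref{Th1-L1} with $\bms_{J^c|J}$ in place of $\bms$ whose convergence is again uniform in $J$. Neither is automatic from the statements you cite (Lemma~\ref{Th1-L1} gives no rate), and the $\boldsymbol{U}^\top\boldsymbol{V}$ cross term is conditionally random, not $\mathcal{F}_J$-measurable, so ``controlled exactly as $\mathcal{W}^J_t$'' is not quite right---you still need a tail bound for it given $\mathcal{F}_J$. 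The paper's approach sidesteps all of this: the additive $p^{-c}$ error per $J$ is strong enough to survive the $\binom{p}{d}$-fold union, and the independence of $S_{n,p,1}$ from $\mathcal{F}_J$ is exact, so no conditional limit theorem is needed.
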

\begin{proof}
For $i=1,\dots,n$, we define $\bxi_{i,(1)}=(\bxi_{i,j_1},\dots,\bxi_{i,j_d})^\top$ and $\bxi_{i,(2)}=(\bxi_{i,j_{d+1}},\dots,\bxi_{i,j_p})^\top$. For $k, l \in\{1,2\}$, define $\bGam_{kl}(0):=\cov(\bxi_{i,(k)}, \bxi_{i,(l)}) = c_h\bms_{kl}$. Since $\boldsymbol{U}_i := (U_{i,1},\dots,U_{i,p-d})^\top= \bxi_{i,(2)} - \bGam_{21}(0)\bGam_{11}(0)^{-1}\bxi_{i,(1)}$ and $\boldsymbol V_i  = (V_{i,1},\dots,V_{i,p-d})^\top:= \bGam_{21}(0)\bGam_{11}(0)^{-1}\bxi_{i,(1)}$ are independent, we partition $\bxi_{i,(2)}$ into $\bxi_{i,(2)}=\boldsymbol U_i+\boldsymbol V_i$. Recall that
\begin{align*}
\mathcal{W}(\tilde{\eeta}_{1,k/n},\dots,\tilde{\eeta}_{k_n,k/n})
=& 2\sum\limits_{1\leq i < j\leq k_n}g(k)(w_n-M)^2\tilde{\eeta}_{i,k/n}^\top\tilde{\eeta}_{j,k/n}/\sqrt{2p^{-1}\tr(\O_n^2)}.
\end{align*}

Then,
\begin{align*}
& \max_{k=1,\dots, n-1}\bigg\{g(k)(w_n-M)^2\sum\limits_{1\leq i < j\leq k_n}\tilde{\eeta}_{i,k/n}^\top\tilde{\eeta}_{j,k/n}\bigg\}\\
=& \max_{k=1,\dots, n-1}\bigg\{g(k)\sum\limits_{1\leq i < j\leq k_n}\bigg(\sum\limits_{\ell=(i-1)w_n+1}^{iw_n-M}a_{\ell,k}\bxi_{\ell}\bigg)^\top\bigg(\sum\limits_{\ell=(j-1)w_n+1}^{jw_n-M}a_{\ell,k}\bxi_{\ell}\bigg)\bigg\}\\
=& \max_{k=1,\dots, n-1}\bigg\{g(k)\sum\limits_{1\leq i < j\leq k_n}\sum\limits_{\ell_1=(i-1)w_n+1}^{iw_n-M}\sum\limits_{\ell_2=(j-1)w_n+1}^{jw_n-M}\!\!\!\!a_{\ell_1,k}a_{\ell_2,k}\bxi_{\ell_1}^\top\bxi_{\ell_2}\bigg\}\\
=& \max_{k=1,\dots, n-1}\bigg\{g(k)\sum\limits_{1\leq i < j\leq k_n}\sum\limits_{\ell_1=(i-1)w_n+1}^{iw_n-M}\sum\limits_{\ell_2=(j-1)w_n+1}^{jw_n-M}\!\!\!\!a_{\ell_1,k}a_{\ell_2,k}(\boldsymbol U_{\ell_1}^\top \boldsymbol U_{\ell_2} + \bxi_{\ell_1,(1)}^\top \bxi_{\ell_2,(1)}\\
&~~~~~~~~~~~~~~~~~~~~~~~~~~~~~~~~~~~~~~~~~~~~~~~~~~~~~~~~~~~~~~~~~~~~+2\boldsymbol U_{\ell_1}^\top \boldsymbol V_{\ell_2}+ \boldsymbol V_{\ell_1}^\top \boldsymbol V_{\ell_2})\bigg\}\\
:=&~S_{n,p,1}  + \Theta,
\end{align*}
where $S_{n,p,1}:=\max\limits_{k=1,\dots, n-1}\bigg\{g(k)\sum\limits_{1\leq i < j\leq k_n}\sum\limits_{\ell_1=(i-1)w_n+1}^{iw_n-M}\sum\limits_{\ell_2=(j-1)w_n+1}^{jw_n-M}a_{\ell_1,k}a_{\ell_2,k}\boldsymbol U_{\ell_1}^\top \boldsymbol U_{\ell_2}\bigg\}$ and
\begin{eqnarray*}
\Theta &\leq& \max_{k=1,\dots, n-1}\bigg\{g(k)\sum\limits_{1\leq i < j\leq k_n}\sum\limits_{\ell_1=(i-1)w_n+1}^{iw_n-M}\sum\limits_{\ell_2=(j-1)w_n+1}^{jw_n-M}a_{\ell_1,k}a_{\ell_2,k}\bxi_{\ell_1,(1)}^\top \bxi_{\ell_2,(1)}\bigg\}\\
&&+\max_{k=1,\dots, n-1}\bigg\{g(k)\sum\limits_{1\leq i < j\leq k_n}\sum\limits_{\ell_1=(i-1)w_n+1}^{iw_n-M}\sum\limits_{\ell_2=(j-1)w_n+1}^{jw_n-M}a_{\ell_1,k}a_{\ell_2,k}\boldsymbol V_{\ell_1}^\top \boldsymbol V_{\ell_2}\bigg\}\\
&&+2\max_{k=1,\dots, n-1}\bigg\{g(k)\sum\limits_{1\leq i < j\leq k_n}\sum\limits_{\ell_1=(i-1)w_n+1}^{iw_n-M}\sum\limits_{\ell_2=(j-1)w_n+1}^{jw_n-M}a_{\ell_1,k}a_{\ell_2,k}\boldsymbol U_{\ell_1}^\top \boldsymbol V_{\ell_2}\bigg\}\\
&:=&\Theta_1 +   \Theta_2 +  \Theta_3.
\end{eqnarray*}

We claim that, for any $\epsilon>0$ and sufficiently large $p$, there exists a sequence of positive constants $c:=c_p>0$ with  $c_p\rightarrow \infty$ such that
\begin{eqnarray}
\label{Th5-eq1}
\pr(|\Theta_i|\geq \epsilon\omega)\leq p^{-c}, \quad i=1,2,3,
\end{eqnarray}
which will be shown in later. Consequently, $\pr(|\Theta|\geq \epsilon\omega)\leq p^{-c}$ for some $c\rightarrow \infty$ and sufficiently large $p$.

By claim \eqref{Th5-eq1}, due to the independence of random vectors $\bU_i$ and  $\bxi_{i,(1)}$, we have
\begin{eqnarray*}
\pr(A_p(x)B_{j_1}\cdots B_{j_d}) &\leq& \pr(A_p(x)B_{j_1}\cdots B_{j_d}, |\Theta|<\epsilon\omega) + p^{-c}\nonumber\\
&\leq& \pr(S_{n,p,1}/\omega \leq  x + \epsilon, B_{j_1}\cdots B_{j_d}) + p^{-c}\nonumber\\
&=& \pr(S_{n,p,1}/\omega \leq  x + \epsilon)\cdot\pr(B_{j_1}\cdots B_{j_d}) + p^{-c}\nonumber\\
&\leq& \pr(A_p(x + 2\epsilon))\cdot\pr(B_{j_1}\cdots B_{j_d}) + 2p^{-c}.
\end{eqnarray*}
On the other hand, we have
\begin{eqnarray*}
\pr(S_{n,p,1}/\omega \leq x - \epsilon)\cdot \pr(B_{j_1}\cdots B_{j_d}) &=& \pr(S_{n,p,1}/\omega \leq x - \epsilon, B_{j_1}\cdots B_{j_d})\nonumber\\
&\leq& \pr(S_{n,p,1}/\omega \leq x - \epsilon, B_{j_1}\cdots B_{j_d}, |\Theta|<\epsilon \omega) + p^{-c}\nonumber\\
&\leq& \pr(A_p(x)B_{j_1}\cdots B_{j_d}) + p^{-c}.
\end{eqnarray*}
and
\begin{eqnarray*}
\pr(A_p(x-2\epsilon)) &\leq& \pr(A_p(x-2\epsilon), |\Theta|<\epsilon\omega) + p^{-c}\\
&\leq& \pr(S_{n,p,1}/\omega \leq x-\epsilon) + p^{-c}.
\end{eqnarray*}
Then,
\begin{eqnarray*}
\pr(A_p(x)B_{j_1}\cdots B_{j_d}) &\geq& \pr(A_p(x-2\epsilon)) \cdot \pr(B_{j_1}\cdots B_{j_d}) - 2p^{-c}.
\end{eqnarray*}
Since $\pr(A_p(x))$ is increasing in $x$, we have
\begin{eqnarray*}
&&\left|\pr(A_p(x)B_{j_1}\cdots B_{j_d}) - \pr(A_p(x))\pr(B_{j_1}\cdots B_{j_d})\right| \\
&\leq& \left\{\pr(A_p(x + 2\epsilon))-\pr(A_p(x - 2\epsilon))\right\}\pr(B _{j_1}\cdots B_{j_d})  + 2p^{-c}.
\end{eqnarray*}
Since for any $d\geq 1$, $H(p,d)\rightarrow (d!)^{-1}\exp(-d x/2)$ as $p\rightarrow \infty$ and $\pr(A_p(x + 2\epsilon))-\pr(A_p(x - 2\epsilon)) \rightarrow 0$ as $p\rightarrow \infty$ and $\epsilon\rightarrow 0$, we have
\begin{eqnarray*}
\zeta_{AB}(p,d)\leq  \{\pr(A_p(x + 2\epsilon))-\pr(A_p(x - 2\epsilon))\}\cdot H(p,d) + 2\binom{p}{d}p^{-c}\rightarrow 0,
\end{eqnarray*}
as $p\rightarrow \infty$ for any $d\geq 1$.

{\noindent\textbf{Proof of Equation \eqref{Th5-eq1}.}}

Let
 \begin{eqnarray*}
 W_{1,(1)}(k)& := & \frac{k^2(n-k)^2}{n^3\sqrt{p}}\sum\limits_{1\leq i < j\leq k_n}\bigg(\sum\limits_{\ell=(i-1)w_n+1}^{iw_n-M}a_{\ell,k}\bxi_{\ell,(1)}\bigg)^\top\bigg(\sum\limits_{\ell=(j-1)w_n+1}^{jw_n-M}a_{\ell,k}\bxi_{\ell,(1)}\bigg).
 \end{eqnarray*}
 Define $\O^{}_{w_n-M, M, 1} := \bGam_{11}(0)+2\sum\nolimits_{h=1}^{n}(1-\frac{|h|}{n})\bGam_{11}(h)$. Similar to the proof of Proposition \ref{L6.1-P2}, we define
 \begin{eqnarray*}
\varsigma(k)^2 :=  \frac{k^2(n-k)^2}{n^4}\cdot\frac{\tr(\O^{2}_{w_n-M, M,1})}{p},
 \end{eqnarray*}
 then we have $\var\{W_{1,(1)}(k)\}  =  \varsigma(k)^2\{1+o(1)\}$.
Then,
\begin{eqnarray*}
&&\pr(|\Theta_1|\geq \epsilon\omega) \\
&=& \pr\bigg(\bigg|\max_{k=1,\dots, n-1}\bigg\{g(k)\sum\limits_{1\leq i < j\leq k_n}\bigg(\sum\limits_{\ell=(i-1)w_n+1}^{iw_n-M}a_{\ell,k}\bxi_{\ell,(1)}\bigg)^\top\bigg(\sum\limits_{\ell=(j-1)w_n+1}^{jw_n-M}a_{\ell,k}\bxi_{\ell,(1)}\bigg)\bigg\}\bigg| \geq \epsilon \sqrt{2\tr(\O_n^2)/p}\bigg) \\
&\leq& n\pr\bigg(\bigg|\frac{n^{2}}{k(n-k)}\cdot\frac{W_{1,(1)}(k)}{\sqrt{\tr(\O^{2}_{w_n-M, M,1})/p}}\bigg| \geq \epsilon \frac{n^{2}}{k(n-k)}\cdot \frac{\sqrt{2\tr(\O_n^2)/p}}{\sqrt{\tr(\O^{2}_{w_n-M, M,1})/p}}\bigg) \\
& = & n\pr\bigg(\bigg|\frac{W_{1,(1)}(k)}{\sqrt{\tr(\bms^{2}_{11})/p}}\bigg| \geq C_\epsilon \sqrt{\frac{\tr(\bms^2)}{\tr(\bms_{11}^2)}}\bigg) \\
&\leq & n\exp\bigg\{-C_\epsilon \sqrt{\frac{\tr(\bms^2)}{\tr(\bms_{11}^2)}}\bigg\} \cdot
\E\bigg(\exp\bigg|\frac{W_{1,(1)}(k)}{\sqrt{\tr(\bms^{2}_{11})/p}}\bigg|\bigg)\\
&\leq & n\exp\bigg\{-C_\epsilon \sqrt{\frac{\tr(\bms^2)}{\tr(\bms_{11}^2)}}\bigg\} \cdot \log n \rightarrow 0,
\end{eqnarray*}
with Assumption~\ref{ass:C1} and condition $\log n = o(p^{1/4})$, where the last inequality follows with the fact
\begin{eqnarray*}
p^{1/2}\tr^{-1/2}(\bms^{2}_{11})W_{1,(1)}(k) / \log \log n \rightarrow 0, \quad a.s.
 \end{eqnarray*}
by the law of the iterated logarithm of zero-mean square integrable martingale (see Theorem 4.8 in \cite{hall2014martingale}).
Similarly, by $\log n = o(p^{1/4})$, we have
\begin{eqnarray*}
\pr(|\Theta_2|\geq \epsilon\omega) \leq  n\exp\bigg\{-C_\epsilon \sqrt{\frac{\tr(\bms^2)}{\tr(\bms_{22.1}\bms_{21}\bms_{11}^{-1}\bms_{12})}}\bigg\} \cdot \log n\rightarrow 0,
\end{eqnarray*}
and
\begin{eqnarray*}
\pr(|\Theta_3|\geq \epsilon\omega) \leq  n\exp\bigg\{-C_\epsilon \sqrt{\frac{\tr(\bms^2)}{\tr[(\bms_{21}\bms_{11}^{-1}\bms_{12})^2]}}\bigg\} \cdot \log n\rightarrow 0.
\end{eqnarray*}

\subsubsection{non-Gaussian sequence}
For $\boldsymbol z = (z_1,\dots, z_q)^\top \in\mathbb{R}^q$, we consider a smooth approximation of the maximum function $\boldsymbol z \rightarrow \max\limits_{1\leq i \leq q} z_i$, namely,
\begin{eqnarray*}
F_{\beta}(\boldsymbol z) = \beta^{-1}\log \bigg\{\sum\limits_{j=1}^q \exp(\beta z_j)\bigg\},
\end{eqnarray*}
where $\beta >0$ is the smoothing parameter that controls the level of approximation. An elementary calculation shows that $\forall \boldsymbol z \in \mathbb{R}^q$,
\begin{eqnarray*}
0 \leq F_{\beta}(\boldsymbol z) - \max\limits_{1\leq j\leq q}z_j\leq \beta^{-1}\log q,
\end{eqnarray*}
see \cite{MR3161448}.

For any $i = 1,\dots,n$ and $k=1,\dots, n-1$, we define
\begin{eqnarray*}
\breve{a}_{i,k} := \left\{
\begin{array}{lr}
1 - \dfrac{k}{n}, &  i\leq k,\\
-\dfrac{k}{n}, &   i> k,
\end{array}
\right.
\end{eqnarray*}
and reformulate
\begin{eqnarray*}
C_{\gamma,j}(k) &:=& \bigg\{\frac{k}{n}\bigg(1-\frac{k}{n}\bigg)\bigg\}^{-\gamma} \frac{1}{\sqrt{n}}\bigg(\sum\limits_{i=1}^{k}X_{ij}-\frac{k}{n}\sum\limits_{i=1}^{n}X_{ij}\bigg)/\hat{\sigma}_j\\
&= & \bigg\{\frac{k}{n}\bigg(1-\frac{k}{n}\bigg)\bigg\}^{-\gamma} \frac{1}{\sqrt{n}}\sum\limits_{i=1}^{n}\breve{a}_{i,k}X_{ij}/\hat{\sigma}_j,
\end{eqnarray*}
and
\begin{eqnarray*}
W(k)-\mu_{M,k}&=&\frac{2}{n\sqrt{p}}\sum\limits_{1\leq i <  j\leq n}\breve{a}_{i,k}\breve{a}_{j,k}\bX_{i}^\top\bX_{j} + o_p(\omega).
\end{eqnarray*}
with the analysis in Theorem 1.

Define
\begin{equation*}
    \begin{aligned}
    U(\bepsilon_1,\ldots,\bepsilon_n) &:= \beta^{-1}\log \bigg\{\sum\limits_{k=1}^{n-1} \exp\bigg(2\beta n^{-1}\sum\limits_{1\leq i <  j\leq n}\breve{a}_{i,k}\breve{a}_{j,k}\bepsilon_{i}^\top\bepsilon_{j}/\sqrt{2\tr(\O_n^2)}\bigg)\bigg\},\\
    V(\bepsilon_1,\ldots,\bepsilon_{n}) &:= \beta^{-1}\log \bigg\{\sum\limits_{j=1}^p\sum\limits_{k=1}^{n-1} \exp\bigg(\beta n^{-1/2} \sum\limits_{i=1}^{n}\breve{a}_{i,k}\epsilon_{ij}\bigg)\bigg\},
    \end{aligned}
\end{equation*}
by letting $\beta = n^{1/8}\log(np)$, it is suffices to prove
\begin{eqnarray*}
\pr(U(\bepsilon_1,\dots,\bepsilon_{n})\leq x, V(\bepsilon_1,\dots,\bepsilon_{n})\leq u_p\{\exp(-y)\}
\rightarrow F_V(x)\cdot \exp\{-\exp(-y)\}.
\end{eqnarray*}
According to the results of Section~\ref{subsubsec:Gau}, it remains to show that
$(U(\bepsilon_1,\dots,\bepsilon_{n}), V(\bepsilon_1,\dots,\bepsilon_{n}))$ has the same limiting distribution as $(U(\bxi_1,\dots,\bxi_{n}), V(\bxi_1,\dots,\bxi_{n}))$.
Similar to the proof of Lemma \ref{Th1-L2}, it suffices to show that
\begin{eqnarray*}
\E\{f(U(\bepsilon_1,\dots,\bepsilon_{n}), V(\bepsilon_1,\dots,\bepsilon_{n}))\} - \E\{f(U(\bxi_1,\dots,\bxi_{n}), V(\bxi_1,\dots,\bxi_{n}))\} \rightarrow 0,
\end{eqnarray*}
for any $f\in \mathcal{C}_b^3(\mathbb{R})$ as $(n,p)\rightarrow \infty$.

Furthermore, for $d=1,\dots,n+1$, we define
\begin{eqnarray*}
&&U_d:=U(\bepsilon_1,\dots,\bepsilon_{d-1},\bxi_d,\dots,\bxi_{n}),\\
&&U_{d,0}:=\beta^{-1}\log\left(\sum_{k=1}^{n}\exp\left\{ \beta   n^{-1}\left(\sum_{1\leq i<j\leq d-1} \breve{a}_{i,k}\breve{a}_{j,k}\bepsilon_i^\top \bepsilon_j +\sum_{d+1\leq i<j\leq n} \breve{a}_{i,k}\breve{a}_{j,k}\bxi_i^\top \bxi_j \right.\right.\right.\\
&&\qquad\qquad\qquad\qquad\qquad\qquad\qquad\left.\left.\left.+\sum_{i=1}^{d-1}\sum_{j=d+1}^n \breve{a}_{\ell_1,k}\breve{a}_{\ell_2,k}\bepsilon_i^\top \bxi_j \right)/\sqrt{2\text{tr}(\O_n^2)}\right\}\right)\\
&&V_d := V(\bepsilon_1,\dots,\bepsilon_{d-1},\bxi_d,\dots,\bxi_{n}),\\
&&V_{d,0} := \beta^{-1}\log \bigg\{\sum\limits_{j=1}^p\sum\limits_{k=1}^{n-1} \exp\bigg(\beta n^{-1/2} \sum\limits_{i=1}^{d-1}\breve{a}_{i,k}\epsilon_{ij}+\beta n^{-1/2}\sum\limits_{i=d+1}^{n}\breve{a}_{i,k}\xi_{ij}\bigg)\bigg\}.
\end{eqnarray*}

Then,
\begin{eqnarray*}
&&|\E\{f(U(\bepsilon_1,\dots,\bepsilon_{n}), V(\bepsilon_1,\dots,\bepsilon_{n}))\} - \E\{f(U(\bxi_1,\dots,\bxi_{n}), V(\bxi_1,\dots,\bxi_{n}))\}| \\
&\leq&\sum\limits_{d=1}^n|\E\{f(U_d,V_d)\} - \E\{f(U_{d+1},V_{d+1})\}|.
\end{eqnarray*}

By Taylor’s expansion, we have
\begin{eqnarray*}
&&f(U_d,V_d) - f(U_{d,0},V_{d,0}) \\
&=&f_1(U_{d,0},V_{d,0}) (U_d-U_{d,0}) + f_2(U_{d,0},V_{d,0}) (V_d-V_{d,0}) \\
&&+\frac{1}{2}f_{11}(U_{d,0},V_{d,0}) (U_d-U_{d,0})^2+\frac{1}{2}f_{22}(U_{d,0},V_{d,0}) (V_d-V_{d,0})^2 \\
&&+ \frac{1}{2}f_{12}(U_{d,0},V_{d,0}) (U_d-U_{d,0})(V_d-V_{d,0}) + O(|U_d-U_{d,0}|^3) + O(|V_d-V_{d,0}|^3),
\end{eqnarray*}
and
\begin{eqnarray*}
&&f(U_{d+1},V_{d+1}) - f(U_{d,0},V_{d,0}) \\
&=&f_1(U_{d,0},V_{d,0}) (U_{d+1}-U_{d,0}) + f_2(U_{d,0},V_{d,0}) (V_{d+1}-V_{d,0}) \\
&&+\frac{1}{2}f_{11}(U_{d,0},V_{d,0}) (U_{d+1}-U_{d,0})^2+\frac{1}{2}f_{22}(U_{d,0},V_{d,0}) (V_{d+1}-V_{d,0})^2 \\
&&+ \frac{1}{2}f_{12}(U_{d,0},V_{d,0}) (U_{d+1}-U_{d,0})(V_{d+1}-V_{d,0}) + O(|U_{d+1}-U_{d,0}|^3) + O(|V_{d+1}-V_{d,0}|^3),
\end{eqnarray*}
where $f=f(x,y)$, $f_1=\partial f/\partial x$, $f_2=\partial f/\partial y$, $f_{11}=\partial^2 f/\partial x^2$, $f_{22}=\partial^2 f/\partial y^2$, and $f_{12}=\partial^2 f/\partial x\partial y$.
By the proof of Proposition 8.2, we have
\begin{eqnarray*}
\E\{f_1(U_{d,0},V_{d,0})(U_{d}-U_{d,0})\} &=& \E\{f_1(U_{d,0},V_{d,0})(U_{d+1}-U_{d,0})\},\\
\E\{f_{11}(U_{d,0},V_{d,0})(U_{d}-U_{d,0})^2\} &=& \E\{f_{11}(U_{d,0},V_{d,0})(U_{d+1}-U_{d,0})^2\}.
\end{eqnarray*}
Furthermore, for $l = k+(j-1)n$, we define
\begin{eqnarray*}
z_{d,0,l} &:=& n^{-1/2}\sum\limits_{i=1}^{d-1}\breve{a}_{i,k}\epsilon_{ij} + n^{-1/2}\sum\limits_{i=d+1}^{n}\breve{a}_{i,k}\xi_{ij}, \\
z_{d,l} &:=& z_{d,0,l} + n^{-1/2}\breve{a}_{d,k}\xi_{dj}, \\
z_{d+1,l} &:=& z_{d,0,l} + n^{-1/2}\breve{a}_{d,k}\epsilon_{dj},
\end{eqnarray*}
and $\z_{d,0} = (z_{d,0,1},\dots,z_{d,0,np})^\top$, $\z_{d} = (z_{d,1},\dots,z_{d,np})^\top$. By Taylor's expansion, we have
\begin{eqnarray*}
&&V_{d} - V_{d,0} \\
&=& \sum\limits_{l=1}^{np}\partial_l F_{\beta}(\z_{d,0})(z_{d,l}-z_{d,0,l}) +\frac{1}{2}\sum\limits_{l_1=1}^{np}\sum\limits_{l_2=1}^{np}\partial_{l_1}\partial_{l_2} F_{\beta}(\z_{d,0})(z_{d,l_1}-z_{d,0,l_1})(z_{d,l_2}-z_{d,0,l_2}) \\
&&+\frac{1}{6}\sum\limits_{l_1=1}^{np}\sum\limits_{l_2=1}^{np}\sum\limits_{l_3=1}^{np}\partial_{l_1}\partial_{l_2}\partial_{l_3}F_{\beta}(\z_{d,0}+\iota(\z_{d}-\z_{d,0}))(z_{d,l_1}-z_{d,0,l_1})(z_{d,l_2}-z_{d,0,l_2}) (z_{d,l_3}-z_{d,0,l_3}) ,
\end{eqnarray*}
for some $\iota\in(0,1)$. Since $\E(\bepsilon_i) = \E(\bxi_i) = 0$ under $H_0$, and $\E(\bepsilon_i\bepsilon_i^{\top}) = \E(\bxi_i\bxi_i^{\top})$, it can be easily verified that
\begin{eqnarray*}
\E\{z_{d,l}  - z_{d,0,l}\mid\mathcal{F}_d\}  = \E\{z_{d+1,l}  - z_{d,0,l}\mid\mathcal{F}_d\}, \quad \E\{(z_{d,l}  - z_{d,0,l})^2\mid\mathcal{F}_d\}  = \E\{(z_{d+1,l}  - z_{d,0,l})^2\mid\mathcal{F}_d\}.
\end{eqnarray*}
By Lemma A.2 in \cite{MR3161448}, we have
\begin{eqnarray*}
\bigg|\sum\limits_{l_1=1}^{np}\sum\limits_{l_2=1}^{np}\sum\limits_{l_3=1}^{np}\partial_{l_1}\partial_{l_2}\partial_{l_3}F_{\beta}(\z_{d,0}+\iota(\z_{d}-\z_{d,0}))\bigg| \leq C\beta^2
\end{eqnarray*}

 By the definition of $\xi_{ij}$  and $\epsilon_{ij}$, $i=1,\ldots,n,j=1,\ldots,p$ and Assumption~\ref{ass:C4}, $\xi_{ij}$ are Gaussian random variables and $\epsilon_{ij}$ are sub-Gaussian random variables. Then we have $\pr(\max\limits_{1\leq i \leq n, 1\leq j \leq p}|\epsilon_{ij}|>C\log(np)) \rightarrow 0$, and   $\pr(\max\limits_{1\leq i \leq n, 1\leq j \leq p}|\xi_{ij}|>C\log(np)) \rightarrow 0$.

Hence,
\begin{eqnarray*}
&&\bigg|\frac{1}{6}\sum\limits_{l_1=1}^{np}\sum\limits_{l_2=1}^{np}\sum\limits_{l_3=1}^{np}\partial_{l_1}\partial_{l_2}\partial_{l_3}F_{\beta}(\z_{d,0}+\iota(\z_{d}-\z_{d,0}))(z_{d,l_1}-z_{d,0,l_1})(z_{d,l_2}-z_{d,0,l_2}) (z_{d,l_3}-z_{d,0,l_3})\bigg|\\
&\leq&C\beta^2n^{-3/2}\{\log(np)\}^3
\end{eqnarray*}
holds with probability approaching one. Consequently, with probability approaching one, we have
\begin{eqnarray*}
|\E\{f_2(U_{d,0},V_{d,0})(V_{d}-V_{d,0})\} - \E\{f_2(U_{d,0},V_{d,0})(V_{d+1}-V_{d,0})\}| \leq C\beta^2n^{-3/2}\{\log(np)\}^3.
\end{eqnarray*}

Similarly, it can be verified that
\begin{eqnarray*}
|\E\{f_{22}(U_{d,0},V_{d,0})(V_{d}-V_{d,0})^2\} - \E\{f_{22}(U_{d,0},V_{d,0})(V_{d+1}-V_{d,0})^2\}| \leq C\beta^2n^{-3/2}\{\log(np)\}^3,
\end{eqnarray*}
and
\begin{eqnarray*}
&&|\E\{f_{12}(U_{d,0},V_{d,0})(U_{d}-U_{d,0})(V_{d}-V_{d,0})\} - \E\{f_{12}(U_{d,0},V_{d,0})(U_{d+1}-U_{d,0})(V_{d+1}-V_{d,0})\}|\\
&\leq& C\beta^2n^{-3/2}\{\log(np)\}^3.
\end{eqnarray*}
By Lemma A.2 in \cite{MR3161448} and the expression of $V_{d} - V_{d,0}$ , we have $\E(|V_{d} - V_{d,0}|^3) = O\{n^{-3/2}\{\log(np)\}^3\}$.

Next, we consider $W_d,W_{d+1}$ and $W_{d,0}$. Define
\begin{equation*}
    \begin{aligned}
       z^w_{d,0,k}=& n^{-1}\sum_{1\leq \ell_1<\ell_2\leq d-1}\breve{a}_{\ell_1,k}\breve{a}_{\ell_2,k}\bepsilon_{\ell_1}^\top \bepsilon_{\ell_2}/\sqrt{2\text{tr}(\O_n^2)} \\
       &\qquad+ n^{-1}\sum_{d+1\leq \ell_1<\ell_2\leq n}\breve{a}_{\ell_1,k}\breve{a}_{\ell_2,k}\bxi_{\ell_1}^\top \bxi_{\ell_2}/\sqrt{2\text{tr}(\O_n^2)} \\
       &\qquad+ n^{-1}\sum_{\ell_1=1}^{d-1}\sum_{\ell_2=d+1}^{k_n}\breve{a}_{\ell_1,k}\breve{a}_{\ell_2,k}\bepsilon_{\ell_1}^\top \bxi_{\ell_2}/\sqrt{2\text{tr}(\O_n^2)}, \\
       z^w_{d,k}=&z^w_{d,0,k}+ n^{-1}\sum_{\ell_1=1}^{d-1} \breve{a}_{\ell_1,k}\breve{a}_{d,k}\bepsilon_{\ell_1}^\top \bxi_{d}/\sqrt{2\text{tr}(\O_n^2)}\\
       &\qquad+ n^{-1}\sum_{\ell_2=d+1}^n \breve{a}_{d,k}\breve{a}_{\ell_2,k}\bxi_{d}^\top \bxi_{\ell_2}/\sqrt{2\text{tr}(\O_n^2)},\\
       z^w_{d+1,k}=&z^w_{d,0,k}+ n^{-1}\sum_{\ell_1=1}^{d-1}\breve{a}_{\ell_1,k}\breve{a}_{d,k}\bepsilon_{\ell_1}^\top \bepsilon_{d}/\sqrt{2\text{tr}(\O_n^2)}\\
       &\qquad+ n^{-1}\sum_{\ell_2=d +1}^n \breve{a}_{d,k}\breve{a}_{\ell_2,k}\bepsilon_{d}^\top \bxi_{\ell_2}/\sqrt{2\text{tr}(\O_n^2)},
    \end{aligned}
\end{equation*}
   and let  $z^w_{d,0}=(z^w_{d,0,1},\ldots,z^w_{d,0,n })^\top$ and $z^w_d=(z^w_{d,1},\ldots,z^w_{d,n})^\top$.
By Taylor's expansion, we have
\begin{equation*}
\begin{aligned}
&~~W_d-W_{d, 0}= \\
& \sum_{l=1}^{n} \partial_l F_\beta\left(\boldsymbol{z}^w_{d, 0}\right)\left( z^w_{d, l}- z^w_{d, 0, l}\right)+\frac{1}{2} \sum_{l=1}^{n} \sum_{k=1}^{n} \partial_k \partial_l F_\beta\left(\boldsymbol{z}^w_{d, 0}\right)\left( z^w_{d, l}- z^w_{d, 0, l}\right)\left( z^w_{d, k}- z^w_{d, 0, k}\right) \\
& +\frac{1}{6} \sum_{l=1}^{n} \sum_{k=1}^{n} \sum_{v=1}^{n} \partial_v \partial_k \partial_l F_\beta\left(\boldsymbol{z}^w_{d, 0}+\iota\left(\boldsymbol{z}^w_d-\boldsymbol{z}^w_{d, 0}\right)\right)\left( z^w_{d, l}- z^w_{d, 0, l}\right)\left( z^w_{d, k}-\boldsymbol z^w_{d, 0, k}\right)\left(\boldsymbol z^w_{d, v}-\boldsymbol z^w_{d, 0, v}\right),
\end{aligned}
\end{equation*}
for some $\iota \in(0,1)$. Again, due to $\E(\bepsilon_i) = \E(\bxi_i) = 0$ under $H_0$, and $\E(\bepsilon_i\bepsilon_i^{\top}) = \E(\bxi_i\bxi_i^{\top})$, we verify that
\begin{eqnarray*}
\E(z_{d,l}^w  - z_{d,0,l}^w|\mathcal{F}_d)  = \E(z_{d+1,l}^w  - z_{d,0,l}^w|\mathcal{F}_d), \quad \E\{(z_{d,l}^w  - z_{d,0,l}^w)^2|\mathcal{F}_d\}  = \E\{(z_{d+1,l}^w  - z_{d,0,l}^w)^2|\mathcal{F}_d\}.
\end{eqnarray*}

We next consider $\E\left(\max_{1\leq k\leq n}\vert z^w_{d,k}-z^w_{d,0,k}\vert\right)$ with,
\begin{align}
        z^w_{d,k}-z^w_{d,0,k}=& n^{-1}\sum_{\ell_1=1}^{d-1}\breve{a}_{\ell_1,k}\breve{a}_{d,k}\bepsilon_{\ell_1}^\top \bxi_{d}/\sqrt{2\text{tr}(\O_n^2)}\label{eq:z_k^w1}\\
       &\qquad+ n^{-1}\sum_{\ell_2=d+1}^n \breve{a}_{d,k}\breve{a}_{\ell_2,k}\bxi_{d}^\top \bxi_{\ell_2}/\sqrt{2\text{tr}(\O_n^2)},\label{eq:z_k^w2}
\end{align}

Notice that, for Equation~\eqref{eq:z_k^w1},
\begin{equation*}
    \begin{aligned}
        n^{-1}\sum_{\ell_1=1}^{d-1}\breve{a}_{\ell_1,k}\breve{a}_{d,k}\bepsilon_{\ell_1}^\top \bxi_{d}/\sqrt{2\text{tr}(\O_n^2)}
        =n^{-1}\left\{\sum_{\ell_1=1}^{d-1} \breve{a}_{\ell_1,k}\bepsilon_{\ell_1}\right\}^\top \bxi_{d}\breve{a}_{d,k}/\sqrt{2\text{tr}(\O_n^2)},
    \end{aligned}
\end{equation*}
and by Assumption~\ref{ass:C1},
\begin{equation*}
    \begin{aligned}
        \sum_{\ell_1=1}^{d-1} \breve{a}_{\ell_1,k}\bepsilon_{\ell_1}=&\sum_{\ell_1=1}^{d-1}\breve{a}_{\ell_1,k}\sum_{\ell=0}^\infty \bms^{1/2}b_{\ell}\bZ_{\ell_1-\ell}\\
        =&\bms^{1/2}\sum_{\ell=-
        \infty}^{d-1}\left(\sum_{\ell_1=1}^{d-1}\breve{a}_{\ell_1,k} b_{\ell_1-\ell}\right)\bZ_\ell\\
        :=&\bms^{1/2}\sum_{\ell=-
        \infty}^{d-1}c_{\ell,k}\bZ_\ell,
    \end{aligned}
\end{equation*}
where $c_{\ell,k}=\sum_{\ell_1=1}^{d-1}\breve{a}_{\ell_1,k} b_{\ell_1-\ell}$ and for notational convenience, we set $b_{\ell}=0$ for $\ell<0$. And we have
\begin{equation}\label{eq:H0_phi_zd}
    \begin{aligned}
        \sum_{\ell=-
        \infty}^{d-1}c_{\ell,k}^2=&\sum_{\ell=-
        \infty}^{d-1}\left(\sum_{\ell_1=1}^{d-1}\breve{a}_{\ell_1,k} b_{\ell_1-\ell}\right)^2
        \leq \sum_{\ell=-
        \infty}^{d-1} \left(\sum_{\ell_1=1}^{d-1}\vert b_{\ell_1-\ell}\vert\right)^2\\
        =&\sum_{\ell=-
        \infty}^{d-1} \sum_{\ell_1=1}^{d-1}\sum_{\ell_2=1}^{d-1}\vert b_{\ell_1-\ell}\vert\vert b_{\ell_2-\ell}\vert=\sum_{s=1}^{d-1}\sum_{t=1}^{d-1}\sum_{\ell=\max(s,t)+1}^{\min(s,t)+d-1}\vert b_s\vert \vert b_t\vert\\
        \leq &\left(d-1\right)\left(\sum_{\ell=0}^\infty \vert b_\ell\vert\right)^2.
    \end{aligned}
\end{equation}

Taking the expectation on $\{
\bZ_{\ell}\}_{d-1}$,
\begin{equation*}
    \begin{aligned}
        \phi^2_{z,d}:=& \max_{1\leq k\leq n}\E \left\{\sum_{\ell=-
        \infty}^{d-1}\left(c_{\ell,k}\bZ_\ell^\top \bms^{1/2} \bxi_{d}\right)^2\right\}\\
        =&\max_{1\leq k\leq n} \left\{\sum_{\ell=-
        \infty}^{d-1}c_{\ell,k}^2 \right\}\bxi_{d}^\top \bms \bxi_{d}\\
        \lesssim &\left(d-1\right)\bxi_{d}^\top \bms \bxi_{d},
    \end{aligned}
\end{equation*}
where the last inequality follows from the Equation~\eqref{eq:H0_phi_zd} and Assumption~\ref{ass:C1}. Similarly, we have
\begin{equation*}
    \begin{aligned}
        M_{z,d}:=&\E\left\{\max_{1\leq k\leq n,\ell\leq d-1}\left(c_{\ell,k}\bZ_\ell^\top \bms^{1/2}\bxi_{d}\right)^2\right\}\\
        \leq &\sum_{\ell\leq d-1}\E\left\{\max_{1\leq k\leq n}(c_{\ell,k}^2)\left(\bZ_\ell^\top \bms^{1/2}\bxi_{d}\right)^2\right\}\\
         \leq &\sum_{\ell\leq d-1}\E\left\{\left(\sum_{\ell_1=1}^{d-1}\vert b_{\ell_1-\ell}\vert\right)^2\left(\bZ_\ell^\top \bms^{1/2}\bxi_{d}\right)^2\right\}\\
        \leq & \left(d-1\right)\left(\sum_{\ell=0}^\infty \vert b_\ell\vert\right)^2 \E\{(\bZ_{d-1}\bms^{1/2}\bxi_{d})^2\}\\
        \lesssim &\left(d-1\right)\bxi_{d}^\top \bms \bxi_{d}.
    \end{aligned}
\end{equation*}

By Lemma~\ref{LemmaE.1central}, we have
\begin{equation*}
    \begin{aligned}
        &\E \left\{\max_{1\leq k\leq n}\left\vert\sum_{\ell_1=1}^{d-1}\breve{a}_{\ell_1,k}\breve{a}_{d,k}\bepsilon_{\ell_1}^\top \bxi_{d}\right\vert\right\}\\
        \lesssim  &\E\left\{\sqrt{(d-1)\bxi_{d}^\top\bms\bxi_{d}\log p}\right\}+\E \left\{\sqrt{(d-1)\bxi_{d}^\top\bms\bxi_{d}}\log p\right\}\\
        \leq &2n^{1/2}\log n\{\tr(\bms^2)\}^{1/2}.
    \end{aligned}
\end{equation*}

Thus, we have
\begin{equation}\label{eq:z_k^w_rate}
    \begin{aligned}
        &\E\left[\max_{1\leq k\leq n}\left\vert n^{-1}\left\{\sum_{\ell_1=1}^{d-1} \breve{a}_{\ell_1,k}\bepsilon_{\ell_1}\right\}^\top \bxi_{d}\breve{a}_{d,k}/\sqrt{2\text{tr}(\O_n^2)}\right\vert\right]\\
        \leq &n^{-1}\E\left\{\max_{1\leq k\leq n}\left\vert \sum_{\ell_1=1}^{d-1}\breve{a}_{\ell_1,k}\bepsilon_{\ell_1}^\top \bxi_{d}\right\vert\right\}/\sqrt{2\text{tr}(\O_n^2)}\\
        \lesssim & n^{-1/2} \log n.
    \end{aligned}
\end{equation}
Taking the same procedure for Equation~\eqref{eq:z_k^w2} and combing the Equation~\eqref{eq:z_k^w_rate}, we have, $\E\left(\max_{1\leq k\leq n}\vert z^w_{d,k}-z^w_{d,0,k}\vert\right)\lesssim n^{-1/2}\log n$.

Therefore,
$$
\begin{aligned}
&\left\vert\frac{1}{6} \sum_{l,k,v=1}^{n} \partial_v \partial_k \partial_l F_\beta\left(\boldsymbol{z}^w_{d, 0}+\iota\left(\boldsymbol{z}^w_d-\boldsymbol{z}^w_{d, 0}\right)\right)\left(\boldsymbol z^w_{d, l}-\boldsymbol z^w_{d, 0, l}\right)\left(\boldsymbol z^w_{d, k}-\boldsymbol z^w_{d, 0, k}\right)\left(\boldsymbol z^w_{d, v}-\boldsymbol z^w_{d, 0, v}\right)\right\vert\\
& \lesssim\beta^2 n^{-3/2}(\log n)^3,\\
\end{aligned}
$$

Similarly,
$$
\begin{aligned}
&\left\vert\frac{1}{6} \sum_{l,k,v=1}^{n}  \partial_v \partial_k \partial_l F_\beta\left(\boldsymbol{z}^w_{d+1, 0}+\iota\left(\boldsymbol{z}^w_{d+1}-\boldsymbol{z}^w_{d, 0}\right)\right)\left(\boldsymbol z^w_{d+1, l}-\boldsymbol z^w_{d, 0, l}\right)\left(\boldsymbol z^w_{d+1, k}-\boldsymbol z^w_{d, 0, k}\right)\left(\boldsymbol z^w_{d+1, v}-\boldsymbol z^w_{d, 0, v}\right)\right\vert\\
&\lesssim\beta^2 n^{-3/2}(\log n)^3,
\end{aligned}
$$
hold with probability approaching one. Consequently, we obtain,
$$
\left\vert\E\left\{f_1\left(W_{d, 0}, V_{d, 0}\right)\left(W_d-W_{d, 0}\right)\right\}-\E\left\{f_1\left(W_{d, 0}, V_{d, 0}\right)\left(W_{d+1}-W_{d, 0}\right)\right\}\right\vert \lesssim \beta^2 n^{-3/2}(\log n)^3,
$$
$$
\left\vert\E\left\{f_2\left(W_{d, 0}, V_{d, 0}\right)\left(V_d-V_{d, 0}\right)\right\}-\E\left\{f_2\left(W_{d, 0}, V_{d, 0}\right)\left(V_{d+1}-V_{d, 0}\right)\right\}\right\vert \lesssim \beta^2 n^{-3/2}(\log n)^3.
$$

Similarly, it can be verified that
$$
\left\vert\E\left\{f_{11}\left(W_{d, 0}, V_{d, 0}\right)\left(W_d-W_{d, 0}\right)^2\right\}-\E\left\{f_{22}\left(W_{d, 0}, V_{d, 0}\right)\left(W_{d+1}-W_{d, 0}\right)^2\right\}\right\vert \lesssim \beta^2 n^{-3/2}(\log n)^3,
$$
$$
\left\vert\E\left\{f_{22}\left(W_{d, 0}, V_{d, 0}\right)\left(V_d-V_{d, 0}\right)^2\right\}-\E\left\{f_{22}\left(W_{d, 0}, V_{d, 0}\right)\left(V_{d+1}-V_{d, 0}\right)^2\right\}\right\vert \lesssim \beta^2 n^{-3/2}\{\log (np)\}^3,
$$
and
\begin{equation*}
    \begin{aligned}
& \left\vert\E\left\{f_{12}\left(W_{d, 0}, V_{d, 0}\right)\left(W_d-W_{d, 0}\right)\left(V_d-V_{d, 0}\right)\right\}-\E\left\{f_{12}\left(W_{d, 0}, V_{d, 0}\right)\left(W_{d+1}-W_{d, 0}\right)\left(V_{d+1}-V_{d, 0}\right)\right\}\right\vert \\
& \quad \lesssim \beta^2 n^{-3/2}(\log n)^3.
    \end{aligned}
\end{equation*}
 Combining the above facts with the condition $p\lesssim n^{\nu}$, we have
\begin{eqnarray*}
\sum\limits_{d=1}^{n}\left\vert\E\left\{f(U_d,V_d)\right\} - \E\left\{f(U_{d+1},V_{d+1})\right\}\right\vert \lesssim \beta^2 n^{-1/2}(\log n)^3  \rightarrow 0,
\end{eqnarray*}
as $(n,p)\rightarrow\infty$.

\end{proof}

\subsection{Proof of Theorem~\ref{indalter}}
 \begin{proof}
It suffices to show the conclusion holds for Gaussian sequence $\{\bxi_i\}_{i=1}^{n}$. Without loss of generality, we assume $\bmu_0 = \mathbf{0}$. Define $s_i = \ind{i>\tau}$, we rewrite
\begin{align*}
W(k)^{(G)}&=g(k)\sum_{i,j=1}^n a_{i,k}a_{j,k}(s_i\bdelta + \bxi_i)^\top(s_j\bdelta + \bxi_j)\\
&=g(k)\sum_{i,j=1}^n a_{i,k}a_{j,k}(\bxi_i^\top\bxi_j +s_i\bdelta^\top\bxi_j + s_j\bxi_i^\top\bdelta + s_is_j\bdelta^\top\bdelta).
\end{align*}

Then,
\begin{align*}
\E\left\{W(k)^{(G)}\right\}=\mu_{M,k} + g(k)\sum_{i,j=1}^n a_{i,k}a_{j,k}s_is_j\bdelta^\top\bdelta.
\end{align*}
We only need to prove $g(k)\sum_{i,j=1}^n a_{i,k}a_{j,k}s_i\bdelta^\top\bxi_j  = o_p\{\sqrt{2p^{-1}\tr(\O_n^2)}\}$.
We have
\begin{align*}
&\var\bigg[\bigg\{g(k)\sum_{i,j=1}^n a_{i,k}a_{j,k}s_i\bdelta^\top\bxi_j\bigg\}\bigg]\\
&=\{g(k)\}^2\bigg(\sum_{i_1=1}^n a_{i_1,k}s_{i_1}\bdelta\bigg)^\top\bigg\{\sum_{j_1,j_2=1}^n a_{j_1,k}a_{j_2,k}\bGam_M({|j_2-j_1|})\bigg\}\bigg(\sum_{i_2=1}^n a_{i_2,k}s_{i_2}\bdelta\bigg)\\
&=\{g(k)\}^2\bigg(\sum_{i_1=1}^n a_{i_1,k}s_{i_1}\bigg)^2\bdelta^\top\bigg\{\sum_{h\in \mathcal{M}}\sum_{j_1=1}^{n-h} a_{j_1,k}a_{j_1+h,k}\mathit{\bGam}_M({\vert h\vert})\bigg\}\bdelta\\
&=O\left(\frac{k^4(n-k)^4}{n^6 p}\frac{n^3}{k^2(n-k)^2}\bdelta^\top\O_n\bdelta\right)=O\left(np^{-1} \bdelta^\top\O_n\bdelta\right)=o\{p^{-1}\tr(\O_n^2)\},
\end{align*}
under $H_{1;n,p}$.

Thus, according to the proof of Theorem 1, under $H_{1;n,p}$, we have
\begin{align*}
&~~~~\frac{W(k)^{(G)}-\mu_{M,k}}{\sqrt{2p^{-1}\tr(\O_n^2)}}\\
&=\mathcal{W}(\tilde{\eeta}_{1,k/n},\dots,\tilde{\eeta}_{k_n,k/n}) + \frac{g(k)\sum_{i,j=1}^n a_{i,k}a_{j,k}s_is_j\bdelta^\top\bdelta}{\sqrt{2p^{-1}\tr(\O_n^2)}} + o_p(1)\\
&:=\mathcal{W}_{\mathcal{A}}(\tilde{\eeta}_{1,k/n},\dots,\tilde{\eeta}_{k_n,k/n}) + \mathcal{W}_{\mathcal{A}^c}(\tilde{\eeta}_{1,k/n},\dots,\tilde{\eeta}_{k_n,k/n}) + \frac{g(k)\sum_{i,j=1}^n a_{i,k}a_{j,k}s_is_j\bdelta^\top\bdelta}{\sqrt{2p^{-1}\tr(\O_n^2)}} + o_p(1),
\end{align*}
where
\begin{align*}
\mathcal{W}_{\mathcal{A}}(\tilde{\eeta}_{1,k/n},\dots,\tilde{\eeta}_{k_n,k/n}) &:= \frac{2g(k)}{\sqrt{2p^{-1}\tr(\O_n^2)}}\sum\limits_{1\leq i < j\leq k_n}\sum\limits_{\ell_1=(i-1)w_n+1}^{iw_n-M}\sum\limits_{\ell_2=(j-1)w_n+1}^{jw_n-M}\sum\limits_{t\in\mathcal{A}}\eta_{\ell_1,k/n,t} \eta_{\ell_2,k/n,t},\\
\mathcal{W}_{\mathcal{A}^c}(\tilde{\eeta}_{1,k/n},\dots,\tilde{\eeta}_{k_n.k/n}) &:= \frac{2g(k)}{\sqrt{2p^{-1}\tr(\O_n^2)}}\sum\limits_{1\leq i < j\leq k_n}\sum\limits_{\ell_1=(i-1)w_n+1}^{iw_n-M}\sum\limits_{\ell_2=(j-1)w_n+1}^{jw_n-M}\sum\limits_{t\in\mathcal{A}^c}\eta_{\ell_1,k/n,t} \eta_{\ell_2,k/n,t}.
\end{align*}
According to the same arguments as in Theorem 5 in \cite{wang2023JRSSB}, we can rewrite
\begin{align*}
M_{n,p} = \max\left\{\max_{k=1,\ldots,n-1}\max_{t\in\mathcal{A}}|C_{0, t}(k)|, \max_{k=1,\ldots,n-1}\max_{t\in\mathcal{A}^c}|C_{0, t}(k)|\right\}.
\end{align*}
Define $\bxi_{i\mathcal{A}} := (\xi_{it}, t \in \mathcal{A})^\top$, $\bxi_{i\mathcal{A}^c} := (\xi_{it}, t \in \mathcal{A}^c)^\top$,
$\bms_{\mathcal{A}, \mathcal{A}}:=\cov(\bxi_{i\mathcal{A}}, \bxi_{i\mathcal{A}})$, $\bms_{\mathcal{A}^c, \mathcal{A}^c}:=\cov(\bxi_{i\mathcal{A}^c}, \bxi_{i\mathcal{A}^c})$, and $\bms_{\mathcal{A}, \mathcal{A}^c}:=\cov(\bxi_{i\mathcal{A}}, \bxi_{i\mathcal{A}^c})$. We decompose $\bxi_{i\mathcal{A}^c} =\boldsymbol U_{i\mathcal{A}^c}+ \boldsymbol V_{i\mathcal{A}^c}$, where $\boldsymbol U_{i\mathcal{A}^c} := \bxi_{i\mathcal{A}^c} - \bms_{\mathcal{A}^c, \mathcal{A}}\bms_{\mathcal{A}, \mathcal{A}}^{-1}\bxi_{i\mathcal{A}}$, and
$\boldsymbol V_{i\mathcal{A}^c} := \bms_{\mathcal{A}^c, \mathcal{A}}\bms_{\mathcal{A}, \mathcal{A}}^{-1}\bxi_{i\mathcal{A}}$, which satisfy
$\boldsymbol U_{i\mathcal{A}^c} \sim N(\mathbf{0}, \bms_{\mathcal{A}^c, \mathcal{A}^c}-\bms_{\mathcal{A}^c, \mathcal{A}}\bms_{\mathcal{A}, \mathcal{A}}^{-1}\bms_{\mathcal{A}, \mathcal{A}^c})$,
$\boldsymbol V_{i\mathcal{A}^c} \sim N(\mathbf{0}, \bms_{\mathcal{A}^c, \mathcal{A}}\bms_{\mathcal{A}, \mathcal{A}}^{-1}\bms_{\mathcal{A}, \mathcal{A}^c})$, $\boldsymbol U_{i\mathcal{A}^c}$ and $\bxi_{i\mathcal{A}}$ are independent.
{Based on Theorem \ref{indnull}, we have known $\max_{k=1,\ldots,n-1}\mathcal{W}_{\mathcal{A}^c}(\tilde{\eeta}_{1,k/n},\dots,\tilde{\eeta}_{k_n,k/n})$ is asymptotically independent of $\max\limits_{k=1,\ldots,n-1}\max\limits_{t\in\mathcal{A}^c}|C_{0, t}(k)|$.} Hence, it suffices to prove $\max\limits_{k=1,\ldots,n-1}\mathcal{W}_{\mathcal{A}^c}(\tilde{\eeta}_{1,k/n},\dots,\tilde{\eeta}_{k_n,k/n})$ is asymptotically independent of $\bxi_{i\mathcal{A}}$.
We rewrite
\begin{align*}
&\mathcal{W}_{\mathcal{A}^c}(\tilde{\eeta}_{1,k/n},\dots,\tilde{\eeta}_{k_n,k/n})\\
=& \frac{2g(k)}{\sqrt{2p^{-1}\tr(\O_n^2)}}\sum\limits_{1\leq i < j\leq k_n}\sum\limits_{\ell_1=(i-1)w_n+1}^{iw_n-M}\sum\limits_{\ell_2=(j-1)w_n+1}^{jw_n-M}a_{\ell_1,k}a_{\ell_2,k}(\boldsymbol U_{\ell_1\mathcal{A}^c}^\top \boldsymbol U_{\ell_2\mathcal{A}^c} + 2\boldsymbol U_{\ell_1\mathcal{A}^c}^\top \boldsymbol V_{\ell_2\mathcal{A}^c} + \boldsymbol V_{\ell_1\mathcal{A}^c}^\top \boldsymbol V_{\ell_2\mathcal{A}^c}).
\end{align*}
With similar discussion on Equation~\eqref{Th5-eq1}, we have
\begin{align*}
&\pr\bigg(\bigg|\max_{k=1,\dots, n-1}\frac{2g(k)}{\sqrt{2p^{-1}\tr(\O_n^2)}}\sum\limits_{1\leq i < j\leq k_n}\sum\limits_{\ell_1=(i-1)w_n+1}^{iw_n-M}\sum\limits_{\ell_2=(j-1)w_n+1}^{jw_n-M}a_{\ell_1,k}a_{\ell_2,k}\boldsymbol U_{i\mathcal{A}^c}^\top \boldsymbol V_{i\mathcal{A}^c} \bigg|\geq \epsilon\bigg)\\
\leq& n\exp\bigg\{-C_\epsilon \sqrt{\frac{p}{|\mathcal{A}|}}\bigg\} \cdot \log n \rightarrow  0,\\
&\pr\bigg(\bigg|\max_{k=1,\dots, n-1}\frac{2g(k)}{\sqrt{2p^{-1}\tr(\O_n^2)}}\sum\limits_{1\leq i < j\leq k_n}\sum\limits_{\ell_1=(i-1)w_n+1}^{iw_n-M}\sum\limits_{\ell_2=(j-1)w_n+1}^{jw_n-M}a_{\ell_1,k}a_{\ell_2,k}\boldsymbol V_{i\mathcal{A}^c}^\top \boldsymbol V_{i\mathcal{A}^c} \bigg|\geq \epsilon\bigg)\\
\leq& n\exp\bigg\{-C_\epsilon \sqrt{\frac{p}{|\mathcal{A}|}}\bigg\} \cdot \log n \rightarrow  0,
\end{align*}
due to $|\mathcal{A}| = o\{p/(\log n) ^2\}$. Thus, we have
\begin{align*}
&\mathcal{W}_{\mathcal{A}^c}(\tilde{\eta}_{1,k/n},\dots,\tilde{\eta}_{k_n,k/n})\\
=& \frac{2g(k)}{\sqrt{2p^{-1}\tr(\O_n^2)}}\sum\limits_{1\leq i < j\leq k_n}\sum\limits_{\ell_1=(i-1)w_n+1}^{iw_n-M}\sum\limits_{\ell_2=(j-1)w_n+1}^{jw_n-M}a_{\ell_1,k}a_{\ell_2,k}\boldsymbol U_{\ell_1\mathcal{A}^c}^\top \boldsymbol U_{\ell_2\mathcal{A}^c} +o_p(1).
\end{align*}
It implies that $\max_{k=1,\ldots,n-1}\mathcal{W}_{\mathcal{A}^c}(\tilde{\eeta}_{1,k/n},\ldots,\tilde{\eeta}_{k_n,k/n})$ is asymptotically independent of $\bxi_{i\mathcal{A}}$ with Lemma \ref{Appendix-L4}.

\end{proof}

\subsection{Proof of Theorem \ref{consistency}}
\subsubsection{Consistency of $\hat{\tau}_{S}$}
To prove the consistency of $\hat{\tau}_S$, it suffices to show that, for any $\epsilon > 0$,
\begin{align*}
n\pr(|\hat{\tau}_S-\tau| > n\epsilon) \rightarrow 0.
\end{align*}
Without loss of generality, we assume $\hat{\tau}_S < \tau$ and define
\begin{align*}
   \widetilde{C}_{j}(k) = \frac{1}{\sqrt{n}}\left(S_{kj}-\frac{k}{n}S_{nj}\right),
\end{align*}
then
\begin{align*}
\sqrt{p}\left\{W(k)-\mu_{k,M}\right\}
=&\sum\limits_{j=1}^p \bigg\{\widetilde{C}^2_{ j}(k) - \frac{1}{n}\sum\limits_{h=0}^{M}\sum\limits_{i=1}^{n-h}\{2-\ind{h=0}\}\breve{a}_{i,k}\breve{a}_{i+h,k}\boldsymbol e_j^\top\bGam(h)\boldsymbol e_j\bigg\}\\
:=&\sum\limits_{j=1}^p \widetilde{C}^2_{ j}(k) - \sum\limits_{j=1}^p H_j(k),
\end{align*}
where
$H_j(k) = n^{-1}\sum\nolimits_{h=0}^{M}\sum\nolimits_{i=1}^{n-h}\{2-\ind{h=0}\}\breve{a}_{i,k}\breve{a}_{i+h,k}\boldsymbol e_j^\top\mathit{\bGam}(h)\boldsymbol e_j$ and $\boldsymbol e_j$ represents the $p$-dimensional unit vector with 1 in the $j$th entry and 0 elsewhere.
We will show that, for all $t$ such that $t < \tau-n\epsilon$,
\begin{align*}
   \pr\bigg(\sum\limits_{j=1}^p \widetilde{C}^2_{j}(t) - \sum\limits_{j=1}^pH_j(t) > \sum\limits_{j=1}^p \widetilde{C}^2_{j}(\tau) - \sum\limits_{j=1}^p H_j(\tau)\bigg)\rightarrow 0.
\end{align*}
Observe that
\begin{align*}
&\sum\limits_{j=1}^p \widetilde{C}^2_{j}(t) - \sum\limits_{j=1}^pH_j(t) - \sum\limits_{j=1}^p \widetilde{C}^2_{j}(\tau) + \sum\limits_{j=1}^p H_j(\tau)\\
=&\sum\limits_{j=1}^p \frac{1}{n}\left(S_{tj}-\frac{t}{n}S_{nj}\right)^2 - \sum\limits_{j=1}^p \frac{1}{n}\left(S_{\tau j}-\frac{\tau}{n}S_{nj}\right)^2 - \sum\limits_{j=1}^pH_j(t) + \sum\limits_{j=1}^p H_j(\tau)\\
:=&G_1+G_2.
\end{align*}
Let $\widetilde{S}_{kj} = \sum\nolimits_{i=1}^k \epsilon_{ij}$, then
\begin{align*}
G_1=&\sum\limits_{j=1}^p \frac{1}{n}\left(S_{tj}-\frac{t}{n}S_{nj}\right)^2 - \sum\limits_{j=1}^p \frac{1}{n}\left(S_{\tau j}-\frac{\tau}{n}S_{nj}\right)^2 \\
=&\sum\limits_{j=1}^p \bigg\{\frac{1}{n}\left(\widetilde{S}_{tj}-\frac{t}{n}\widetilde{S}_{nj}-\frac{t(n-\tau)}{n}\delta_j\right)^2 - \frac{1}{n}\left(\widetilde{S}_{\tau j}-\frac{\tau}{n}\widetilde{S}_{nj}-\frac{\tau(n-\tau)}{n}\delta_j\right)^2\bigg\}\\
=&\sum\limits_{j=1}^p \frac{(t^2-\tau^2)(n-\tau)^2}{n^3}\delta_j^2 - \sum\limits_{j=1}^p \bigg\{\frac{2t(n-\tau)}{n^2}\delta_j\left(\widetilde{S}_{tj}-\frac{t}{n}\widetilde{S}_{nj}\right) - \frac{2\tau(n-\tau)}{n^2}\delta_j\left(\widetilde{S}_{\tau j}-\frac{\tau}{n}\widetilde{S}_{nj}\right)\bigg\}\\
&+ \sum\limits_{j=1}^p \bigg\{\frac{1}{n}\left(\widetilde{S}_{tj}-\frac{t}{n}\widetilde{S}_{nj}\right)^2 - \frac{1}{n}\left(\widetilde{S}_{\tau j}-\frac{\tau}{n}\widetilde{S}_{nj}\right)^2\bigg\}\\
:=&G_{11} + G_{12} + G_{13}.
\end{align*}
Since $t < \tau-n\epsilon$, we have
\begin{align*}
G_{11} = \sum\limits_{j=1}^p \frac{(t^2-\tau^2)(n-\tau)^2}{n^3}\delta_j^2 <  -\frac{t+\tau}{n}n\|\bdelta\|^2\epsilon.
\end{align*}

It remains to prove, for large enough constant $C>0$,
\begin{equation}
    n\pr(G_{12} > Cn\|\bdelta\|^2\epsilon) \rightarrow 0,\label{Eq:G12}
\end{equation}
\begin{equation}
    n\pr(G_{13}+G_2 > Cn\|\bdelta\|^2\epsilon) \rightarrow 0.\label{Eq:G13+G2}
\end{equation}

Observe that
\begin{align*}
G_{12} = &\sum\limits_{j=1}^p \bigg\{\frac{2t(n-\tau)}{n^2}\delta_j\left(\widetilde{S}_{tj}-\frac{t}{n}\widetilde{S}_{nj}\right) - \frac{2\tau(n-\tau)}{n^2}\delta_j\left(\widetilde{S}_{\tau j}-\frac{\tau}{n}\widetilde{S}_{nj}\right)\bigg\}\\
=&\sum\limits_{j=1}^p \frac{2(n-\tau)}{n^2}\delta_j\bigg\{(\tau-t)\sum\limits_{i=1}^t\epsilon_{ij}+\tau\sum\limits_{i=t+1}^{\tau}\epsilon_{ij}-\frac{\tau^2-t^2}{n}\sum\limits_{i=1}^n\epsilon_{ij}\bigg\}\\
:=&G_{121} + G_{122} + G_{123}.
\end{align*}

We first consider the term $G_{123}$,
\begin{align*}
n \pr(G_{123}>Cn\|\bdelta\|^2\epsilon) =& n \pr\bigg( \frac{2(n-\tau)(\tau^2-t^2)}{n^3}\sum\limits_{i=1}^n\sum\limits_{j=1}^p\delta_j\epsilon_{ij}>Cn\|\bdelta\|^2\epsilon\bigg)\\
=& n \pr\bigg( \frac{2(n-\tau)(\tau^2-t^2)}{n^3} \bdelta^\top \bms^{1/2}\sum_{i=1-M}^n\omega_i\bZ_i>Cn\|\bdelta\|^2\epsilon\bigg)\\
\leq& 2n\frac{n\{6+\E (Z_{11}^4)\}\{\lambda_{\max}(\bms)\}^2\|\bdelta\|^4(\sum_{\ell=0}^\infty\vert b_\ell\vert)^4}{C^4 n^4\|\bdelta\|^8\epsilon^4}\left\{\frac{2(n-\tau)(\tau^2-t^2)}{n^3} \right\}^4\\
\lesssim&\frac{1}{n^2\|\bdelta\|^4\epsilon^4}\rightarrow 0,
\end{align*}
where the first inequality holds by Markov inequality,
\begin{equation*}
    \begin{aligned}
        \sum\limits_{i=1}^n\sum\limits_{j=1}^p\delta_j\epsilon_{ij}=&\bdelta^\top \bms^{1/2}\left\{\sum_{s=1-M}^0 (\sum_{\ell=1-s}^M b_{\ell})\bZ_s+\sum_{s=1}^{n-M} (\sum_{\ell=0}^M b_{\ell})\bZ_s+\sum_{s=n-M+1}^n (\sum_{\ell=0}^{n-s} b_{\ell})\bZ_s\right\}\\
        :=&\bdelta^\top \bms^{1/2}\sum_{s=1-M}^n\omega_s\bZ_s,
    \end{aligned}
\end{equation*}
and
\begin{equation*}
    \begin{aligned}
        \E \left\{(\bdelta^\top \bms^{1/2}\bZ_i)^4\right\}=&\sum_{j=1}^p (\bdelta^\top \bms^{1/2})_{j}^4 \E (Z_{ij}^4)+6\sum_{1\leq j\neq \ell\leq p}(\bdelta^\top \bms^{1/2})_{j}^2(\bdelta^\top \bms^{1/2})_{\ell}^2 \E (Z_{ij}^2)\E (Z_{i\ell}^2)\\
        \leq &\E (Z_{i1}^4)\left\{\sum_{j=1}^p(\bdelta^\top \bms^{1/2})_{j}^2\right\}^2+6\left\{\sum_{j=1}^p(\bdelta^\top \bms^{1/2})_{j}^2\right\}^2,
    \end{aligned}
\end{equation*}
where $\vert\omega_s\vert\leq \sum_{\ell=0}^\infty\vert b_\ell\vert<\infty$ by Assumption~\ref{ass:C1} for all $s=1-M,\ldots,n$ and $(\cdot)_j$ represents the $j$th entry of a vector. Similarly, we can show that $n \pr(G_{121}>Cn\|\bdelta\|^2\epsilon)\rightarrow 0 $ and $n \pr(G_{122}>Cn\|\bdelta\|^2\epsilon)\rightarrow 0 $.

We next consider the term $G_{13}+G_2$. Notice that, by Theorem~\ref{Th2}, we have, $\E(G_{13})-G_2\rightarrow 0$. Taking the same procedure as \cite{10.1142/s201032631950014x} and replace the Lemma A.1 by Markov's inequality, we have
\begin{equation*}
    \begin{aligned}
        n\pr(G_{13}+G_2 > Cn\|\bdelta\|^2\epsilon)\lesssim  n\frac{n\{\lambda_{\max}(\bms)\}^2(\sum_{\ell=0}^\infty\vert b_\ell\vert)^4}{n^4\|\bdelta\|^4\epsilon^2}+o(1)\rightarrow 0.
    \end{aligned}
\end{equation*}
Consequently, the Equation \eqref{Eq:G12} and \eqref{Eq:G13+G2} hold, which complete the proof.

\subsubsection{Consistency of $\hat{\tau}$ or $\hat{\tau}^\dagger$}
According to Theorem 6 in \cite{wang2023JRSSB}, both $\hat{\tau}_M$ and $\hat{\tau}_M^\dagger$ are consistent, that is, $(\hat{\tau}_M - \tau)/n = o_p(1)$ and $(\hat{\tau}_M^\dagger - \tau)/n = o_p(1)$. We next examine the consistency of $\hat{\tau}$ and $\hat{\tau}^\dagger$ under the following three scenarios:
(i) $\|\bdelta\|_\infty < c \sqrt{\log(np)/n}$ for some constant $c > 0$, while $n \|\bdelta\|_2^2  \to \infty$;
(ii) $\|\bdelta\|_\infty / \sqrt{\log(np)/n} \to \infty$, while $n\|\bdelta\|_2^2 < c$ for some constant $c > 0$; and
(iii) both $\|\bdelta\|_\infty / \sqrt{\log(np)/n} \to \infty$ and $n \|\bdelta\|_2^2  \to \infty$.

For Case (i), Theorem~\ref{prop:Tms-alter} implies that $p_{S_{n,p}} \to 0$, while $p_{M^{\dagger}{n,p}} \not\to 0$. Hence, $\Pr(p_{T_{ms}} < p_{M^{\dagger}_{n,p}}) \to 1$. Combined with the consistency result above, it follows that $(\hat{\tau}^{\dagger} - \tau)/n = o_p(1)$. A similar argument shows that $\hat{\tau}^{\dagger}$ is also consistent under Case (ii). For Case (iii), the result follows immediately since at least one of the estimators $\hat{\tau}_{S}$, $\hat{\tau}_{M}$, or $\hat{\tau}_{M^{\dagger}}$ is consistent.

\subsection{Proof of the uniform consistency of the long-run variance estimator}

We first restate some notations and statistics in \cite{chan2022AoS}. Let $\sigma_j=\lim_{n\rightarrow\infty}n\var(\bar{X}_{ij})=\sum_{k\in\mathbb{Z}}\gamma_{k,j}$, where $\gamma_{k,j}=\cov(X_{0,j},X_{k,j})$. The $m$th order difference-based estimator of $\sigma_j$ is
\begin{equation*}
    \hat{\sigma}_{j,(m)}=\sum_{\vert k\vert<\ell_j}K(k/\ell_j)\hat{\gamma}_{k,j}^D,
\end{equation*}
where $K$ is a kernel and
\begin{equation*}
    \hat{\gamma}_{k,j}^D=\frac{1}{n}\sum_{i=mh+\vert k\vert+1}^n D_{i,j} D_{i-\vert k\vert,j},\\
\end{equation*}
and the $m$th order lag-$h$ difference statistics are
\begin{equation*}
    D_{i,j}=\sum_{s=0}^m d_s X_{i-s h,j},~~~~i=mh+1,\ldots,n,
    d_0+\ldots+d_m=0, d_0^2+\ldots,d_m^2=1.
\end{equation*}

\begin{assumption}\label{ass:chan_ass5}
    (Assumption 5 in \cite{chan2022AoS},Near-origin property) The kernel $K$ satisfies that there exists $\tilde{q}\in\mathbb{N}$ and $B\in \mathbb{R}\setminus\{0\}$ such that $\{K(t)-K(0)\}/\vert t\vert^{\tilde{q}}\rightarrow B$ as $t\rightarrow 0$.
\end{assumption}
\begin{lemma}\label{lemma:LRV}
    Suppose  the Assumption 1 and 5(Assumption~\ref{ass:chan_ass5}) in \cite{chan2022AoS} and Assumption~\ref{ass:C1}--\ref{ass:C3} hold and there exists $0<\kappa_q<3$, such that $\tilde{q}=3-\kappa_q$, $\nu\lesssim \min\{q/2-2,\tilde{q}q/(1+2\tilde{q})-1\}$ and $\|\bdelta\|_\infty=O(1)$. For $\delta_n=n^{-\tilde{\delta}}$, $\tilde{\delta}>0$ sufficiently small, it holds that
    \begin{equation*}
        \pr(\max_{1\leq h\leq p}\vert \hat{\sigma}_h-\sigma_h \vert\geq \delta_n^2)\lesssim n^{-C},~~~~ C>0,
    \end{equation*}
    as $n\rightarrow\infty$.
\end{lemma}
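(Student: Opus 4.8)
The plan is to bound $\max_{1\le j\le p}|\hat\sigma_j-\sigma_j|$ coordinate by coordinate and then close with a union bound over the $p$ coordinates (here the index $j$ plays the role of $h$ in the statement). Fix $j$, abbreviate $\hat\sigma_j=\hat\sigma_{j,(m)}$ with $m=3$, and let $\tilde\sigma_j$ be the same third-order difference-based estimator computed from the centered noise $\{\epsilon_{ij}\}$, i.e.\ with the mean shift $\delta_j\ind{i>\tau}$ deleted from every $X_{ij}$. Decompose
\[
\hat\sigma_j-\sigma_j=\underbrace{(\hat\sigma_j-\tilde\sigma_j)}_{\text{contamination}}+\underbrace{(\tilde\sigma_j-\E\tilde\sigma_j)}_{\text{deviation}}+\underbrace{(\E\tilde\sigma_j-\sigma_j)}_{\text{bias}}.
\]
It then suffices to show that each of the three pieces is at most $\delta_n^2/3$ except on an event of probability $O(n^{-C-\nu})$, after which a union bound over $j=1,\dots,p$ with $p\lesssim n^{\nu}$ yields the claim. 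Throughout, $\tilde\delta$ in $\delta_n=n^{-\tilde\delta}$ is taken small enough, depending on $q$, $\tilde q$ and $\nu$, to absorb every polynomial factor below, and I will use that the MSE-optimal bandwidth of \cite{chan2022AoS} satisfies $\ell_j\sim n^{1/(1+2\tilde q)}$ (so in particular $\ell_j/n=n^{-2\tilde q/(1+2\tilde q)}\to0$ polynomially).

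For the bias term I would invoke the pointwise expansion of \cite{chan2022AoS} directly. Since $D_{i,j}=\sum_{s=0}^m d_s\epsilon_{i-sh,j}$ with $\sum_s d_s=0$, $\E\tilde\sigma_j-\sigma_j$ is the kernel truncation bias, which by the near-origin property (Assumption~\ref{ass:chan_ass5}) is $B\,\ell_j^{-\tilde q}\sum_k|k|^{\tilde q}\gamma_{k,j}\{1+o(1)\}$, plus lower-order differencing terms. Under Assumption~\ref{ass:C1} one has $\gamma_{k,j}=c_k\sigma_{jj}$ with $c_k=\sum_\ell b_\ell b_{\ell+k}$, and the decay condition $\ell^5 b_\ell\to0$ forces $\sum_k|k|^{\tilde q}|c_k|<\infty$ because $\tilde q=3-\kappa_q<3$; together with $\sigma_{jj}\le\lambda_{\max}(\bms)\le C_1$ from Assumption~\ref{ass:C3} this gives $|\E\tilde\sigma_j-\sigma_j|\lesssim\ell_j^{-\tilde q}$ uniformly in $j$, which is $O(n^{-\tilde q/(1+2\tilde q)})=o(\delta_n^2)$ for $\tilde\delta$ small.

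For the contamination term, since $\sum_s d_s=0$ the mean shift enters $D_{i,j}$ only for the $O(mh)$ indices $i\in(\tau,\tau+mh]$, where it is at most $\|\bdelta\|_\infty\sum_s|d_s|=O(1)$ by hypothesis. Hence $\hat\gamma^D_{k,j}$ and its noise-only counterpart differ in at most $O(mh)$ of their $n$ summands, each by $O(1)\cdot\{\max_{i,j}|D^{(\epsilon)}_{i,j}|+O(1)\}$, so summing over $|k|<\ell_j$ gives $|\hat\sigma_j-\tilde\sigma_j|\lesssim(\ell_j mh/n)\,\{\max_{i,j}|D^{(\epsilon)}_{i,j}|+1\}$. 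A $q$-th moment bound for the linear-process increments together with a union bound over $i\le n$, $j\le p$ shows $\max_{i,j}|D^{(\epsilon)}_{i,j}|$ is at most a small power of $n$ except on an event of probability $\lesssim n^{-C}$ whenever $\nu<q/2-2$; since $m,h$ are fixed and $\ell_j/n\to0$ polynomially, this term is $o(\delta_n^2)$ on the required event, uniformly in $j$.

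The deviation term is where the work lies and I expect it to be the main obstacle. Writing $\tilde\sigma_j-\E\tilde\sigma_j=\sum_{|k|<\ell_j}K(k/\ell_j)\{\hat\gamma^D_{k,j}-\E\hat\gamma^D_{k,j}\}$, each summand is a centered normalized sum of $n$ products $D^{(\epsilon)}_{i,j}D^{(\epsilon)}_{i-|k|,j}$, i.e.\ a dependent quadratic form in the i.i.d.\ innovations $Z_{ij}$. I would control the dependence-adjusted $L^{q/2}$-norm of these products via the functional-dependence-measure machinery already developed in \cite{chan2022AoS}, obtaining a Rosenthal/Nagaev-type tail bound for $\hat\gamma^D_{k,j}-\E\hat\gamma^D_{k,j}$ that is uniform in $k$ and $j$ (uniformity in $j$ following from $\lambda_{\max}(\bms)\le C_1$ and $\sum_\ell|b_\ell|<\infty$). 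Summing over the $O(\ell_j)$ lags with threshold of order $\delta_n^2/\ell_j$, and then over $j\le p\lesssim n^{\nu}$, and finally substituting $\ell_j\sim n^{1/(1+2\tilde q)}$ and $\delta_n=n^{-\tilde\delta}$, the resulting polynomial exponent is negative precisely under the stated conditions $\nu<q/2-2$ and $\nu<\tilde q q/(1+2\tilde q)-1$ with $\tilde\delta$ sufficiently small; the second inequality is exactly what balances the bandwidth power against the concentration rate. The delicate point is keeping every implied constant in the moment/concentration bounds independent of the coordinate $j$ and of the lag $k$, so that the estimate survives the two nested union bounds — this is where Assumptions~\ref{ass:C1} and~\ref{ass:C3} do the essential work.
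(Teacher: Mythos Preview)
Your approach is correct and, at its core, is the same mechanism the paper uses: a union bound over the $p$ coordinates and over the $O(\ell_j)$ lags, a Markov/moment tail bound of order $q/2$ on $\hat\gamma^D_{k,j}-\gamma_{k,j}$ drawn from the machinery of \cite{chan2022AoS}, and then substitution of the MSE-optimal bandwidth $\ell_j\asymp n^{1/(1+2\tilde q)}$ to check that the resulting polynomial exponent is negative exactly under the condition $\nu\lesssim\min\{q/2-2,\ \tilde q q/(1+2\tilde q)-1\}$. The paper's argument is just a compressed version of this: it follows Lemma~E.5 of \cite{10.1214/15-aos1347}, invokes Lemma~B.1 of \cite{chan2022AoS} to get $\sqrt{n}\,\|\hat\gamma^D_{j,h}-\gamma_{j,h}\|_{q/2}=O(1)$ uniformly in $h$, and reads off the bound $p\,n^{2\tilde\delta}\,n^{(q/2+1)/(1+2\tilde q)}\,n^{-q/2}\lesssim n^{-C}$ in one line.

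The only genuine difference is organizational. You explicitly split off the kernel-truncation bias and the change-point contamination (using $\sum_s d_s=0$ and $\|\bdelta\|_\infty=O(1)$ to show only $O(mh)$ summands are affected), whereas the paper absorbs both into the single $L^{q/2}$ bound by citing Chan's lemma, which is already proved under mean shifts and delivers the centered rate directly. Your decomposition is more transparent about \emph{where} each assumption is used (the finiteness of $\sum_k|k|^{\tilde q}|c_k|$ for the bias, $\|\bdelta\|_\infty=O(1)$ for contamination, the eigenvalue bounds of Assumption~\ref{ass:C3} for $j$-uniformity); the paper's route is shorter because it outsources those verifications. Either route lands on the same exponent calculation.
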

\begin{proof}
    We follow the proof of Lemma E.5 in \cite{10.1214/15-aos1347}. It follows that for large enough $p$ and $n$, we have,
\begin{equation}\label{eq:lemmaE.5}
    \begin{aligned}
            \pr(\max_{1\leq h\leq p}\vert \hat{\sigma}_h^2-\sigma_h^2\vert\geq \delta_n^2\inf_{h}^*\sigma_h)\leq& \sum_{h=1}^p\sum_{j=1}^{\ell_h} \pr(\vert \hat{\gamma}_{j,h}^D-\gamma_{j,h}\vert\geq \delta_n^2\inf_{h}^*\sigma_h/(2\ell_h))\\
            \leq& (\delta_n^2\inf_{h}^*\sigma_h)^{-p/2}\sum_{h=1}^p\sum_{j=1}^{\ell_h} (2\ell_h)^{q/2}\|\hat{\gamma}_{j,h}^D-\gamma_{j,h}\|_{q/2}^{q/2}.
    \end{aligned}
    \end{equation}
    Let $u_{\tilde{q}}=\sum_{k\in \mathbb{Z}}\vert k\vert^{\tilde{q}}\sum_{\ell=0}^\infty b_\ell b_{\ell+k}$. By Assumption~\ref{ass:C1} and Lemma~\ref{Appendix-L3}, we have
\begin{equation*}
    u_{\tilde{q}}\leq\sum_{k\in \mathbb{Z}}\vert k\vert^{\tilde{q}}\sum_{\ell=k}^\infty\vert b_\ell\vert\leq \sum_{k\in \mathbb{Z}} o(\vert k\vert^{\tilde{q}-4})=O(1).
\end{equation*}
We repeat the Lemma B.1 in \cite{chan2022AoS} and acquire the results $\sqrt{n}\max_{1\leq h\leq p}\|\hat{\gamma}_{j,h}^D-\gamma_{j,h}\|_{q/2}=O(1)$ for $j=o(n)$. From the Theorem 4.1 in \cite{chan2022AoS}, we see that the optimal $\ell_h$ has the same optimal rate $\ell_h\asymp n^{1/(1+2\tilde{q})}$ where $\tilde{q}=3-\kappa_q$, $\kappa_q>0$. Thus, the Equation~\eqref{eq:lemmaE.5} is bounded by $p n^{2\tilde{\delta}}n^{(q/2+1)/(1+2\tilde{q})}n^{-q/2}\lesssim n^{-C}$. Using that $\vert\hat{\sigma}_h-\sigma_h\vert\leq (\inf_h^* \sigma_h)^{-1}\vert \hat{\sigma}_h^2-\sigma_h^2\vert$, the proof is completed.
\end{proof}

\subsection{Some Useful Lemmas}

\begin{lemma}
\label{Appendix-L1}
Suppose $\{z_{tj}; t=1,\dots,n, j=1,\dots,p\}$ are independent and identically distributed random variables satisfying $\E(z_{tj})=0$, $\E(z_{tj}^2)=1$ and $\E(z_{tj}^4)=\mu_4<\infty$. Let $\w:=\sum\nolimits_{t=0}^{\infty}a_t\z_t$, where $\z_t=(z_{t1},\dots,z_{tp})^\top$ and $\sum\nolimits_{t=0}^{\infty}|a_{t}|<\infty$.
Then, there exists a positive constant $\tau_1\leq 3$ such that for all $p\times p$ positive semi-definite matrix $\B$,
\begin{equation*}
\E\{(\w^\top\B\w)^2\}\leq \tau_1 \{\E(\w^\top\B\w)\}^2.
\end{equation*}
\end{lemma}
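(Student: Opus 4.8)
The plan is to reduce the claim to a statement about a quadratic form in a single i.i.d. random vector and then exploit bilinearity to handle the linear-process structure. First I would write $\w = \sum_{t=0}^\infty a_t \z_t$, so that $\w^\T \B \w = \sum_{s,t} a_s a_t \z_s^\T \B \z_t$, and compute $\E(\w^\T\B\w) = \big(\sum_t a_t^2\big)\tr(\B)$ using $\E(\z_s^\T\B\z_t) = \tr(\B)\ind{s=t}$. Set $A := \sum_t a_t^2 < \infty$. The goal becomes bounding $\E\{(\w^\T\B\w)^2\} = \sum_{s_1,t_1,s_2,t_2} a_{s_1}a_{t_1}a_{s_2}a_{t_2}\,\E(\z_{s_1}^\T\B\z_{t_1}\z_{s_2}^\T\B\z_{s_2})$ by a constant multiple of $A^2\tr^2(\B)$.

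Second I would establish the single-vector case: there is a universal constant $\tau_1 \le 3$ (in fact $\tau_1 = \max\{\mu_4, 3\}$ won't do since we need $\le 3$; rather one shows the cross terms are controlled and the diagonal contributes a bounded multiple — I would follow the standard computation that for a single i.i.d. vector $\z$ with the stated moments, $\E\{(\z^\T\B\z)^2\} = (\mu_4 - 3)\sum_j b_{jj}^2 + \tr^2(\B) + 2\tr(\B^2) \le \tau_1\tr^2(\B)$ when $\B$ is positive semidefinite, using $\sum_j b_{jj}^2 \le \tr^2(\B)$, $\tr(\B^2)\le\tr^2(\B)$; here the relevant constant depends on $\mu_4$, and the ``$\le 3$'' in the statement should be read together with the moment assumption — I would simply carry the constant through). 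Then, expanding the fourfold sum, the nonzero expectations $\E(\z_{s_1}^\T\B\z_{t_1}\z_{s_2}^\T\B\z_{s_2})$ arise only from a bounded number of index-coincidence patterns (all four equal; two pairs; $s_1=t_1$ with $s_2=t_2$; etc.), and in each pattern the summation over the free indices factors through $\sum_t a_t^2 = A$ or $\sum_t |a_t| < \infty$ together with Cauchy–Schwarz, while the matrix factors are dominated by $\tr^2(\B)$ or $\tr(\B^2)\le\tr^2(\B)$.

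Third I would collect the patterns: (a) $s_1=t_1\ne s_2=t_2$ gives the ``main'' term $\big(\sum a_t^2\big)^2\tr^2(\B) = A^2\tr^2(\B)$; (b) $s_1=s_2\ne t_1=t_2$ and $s_1=t_2\ne t_1=s_2$ give terms bounded by $A^2\tr(\B^2)\le A^2\tr^2(\B)$; (c) the fully-diagonal term $s_1=t_1=s_2=t_2$ gives $\big(\sum a_t^4\big)\,\E\{(\z_0^\T\B\z_0)^2\} \le A^2 \cdot \tau_1\tr^2(\B)$ since $\sum a_t^4 \le (\sum a_t^2)^2 = A^2$. Summing finitely many such bounded contributions yields $\E\{(\w^\T\B\w)^2\} \le \tau_1 A^2\tr^2(\B) = \tau_1\{\E(\w^\T\B\w)\}^2$ after absorbing the finitely many constants into a single $\tau_1$ (which depends only on $\mu_4$ and is $\le 3$ in the Gaussian-comparable regime).

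The main obstacle is bookkeeping: making sure that every index-coincidence pattern that produces a nonzero expectation is caught, that the resulting coefficient sums genuinely collapse to $\big(\sum_t a_t^2\big)^2$ or are dominated by it via Cauchy–Schwarz ($\sum_{s,t}|a_s||a_t|\cdot(\text{matrix factor})$ must not blow up), and that positive semidefiniteness of $\B$ is used exactly where needed (to control $\sum_j b_{jj}^2$ and $\tr(\B^2)$ by $\tr^2(\B)$, and to keep $\E(\w^\T\B\w)\ge 0$ so the squared-expectation on the right is meaningful). I expect no deep difficulty — it is essentially a moment computation — but the combinatorics of the fourfold sum, done carefully, is the part that requires attention; I would present it compactly by grouping terms rather than enumerating all cases explicitly.
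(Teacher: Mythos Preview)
Your proposal is correct and follows essentially the same direct moment-computation approach as the paper: expand $(\w^\T\B\w)^2$ into a fourfold sum, classify the nonzero expectation patterns by index coincidences, and bound each contribution by a multiple of $A^2\tr^2(\B)$ using $\sum_j b_{jj}^2\le\tr^2(\B)$ and $\tr(\B^2)\le\tr^2(\B)$ for positive semidefinite $\B$. The only cosmetic difference is that the paper expands first over the spatial indices $j_1,\dots,j_4$ (computing $\E[w_{j_1}w_{j_2}w_{j_3}w_{j_4}]$ via the time structure) whereas you expand first over the time indices $s_1,t_1,s_2,t_2$ and invoke the single-vector fourth-moment formula; the two orderings yield the same collection of terms. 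Your suspicion about the bound ``$\tau_1\le 3$'' is vindicated: the paper's own proof ends with $\tau_1 = 6+\mu_4$, so the stated inequality on $\tau_1$ is a misprint and your decision to simply carry the $\mu_4$-dependent constant through is the right call.
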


\begin{proof}
\begin{eqnarray*}
\E(\w^\top\B\w)=\sum\limits_{j=1}^p b_{jj}\E\{\sum\limits_{t=0}^{\infty}a_t^2z_{ti}^2\}=\sum\limits_{t=0}^{\infty}a_t^2\tr(\B).
\end{eqnarray*}
\begin{eqnarray*}
\E\{(\w^\top\B\w)^2\}&=&\sum\limits_{j_1,j_2,j_3,j_4=1}^p b_{j_1j_2}b_{j_3j_4}\E\bigg\{\prod\limits_{s=1}^4\bigg(\sum\limits_{t=0}^{\infty}a_tz_{t,j_s}\bigg)\bigg\}\\
&=&\sum\limits_{j=1}^p b_{jj}^2\bigg(\sum\limits_{t=0}^{\infty}a_t^4\mu_4+3\sum\limits_{t_1\neq t_2\geq 0}a_{t_1}^2a_{t_2}^2\bigg)\\
&&+\sum\limits_{j_1\neq j_2=1}^p (2b_{j_1j_2}^2+b_{j_1j_1}b_{j_2j_2})\bigg(\sum\limits_{t=0}^{\infty}a_t^2\bigg)^2\\
&\leq&\tr^2(\B)\bigg(\sum\limits_{t=0}^{\infty}a_t^4\mu_4+3\sum\limits_{t_1\neq t_2\geq 0}a_{t_1}^2a_{t_2}^2\bigg)\\
&&+\{2\tr(\B^2)+\tr^2(\B)\}\bigg(\sum\limits_{t=0}^{\infty}a_t^2\bigg)^2\\
&\leq &(6+\mu_4)\bigg(\sum\limits_{t=0}^{\infty}a_t^2\bigg)^2\tr^2(\B).
\end{eqnarray*}
Taking $\tau_1=6+\mu_4$, the lemma has been proved.
\end{proof}

\begin{lemma}
\label{Appendix-L2}
Under Assumptions \ref{ass:C1}--\ref{ass:C3}, for any $k_1,k_2,k_3,k_4\in\{1,\dots,n\}$, there exists a constant $\tau_2$ such that
\begin{eqnarray}
\left|\E\big[\bepsilon_{k_1}^\top \bepsilon_{k_2}-\tr\{\bGam(|k_1-k_2|\}\big]\big[\bepsilon_{k_3}^\top \bepsilon_{k_4}-\tr\{\bGam(|k_3-k_4|\}\big]\right|\leq \tau_2\tr(\O_{n}^2), \label{L6.1}\\
\big|\E\big[\boldsymbol\varepsilon_{k_1}^\top\boldsymbol\varepsilon_{k_2}-\tr\{\bGam_M(|k_1-k_2|\}\big]\big[\boldsymbol\varepsilon_{k_3}^\top\boldsymbol\varepsilon_{k_4}-\tr\{\bGam_M(|k_3-k_4|\}\big]\big|\leq \tau_2\tr(\O_{n,M}^2). \label{L6.2}
\end{eqnarray}
\end{lemma}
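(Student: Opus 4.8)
The plan is to prove \eqref{L6.1} by expanding both centered bilinear forms in the i.i.d.\ innovations and classifying the surviving fourth-moment terms, then to deduce \eqref{L6.2} verbatim with the truncated process. First I would write $\bepsilon_{k_a}=\bms^{1/2}\sum_{\ell_a\ge 0}b_{\ell_a}\bZ_{k_a-\ell_a}$ and use $\tr\{\bGam(|k_1-k_2|)\}=c_{|k_1-k_2|}\tr(\bms)=\sum_{\ell_1,\ell_2\ge 0,\,k_1-\ell_1=k_2-\ell_2}b_{\ell_1}b_{\ell_2}\tr(\bms)$ to express
\[
\bepsilon_{k_1}^\top\bepsilon_{k_2}-\tr\{\bGam(|k_1-k_2|)\}=\sum_{\ell_1,\ell_2\ge 0}b_{\ell_1}b_{\ell_2}\,Q(k_1-\ell_1,k_2-\ell_2),\qquad Q(a,b):=\bZ_a^\top\bms\bZ_b-\mathbb{I}(a=b)\tr(\bms),
\]
and similarly for $\bepsilon_{k_3},\bepsilon_{k_4}$. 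Multiplying and taking expectations turns the left side of \eqref{L6.1} into $\sum_{\ell_1,\ell_2,\ell_3,\ell_4\ge 0}b_{\ell_1}b_{\ell_2}b_{\ell_3}b_{\ell_4}\,\rho_{r_1,r_2,r_3,r_4}$, with $r_a=k_a-\ell_a$ and $\rho_{r_1,r_2,r_3,r_4}=\E\{Q(r_1,r_2)Q(r_3,r_4)\}$.

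The first key step is a direct fourth-moment computation for the i.i.d.\ vectors $\bZ_i$: $\rho_{r_1,r_2,r_3,r_4}$ vanishes except in three configurations (the same $\rho$-classification appears in the proof of Lemma~\ref{Th2-L1}) --- (i) $r_1=r_3\ne r_2=r_4$, with value $\tr(\bms^2)$; (ii) $r_1=r_4\ne r_2=r_3$, with value $\tr(\bms^2)$; and (iii) $r_1=r_2=r_3=r_4=:r$, with value $\var(\bZ_r^\top\bms\bZ_r)=(\mu_4-3)\sum_j(\bms)_{jj}^2+2\tr(\bms^2)$, where $\mu_4:=\E(Z_{11}^4)$. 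The important point is that in case (iii) I would \emph{not} use the crude bound $\var(\bZ_r^\top\bms\bZ_r)\lesssim\tr^2(\bms)$ supplied by Lemma~\ref{Appendix-L1}, which is too weak here, but the elementary estimate $\sum_j(\bms)_{jj}^2\le\sum_{j,j'}(\bms)_{jj'}^2=\tr(\bms^2)$, giving $|\rho_{r,r,r,r}|\le(|\mu_4-3|+2)\tr(\bms^2)$. The second step is the coefficient count: in case (i) the constraints tie $\ell_3,\ell_4$ to $\ell_1,\ell_2$, so the associated sum of $|b_{\ell_1}b_{\ell_2}b_{\ell_3}b_{\ell_4}|$ is at most $(\sum_{\ell\ge0}|b_\ell|)^4=:C_b<\infty$ by Assumption~\ref{ass:C1}-(ii); the same bound holds in case (ii), and in case (iii) the surviving sum runs over the single parameter $r$ and is again $\le C_b$. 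Collecting the three contributions,
\[
\Big|\E\big[\bepsilon_{k_1}^\top\bepsilon_{k_2}-\tr\{\bGam(|k_1-k_2|)\}\big]\big[\bepsilon_{k_3}^\top\bepsilon_{k_4}-\tr\{\bGam(|k_3-k_4|)\}\big]\Big|\le C_b(|\mu_4-3|+4)\,\tr(\bms^2).
\]

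It remains to compare $\tr(\bms^2)$ with $\tr(\O_n^2)$. Since $\bGam(h)=c_h\bms$, we have $\O_n=d_n\bms$ with $d_n=c_0+2\sum_{h=1}^{n-1}(1-h/n)c_h$, and $d_n\to s^2\ne 0$ as $n\to\infty$ by dominated convergence, using $\sum_h|h||c_h|\le(\sum_\ell|b_\ell|)(\sum_\ell \ell|b_\ell|)<\infty$ (both factors finite under Assumption~\ref{ass:C1}-(ii)); hence $\tr(\O_n^2)=d_n^2\tr(\bms^2)\ge\tfrac12 s^4\tr(\bms^2)$ for all $n$ sufficiently large, so $\tr(\bms^2)\lesssim\tr(\O_n^2)$ and \eqref{L6.1} follows with $\tau_2=2s^{-4}C_b(|\mu_4-3|+4)$. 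For \eqref{L6.2} the argument is identical with $\bepsilon_i$ replaced by $\boldsymbol\varepsilon_i^{(M)}=\bms^{1/2}\sum_{\ell=0}^{M}b_\ell\bZ_{i-\ell}$ and $\bGam$ by $\bGam_M$; all series become finite sums still bounded by $C_b$, so the same chain gives $|\cdot|\le C_b(|\mu_4-3|+4)\tr(\bms^2)$, while $\O_{n,M}=\big(\sum_{h\in\mathcal{M}}(1-|h|/n)c_{h,M}\big)\bms$ with $\sum_{h\in\mathcal{M}}(1-|h|/n)c_{h,M}\to s^2$ (as already established in the proof of Lemma~\ref{Th1-L1}), yielding $\tr(\bms^2)\lesssim\tr(\O_{n,M}^2)$.

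The coefficient bookkeeping (immediate from $\sum_\ell|b_\ell|<\infty$) and the comparison $d_n\to s^2$ are routine; the only real obstacle is step one --- pinning down exactly which $\rho_{r_1,r_2,r_3,r_4}$ are nonzero and, for the all-equal configuration, replacing the off-the-shelf estimate of Lemma~\ref{Appendix-L1} by the sharper $\tr(\bms^2)$-order bound, without which the right-hand sides $\tr(\O_n^2)$ and $\tr(\O_{n,M}^2)$ would be too small for the claimed inequalities to hold.
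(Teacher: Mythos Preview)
Your argument is correct. The paper takes a somewhat different route: it first applies Cauchy--Schwarz to reduce \eqref{L6.1} to the single second-moment bound $\E\big[\bepsilon_0^\top\bepsilon_h-\tr\{\bGam(h)\}\big]^2\le\tau_2\tr(\O_n^2)$, and then splits $\bepsilon_h=\bepsilon_{h,1}+\bepsilon_{h,2}$ into the part sharing innovations $\{\bZ_{-k}\}_{k\ge0}$ with $\bepsilon_0$ and the independent remainder, computing each piece explicitly to arrive at an $O\big((\sum_kb_k^2)^2\tr(\bms^2)\big)$ bound. Your direct fourfold expansion in the $\bZ$'s, followed by the three-case classification of $\rho_{r_1,r_2,r_3,r_4}$, bypasses the Cauchy--Schwarz reduction and handles the cross product in one stroke; it also makes explicit why the crude estimate from Lemma~\ref{Appendix-L1} would fail here (the all-equal configuration would otherwise contribute a $\tr^2(\bms)$ term rather than $\tr(\bms^2)$), a point the paper's calculation absorbs silently. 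The paper's route is a little shorter once the independence decomposition is set up; yours is more systematic and reuses the $\rho$-classification already deployed in the proof of Lemma~\ref{Th2-L1}. Both approaches land on a $\tr(\bms^2)$-order bound and both need the final comparison $\tr(\bms^2)\lesssim\tr(\O_n^2)$ via $d_n\to s^2\ne0$, which you spell out and the paper records only implicitly.
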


\begin{proof}
Following H\"{o}lder inequality,
\begin{eqnarray*}
&&\left|\E\big[\bepsilon_{k_1}^\top \bepsilon_{k_2}-\tr\{\bGam(|k_1-k_2|\}\big]\big[\bepsilon_{k_3}^\top \bepsilon_{k_4}-\tr\{\bGam(|k_3-k_4|\}\big]\right|\\
&\leq &\left(\E\big[\bepsilon_{k_1}^\top \bepsilon_{k_2}-\tr\{\bGam(|k_1-k_2|\})\big]^2\E\big[\bepsilon_{k_3}^\top \bepsilon_{k_4}-\tr\{\bGam(|k_3-k_4|\}\big]^2\right)^{1/2}.
\end{eqnarray*}
It suffice to prove that, for all $h>0$,
\begin{eqnarray*}
\E\big[\bepsilon_{0}^\top \bepsilon_{h}-\tr\{\bGam(h)\}\big]^2\leq \tau_2\tr(\O_{n}^2).
\end{eqnarray*}
Recall $\bepsilon_h=\bms^{1/2}\sum\nolimits_{k=0}^{\infty}b_k\bZ_{h-k}$, we split $\bepsilon_h$ into two independent parts as
\begin{eqnarray*}
\bepsilon_h=\bms^{1/2}\sum\limits_{k=0}^{\infty}b_{k+h}\bZ_{-k}+\bms^{1/2}\sum\limits_{k=0}^{h-1}b_k\bZ_{h-k}:=\bepsilon_{h,1}+\bepsilon_{h,2}.
\end{eqnarray*}
Thus,
\begin{eqnarray*}
&&\E\Big[\big\{\bepsilon_{0}^\top\bepsilon_{h,1}-\tr(\bGam_{h})\big\}^2\Big]\\
&=&\E\Big[\big\{\sum\limits_{k=0}^{\infty}b_k\bZ_{-k}^{\top}\bms\sum\limits_{k=0}^{\infty}b_{k+h}\bZ_{-k}-\tr(\bGam_{h})\big\}^2\Big]\\
&=&\E\Big[\big\{\sum\limits_{k=0}^{\infty}b_kb_{k+h}\bZ_{-k}^{\top}\bms \bZ_{-k}-\tr(\bGam_{h})+\sum\limits_{k_1\neq k_2}^{\infty}b_{k_1}b_{k_2+h}\bZ_{-k_1}^{\top}\bms \bZ_{-k_2}\big\}^2\Big]\\
&=&\E\Big[\big\{\sum\limits_{k=0}^{\infty}b_kb_{k+h}\bZ_{-k}^{\top}\bms \bZ_{-k}-\tr(\bGam_{h})\big\}^2\Big]+\E\Big\{\big(\sum\limits_{k_1\neq k_2}^{\infty}b_{k_1}b_{k_2+h}\bZ_{-k_1}^{\top}\bms \bZ_{-k_2}\big)^2\Big\}\\
&=&\sum\limits_{k=0}^{\infty}b_k^2b_{k+h}^2\E\Big[\big\{\bZ_{-k}^{\top}\bms \bZ_{-k}-\tr(\bms)\big\}^2\Big]+\sum\limits_{k_1\neq k_2}^{\infty}b_{k_1}^2b_{k_2+h}^2\E\Big\{\big(\bZ_{-k_1}^{\top}\bms \bZ_{-k_2}\big)^2\Big\}\\
&&+\sum\limits_{k_1\neq k_2}^{\infty}b_{k_1}b_{k_2+h}b_{k_2}b_{k_1+h}\E\Big\{\big(\bZ_{-k_1}^{\top}\bms \bZ_{-k_2}\big)\big(\bZ_{-k_2}^{\top}\bms \bZ_{-k_1}\big)\Big\}\\
&\leq& \Big\{\sum\limits_{k=0}^{\infty}b_k^2b_{k+h}^2(\mu_4+1)+\Big(\sum\limits_{k=0}^{\infty}b_k^2\Big)\Big(\sum\limits_{k=0}^{\infty}b_{k+h}^2\Big)+\Big(\sum\limits_{k=0}^{\infty}b_kb_{k+h}\Big)^2\Big\}\tr(\bms^2)\\
&\leq& (\mu_4+3)\Big(\sum\limits_{k=0}^{\infty}b_k^2\Big)^2\tr(\bms^2),
\end{eqnarray*}
and
\begin{eqnarray*}
\E\Big\{\big(\bepsilon_{0}^\top\bepsilon_{h,2}\big)^2\Big\}&=&\E\Big\{\Big(\sum\limits_{k=0}^{\infty}b_k\bZ_{-k}^\top\bms\sum\limits_{k=0}^{h-1}b_k\bZ_{h-k}\Big)^2\Big\}\\
&=&\Big(\sum\limits_{k=0}^{\infty}b_k^2\Big)\Big(\sum\limits_{k=0}^{h-1}b_k^2\Big)\tr(\bms^2).
\end{eqnarray*}
Equation (\ref{L6.1}) is hold with $\tr(\O_n^2)=\sum_{h_1,h_2\in \mathcal{N}}(1-\frac{\vert h_1\vert}{n})(1-\frac{\vert h_2\vert}{n})a_{h_1}a_{h_2}\tr(\bms^2)$.
Equation (\ref{L6.2}) is guaranteed by repeating the above prove with $b_{k,M}=b_{k}\mathbb{I}_{\{k\leq M\}}$.
\end{proof}

\begin{lemma}
\label{Appendix-L3}
$\{b_n, n\geq 1\}$ is a sequence of numbers. Suppose that there exist a constant $k>1$ such that $b_n=o(n^{-k})$, then
\begin{eqnarray*}
\sum\limits_{m=n}^{\infty}b_m=o(n^{-(k-1)}).
\end{eqnarray*}
\end{lemma}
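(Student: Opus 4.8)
The plan is to reduce the statement to an elementary tail estimate for a convergent $p$-series. First I would unpack the hypothesis $b_n = o(n^{-k})$: for every $\varepsilon>0$ there is an integer $N_\varepsilon$ with $|b_m|\leq \varepsilon\, m^{-k}$ for all $m\geq N_\varepsilon$. Taking $\varepsilon=1$ already gives $|b_m|\leq m^{-k}$ for $m$ large, and since $k>1$ the series $\sum_m |b_m|$ converges; hence $\sum_{m=n}^\infty b_m$ is well defined (indeed absolutely convergent) for every $n\geq 1$, so all subsequent manipulations are legitimate.

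Next, for $n\geq N_\varepsilon$ I would bound the tail by
$$
\Big|\sum_{m=n}^\infty b_m\Big| \;\leq\; \sum_{m=n}^\infty |b_m| \;\leq\; \varepsilon \sum_{m=n}^\infty m^{-k}.
$$
The only remaining work is to control $\sum_{m=n}^\infty m^{-k}$, which I would do by comparison with $\int_n^\infty x^{-k}\,dx = (k-1)^{-1} n^{-(k-1)}$, yielding $\sum_{m=n}^\infty m^{-k} \leq n^{-k} + (k-1)^{-1} n^{-(k-1)} \leq \frac{k}{k-1}\, n^{-(k-1)}$ uniformly in $n\geq 1$.

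Combining the two displays gives $n^{k-1}\big|\sum_{m=n}^\infty b_m\big| \leq \frac{k}{k-1}\,\varepsilon$ for all $n\geq N_\varepsilon$. Since $\varepsilon>0$ is arbitrary, $\limsup_{n\to\infty} n^{k-1}\big|\sum_{m=n}^\infty b_m\big| = 0$, that is, $\sum_{m=n}^\infty b_m = o(n^{-(k-1)})$, which is the claim. There is no genuine obstacle here: the only points meriting a little care are justifying absolute convergence of the tail (so the rearrangements and bounds are valid) and carrying out the series-to-integral comparison uniformly in $n$; the rest is routine bookkeeping with the $\varepsilon$–$N$ definition of $o(\cdot)$.
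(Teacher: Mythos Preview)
Your proof is correct and follows essentially the same approach as the paper: both extract from $b_m=o(m^{-k})$ the bound $|b_m|\leq \varepsilon\, m^{-k}$ for large $m$ and then control the tail $\sum_{m\geq n} m^{-k}$. The only cosmetic difference is that the paper estimates this tail by blocking into chunks of length $n$ (so that $n^k|b_t|\lesssim m^{-k}$ on the $m$th block), whereas you use the integral comparison $\sum_{m\geq n} m^{-k}\leq \frac{k}{k-1}n^{-(k-1)}$; both yield the same $o(1)$ bound on $n^{k-1}\sum_{m\geq n} b_m$.
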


\begin{proof}
There exist $N\in\mathbb{N}$, such that for all $n\geq N, n^k b_n\leq 1$. Then, for all $n\geq N$,
\begin{eqnarray*}
n^{k-1}\sum\limits_{m=n}^{\infty}b_m=\sum\limits_{m=1}^{\infty}\frac{1}{n}\sum\limits_{t=mn}^{(m+1)n-1}n^k b_t=o(n^{-(k-1)}).
\end{eqnarray*}

\end{proof}

\begin{lemma}\label{LemmaE.1central}
    Let $\X_1^{(0)},\ldots,\X_n^{(0)}$ be independent centered random vectors in $\mathbb R^p$ with $p\geq 2$, and denote $\X_i^{(0)}=(X_{i1}^{(0)},\ldots,X_{ip}^{(0)})^{\top}$. Define $Z=\max_{1\leq j\leq p}\left\vert \sum_{i=1}^n X_{ij}^{(0)} \right\vert$, $M=\max_{1\leq i\leq n}\max_{1\leq j\leq p}\left\vert X_{ij}^{(0)}\right\vert$ and $\sigma^2=\max_{1\leq j\leq p}\sum_{i=1}^n \E \{(X_{ij}^{(0)})^2\}$. Then,
    \begin{equation*}
        \E (Z)\lesssim K\left\{ \sigma\sqrt{\log p}+\sqrt{\E (M^2)}\log p \right\}.
    \end{equation*}
\end{lemma}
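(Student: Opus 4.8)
The plan is to prove this by a symmetrization argument combined with the Ledoux--Talagrand contraction principle, reducing the whole statement to a self-improving inequality for the random variance proxy $\widehat{\sigma}^2 := \max_{1\le j\le p}\sum_{i=1}^n (X_{ij}^{(0)})^2$. Let $\varepsilon_1,\dots,\varepsilon_n$ be i.i.d.\ Rademacher signs independent of $\{\X_i^{(0)}\}_{i=1}^n$. First I would apply the standard symmetrization inequality for sums of independent centered random vectors to obtain $\E(Z)\le 2\,\E\big[\max_{1\le j\le p}|\sum_{i=1}^n \varepsilon_i X_{ij}^{(0)}|\big]$. Conditionally on $\{\X_i^{(0)}\}$, Hoeffding's lemma shows that each $\sum_{i}\varepsilon_i X_{ij}^{(0)}$ is sub-Gaussian with variance proxy $\sum_i (X_{ij}^{(0)})^2\le\widehat{\sigma}^2$; a union bound over the $2p$ tail events then yields the conditional maximal inequalities $\E_\varepsilon\big[\max_j|\sum_i\varepsilon_i X_{ij}^{(0)}|\big]\lesssim\widehat{\sigma}\sqrt{\log p}$ and $\E_\varepsilon\big[\max_j(\sum_i\varepsilon_i X_{ij}^{(0)})^2\big]\lesssim\widehat{\sigma}^2\log p$. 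Taking expectations and using Jensen's inequality, this already gives $\E(Z)\lesssim\sqrt{\log p}\,(\E\widehat{\sigma}^2)^{1/2}$, so it remains only to control $\E\widehat{\sigma}^2$.

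For that step I would write $\widehat{\sigma}^2\le\sigma^2+\max_{1\le j\le p}\big|\sum_{i=1}^n\{(X_{ij}^{(0)})^2-\E(X_{ij}^{(0)})^2\}\big|$, symmetrize the last term, and apply the contraction principle: using the deterministic bound $|X_{ij}^{(0)}|\le M$, the map $t\mapsto t^2$ restricted to $[-M,M]$ is $2M$-Lipschitz and vanishes at $0$, so $\E_\varepsilon\big[\max_j|\sum_i\varepsilon_i (X_{ij}^{(0)})^2|\big]\le 4M\,\E_\varepsilon\big[\max_j|\sum_i\varepsilon_i X_{ij}^{(0)}|\big]$. Combining this with the conditional second-moment maximal inequality above and Cauchy--Schwarz in the outer expectation gives $\E\widehat{\sigma}^2\le\sigma^2+C\sqrt{\E(M^2)}\,\sqrt{\log p}\,(\E\widehat{\sigma}^2)^{1/2}$ for a universal constant $C$. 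This is a quadratic inequality in $(\E\widehat{\sigma}^2)^{1/2}$; solving it yields $(\E\widehat{\sigma}^2)^{1/2}\lesssim\sigma+\sqrt{\E(M^2)}\,\sqrt{\log p}$, and substituting back into $\E(Z)\lesssim\sqrt{\log p}\,(\E\widehat{\sigma}^2)^{1/2}$ produces the claimed bound $\E(Z)\lesssim\sigma\sqrt{\log p}+\sqrt{\E(M^2)}\,\log p$.

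The main obstacle is the apparent circularity: the bound for $\E\widehat{\sigma}^2$ itself involves the maximal-deviation quantity we are trying to estimate, so the delicate point is to arrange the normalization in the contraction step so that $\E\widehat{\sigma}^2$ reappears only to the power $1/2$, and to keep every constant universal so that the resulting quadratic inequality genuinely closes. A secondary care point is the contraction step itself: $t\mapsto t^2$ is only Lipschitz on the data-dependent interval $[-M,M]$ and the outer absolute value is not itself a contraction, so one must invoke the symmetrized form of the Ledoux--Talagrand inequality, absorbing the extra factor of $2$. All remaining estimates---Hoeffding's lemma, the union bound over the $p$ coordinates, and the two applications of Cauchy--Schwarz/Jensen---are routine.
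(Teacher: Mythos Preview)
Your argument is correct and is essentially the standard proof of this maximal inequality: symmetrize, use Hoeffding's lemma conditionally to get $\E(Z)\lesssim\sqrt{\log p}\,(\E\widehat{\sigma}^2)^{1/2}$, then bootstrap $\E\widehat{\sigma}^2$ via symmetrization plus the Ledoux--Talagrand contraction for $t\mapsto t^2$ on $[-M,M]$, close the resulting quadratic inequality in $(\E\widehat{\sigma}^2)^{1/2}$, and substitute back. Your handling of the two delicate points---applying contraction \emph{conditionally} on the data so that $M$ is fixed, and absorbing the extra factor from the absolute-value form of the contraction inequality---is accurate.

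The paper itself does not give a proof at all: it simply cites Lemma~E.1 of Chernozhukov, Chetverikov and Kato (2017), which in turn traces back to Lemma~2.2.10 of van~der~Vaart and Wellner. What you have written is exactly the argument behind that cited result, so you have supplied strictly more detail than the paper. There is nothing to correct.
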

\begin{proof}
    See Lemma E.1 in \cite{chernozhukov2017central}.
\end{proof}

\begin{lemma}
\label{Appendix-L4}
\label{S}
Let $\{(U, U_p, \tilde{U}_p) \in \mathbb{R}^3; p \geq 1 \}$ and $\{(V, V_p, \tilde{V}_p) \in \mathbb{R}^3; p \geq 1 \}$ be two sequences of random variables with
$U_p \rightarrow U$ and $V_p \rightarrow V$ in distribution as $p\rightarrow \infty$. Assume $U$ and $V$ are continuous random variables and that
\begin{eqnarray*}
\tilde{U}_p = U_p + o_p (1) ~  {\textrm{and}}  ~ \tilde{V}_p = V_p + o_p (1).
\end{eqnarray*}
If $U_p$ and $V_p$ are asymptotically independent, then $\tilde{U}_p$ and $\tilde{V}_p$ are also asymptotically independent.
\end{lemma}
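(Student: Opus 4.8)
The plan is to show directly that the joint distribution function of $(\tilde{U}_p,\tilde{V}_p)$ converges to the product $F_U(x)F_V(y)$, where $F_U$ and $F_U$ denote the distribution functions of $U$ and $V$; since $U$ and $V$ are continuous, $F_U$ and $F_V$ are continuous on all of $\mathbb{R}$, so it suffices to verify this convergence at an arbitrary pair $(x,y)\in\mathbb{R}^2$. Two preliminary facts are recorded first. By Slutsky's theorem, $\tilde{U}_p=U_p+o_p(1)\to U$ and $\tilde{V}_p=V_p+o_p(1)\to V$ in distribution, so the marginals of $(\tilde{U}_p,\tilde{V}_p)$ behave as required. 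The asymptotic independence of $U_p$ and $V_p$ means that $\pr(U_p\le a,\,V_p\le b)\to F_U(a)F_V(b)$ for every $(a,b)$, again using the continuity of $F_U$ and $F_V$ so that every point is a continuity point of the limiting product law.

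The core step is an $\epsilon$-sandwich. Fix $\epsilon>0$. On the event $\{|\tilde{U}_p-U_p|\le\epsilon\}\cap\{|\tilde{V}_p-V_p|\le\epsilon\}$ one has $\{\tilde{U}_p\le x,\tilde{V}_p\le y\}\subseteq\{U_p\le x+\epsilon,\,V_p\le y+\epsilon\}$, whence
$$
\pr(\tilde{U}_p\le x,\tilde{V}_p\le y)\le \pr(U_p\le x+\epsilon,\,V_p\le y+\epsilon)+\pr(|\tilde{U}_p-U_p|>\epsilon)+\pr(|\tilde{V}_p-V_p|>\epsilon).
$$
Letting $p\to\infty$, the last two terms vanish by the $o_p(1)$ hypothesis and the first converges by asymptotic independence, so $\limsup_{p\to\infty}\pr(\tilde{U}_p\le x,\tilde{V}_p\le y)\le F_U(x+\epsilon)F_V(y+\epsilon)$; letting $\epsilon\downarrow 0$ and using continuity of $F_U$ and $F_V$ gives the upper bound $F_U(x)F_V(y)$. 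Symmetrically, from $\{U_p\le x-\epsilon,\,V_p\le y-\epsilon\}\subseteq\{\tilde{U}_p\le x,\tilde{V}_p\le y\}\cup\{|\tilde{U}_p-U_p|>\epsilon\}\cup\{|\tilde{V}_p-V_p|>\epsilon\}$ one obtains $\liminf_{p\to\infty}\pr(\tilde{U}_p\le x,\tilde{V}_p\le y)\ge F_U(x)F_V(y)$. Combining the two bounds yields $\pr(\tilde{U}_p\le x,\tilde{V}_p\le y)\to F_U(x)F_V(y)=\big(\lim_{p}\pr(\tilde{U}_p\le x)\big)\big(\lim_{p}\pr(\tilde{V}_p\le y)\big)$, which is precisely the asserted asymptotic independence of $\tilde{U}_p$ and $\tilde{V}_p$.

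There is no genuine obstacle in this argument; it is a two-sided portmanteau/Slutsky computation. The only points requiring a little care are that the continuity of the limits $U$ and $V$ is exactly what permits passing from "continuity points" to "all points" and closing the $\epsilon$-gap after taking limsup and liminf, and that the two residual probabilities $\pr(|\tilde{U}_p-U_p|>\epsilon)$ and $\pr(|\tilde{V}_p-V_p|>\epsilon)$ are controlled uniformly in $(x,y)$ by the hypothesis $\tilde{U}_p-U_p=o_p(1)$, $\tilde{V}_p-V_p=o_p(1)$.
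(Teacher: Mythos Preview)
Your proof is correct and self-contained. The paper does not actually prove this lemma; it simply cites Lemma 7.10 in \cite{feng}. Your $\epsilon$-sandwich argument is the standard way to establish this Slutsky-type transfer of asymptotic independence, and it is complete as written: the continuity of $U$ and $V$ is used exactly where needed to close the $\epsilon$-gap after taking $\limsup$ and $\liminf$, and the $o_p(1)$ hypothesis handles the residual probabilities uniformly in $(x,y)$. There is nothing to compare against in the paper's own treatment beyond the external citation, so your direct argument is in fact more informative here.
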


\begin{proof}
See Lemma 7.10 in \cite{feng}.

\end{proof}

 \bibliographystyle{asa}

\bibliography{ref}

\end{document}